\documentclass[twoside]{article}

\usepackage[preprint]{aistats2027}

\usepackage[round]{natbib}

\usepackage{amsmath,amssymb,amsthm,mathtools}
\usepackage{booktabs}
\usepackage{array}
\usepackage{float}
\newfloat{algofloat}{tbp}{loa}
\floatname{algofloat}{Algorithm}
\usepackage{placeins}
\usepackage{needspace}
\usepackage{graphicx}
\usepackage[dvipsnames]{xcolor}
\usepackage{tikz}
\usepackage[colorlinks=true,linkcolor=BrickRed,citecolor=MidnightBlue,urlcolor=MidnightBlue]{hyperref}
\hypersetup{pdftitle={Exact Regret Frontiers and Externality
Scheduling in Centralized Serial-Dictatorship Bandits},
pdfauthor={},
pdfkeywords={matching bandits, serial dictatorship, Graves-Lai lower
bound, exploration externality, Pareto frontier}}

\newtheorem{theorem}{Theorem}[section]
\newtheorem{lemma}[theorem]{Lemma}
\newtheorem{proposition}[theorem]{Proposition}
\newtheorem{corollary}[theorem]{Corollary}
\theoremstyle{definition}
\newtheorem{definition}[theorem]{Definition}
\newtheorem{assumption}[theorem]{Assumption}
\newtheorem{example}[theorem]{Example}
\theoremstyle{remark}
\newtheorem{remark}[theorem]{Remark}

\newcommand{\E}{\mathbb{E}}
\newcommand{\Prob}{\mathbb{P}}
\newcommand{\R}{\mathbb{R}}
\DeclareMathOperator{\KL}{KL}
\newcommand{\Normal}{\mathcal{N}}
\newcommand{\ind}[1]{\mathbf{1}\{#1\}}
\newcommand{\mstar}{m^{\star}}
\newcommand{\Cclass}{\mathcal{C}}
\newcommand{\Mset}{\mathcal{M}}
\newcommand{\Eset}{\mathcal{E}}
\newcommand{\Lset}{\Lambda}
\newcommand{\Thstrict}{\Theta_{\mathrm{strict}}}
\newcommand{\Thsep}{\Theta_{\mathrm{sep}}}
\newcommand{\UGL}{\mathcal{U}_{\mathrm{GL}}}
\newcommand{\Xset}{\mathcal{X}}

\begin{document}

\runningauthor{Xu, Ma, Weng, Xia}
\runningtitle{Exact Regret Frontiers and Externality Scheduling in
Serial-Dictatorship Bandits}

\twocolumn[

\aistatstitle{Exact Regret Frontiers and Externality Scheduling\\
in Centralized Serial-Dictatorship Bandits}

\aistatsauthor{
Lishang Xu
\And
Guodong Ma
\And
Pengcheng Weng
\And
Zixuan Xia$^{*}$
}

\aistatsaddress{
University of Bern
\And
EPFL \\ University of Bern
\And
University of Bern
\And
University of Bern
}
]

{\renewcommand{\thefootnote}{}\footnotetext{$^{*}$Corresponding:
\texttt{zixuan.xia@students.unibe.ch}}}

\begin{abstract}
Exploration in centralized serial-dictatorship matching bandits must use complete matchings, so learning one player--arm pair can impose regret on others. We study this externality under a known common priority order and Gaussian rewards with unit variance. We show that the matching-level Graves--Lai constraints reduce to finitely many pairwise exploration quotas and, at top-choice-separated instances, yield a polynomial-size marginal linear program. At these instances, the exact attainable set of expected logarithmic regret coefficients is $G(\theta)\Xset(\theta)$, where $\Xset$ is the feasible matching-allocation set and $G$ maps allocations to player regret. The usual upper-closed Graves--Lai region can be strictly larger despite having the same Pareto-minimal boundary. We further show that identical exploration quotas can induce very different regret through their scheduling. Finally, we construct estimate--solve--track policies, uniformly good on the full row-strict class, that attain every fixed positively weighted optimum without assuming optimizer uniqueness. Every Pareto-minimal point is pointwise attainable, possibly through an instance-calibrated target.
\end{abstract}

\section{INTRODUCTION}\label{sec:intro}

A centralized matching platform repeatedly assigns $N$ players to
$K\ge N$ arms while learning their preferences from noisy rewards.
We consider a known common priority order, so the stable matching is
given by serial dictatorship. Such priority-based assignment arises,
for instance, when a platform allocates shifts or tasks to workers by
seniority or rank. Since the platform must choose a
complete matching in every round, exploring one player--arm pair can
displace other players from their stable arms. Thus, the player whose
preferences are being learned need not be the only one who pays for
that learning.

We study this externality under Gaussian rewards with known unit
variance and strict within-player preferences. Our main question is:
\emph{which vectors of player-wise logarithmic regret coefficients
are attainable by uniformly good policies?} The key observation is
that information requirements can be expressed through pairwise
exploration quotas, whereas the distribution of regret across players
depends on how those quotas are scheduled through complete matchings.
Figure~\ref{fig:schematic} illustrates how different schedules
satisfying the same quotas can produce very different regret vectors.

Our information-theoretic reduction holds on the full row-strict
parameter space $\Thstrict$. The regret-region and attainability
results are stated for \emph{top-choice-separated} instances, where
players have distinct best arms, while policies remain uniformly good
over all of $\Thstrict$.

\begin{figure}[t]
\centering
\begin{tikzpicture}[x=1pt,y=1pt,font=\scriptsize,
pl/.style={circle,draw,inner sep=0.5pt,minimum size=10pt},
arm/.style={rectangle,draw,inner sep=1.5pt,minimum size=10pt},
ex/.style={->,thick,BrickRed},
exb/.style={->,thick,MidnightBlue,densely dashed}]
\begin{scope}
\node[align=center] at (40,50) {\textbf{parallel}: $m_{\mathrm B}$};
\foreach \i/\y in {1/32,2/17,3/2}{
\node[pl] (p\i) at (0,\y) {$p_\i$};
\node[arm] (a\i) at (56,\y) {$a_\i$};}
\draw[ex] (p1) -- (a2);
\draw[ex] (p2) -- (a3);
\draw[ex] (p3) -- (a1);
\node[anchor=west] at (64,32) {$r_1=22$};
\node[anchor=west] at (64,17) {$r_2=20$};
\node[anchor=west] at (64,2) {$r_3=\mathbf{202}$};
\end{scope}
\begin{scope}[xshift=118pt]
\node[align=center] at (44,50)
{\textbf{protective}: $m_{\mathrm A}$, then $m_{\mathrm C}$};
\foreach \i/\y in {1/32,2/17,3/2}{
\node[pl] (q\i) at (0,\y) {$p_\i$};
\node[arm] (b\i) at (56,\y) {$a_\i$};}
\draw[ex] (q1) -- (b2);
\draw[ex] (q2) -- (b1);
\draw[ex] (q3) -- (b3);
\draw[exb] (q1) -- (b1);
\draw[exb] (q2) -- (b3);
\draw[exb] (q3) -- (b2);
\node[anchor=west] at (64,32) {$r_1=22$};
\node[anchor=west] at (64,17) {$r_2=\mathbf{222}$};
\node[anchor=west] at (64,2) {$r_3=2.02$};
\end{scope}
\end{tikzpicture}
\caption{The same exploration quotas can yield different player-wise
regret depending on how they are scheduled through complete matchings.
Solid arrows show $m_{\mathrm B}$ (left) and $m_{\mathrm A}$ (right);
dashed arrows show $m_{\mathrm C}$. Here $r_i$ is player $p_i$'s
expected logarithmic regret coefficient in Example~\ref{ex:33}; the
stable matching is $p_i\mapsto a_i$.
The additional small quota is omitted from the drawing.}
\label{fig:schematic}
\end{figure}

\paragraph{Contributions.}
Our main results are as follows.

\begin{enumerate}
\item \textbf{Pairwise exploration quotas and a compact LP.}
We show that the matching-level Graves--Lai constraints are exactly
equivalent to finitely many pairwise quotas
$q_{i,a}=2/\Delta_{i,a}^2$
(Theorem~\ref{thm:reduction}).
At top-choice-separated instances, optimizing any positively
weighted regret reduces to a polynomial-size LP over player--arm
marginals, and every feasible marginal solution can be implemented
using finitely many complete matchings.

\item \textbf{Exact attainable regret region.}
Let $\Xset(\theta)$ denote the feasible matching-allocation
polyhedron and $G(\theta)$ the map from allocations to player-wise
regret coefficients. Writing $\mathcal A(\theta)$ for the set of
attainable regret-coefficient vectors and $\UGL(\theta)$ for the usual
upper-closed Graves--Lai region, we prove

\[
\begin{aligned}
  \mathcal A(\theta)&=G(\theta)\Xset(\theta),\\
  \UGL(\theta)&=\mathcal A(\theta)+\R_+^N.
\end{aligned}
\]

Hence $\UGL(\theta)$ has the same Pareto-minimal boundary as
$\mathcal A(\theta)$, but can contain regret vectors that no
complete-matching schedule can realize.
We characterize this frontier explicitly in a three-player example
and provide higher-dimensional families.

\item \textbf{Policies attaining the frontier.}
We construct estimate--solve--track policies that realize the
characterized regret coefficients. For every fixed positive weight
vector, a capped regularization scheme with a fixed, locally
Lipschitz data-only target rule attains the weighted optimum
without assuming optimizer uniqueness. More generally, every
prescribed Pareto-minimal point is pointwise attainable, possibly
through an instance-calibrated target.
\end{enumerate}

\section{RELATED WORK}\label{sec:related}

\paragraph{Learning in matching markets.}
\citet{liu2020competing} formulated a competing-bandits model for
matching markets and established stable-regret guarantees.
Subsequent work studied serial-dictatorship markets and sharper regret guarantees
\citep{sankararaman2021dominate,kong2024improved,wang2024optimal}.
Related work also considers player-optimal stable regret in general
decentralized markets \citep{kong2023playeroptimal}, as well as
sleeping and contextual variants
\citep{uba2026sleeping,lin2026contextual}.
Our setting is instead centralized with a known common priority order,
which makes it possible to study how complete-matching schedules
distribute exploration cost across players.

The closest lower-bound work is \citet{basu2025competing}.
Using single-entry alternatives, Basu derives matching-level
information constraints and a matching-cover form of pairwise
constraints for a relaxed binary stable-regret program.
Pairwise constraints are therefore not new in themselves. In our
known-common-priority model, we prove that the corresponding pairwise
quotas exactly describe the full Graves--Lai feasible set over
confusing alternatives. We then
characterize the image of this feasible set in player-wise regret
space and establish its attainability under complete-matching
policies. Recent work on stable-matching identification studies
one-sided and two-sided uncertainty
\citep{hosseini2024putting,pagare2025optimal,athanasopoulos2025probably,
athanasopoulos2026stable}, and \citet{athanasopoulos2026stable}
include a regret-minimization extension; our focus is instead the
exact attainable set of asymptotic player-regret coefficients. \citet{li2025survey}
provide a broader survey.

\paragraph{Graves--Lai optimization.}
Our lower bound builds on the asymptotic information framework of
\citet{lai1985asymptotically,graves1997asymptotically}, its use in
structured and combinatorial bandits
\citep{combes2015combinatorial,combes2017minimal}, and the
fundamental inequality of \citet{garivier2019explore}.
For combinatorial semi-bandits,
\citet{cuvelier2021asymptotically} solve the Graves--Lai program
to any accuracy $\delta>0$ in polynomial time given an exact
budgeted linear-maximization oracle, and only up to an extra
fixed factor for matchings. In our model, the special structure
of the confusing
alternatives yields a more direct finite description: the
information constraints reduce to explicit pairwise quotas, and at
top-choice-separated instances the scheduling problem becomes a
polynomial-size LP over player--arm marginals.
This reduction is what allows us to describe the full vector-valued
regret region rather than only a scalar optimum.

\paragraph{Exploration externalities and allocation selection.}
Exploration can impose costs on agents other than the one whose
information is being collected
\citep{raghavan2018externalities,baek2024fair}.
In particular, \citet{baek2024fair} study bargaining over attainable
utilities in grouped bandits. Here the externality arises
from the complete-matching constraint, and the resulting asymptotic
regret vectors form an explicit polyhedral image whose Pareto
boundary can be characterized.

Our attainment argument is also related to adaptive tracking under
nonunique optimal allocations.
\citet{degenne2019multiple} introduce sticky tracking for
multiple-answer pure exploration, while
\citet{vanparys2024optimal} construct continuous selections for
near-optimal exploration allocations.
Our objective is different: we select and track a prescribed point
on the exact cost-optimal face of matching marginals, which permits
different attainable divisions of regret among players without
requiring optimizer uniqueness.

Appendix~\ref{app:related} adds comparisons with work on
transfers, welfare, indifference, and Graves--Lai computation.

\section{MODEL}\label{sec:model}

There are $N$ players $p_1,\dots,p_N$ and $K\ge N$ arms
$a_1,\dots,a_K$; all arms share the common priority order
$p_1\succ\cdots\succ p_N$, known to the platform
(\emph{serial dictatorship}). We assume $K\ge2$, since for $K=1$
every policy has zero regret. Player
$p_i$'s utility for arm $a$ has unknown mean $\mu_{i,a}$, with
Gaussian rewards of known common variance, normalized to one:
$\nu_{i,a}=\Normal(\mu_{i,a},1)$. At each round
$t=1,\dots,T$ a centralized platform selects an injective matching
$M_t\in\Mset$ ($\Mset$ is the set of injective maps $[N]\to[K]$) and observes
all matched-pair rewards (centralized semi-bandit feedback). The
policy is non-anticipating; conditionally on $M_t=m$ the
observation has independent coordinates
$X_{i,t}\sim\Normal(\mu_{i,m(i)},1)$, independent of the past, and
the same history-to-action kernel is used under every instance.
Write $N_m(T):=\sum_{t\le T}\ind{M_t=m}$ and
$N_{i,a}(T):=\sum_{t\le T}\ind{M_t(i)=a}$; the policy's counts
$N_{i,a}(t)$ and empirical means $\widehat\mu(t)$ exclude round $t$.

\begin{assumption}[Complete matching]\label{ass:complete}
$M_t\in\Mset$ in every round, so every player is matched.
Remark~\ref{rem:equality} discusses partial matchings in the
two-by-two market; the general case is left open.
\end{assumption}

The serial-dictatorship stable matching $\mstar(\theta)$ lets
$p_1$ take its best arm, then $p_2$ its best remaining arm, and so
on. Individual and weighted regret are
$R_i(T;\theta)=T\mu_{i,\mstar(i)}-\E\sum_{t\le T}\mu_{i,M_t(i)}$
and $R_w=\sum_iw_iR_i$, $w>0$, studied asymptotically in
$\log T$. With $g_i(m;\theta):=\mu_{i,\mstar(i)}-\mu_{i,m(i)}$
($g_i(\mstar;\theta)=0$), complete matching gives the exact
accounting identity
\begin{equation}\label{eq:regretcount}
R_i(T;\theta)=\sum_{m\in\Mset}\E_\theta[N_m(T)]\,g_i(m;\theta).
\end{equation}

\paragraph{Parameter space.}
The global parameter space is the set of \emph{row-strict}
instances
$\Thstrict:=\{\mu\in\R^{N\times K}:\mu_{i,a}\neq\mu_{i,b}\
\forall i,a\neq b\}$, each with a unique stable matching; entire
rows must be strict because the alternatives of
\S\ref{sec:information} change the remaining set that lower-priority
players face. The instances we \emph{analyze} are additionally
\emph{top-choice separated},
$\Thsep:=\{\theta\in\Thstrict:\arg\max_a\mu_{i,a}\text{ distinct
across }i\}$, so that $\mstar(i)=\arg\max_a\mu_{i,a}$ and
$g_i(m;\theta)\ge0$ everywhere.

\begin{remark}[Separation is local]\label{rem:local}
Separation restricts the analyzed instance, not the parameter
space: the lower-bound alternatives raise a single entry above
$\mu_{i,\mstar(i)}$ and typically leave $\Thsep$ while remaining in
$\Thstrict$; quantifying over $\Thsep$ only would make them
inadmissible and change even the two-by-two constant. Separation
makes every off-stable gap positive, bounding normalized matching
counts by normalized regret and the cost-optimal marginals by their
cost. Theorem~\ref{thm:reduction} holds on all of $\Thstrict$.
\end{remark}

\begin{definition}[Uniformly good]\label{def:unifgood}
A policy is \emph{uniformly good on $\Theta'\subseteq\Thstrict$} if
$(R_i(T;\lambda))_+ = o(T^{\alpha})$ for every
$\lambda\in\Theta'$, $i$, $\alpha>0$.
\end{definition}

The positive part matters: for $\lambda\notin\Thsep$ stable regret
can be negative; Lemma~\ref{lem:nocancel} shows the definition
already forces $|R_i|=o(T^{\alpha})$.

\section{Information Requirements and Scheduling Geometry}
\label{sec:general-lb}

\subsection{Exact Information Requirements}\label{sec:information}

When matching $m$ is executed, the platform observes a draw from
$P^m_\theta=\bigotimes_{i=1}^N\Normal(\mu_{i,m(i)},1)$. For
$\theta,\lambda\in\Thstrict$ set
\begin{equation}\label{eq:Dm}
D_m(\theta,\lambda)
:=\KL\bigl(P^m_\theta\,\big\|\,P^m_\lambda\bigr)
=\tfrac12\sum_{i=1}^{N}
\bigl(\mu_{i,m(i)}-\lambda_{i,m(i)}\bigr)^2 ,
\end{equation}
where additivity follows from the conditional independence of matched
pairs.

The divergence between the laws of the whole interaction is
$\sum_m\E_\theta[N_m(T)]D_m(\theta,\lambda)$
(Lemma~\ref{lem:divdecomp}, Appendix~\ref{app:c1-com}).
Hierarchical no-cancellation makes uniform goodness equivalent to
uniform stabilization even when individual stable regret can be
negative (Lemma~\ref{lem:nocancel}, Appendix~\ref{app:c1-nocancel}).

A uniformly good policy executes $\mstar(\theta)$ in all but
$o(T^{\alpha})$ rounds in expectation (Lemma~\ref{lem:nocancel}), so
alternatives with $D_{\mstar(\theta)}(\theta,\lambda)>0$ impose no
constraint at the $\log T$ scale; the binding ones agree with
$\theta$ on every stable pair yet change the stable matching:
\begin{equation}\label{eq:confusing}
\begin{split}
\Lset(\theta)
=\bigl\{\lambda\in\Thstrict:\;
&\mstar(\lambda)\neq\mstar(\theta),\\[-2pt]
&D_{\mstar(\theta)}(\theta,\lambda)=0\bigr\},
\end{split}
\end{equation}
the confusing-parameter set of the Graves--Lai program
\citep{graves1997asymptotically,combes2017minimal}.

\begin{theorem}[Information allocation]\label{thm:infoalloc}
Let the policy be uniformly good on $\Thstrict$ and satisfy
Assumption~\ref{ass:complete}. Then for every $\theta\in\Thstrict$
and every $\lambda\in\Lset(\theta)$,
\[
\liminf_{T\to\infty}\;\frac{1}{\log T}
\sum_{m\neq\mstar(\theta)}
\E_\theta[N_m(T)]\,D_m(\theta,\lambda)\;\ge\;1 .
\]
\end{theorem}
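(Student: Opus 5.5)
We follow the standard change-of-measure route of \citet{garivier2019explore}. Fix $\theta\in\Thstrict$ and $\lambda\in\Lset(\theta)$. By Lemma~\ref{lem:divdecomp}, the KL divergence between the laws of the full history up to $T$ under $\theta$ and under $\lambda$ equals $\sum_{m}\E_\theta[N_m(T)]D_m(\theta,\lambda)$. Since $\lambda\in\Lset(\theta)$ gives $D_{\mstar(\theta)}(\theta,\lambda)=0$, the term for $m=\mstar(\theta)$ vanishes, and the sum is exactly the quantity in the statement. The data-processing (fundamental) inequality, applied to the $[0,1]$-valued statistic $Z_T:=N_{\mstar(\theta)}(T)/T$, then gives
\[
\sum_{m\neq\mstar(\theta)}\E_\theta[N_m(T)]D_m(\theta,\lambda)\;\ge\;\mathrm{kl}\bigl(\E_\theta[Z_T],\E_\lambda[Z_T]\bigr)
\;\ge\;(1-\E_\theta[Z_T])\log\frac{1}{1-\E_\lambda[Z_T]}-\log 2,
\]
where $\mathrm{kl}$ is the Bernoulli divergence.

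Next we control the two expectations with Lemma~\ref{lem:nocancel}. The policy is uniformly good on $\Thstrict$, and $\theta$ lies in $\Thstrict$, so $\theta$ is stabilized: $T-\E_\theta[N_{\mstar(\theta)}(T)]=o(T^\alpha)$ for every $\alpha>0$. Hence $\E_\theta[Z_T]\to 1$. The alternative $\lambda$ also lies in $\Thstrict$; this is why the alternatives must be admissible even when they leave $\Thsep$, as discussed in Remark~\ref{rem:local}. So $\lambda$ is stabilized as well: all but $o(T^\alpha)$ rounds in expectation play $\mstar(\lambda)\neq\mstar(\theta)$. Therefore $\E_\lambda[N_{\mstar(\theta)}(T)]=o(T^\alpha)$, which gives $1-\E_\lambda[Z_T]\ge 1-o(T^{\alpha-1})$.

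This is the wrong direction, so we swap the roles in the binary statistic and use $1-Z_T$ instead. Then $\E_\theta[1-Z_T]=o(T^{\alpha-1})$, while $\E_\lambda[1-Z_T]\to1$. Applying $\mathrm{kl}(p,q)\ge (1-p)\log\frac1{1-q}-\log2$ with $p=\E_\theta[1-Z_T]\to0$ and $q=\E_\lambda[1-Z_T]$ yields
\[
\mathrm{kl}\ge(1-o(1))\log\frac{1}{\E_\lambda[Z_T]}-\log2\ge(1-o(1))(1-\alpha)\log T-O(1).
\]
Dividing by $\log T$, taking $\liminf$, and then letting $\alpha\downarrow0$ gives the bound $\ge1$.

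The main obstacle is justifying that $\lambda$ is stabilized. Uniform goodness is defined only through positive parts of individual regrets, and at $\lambda\notin\Thsep$ some individual regrets can be negative. This is exactly what Lemma~\ref{lem:nocancel} supplies: hierarchical no-cancellation, taken in priority order, forces every player onto its stable arm in all but $o(T^\alpha)$ rounds. The needed consequence is that the number of rounds playing any matching other than $\mstar(\lambda)$ is $o(T^\alpha)$. This requires the row-strictness of $\lambda$ to get positive gaps. With that lemma assumed, the remaining steps are routine. The only care needed is that the total divergence is finite, which holds because the Gaussian KL is finite and the counts are bounded by $T$.
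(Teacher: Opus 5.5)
Your proof is correct and is essentially the paper's argument: Lemma~\ref{lem:divdecomp} combined with the data-processing inequality for $Z_T=N_{\mstar(\theta)}(T)/T$, where Lemma~\ref{lem:nocancel}, applied at both $\theta$ and $\lambda$ in $\Thstrict$, gives $\E_\theta Z_T\to1$ and $\E_\lambda Z_T=o(T^{\alpha-1})$. The detour through $1-Z_T$ is unnecessary because $\operatorname{kl}(p,q)=\operatorname{kl}(1-p,1-q)$, so your final bound is exactly the paper's $\E_\theta Z_T\log(1/\E_\lambda Z_T)-\log2$; you should also state explicitly, as the paper does, that finiteness of the divergence forces $\E_\lambda Z_T>0$, which is what makes this logarithm well defined.
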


Appendix~\ref{app:c1-infoalloc} applies the fundamental
inequality of \citet{garivier2019explore} to
$Z_T=N_{\mstar(\theta)}(T)/T$. Define the \emph{matching-allocation
feasible set}, with $x$ indexed by $\Mset\setminus\{\mstar\}$,
\begin{equation}\label{eq:Xset}
\Xset(\theta)
:=\Bigl\{x\ge0:\,
\sum_{m\neq\mstar}x_m D_m(\theta,\lambda)\ge1
\ \ \forall\lambda\in\Lset(\theta)\Bigr\}.
\end{equation}

Every finite cluster point of the normalized expected matching
counts, along a common subsequence, belongs to $\Xset(\theta)$
(Proposition~\ref{prop:cluster}, Appendix~\ref{app:c1-region}).
The two-by-two market provides a warm-up
(Appendix~\ref{app:twobytwo}).

Let the \emph{eligible set} $\Eset$ consist of the free non-stable pairs $(i,a)$ with
$a\notin\{\mstar(1),\dots,\mstar(i-1)\}$ and $a\neq\mstar(i)$.
For $(i,a)\in\Eset$, let
$\Delta_{i,a}:=\mu_{i,\mstar(i)}-\mu_{i,a}$ and
$q_{i,a}:=2/\Delta_{i,a}^2$.
These gaps are positive for every $\theta\in\Thstrict$: serial
dictatorship assigns player $i$ its unique best remaining arm.
Other off-stable gaps can be negative.
For an allocation $x$ define the \emph{pair marginals}
$z_{i,a}(x):=\sum_{m\neq\mstar:\,m(i)=a}x_m$.

\begin{theorem}[Exact pairwise-quota reduction]\label{thm:reduction}
For every $\theta\in\Thstrict$,
\[
\Xset(\theta)
=\{x\ge0:
z_{i,a}(x)\ge q_{i,a}
\ \forall(i,a)\in\Eset\}.
\]
\end{theorem}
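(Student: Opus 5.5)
We prove the two inclusions separately. Both rest on one structural description of $\Lset(\theta)$. Fix $\lambda\in\Lset(\theta)$. Since $D_{\mstar}(\theta,\lambda)=0$, $\lambda$ agrees with $\theta$ on every stable pair $(j,\mstar(j))$. Let $i$ be the first player with $\mstar(\lambda)(i)\neq\mstar(\theta)(i)$; such an $i$ exists because $\mstar(\lambda)\neq\mstar(\theta)$.

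Players $1,\dots,i-1$ pick identically under $\lambda$ and $\theta$. Hence player $i$ faces the same remaining set $S_i=[K]\setminus\{\mstar(1),\dots,\mstar(i-1)\}$ under both instances. Under $\lambda$ it chooses some $a\in S_i$ with $a\neq\mstar(i)$, so $(i,a)\in\Eset$, and
\[
\lambda_{i,a}>\lambda_{i,\mstar(i)}=\mu_{i,\mstar(i)}.
\]
It follows that $\tfrac12(\mu_{i,a}-\lambda_{i,a})^2>\tfrac12\Delta_{i,a}^2=1/q_{i,a}$, with strict inequality since $\lambda\in\Thstrict$ forbids ties.

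\textbf{Inclusion $\supseteq$.} Take $x\ge0$ with $z_{i,a}(x)\ge q_{i,a}$ for all $(i,a)\in\Eset$. For $\lambda\in\Lset(\theta)$, let $(i,a)$ be the pair identified above. Every term of $D_m$ is nonnegative, so keeping only the $i$-th coordinate gives
\[
\sum_{m\neq\mstar}x_mD_m(\theta,\lambda)\ \ge\ \sum_{m\neq\mstar:\,m(i)=a}x_m\,\tfrac12(\mu_{i,a}-\lambda_{i,a})^2\ \ge\ z_{i,a}(x)\,\tfrac12\Delta_{i,a}^2\ \ge\ 1 .
\]
Hence $x\in\Xset(\theta)$.

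\textbf{Inclusion $\subseteq$.} Fix $(i,a)\in\Eset$ and $x\in\Xset(\theta)$. Construct a one-parameter family $\lambda^\varepsilon$ that equals $\theta$ except at entry $(i,a)$, where we set $\lambda^\varepsilon_{i,a}=\mu_{i,\mstar(i)}+\varepsilon$ for small $\varepsilon>0$. We need $\varepsilon$ chosen so that row $i$ stays strict; this excludes only finitely many values. We then check $\lambda^\varepsilon\in\Lset(\theta)$:
\begin{itemize}
\item Rows other than $i$ are unchanged, so players $1,\dots,i-1$ choose as before.
\item Player $i$ now strictly prefers $a\in S_i$ to $\mstar(i)$. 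Since $\mstar(i)$ was its best arm in $S_i$ and only entry $a$ moved, $a$ becomes its top remaining arm. So $\mstar(\lambda^\varepsilon)\neq\mstar(\theta)$.
\item The stable pairs are untouched, because $a\neq\mstar(i)$ and, for $j\neq i$, row $j$ is unchanged. So $D_{\mstar}(\theta,\lambda^\varepsilon)=0$.
\end{itemize}
Next, $D_m(\theta,\lambda^\varepsilon)=\tfrac12(\Delta_{i,a}+\varepsilon)^2\ind{m(i)=a}$. The constraint in \eqref{eq:Xset} therefore reads
\[
z_{i,a}(x)\,\tfrac12(\Delta_{i,a}+\varepsilon)^2\ge1 .
\]
Letting $\varepsilon\downarrow0$ gives $z_{i,a}(x)\ge q_{i,a}$.

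The main obstacle is the careful verification that the single-entry perturbation really lies in $\Lset(\theta)$. It is also why the closure argument is needed: the infimum over confusing alternatives is not attained, since the boundary case $\varepsilon=0$ creates a tie and leaves $\Thstrict$. The sensitivity of this step to the parameter space is also visible here. The perturbation generally leaves $\Thsep$, which is why the theorem is posed on $\Thstrict$, consistent with Remark~\ref{rem:local}.
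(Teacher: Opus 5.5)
Your proof is correct and is essentially the paper's argument: sufficiency comes from the earliest-deviation pair $(i,a)$ of an arbitrary $\lambda\in\Lset(\theta)$, which forces $\lambda_{i,a}>\mu_{i,\mstar(i)}$ and hence $D_m(\theta,\lambda)\ge\Delta_{i,a}^2/2$ whenever $m(i)=a$. Necessity comes from the single-entry alternatives that raise $\mu_{i,a}$ to $\mu_{i,\mstar(i)}+\varepsilon$ with $\varepsilon\downarrow0$; excluding finitely many values of $\varepsilon$ to keep row $i$ strict is an equally valid substitute for the paper's ``$\varepsilon$ sufficiently small''.
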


\emph{Proof sketch} (Appendix~\ref{app:c2-proofs}). Necessity: for
each $(i,a)\in\Eset$, raising only $\mu_{i,a}$ just above
$\mu_{i,\mstar(i)}$ creates a confusing alternative whose first
deviation is $(i,a)$ (Lemma~\ref{lem:singleentry}). Its constraint gives
$z_{i,a}(x)\ge2/(\Delta_{i,a}+\varepsilon)^2$, and
$\varepsilon\downarrow0$ yields the closed quota. The change may
cascade to lower-priority players; only its first deviation is needed.

Sufficiency: for any $\lambda\in\Lset(\theta)$, the earliest-deviation
pair alone gives $D_m(\theta,\lambda)\ge\Delta_{i,a}^2/2$ whenever
$m(i)=a$, so its quota certifies $\lambda$'s information constraint.
The quotas therefore characterize information feasibility. Complete
matchings couple their execution: serving one player's quota can
displace others. The next subsection computes these scheduling costs;
online attainment is the subject of \S\ref{sec:algorithm}.

\subsection{Marginal Scheduling and Regret Geometry}\label{sec:finite-lp}

Let $\theta\in\Thsep$ and $w\in\R^N_{++}$. The same allocation
$x$ determines the losses of all players through \eqref{eq:regretcount}.
Let $G(\theta)$ be the \emph{gap matrix} with columns $g(m;\theta)$,
$m\neq\mstar$. On $\Thsep$ every entry is nonnegative and every column
has a strictly positive entry (Lemma~\ref{lem:positivity}). Define
the weighted allocation value
\begin{equation}\label{eq:Cw}
C_w(\theta)
:=\inf_{x\in\Xset(\theta)}
\sum_{m\neq\mstar}x_m\,\bigl\langle w,\,g(m;\theta)\bigr\rangle .
\end{equation}

By Theorem~\ref{thm:reduction}, $C_w(\theta)$ is the
\emph{matching-cover LP}
\begin{equation}\label{eq:coverlp}
\begin{split}
C_w(\theta)=
\min_{x\ge0}\ &\sum_{m\neq\mstar} x_m
\bigl\langle w,g(m;\theta)\bigr\rangle\\[-2pt]
&\text{s.t.}\ \
z_{i,a}(x)\ge q_{i,a}\ \ \forall(i,a)\in\Eset
\end{split}
\end{equation}
(Corollary~\ref{cor:quota}). The objective and the quotas depend on $x$
only through its pair marginals, so the exponentially many matching
variables can be projected out. The variable $\rho$ below is the total
schedule mass at the $\log T$ scale, rather than a per-round probability.

\begin{theorem}[Compact edge-marginal LP]\label{thm:compact}
For $\theta\in\Thsep$ and $w\in\R^N_{++}$, with
$\Delta_{i,a}:=\mu_{i,\mstar(i)}-\mu_{i,a}\ge0$ now defined for all
pairs,
\begin{equation}\label{eq:compactlp}
\begin{aligned}
C_w(\theta)=\min_{z\ge0,\;\rho\ge0}\ &
\sum_{i,a} w_i\,\Delta_{i,a}\,z_{i,a}
\quad\text{s.t.}\\
z_{i,a}&\ge q_{i,a}, && (i,a)\in\Eset,\\
\textstyle\sum_{a} z_{i,a}&=\rho, && i\in[N],\\
\textstyle\sum_{i} z_{i,a}&\le\rho, && a\in[K],
\end{aligned}
\end{equation}
a linear program with $NK+1$ variables and $O(NK)$ constraints.
\end{theorem}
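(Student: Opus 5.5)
The plan is to show that the matching-cover LP \eqref{eq:coverlp} (which equals $C_w(\theta)$ by Corollary~\ref{cor:quota}) and the compact LP \eqref{eq:compactlp} have the same optimal value. I would do this with a cost-preserving map in each direction between their feasible sets. The starting observation is that, for $\theta\in\Thsep$, every $m$ satisfies $g_i(m;\theta)=\Delta_{i,m(i)}$, with $\Delta_{i,\mstar(i)}=0$. Hence for any $x\ge0$ indexed by $\Mset\setminus\{\mstar\}$,
\[
\sum_{m\neq\mstar}x_m\langle w,g(m;\theta)\rangle=\sum_{i,a}w_i\Delta_{i,a}z_{i,a}(x),
\]
where the pair marginals $z_{i,a}(x)$ are now defined for \emph{all} pairs, including stable pairs and pairs outside $\Eset$. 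Stable pairs contribute zero cost, so including them changes nothing.

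\emph{Cover $\Rightarrow$ compact.} Given $x$ feasible for \eqref{eq:coverlp}, I set $\rho:=\sum_{m\neq\mstar}x_m$ and $z:=z(x)$. Each $m$ is an injective map $[N]\to[K]$. So every row satisfies $\sum_a z_{i,a}=\rho$, and every column satisfies $\sum_i z_{i,a}\le\rho$. The quotas carry over verbatim, and the cost is equal by the identity above. Therefore the compact value is at most $C_w(\theta)$.

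\emph{Compact $\Rightarrow$ cover.} This direction is the main step. Given a feasible $(z,\rho)$ with $\rho>0$, I must write $z$ as a nonnegative combination of indicator matrices $\mathbf 1_m$ of injective maps, with total mass $\rho$. If $\rho=0$, then $z=0$, which is infeasible whenever $\Eset\neq\emptyset$; if $\Eset=\emptyset$ both values are $0$.

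\begin{itemize}
\item First pad $z$ to a $K\times K$ matrix. If $K>N$, add $K-N$ dummy rows, each with entry $s_a/(K-N)$ in column $a$, where $s_a:=\rho-\sum_i z_{i,a}\ge0$ is the column slack.
\item Each dummy row then sums to $(K\rho-N\rho)/(K-N)=\rho$, and each column sums to $\rho$. If $K=N$, the column sums are forced to equal $\rho$ and no padding is needed.
\item The padded matrix divided by $\rho$ is doubly stochastic. By Birkhoff--von Neumann it is a convex combination of permutation matrices.
\item Restricting each permutation to the first $N$ rows gives injective maps $m$. This yields $z=\sum_m y_m\mathbf 1_m$ with $y\ge0$ and $\sum_m y_m=\rho$.
\end{itemize}

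Now drop the component $y_{\mstar}$ and set $x_m:=y_m$ for $m\neq\mstar$. Removing $\mstar$ lowers only the marginals at stable pairs $(i,\mstar(i))$. These pairs are not in $\Eset$ and have $\Delta=0$. Hence $z_{i,a}(x)=z_{i,a}\ge q_{i,a}$ for all $(i,a)\in\Eset$, and the cost is unchanged. So $C_w(\theta)$ is at most the compact value.

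Combining the two directions gives equality of the infima. The compact LP is feasible (take $z$ from any feasible $x$, e.g.\ one built by covering each quota with some matching) and has nonnegative objective, so its minimum is attained. Counting gives $NK$ entries $z_{i,a}$ plus $\rho$, i.e.\ $NK+1$ variables. The constraints are $|\Eset|+N+K$ constraints plus nonnegativity, which is $O(NK)$.

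The hardest part is the padding and decomposition step, specifically handling the column inequality when $K>N$ and checking that discarding the $\mstar$ mass violates no quota. Both are resolved as shown above: the column slacks are spread uniformly over the dummy rows, and the discarded mass sits only on stable pairs, which are excluded from $\Eset$ by definition.
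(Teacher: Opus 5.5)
Your proof is correct and follows the paper's argument. Both directions pass through the pair marginals, and the reverse direction decomposes $z/\rho$ into injective matchings and then discards the $\mstar$-mass, which sits only on zero-gap stable pairs outside $\Eset$. The only difference is how you justify the decomposition: you pad with dummy rows and apply Birkhoff--von Neumann, whereas the paper invokes integrality of the rectangular assignment polytope (Lemma~\ref{lem:integrality}) together with Minkowski's theorem. Your padding construction is the one the paper itself uses for the constructive version in the proof of Proposition~\ref{prop:sparse}.
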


Integrality of the rectangular assignment polytope lifts feasible
marginals to complete matchings (Lemma~\ref{lem:integrality},
Appendix~\ref{app:c2-integrality}). Write $\zeta^m$ for the incidence
matrix of $m$, with $\zeta^m_{i,a}=\ind{m(i)=a}$.

\begin{proposition}[Sparse schedule recovery]\label{prop:sparse}
Every feasible $(z,\rho)$ of \eqref{eq:compactlp} with $\rho>0$ can
be written as a schedule $z=\sum_k c_k\,\zeta^{m_k}$, $c_k>0$,
$\sum_kc_k=\rho$, using at most $NK-N+1$ matchings; discarding a
possible $\mstar$-term preserves quotas and weighted cost.
\end{proposition}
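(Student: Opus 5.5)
Normalize by $\rho$: set $y:=z/\rho$. The constraints of \eqref{eq:compactlp} say that $y\ge0$, every row sum equals $1$ and every column sum is at most $1$. So $y$ lies in the rectangular assignment polytope
\[
P:=\Bigl\{y\in\R_+^{N\times K}:\ \textstyle\sum_a y_{i,a}=1\ \forall i,\ \ \sum_i y_{i,a}\le1\ \forall a\Bigr\}.
\]
By Lemma~\ref{lem:integrality}, $P$ is integral, so its vertices are 0/1 matrices. A 0/1 matrix with unit row sums and column sums at most one is exactly $\zeta^m$ for some injective $m\in\Mset$.

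\textbf{Step 1 (decomposition).} Since $P$ is a bounded polytope, $y$ is a convex combination of its vertices. This gives $z=\rho y=\sum_k c_k\zeta^{m_k}$ with $c_k>0$ and $\sum_k c_k=\rho$.

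\textbf{Step 2 (sparsity).} We count the dimension of $P$. The ambient space is $\R^{N\times K}$, which has $NK$ coordinates, and the $N$ row-sum equalities are linearly independent. Hence $\dim P\le NK-N$. By Carathéodory's theorem, a point of a polytope of dimension $d$ is a convex combination of at most $d+1$ vertices. So at most $NK-N+1$ matchings suffice.

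If one prefers a constructive route, a greedy peeling argument also works. Take the minimal face containing $y$, pick a vertex of it, and move along the ray from that vertex through $y$ until a new inequality becomes tight. Each step lowers the face dimension by one, which gives the same count. I do not expect this step to be a real obstacle: it relies only on integrality (assumed) and a dimension count. The one point to check is that the row equalities really cut the dimension by $N$; linear independence is clear because the rows use disjoint sets of variables.

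\textbf{Step 3 (discarding $\mstar$).} Suppose some $m_k=\mstar$. Removing that term changes $z$ to $z'=z-c_k\zeta^{\mstar}$.

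For quotas: the quota constraints involve only pairs $(i,a)\in\Eset$. For these pairs $a\neq\mstar(i)$, so $\zeta^{\mstar}_{i,a}=0$ and $z'_{i,a}=z_{i,a}\ge q_{i,a}$.

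For cost: the weighted cost is $\sum_{i,a}w_i\Delta_{i,a}z_{i,a}$, and $\zeta^{\mstar}$ is supported on the stable pairs $(i,\mstar(i))$, where $\Delta_{i,\mstar(i)}=0$. So the cost is unchanged.

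Equivalently, in the matching-variable form \eqref{eq:coverlp}, the remaining coefficients define an allocation $x$ on $\Mset\setminus\{\mstar\}$ with $z_{i,a}(x)=z'_{i,a}$ on $\Eset$ and the same objective value, because $g(\mstar;\theta)=0$. The only subtlety is that the total mass drops below $\rho$. The statement explicitly allows this, since the discarded schedule is no longer required to satisfy the $\rho$-normalization.
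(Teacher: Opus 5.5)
Your proof is correct and follows the paper's argument: normalize to the rectangular assignment polytope, apply Lemma~\ref{lem:integrality} with Minkowski's theorem and Carath\'eodory in the affine hull of dimension at most $NK-N$, and drop the $\mstar$-term because $\zeta^{\mstar}$ vanishes on $\Eset$ and on every positive-cost coordinate. The only difference is your optional constructive route, which walks down faces from a vertex. The paper's constructive version instead pads $\zeta$ with dummy rows to a doubly stochastic matrix, applies Birkhoff--von Neumann peeling, and then prunes affinely dependent terms; both routes give the same $NK-N+1$ bound.
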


The recovery is constructive (Appendix~\ref{app:c2-sparse}); computing
a minimum-size decomposition is NP-hard \citep{dufosse2016notes} and
is not needed. Removing stable-matching mass leaves every off-stable
marginal and every player's regret unchanged, but may change the
stable marginals and $\rho$.
A zero-cost stable ray makes \eqref{eq:compactlp} never unique in
$(z,\rho)$; uniqueness statements refer to the canonical
minimal-$\rho$ off-stable solution (Remark~\ref{rem:ray}).

Define the \emph{upper-closed Graves--Lai region}
\begin{equation}\label{eq:UGL}
\UGL(\theta)
:=G(\theta)\,\Xset(\theta)+\R_+^N .
\end{equation}
It is a closed convex polyhedron (Lemmas~\ref{lem:closed}
and~\ref{lem:polyhedral}, Appendix~\ref{app:c1-region}).

\begin{proposition}[Necessary coefficient image]\label{prop:region}
Let $\theta\in\Thsep$ and let the policy be uniformly good on
$\Thstrict$ and satisfy Assumption~\ref{ass:complete}. If
$R(T_k;\theta)/\log T_k\to r\in\R^N$ along some subsequence, then
$r\in G(\theta)\Xset(\theta)\subseteq\UGL(\theta)$.
\end{proposition}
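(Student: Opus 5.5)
The plan is to pass from the regret vector to normalized matching counts and then pass to a limit along a further subsequence. Set $x^{(k)}_m:=\E_\theta[N_m(T_k)]/\log T_k$ for $m\neq\mstar$. By the accounting identity \eqref{eq:regretcount} and $g_i(\mstar;\theta)=0$, we have $R(T_k;\theta)/\log T_k=G(\theta)x^{(k)}$ exactly. The whole task is therefore to show that $x^{(k)}$ stays bounded. Once that is known, compactness gives a further subsequence with $x^{(k_j)}\to x$. Proposition~\ref{prop:cluster} (the finite cluster points of normalized counts lie in $\Xset(\theta)$) then yields $x\in\Xset(\theta)$, and continuity of the linear map $G(\theta)$ gives $r=G(\theta)x\in G(\theta)\Xset(\theta)$. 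The final inclusion into $\UGL(\theta)$ is trivial, because $0\in\R_+^N$.

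Boundedness is where top-choice separation enters. On $\Thsep$ every entry of $g(m;\theta)$ is nonnegative, and each column $m\neq\mstar$ has a strictly positive entry (Lemma~\ref{lem:positivity}). Set $\gamma:=\min_{m\neq\mstar}\max_i g_i(m;\theta)>0$; this is a minimum over a finite set. For each $k$,
\[
\gamma\sum_{m\neq\mstar}x^{(k)}_m\le\sum_{m\neq\mstar}x^{(k)}_m\sum_i g_i(m;\theta)=\sum_i R_i(T_k;\theta)/\log T_k ,
\]
and the right-hand side converges to $\sum_i r_i<\infty$. Since $x^{(k)}\ge0$, this bounds $x^{(k)}$ in $\ell_1$ for all large $k$, so the subsequence extraction above is legitimate.

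The only delicate point is matching the hypotheses of Proposition~\ref{prop:cluster}. That proposition is stated for cluster points along a common subsequence, and ours is exactly a limit along the subsequence $T_{k_j}$. If it turned out to require the full sequence, I would reprove it directly. For each $\lambda\in\Lset(\theta)$, Theorem~\ref{thm:infoalloc} gives $\liminf_T\frac1{\log T}\sum_{m\neq\mstar}\E_\theta[N_m(T)]D_m(\theta,\lambda)\ge1$. A liminf over all $T$ bounds the liminf along any subsequence, so along $T_{k_j}$ we get $\sum_m x^{(k_j)}_m D_m(\theta,\lambda)\ge1-o(1)$. Because $D_m(\theta,\lambda)$ is a fixed finite coefficient and $x^{(k_j)}\to x$, passing to the limit gives $\sum_m x_m D_m(\theta,\lambda)\ge1$, which holds for every $\lambda\in\Lset(\theta)$. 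This is precisely $x\in\Xset(\theta)$; alternatively, one can check only the finitely many quotas of Theorem~\ref{thm:reduction}.

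I expect the main obstacle to be boundedness, and it is handled by the positivity argument above. Without separation, a column of $G(\theta)$ could vanish or have entries that cancel, and the limit $r$ would no longer control the counts.
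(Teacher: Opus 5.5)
Your proof is correct and follows the paper's argument essentially verbatim. Like the paper, you bound the normalized matching counts using Lemma~\ref{lem:positivity} and the accounting identity \eqref{eq:regretcount}, extract a convergent further subsequence, invoke Proposition~\ref{prop:cluster} (which is indeed stated along subsequences), and pass \eqref{eq:regretcount} to the limit; the only cosmetic difference is your constant $\min_{m\neq\mstar}\max_i g_i(m;\theta)$ in place of the paper's $\min_{m\neq\mstar}\sum_i g_i(m;\theta)$, which changes nothing.
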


The proposition is phrased through cluster points of the full
vector on purpose: the coordinatewise liminf vector need not lie in
$\UGL(\theta)$ (Remark~\ref{rem:liminf}; proofs in
Appendix~\ref{app:c1-region}).

\begin{theorem}[Weighted Graves--Lai lower bound]\label{thm:weighted}
Let $\theta\in\Thsep$ and $w\in\R^N_{++}$. Every policy that is uniformly good on $\Thstrict$ and
satisfies Assumption~\ref{ass:complete} obeys
\[
\liminf_{T\to\infty}\;\frac{w^\top R(T;\theta)}{\log T}\;\ge\;C_w(\theta).
\]
\end{theorem}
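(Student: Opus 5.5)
The plan is to pass to a subsequence along which the weighted regret attains its liminf. We then extract a further subsequence along which the full normalized matching-count vector converges, and apply Proposition~\ref{prop:cluster} to place its limit in $\Xset(\theta)$. Finally, the accounting identity \eqref{eq:regretcount} turns weighted regret into the LP objective \eqref{eq:Cw}.

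First, if the liminf is $+\infty$ there is nothing to prove. Otherwise choose $T_k\to\infty$ with
\[
\frac{w^\top R(T_k;\theta)}{\log T_k}\to L<\infty .
\]
By \eqref{eq:regretcount}, $w^\top R(T;\theta)=\sum_{m\neq\mstar}\E_\theta[N_m(T)]\langle w,g(m;\theta)\rangle$, where the $\mstar$ term vanishes since $g(\mstar;\theta)=0$.

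Second, we need boundedness of the normalized counts $y_m(T_k):=\E_\theta[N_m(T_k)]/\log T_k$ for $m\neq\mstar$. On $\Thsep$, Lemma~\ref{lem:positivity} says every column $g(m;\theta)$ with $m\neq\mstar$ is nonnegative with a strictly positive entry. Since $w\in\R^N_{++}$, we get $c_m:=\langle w,g(m;\theta)\rangle>0$. Hence $y_m(T_k)\le (w^\top R(T_k)/\log T_k)/c_m$, which is bounded along the subsequence. Since $\Mset$ is finite, Bolzano--Weierstrass yields a further subsequence along which $y(T_k)\to x\in\R_{\ge0}^{\Mset\setminus\{\mstar\}}$. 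This is exactly where top-choice separation and strict positivity of $w$ are used.

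Third, Proposition~\ref{prop:cluster} says every finite cluster point of the normalized expected matching counts along a common subsequence lies in $\Xset(\theta)$. It requires uniform goodness on $\Thstrict$ and Assumption~\ref{ass:complete}, and rests on Theorem~\ref{thm:infoalloc} applied to each $\lambda\in\Lset(\theta)$ along that subsequence. So $x\in\Xset(\theta)$. Passing to the limit in the finite linear sum gives
\[
L=\sum_{m\neq\mstar}x_m\langle w,g(m;\theta)\rangle\ge C_w(\theta)
\]
by the definition \eqref{eq:Cw}.

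The main subtlety is step two: the liminf of a weighted sum does not directly control individual coordinates. Without positivity of every $c_m$, some counts could diverge while contributing nothing, or negative gaps could cancel, and no finite cluster point would exist. Restricting to $\Thsep$ removes this issue. One should also check that Proposition~\ref{prop:cluster} applies to a cluster point along an arbitrary subsequence, not only along a liminf sequence for each $\lambda$ separately. This holds because Theorem~\ref{thm:infoalloc} is a liminf over all $T$, so it bounds every subsequence.
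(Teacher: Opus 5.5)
Your proposal is correct and follows the paper's proof essentially step for step. You take a subsequence realizing the liminf, use $\langle w,g(m;\theta)\rangle>0$ (from Lemma~\ref{lem:positivity} and $w>0$) to bound the normalized counts, extract a convergent subsequence, place its limit in $\Xset(\theta)$ via Proposition~\ref{prop:cluster}, and pass to the limit in the finite sum from \eqref{eq:regretcount}. The only difference is cosmetic: the paper notes explicitly that $w^\top R(T)\ge0$ rules out $L=-\infty$, which your nonnegativity of the summands gives implicitly.
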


For $w>0$,
$C_w(\theta)=\inf\{\langle w,u\rangle:u\in\UGL(\theta)\}$: the
weighted bounds are the lower scalarizations of the region, whose
Pareto-minimal boundary we call the \emph{Graves--Lai necessary
lower boundary}. Every point of this boundary minimizes some
strictly positive weighted cost:

\begin{proposition}[Every Pareto-minimal point is properly supported]
\label{prop:proper}
Let $\theta\in\Thsep$. (i)~For every $u\in\UGL(\theta)$ there is a
Pareto-minimal $u'\in\UGL(\theta)$ with $u'\le u$. (ii)~A point
$u\in\UGL(\theta)$ is Pareto-minimal if and only if
$\langle w,u\rangle=C_w(\theta)$ for some $w\in\R^N_{++}$. Hence
the necessary lower boundary is the union over $w>0$ of the faces
$\arg\min_{\UGL(\theta)}\langle w,\cdot\rangle$.
\end{proposition}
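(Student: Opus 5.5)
The plan is to exploit that $\UGL(\theta)$ is a closed convex polyhedron (Lemmas~\ref{lem:closed} and~\ref{lem:polyhedral}) which lies in $\R_+^N$ and equals $P+\R_+^N$ with $P:=G(\theta)\Xset(\theta)$. On $\Thsep$ every entry of $G(\theta)$ is nonnegative and $x\ge0$, so $\UGL(\theta)\subseteq\R_+^N$ is bounded below. Its recession cone is exactly $\R_+^N$. Indeed, $\Xset(\theta)$ is the polyhedron of Theorem~\ref{thm:reduction}, whose recession cone is $\R_+^{\Mset\setminus\{\mstar\}}$; this cone is mapped by $G$ into $\R_+^N$, so $\operatorname{rec}\UGL=G\R_+^{\Mset\setminus\{\mstar\}}+\R_+^N=\R_+^N$.

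For (i), fix $u\in\UGL(\theta)$ and pick any $w\in\R^N_{++}$, say $w=\mathbf 1$. Consider $S:=\{v\in\UGL(\theta):v\le u\}$. It is closed, nonempty, and contained in the box $[0,u]$, hence compact. Let $u'$ minimize $\langle w,\cdot\rangle$ over $S$. If some $v\in\UGL$ had $v\le u'$ with $v\neq u'$, then $v\in S$ and $\langle w,v\rangle<\langle w,u'\rangle$ because $w>0$. This is a contradiction, so $u'$ is Pareto-minimal.

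For (ii), the easy direction comes first. If $\langle w,u\rangle=C_w(\theta)=\inf_{\UGL}\langle w,\cdot\rangle$ with $w>0$, then any $v\le u$, $v\neq u$ would give a strictly smaller value. Hence $u$ is Pareto-minimal. The identity $C_w=\inf_{\UGL}\langle w,\cdot\rangle$ is stated just before the proposition: it holds because adding $\R_+^N$ does not lower the infimum when $w>0$, and $\langle w,Gx\rangle$ is the objective in \eqref{eq:Cw}.

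The converse is the main point. I will use the standard fact that in a polyhedron every Pareto-minimal point is properly efficient: it is supported by a strictly positive weight. My concrete route is an LP-duality argument. Write $\UGL=\{v:Av\ge b\}$ via the finite description from Lemma~\ref{lem:polyhedral}. Since $\operatorname{rec}\UGL=\{d:Ad\ge0\}=\R_+^N$, each unit vector $e_j$ satisfies $Ae_j\ge0$.

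Let $u$ be Pareto-minimal. Consider the LP
\[
\max\Bigl\{\textstyle\sum_j s_j:\ v+s=u,\ Av\ge b,\ s\ge0\Bigr\}.
\]
Its optimum is $0$, attained at $s=0$, because Pareto-minimality forbids any $s\ge0$ with $s\neq0$. By LP duality, there are multipliers $y\ge0$ on the rows of $A$ and $w$ free on the equality such that:
\begin{itemize}
\item $w^\top=y^\top A$ (stationarity in $v$);
\item $w\ge\mathbf 1$ (stationarity in $s$, with sign conventions arranged accordingly);
\item $\langle w,u\rangle=y^\top b$ (zero duality gap).
\end{itemize}
Then for every $v\in\UGL$ we get $\langle w,v\rangle=y^\top Av\ge y^\top b=\langle w,u\rangle$. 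So $u$ minimizes $\langle w,\cdot\rangle$ with $w\ge\mathbf 1>0$, i.e.\ $\langle w,u\rangle=C_w(\theta)$.

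The obstacle I expect is getting the duality signs right so that the resulting $w$ is strictly positive rather than merely nonnegative. The slack variable's coefficient $1$ in the objective is what forces $w_j\ge1$. If this bookkeeping gets awkward, the fallback is Arrow--Barankin--Blackwell/Geoffrion proper efficiency for polyhedral sets.

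The final claim, that the boundary is $\bigcup_{w>0}\arg\min_{\UGL}\langle w,\cdot\rangle$, is then a restatement of (ii).
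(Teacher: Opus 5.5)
Your proposal is correct and is essentially the paper's proof: (i) minimizes $\langle\mathbf 1,\cdot\rangle$ over the compact lower section $\UGL(\theta)\cap(u-\R^N_+)\subseteq[0,u]$, and the converse in (ii) is the same LP-duality argument on an $H$-representation $\{v:Av\ge b\}$, where your dual constraints do come out as $w=A^\top y$ (from free $v$) and $w\ge\mathbf 1$ (from $s\ge0$), so the sign bookkeeping you were worried about works. Keeping $v$ as a variable with $v+s=u$ is only a cosmetic variant of the paper's $\max\{\langle\mathbf 1,s\rangle:As\le Au-b,\ s\ge0\}$, and your recession-cone remark ($Ae_j\ge0$) is true but never used.
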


Strictly positive weights support every Pareto-minimal point;
they need not select it uniquely. The proof uses polyhedrality
(Appendix~\ref{app:c1-region}). The dual of \eqref{eq:coverlp} prices
each quota: an optimal schedule uses a matching only if its weighted
loss is exactly paid for by the quotas it serves
(Appendix~\ref{app:c2-dual}). The complexity claim concerns
\eqref{eq:compactlp}, solvable in polynomial time for rational data.
These are static scheduling statements; \S\ref{sec:algorithm}
establishes online attainment.

\subsection{A Three-Player Externality Frontier}\label{sec:frontier}

The two-by-two instance has a singleton frontier
(Corollary~\ref{cor:closed22}). We next construct a three-player
instance whose frontier is a segment.

\begin{example}\label{ex:33}
Take $p_1\succ p_2\succ p_3$, $K=N=3$, and
\[
\mu^{(3)}=\begin{pmatrix}1&0.9&0\\0&1&0.9\\0&0.99&1\end{pmatrix},
\qquad \mstar=(a_1,a_2,a_3),
\]
with quotas $q_{1,a_2}=q_{2,a_3}=2/0.1^2=200$, $q_{1,a_3}=2$
(player $3$, holding the last remaining arm, generates none). The five non-stable matchings are
\[
\begin{aligned}
m_{\mathrm A}&=(a_2,a_1,a_3),& m_{\mathrm B}&=(a_2,a_3,a_1),\\
m_{\mathrm C}&=(a_1,a_3,a_2),& m_{\mathrm D}&=(a_3,a_1,a_2),\\
m_{\mathrm E}&=(a_3,a_2,a_1).
\end{aligned}
\]
Their gap vectors and quotas served are given in
Appendix~\ref{app:c2-frontier}.
\end{example}

\begin{proposition}[Exact frontier of the $3\times3$ instance]
\label{prop:frontier33}
For the instance $\mu^{(3)}$, the Pareto-minimal boundary of
$\UGL(\mu^{(3)})$ is exactly the segment
\[
\bigl\{\,r(t)=(22,\;220-t,\;4+0.99t)\;:\;t\in[-2,200]\,\bigr\},
\]
traced by the allocations $x_{m_{\mathrm B}}=h$,
$x_{m_{\mathrm A}}=x_{m_{\mathrm C}}=200-h$,
$x_{m_{\mathrm D}}=y$, $x_{m_{\mathrm E}}=2-y$, with $t=h-y$,
$h\in[0,200]$, $y\in[0,2]$, all meeting the three quotas with
equality.
\end{proposition}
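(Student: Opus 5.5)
The plan is a direct primal–dual computation on the matching-cover LP \eqref{eq:coverlp}, using Theorem~\ref{thm:reduction} to replace $\Xset(\mu^{(3)})$ by the three pairwise quotas. First I will list the data. The eligible pairs are $(1,a_2)$, $(1,a_3)$ and $(2,a_3)$; the pair $(2,a_1)$ is excluded because $a_1=\mstar(1)$. The within-player gaps are:
\begin{itemize}
\item player $1$: $\Delta_{1,a_2}=0.1$ and $\Delta_{1,a_3}=1$;
\item player $2$: $0.1$ at $a_3$ and $1$ at $a_1$;
\item player $3$: $1$ at $a_1$ and $0.01$ at $a_2$.
\end{itemize}
From these the gap vectors are
\[
\begin{aligned}
g(m_{\mathrm A})&=(0.1,1,0), & g(m_{\mathrm B})&=(0.1,0.1,1),\\
g(m_{\mathrm C})&=(0,0.1,0.01), & g(m_{\mathrm D})&=(1,1,0.01),\\
g(m_{\mathrm E})&=(1,0,1).
\end{aligned}
\]
The quota constraints read
\[
x_{\mathrm A}+x_{\mathrm B}\ge200,\qquad x_{\mathrm B}+x_{\mathrm C}\ge200,\qquad x_{\mathrm D}+x_{\mathrm E}\ge2 .
\]
Substituting the stated allocations gives $r_1=0.1\cdot200+2=22$, $r_2=(200-h)+0.1\cdot200+y=220-t$ and $r_3=h+0.01(200-h)+0.01y+(2-y)=4+0.99t$. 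So every point of the segment lies in $G\Xset\subseteq\UGL$.

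Next I will derive valid inequalities for every $r=G x$ with $x\in\Xset$.
\begin{itemize}
\item $r_1=0.1(x_{\mathrm A}+x_{\mathrm B})+(x_{\mathrm D}+x_{\mathrm E})\ge22$.
\item $r_2\ge0.1(x_{\mathrm B}+x_{\mathrm C})\ge20$.
\item $r_3\ge0.01(x_{\mathrm B}+x_{\mathrm C})+0.01(x_{\mathrm D}+x_{\mathrm E})\ge2.02$.
\item The key inequality is $r_2+r_3/0.99\ge220+4/0.99$.
\end{itemize}
I will certify the key inequality with LP duality on \eqref{eq:coverlp} with weight $w=(0,1,1/0.99)$, using dual prices $y_{1,a_2}=1$, $y_{2,a_3}=0.1+0.01/0.99$ and $y_{1,a_3}=1/0.99$. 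I need to check, matching by matching, that the quota prices served never exceed the objective coefficient:
\begin{itemize}
\item $m_{\mathrm A}$ has coefficient $1$;
\item $m_{\mathrm B}$ has coefficient $0.1+1/0.99=y_{1,a_2}+y_{2,a_3}$, tight;
\item $m_{\mathrm C}$ has coefficient $0.1+0.01/0.99$, tight;
\item $m_{\mathrm D}$ has coefficient $1+0.01/0.99\ge1/0.99$;
\item $m_{\mathrm E}$ has coefficient $1/0.99$, tight.
\end{itemize}
The dual value is $200+20+2/0.99+2/0.99=220+4/0.99$. These inequalities extend to $\UGL=G\Xset+\R^3_+$ because all the coefficients are nonnegative.

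Finally I will prove both inclusions.

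\emph{Segment points are Pareto-minimal.} Take $u'\le r(t)$ with $u'\in\UGL$. The first inequality forces $u'_1=22$. The key inequality, which $r(t)$ meets with equality, together with $u'_2\le r_2(t)$ and $u'_3\le r_3(t)$, forces $u'_2=r_2(t)$ and $u'_3=r_3(t)$.

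\emph{Every Pareto-minimal point is on the segment.} It suffices to find, for any $u\in\UGL$, some $t\in[-2,200]$ with $r(t)\le u$. The first coordinate is fine since $u_1\ge22$. For the other two we need $220-u_2\le t\le(u_3-4)/0.99$. This interval is nonempty by the key inequality. It meets $[-2,200]$ because $u_2\ge20$ gives lower end $\le200$, and $u_3\ge2.02$ gives upper end $\ge-2$.

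The main obstacle is guessing the right weight and dual prices for the key inequality, since the cost coefficients must be tight exactly on $m_{\mathrm B}$, $m_{\mathrm C}$ and $m_{\mathrm E}$ (and on $m_{\mathrm A}$, $m_{\mathrm D}$ along the segment in the appropriate direction). I would find it by requiring complementary slackness on the supports used by the parametrized allocations, which leads to the ratio $1:1/0.99$ between $r_2$ and $r_3$. This ratio is forced by the trade-off $r_2=220-t$, $r_3=4+0.99t$.
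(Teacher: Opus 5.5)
Your proof is correct, but it takes a different route from the paper's.

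\textbf{The paper's route.} The paper argues at the allocation level. For any $x\in\Xset(\mu^{(3)})$ it builds a family point $x''\le x$ componentwise: it truncates $x_{m_{\mathrm B}}$ at $200$ and $x_{m_{\mathrm D}}$ at $2$, and sets $x''_{m_{\mathrm A}}=x''_{m_{\mathrm C}}$ and $x''_{m_{\mathrm E}}$ to the residual quota mass. Then $Gx''\le Gx$ because the gap vectors are nonnegative. Incomparability of segment points follows from $r_2(t)$ being strictly decreasing and $r_3(t)$ strictly increasing.

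\textbf{Your route.} You work in regret space. You use three coordinate bounds plus the single supporting inequality $0.99\,u_2+u_3\ge221.8$. You certify the latter by a dual solution of the cover LP at the weight $(0,1,1/0.99)$. That weight is nonnegative but not strictly positive, which is fine because weak duality needs no positivity.

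\textbf{What each route buys.} Your interval argument only uses the four inequalities, so it in fact proves the explicit H-description
$\UGL(\mu^{(3)})=\{u:u_1\ge22,\ u_2\ge20,\ u_3\ge2.02,\ 0.99u_2+u_3\ge221.8\}$.
This exhibits the knife-edge ratio $w_2/w_3=0.99$ directly. It is also in the same spirit as the rational certificates of Proposition~\ref{prop:family}(iii).

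The paper's route avoids guessing dual prices. It also yields allocation-level domination $x''\le x$, which the paper reuses later. That reuse occurs in Remark~\ref{rem:liminf} and in the argument after Lemma~\ref{lem:generic}, which identifies the allocations with regret vector $r(t)$ as exactly the family points.

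\textbf{A small correction.} The $m_{\mathrm D}$ dual constraint is tight, not merely satisfied, since $1+0.01/0.99=1/0.99$. So all five dual constraints are tight. Complementary slackness requires this, because the family uses all five matchings.
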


The proof (Appendix~\ref{app:c2-frontier}) maps every feasible regret
vector to a componentwise no-larger family point and shows that no two
segment points are componentwise ordered.

The same binding quotas are filled at every point, but the cost is
distributed differently. Player $1$'s two quotas can never
share a matching, so $r_1=22$ throughout. At the \emph{parallel}
end ($t=200$),
the schedule $\{m_{\mathrm B}\!:200,\ m_{\mathrm E}\!:2\}$ fills both
large quotas at once: player $2$ reaches
its minimum $r_2=20$ while player $3$ spends those rounds on its
worst arm ($r_3=202$). At the \emph{protective} end ($t=-2$), the
schedule $\{m_{\mathrm A}\!:200,\ m_{\mathrm C}\!:200,\ m_{\mathrm D}\!:2\}$
serves every quota through matchings cheap for player $3$
($r_3=2.02$) while player $2$ pays both large quotas in sequence
($r_2=222$).
Interior points are supported only by the weight ratio
$w_2/w_3=0.99$; other weights select an endpoint, since the
region is a polyhedron (Lemma~\ref{lem:polyhedral}) and a generic
weight selects a vertex.

The construction extends to every $N\ge3$. In this family every
Pareto-minimal point has
$r_1=c_1^{(N)}:=\sum_{a\neq a_1}2/\Delta_{1,a}$, and every face of
the necessary lower boundary has dimension at most $N-2$.
Exact rational certificates exhibit faces of dimension exactly $N-2$
for $N\in\{3,4,5,6\}$
(Proposition~\ref{prop:family}, Appendix~\ref{app:c2-family});
a closed form for general $N$ is open.

These schedules are described using the true means. The next section
constructs learning policies that attain their regret coefficients
without knowing those means.

\section{Attainability and Frontier Selection}\label{sec:algorithm}

The schedules of \S\ref{sec:finite-lp} are described in terms of
unknown means. We implement them through an anytime complete-matching
framework: estimate and repair the stable structure, solve for an
exploration schedule, track its increments, and certify a dynamic
exploitation candidate. The solve rule determines the target;
weighted optimality and frontier selection do not require canonical
optimizer uniqueness. The guarantees are asymptotic and concern
expected logarithmic regret at separated instances, with uniform
goodness on all of $\Thstrict$; \S\ref{sec:experiments} quantifies the
finite-horizon surcharge. Tracking follows \citet{garivier2016track};
the tracked object here is a schedule of complete matchings.

\subsection{The Estimate--Solve--Track Framework}\label{sec:c4-policy}

\paragraph{Parameters and thresholds.}
Fix $\alpha>\xi>3$, let $\ell(t):=\log(t+e)$, and set
$b(t):=\ell(t)+\alpha\log\ell(t)$,
$c(t):=\ell(t)+\xi\log\ell(t)$,
$\beta(t):=\log(\ell(t)+e)$: $b$ scales the exploration targets,
$c$ the certification test ($\xi>3$ makes the peeling bound of
Lemma~\ref{lem:peel} summable), $\beta$ the noise floor; $b>c$ and
$b(t)/\log t\to1$. Rounds are grouped into epochs $(T_{k-1},T_k]$
with $T_k:=\lceil e^{k^2}\rceil$, so $b(T_k)\sim k^2$.

After the repair phase of epoch $k$ the policy computes the empirical
matching $\widehat m_k=\mstar(\widehat\mu)$, the empirical eligible
set $\widehat\Eset_k$, and on it the \emph{noise-floor regularized}
gap
$\widehat g_{i,a}
:=\max\bigl(
\widehat\mu_{i,\widehat m_k(i)}-\widehat\mu_{i,a},\;
\sqrt{2\beta(T_{k-1})/N_{i,a}}
+\sqrt{2\beta(T_{k-1})/N_{i,\widehat m_k(i)}}\bigr)$
and capped quota
$\widehat q_{i,a}:=\min(2/\widehat g_{i,a}^{\,2},\,\kappa_k)$,
$\kappa_k:=k$. The floor keeps early quotas small without biasing
the limit. Write $\widehat{\mathcal P}_k$ for the feasible set of the compact LP
\eqref{eq:compactlp} with empirical quotas $\widehat q$, and use
surrogate costs for $w\in\R^N_{++}$,
$\widehat c_{k,i,a}:=w_i(\widehat\mu_{i,\widehat m_k(i)}-
\widehat\mu_{i,a})_+$. Each solve rule produces a schedule meeting
these quotas; the canonical rule has mass at most $NK\kappa_k$
(Lemma~\ref{lem:mass}), while the regularized and uncosted rules
impose this cap explicitly (Lemma~\ref{lem:capfeas}). The marginal
solution is decomposed into at most $NK-N+1$ matchings
(Proposition~\ref{prop:sparse}); empirical stable components are
discarded before exploration.

\paragraph{Phases and the certificate.}
Each epoch runs \emph{repair} (top every pair count up to
$\sqrt{b(T_k)}$), \emph{solve}, \emph{explore} (play the schedule
with increment execution, plus a stable-matching replay of
$\widehat m_k$ that keeps stable-side certification radii
negligible), and \emph{certified exploitation with a dynamic
candidate}: at each remaining round the policy recomputes
$\widetilde m_t:=\mstar(\widehat\mu(t))$ (no LP re-solve) and plays
it if for every level $i$ and every free arm
$a\neq\widetilde m_t(i)$,
$\widehat\mu_{i,\widetilde m_t(i)}(t)-\widehat\mu_{i,a}(t)
\ge\sqrt{2c(t)/N_{i,\widetilde m_t(i)}(t)}
+\sqrt{2c(t)/N_{i,a}(t)}$,
and otherwise samples the least-sampled pair of the first
uncertain comparison (an \emph{estimation round}).
Algorithm~\ref{alg:policy} (Appendix~\ref{app:c4}) specifies all
phases, tie-breaks, the round-robin prefix and the starting epoch.

\begin{remark}[Design necessities]\label{rem:free}
\emph{(a) The candidate must be dynamic}: a frozen wrong candidate
may remain uncertified despite further sampling, using
$\approx e^{k^2}$ rounds for estimation.
\emph{(b) Certification is asymptotically free}: quota supply and
stable replay support the demand--supply accounting of
Lemma~\ref{lem:certify}, leaving $o(\log T)$ residual estimation
rounds; certification also bounds expected wrong exploitation.
An incorrect commitment can persist in the exploit-only ablation,
which also stops persistent exploration (\S\ref{sec:experiments}).
\end{remark}

The default canonical policy $\pi_w$ uses three ordered programs:
minimum cost, minimum mass with vanishing cost slack, then a fixed
lexicographic choice. Its full definition, Assumption~\ref{ass:unique}
and Theorem~\ref{thm:achieve} are in
Appendix~\ref{app:c4-canonical}; that theorem concerns this solve
rule, separately from the regularized rule below.

\paragraph{Proof roadmap.}\label{sec:c4-proof}
Repair identifies the structure (Lemma~\ref{lem:ident}); convergence
of the selected targets then gives expected-count tracking by
increment execution and summable bad-epoch accounting
(Lemma~\ref{lem:track}). Certification and regret accounting
(Lemmas~\ref{lem:certify} and~\ref{lem:account}) turn off-stable
counts into expected-regret coefficients. Quota supply, stable replay
and the demand--supply argument control the residual estimation
rounds. Uniform goodness is proved
separately on every $\lambda\in\Thstrict$ in
Lemma~\ref{lem:ug} (Appendix~\ref{app:c4-strict}), for all four solve
rules, without separation or canonical uniqueness.

\subsection{Weighted Optimality and Frontier Selection}\label{sec:c4-frontier}

A positive weight can support an entire face without selecting one
point: at $w^\ast=(1,0.99,1)$ every point of the three-player segment
is optimal. The weight determines the cost-optimal face, while a
fixed data-only rule supplies a target in marginal space.

For $\theta\in\Thsep$, let $\mathcal P(\theta)$ be the feasible set of \eqref{eq:compactlp}
and $c(\theta,w)$ its surrogate cost. Define
$F_w(\theta):=\operatorname{proj}_{\mathrm{off}}
\arg\min_{(z,\rho)\in\mathcal P(\theta)}c(\theta,w)^{\top}z$, the
off-stable projection of the cost-optimal face without
$\rho$-minimization, a nonempty compact polytope
(Lemma~\ref{lem:fwcompact}, Appendix~\ref{app:c4-fw}).

Projection is in off-stable marginal coordinates, not in regret space;
the regret coefficient of player $i$ is then $\sum_a\Delta_{i,a}z_{i,a}$.

\paragraph{The policy $\pi_{w,\Psi}$.}
A \emph{target rule} $\Psi$ maps each empirical structure
$(\widehat m_k,\widehat\Eset_k,\widehat q_k,\widehat\Delta_k)$ to
$\tau_k:=\Psi(\widehat m_k,\widehat\Eset_k,\widehat q_k,
\widehat\Delta_k)\in\R_+^{N\times K}$, locally Lipschitz in
$(\widehat q_k,\widehat\Delta_k)$ for each fixed structure; it reads
only off-stable coordinates, and a constant rule fixes $\tau$.
Epoch $k$ replaces the solve step by the \emph{capped regularized program}
\begin{equation}\label{eq:regprog}
\min_{\substack{(z,\rho)\in\widehat{\mathcal P}_k\\
\rho\le NK\kappa_k}}\;
\widehat c_k^{\top}z
+\eta_k\bigl\|z^{\mathrm{off}}-\tau_k^{\mathrm{off}}\bigr\|^2,
\quad \eta_k:=1/\log(k+1),
\end{equation}
strictly convex in $z^{\mathrm{off}}$; among its minimizers the
policy takes the smallest $\rho$. The cap must be imposed, since the
charging argument of Lemma~\ref{lem:mass} breaks under
regularization (Lemma~\ref{lem:capfeas}, Remark~\ref{rem:whycap}).
Intuitively, the cost term confines the limit to the cost-optimal
face, the vanishing regularizer selects the point of that face
closest to the target, and the cap bounds each epoch's exploration on
every sample path, as uniform goodness requires.

\begin{theorem}[Rule-selected frontier achievability]
\label{thm:frontier-achieve}
Fix $w\in\R^N_{++}$, a target rule $\Psi$ and $\alpha>\xi>3$. The
policy $\pi_{w,\Psi}$ is uniformly good on $\Thstrict$. For
$\theta\in\Thsep$, let $\tau:=\Psi(\mstar,\Eset,q,\Delta)$
be the rule's value at the true structure and $y^\tau(\theta)$ the
Euclidean projection of $\tau^{\mathrm{off}}$ onto $F_w(\theta)$,
off-stable coordinates taken with respect to $\mstar(\theta)$. Then
for every off-stable pair,
\[
\frac{\E_\theta[N_{i,a}(T)]}{\log T}\to y^\tau_{i,a}(\theta),
\quad
\frac{R_i(T;\theta)}{\log T}
\to\sum_a\Delta_{i,a}y^\tau_{i,a}(\theta),
\]
and $R_w(T;\theta)/\log T\to C_w(\theta)$. No uniqueness assumption
is made, and the rule sees data only.
\end{theorem}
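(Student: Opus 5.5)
The proof splits into three parts. First, uniform goodness on $\Thstrict$. Second, convergence of the per-epoch targets at a separated $\theta$ to $y^\tau(\theta)$. Third, conversion of target convergence into expected-count and regret limits.

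\emph{Uniform goodness.} This is delegated to Lemma~\ref{lem:ug}, which covers all solve rules. The regularized rule enters only through the explicit cap $\rho\le NK\kappa_k$. By Lemma~\ref{lem:capfeas} the capped program \eqref{eq:regprog} is feasible and bounds each epoch's exploration by $NK\kappa_k=O(k)=O(\sqrt{\log T})$ on every sample path. The repair and certification phases are those of the canonical policy. So the accounting in Lemma~\ref{lem:ug} applies verbatim and gives $(R_i(T;\lambda))_+=o(T^\alpha)$.

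\emph{Target convergence.} Fix $\theta\in\Thsep$. By Lemma~\ref{lem:ident}, outside a summable family of bad epochs the empirical structure $(\widehat m_k,\widehat\Eset_k)$ equals $(\mstar,\Eset)$. Moreover $\widehat q_k\to q$ and $\widehat\Delta_k\to\Delta$, with the noise floor vanishing since $\beta(T_{k-1})/N\to0$ after repair, and the cap $\kappa_k=k$ eventually inactive. Local Lipschitzness of $\Psi$ at the fixed true structure then gives $\tau_k\to\tau$, and the surrogate costs satisfy $\widehat c_k\to c(\theta,w)$. The core step is a vanishing-Tikhonov argument for the parametric program $\min \widehat c_k^\top z+\eta_k\|z^{\rm off}-\tau_k^{\rm off}\|^2$ over $\widehat{\mathcal P}_k\cap\{\rho\le NK\kappa_k\}$.
\begin{enumerate}
\item Lemma~\ref{lem:fwcompact} makes the cost-optimal face compact; off-stable marginals are bounded by cost on $\Thsep$. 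Hence the minimizers are eventually bounded in off-stable coordinates.
\item For the perturbation, Hoffman's bound for the polyhedron $\mathcal P$ (right-hand side $q$, cost $c$) gives $|\min_{\widehat{\mathcal P}_k}\widehat c_k^\top z-C_w(\theta)|\le L\epsilon_k$, where $\epsilon_k:=\|\widehat q_k-q\|+\|\widehat c_k-c\|$. It also gives a Lipschitz Hausdorff estimate for the optimal face.
\item Comparing the minimizer $\widehat z_k$ with a point of the empirical near-optimal set close to $y^\tau$ yields $c^\top\widehat z_k-C_w\le O(\epsilon_k+\eta_k)$. Polyhedral sharpness (cost grows linearly in distance to the face) then gives $\operatorname{dist}(\widehat z_k^{\rm off},F_w)=O(\epsilon_k+\eta_k)$.
\item Within an $O(\epsilon_k+\eta_k)$-neighbourhood of $F_w$, the regularizer dominates cost fluctuations of order $\epsilon_k/\eta_k$. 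Strong convexity then forces $\widehat z_k^{\rm off}\to\operatorname{proj}_{F_w}\tau^{\rm off}=y^\tau$.
\end{enumerate}
This needs $\epsilon_k/\eta_k\to0$ on good epochs. It holds because $\epsilon_k=O(\sqrt{\log k/k^2})$ after $b(T_k)\sim k^2$ samples, while $\eta_k=1/\log(k+1)$ decays only logarithmically. Taking the smallest $\rho$ among minimizers affects only the stable coordinates and the mass, not $z^{\rm off}$.

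\emph{Counts and regret.} With targets converging on good epochs, Lemma~\ref{lem:track} converts increment execution of the sparse schedule (Proposition~\ref{prop:sparse}, stable components removed) into $\E_\theta[N_{i,a}(T)]/\log T\to y^\tau_{i,a}$ for off-stable pairs. This uses $b(T_k)/\log T_k\to1$ and the fact that bad epochs contribute $o(\log T)$ by summability and the cap. Lemma~\ref{lem:certify} shows the residual estimation rounds and wrong exploitations are $o(\log T)$. Lemma~\ref{lem:account}, via \eqref{eq:regretcount} and the positivity of $g$ on $\Thsep$, gives $R_i/\log T\to\sum_a\Delta_{i,a}y^\tau_{i,a}$. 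Since $y^\tau\in F_w(\theta)$, weighting by $w$ gives the cost $C_w(\theta)$ by Theorem~\ref{thm:compact}. Uniqueness is never used: the limit is pinned down by the strongly convex selection, not by uniqueness of the LP optimizer.

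\emph{Main obstacle.} I expect the hardest step to be the quantitative Tikhonov selection under simultaneous perturbation of the feasible set, the cost and the target. It needs a uniform sharpness constant for the perturbed LP faces, and the error rate must be beaten by $\eta_k$. I would first try to obtain both via a single Hoffman constant over the finitely many feasible bases of \eqref{eq:compactlp}.
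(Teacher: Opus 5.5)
Your proposal is correct, and its outer architecture matches the paper's: uniform goodness from Lemma~\ref{lem:ug} via the explicit cap, then almost-sure convergence of the selected off-stable marginals on good epochs, then Lemmas~\ref{lem:track}, \ref{lem:certify} and~\ref{lem:account}.

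\textbf{Where the routes differ.} The difference is in the selection step.
\begin{itemize}
\item \emph{The paper's route.} It first proves \emph{exact} regularization at the true instance (Lemma~\ref{lem:projsel}). Let $H$ be the Hoffman constant of the cost inequality that cuts $F_w(\theta)$ out of $Y_\infty(\theta)$, and let $D:=2\max_{F_w(\theta)}\|\cdot-\tau^{\mathrm{off}}\|$. Then every $y\in Y_\infty(\theta)$ satisfies $\psi_\eta(y)-\psi_\eta(y^\tau)\ge(1/H-\eta D)\operatorname{dist}(y,F_w(\theta))$. Hence for $\eta<1/(HD)$ the point $y^\tau$ is the exact minimizer over every capped set $Y_B(\theta)$. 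The paper then Hoffman-repairs $\widehat z_k$ into the true capped feasible set and transfers the objective at cost $O(\delta_k)$. The $2\eta_k$-strong convexity of $\psi_{\eta_k}$ then gives $\|\widehat z_k^{\mathrm{off}}-y^\tau\|\le C(\delta_k+\sqrt{\delta_k/\eta_k})$ (Lemma~\ref{lem:regplugin}).
\item \emph{Your route.} You run a two-scale vanishing-Tikhonov argument. Bounding the regularizer only by $\Phi\ge0$ gives suboptimality $O(\epsilon_k+\eta_k)$. Sharpness turns this into $d_k:=\operatorname{dist}(\widehat z_k^{\mathrm{off}},F_w)=O(\epsilon_k+\eta_k)$. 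The projection inequality on $F_w$, applied at the nearest point $p_k$, then gives $\|p_k-y^\tau\|^2\le O(\epsilon_k/\eta_k)+O(d_k)+O(\|\tau_k-\tau\|)$.
\end{itemize}

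\textbf{What each buys.} Your route needs no threshold $\bar\eta$ and uses $\Psi$ only through $\tau_k\to\tau$. Its rate, however, is only $O(\sqrt{\eta_k}+\sqrt{\epsilon_k/\eta_k})=O((\log k)^{-1/2})$ rather than polynomial. Since Lemma~\ref{lem:track} needs only almost-sure convergence with a uniform dominator, that is enough. The two routes are closely linked: if in your step 3 you keep $\Phi_\tau(\widehat z_k^{\mathrm{off}})\ge\Phi_\tau(y^\tau)-Dd_k$ instead of $\Phi\ge0$, you get $(1/H-\eta_kD)\,d_k\le O(\epsilon_k)$, which is exactly the paper's inequality.

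\textbf{Your anticipated obstacle does not arise.} You do not need a sharpness constant uniform over perturbed faces. The empirical and true capped systems share one constraint matrix and differ only in the quota right-hand sides. So a single Hoffman constant repairs $\widehat z_k$ into $\mathcal P(\theta)$ within $O(\epsilon_k)$, after which the single sharpness constant at $\theta$ applies.

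\textbf{Asides to correct.} None of these is used in your argument, but each is wrong as stated.
\begin{itemize}
\item The optimal face is not Hausdorff-Lipschitz under cost perturbation. At $w^\ast=(1,0.99,1)$ on $\mu^{(3)}$, an arbitrarily small change of $\widehat c_k$ can collapse the segment face to an endpoint; this is why regularization is needed at all.
\item $\epsilon_k$ is not $O(\sqrt{\log k}/k)$. Off-stable pairs outside the schedule's support are only guaranteed the $\sqrt{b(T_k)}\asymp k$ repair samples. The paper's rate $k^{-1/3}$ on the event $B_k$ still gives $\epsilon_k/\eta_k\to0$.
\item Each epoch explores $O(k^2)$ rounds (mass $O(k)$ times $b_k-b_{k-1}=O(k)$), not $O(k)$. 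Lemma~\ref{lem:ug} already absorbs this.
\end{itemize}
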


For fixed $\Psi$, the same policy applies across instances; for
example, $\Psi\equiv0$ selects the minimum-norm point of each
$F_w(\theta)$. A prescribed point is attained pointwise: a constant
target $y\in F_w(\theta)$ projects to itself, and every
Pareto-minimal point has a supporting $w>0$
(Corollary~\ref{cor:frontier-general}). Choosing that target may
require instance calibration. Neither this statement nor the
local Lipschitz condition asserts a uniform-in-instance rate.

For the three-player instance, the explicit policy $\pi_s$ uses a
fixed $s\in[0,1]$ to split the empirical quotas and attains
$r(202s-2)=(22,222-202s,2.02+199.98s)$
(Proposition~\ref{prop:frontier-attain}, Appendix~\ref{app:c4-frontier33}).
This is the split policy used in the diagnostic experiments. Its
full definition, including the capped fallback, is given in
Algorithm~\ref{alg:policy}. At $w^\ast$ a Lipschitz target rule has
the same limit (Remark~\ref{rem:usplit-rule}); it does not give the
same sequence of solves as $\pi_s$.

The proof (Appendix~\ref{app:c4-c4b}) combines exact regularization
\citep{mangasarian1979nonlinear,friedlander2007exact} with a Hoffman
bound \citep{hoffman1952approximate}.

\subsection{The Exact Attainable Set}\label{sec:c4-attainable}

Tracking a cost-optimal face realizes its regret coefficients. To
characterize all attainable coefficients, the solve rule must also
track feasible targets that are not cost-optimal.

Call $r$ \emph{attainable at $\theta$} if $R(T;\theta)/\log T\to r$
under a complete-matching policy uniformly good on $\Thstrict$, and
write $\mathcal A(\theta)$ for all such vectors. For
$\tau\in\R_+^{N\times K}$ let $\pi_\tau$ remove the cost from
\eqref{eq:regprog}, so the solve step becomes
$\min\{\|z^{\mathrm{off}}-\tau^{\mathrm{off}}\|^2:(z,\rho)\in
\widehat{\mathcal P}_k,\ \rho\le NK\kappa_k\}$, smallest $\rho$
among minimizers, and let
$Y_\infty(\theta):=\operatorname{proj}_{\mathrm{off}}\mathcal P(\theta)$
(off-stable marginals of feasible schedules).

\Needspace{11\baselineskip}
\begin{theorem}[Exact attainable set]\label{thm:closure}
Fix $\alpha>\xi>3$. (i)~For every $\tau\in\R_+^{N\times K}$ the
policy $\pi_\tau$ is uniformly good on $\Thstrict$ and, at every
$\theta\in\Thsep$, $\E_\theta[N_{i,a}(T)]/\log T$ converges on
every off-stable pair to the corresponding coordinate of the
Euclidean projection of $\tau^{\mathrm{off}}$ onto $Y_\infty(\theta)$;
in particular every $\tau^{\mathrm{off}}\in Y_\infty(\theta)$ is
tracked exactly. (ii)~For every $\theta\in\Thsep$,
$\mathcal A(\theta)=G(\theta)\Xset(\theta)$, a polyhedron, so that
$\UGL(\theta)=\mathcal A(\theta)+\R^N_+$ and the Pareto-minimal set
of $\mathcal A(\theta)$ is the Graves--Lai lower boundary.
\end{theorem}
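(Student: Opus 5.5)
We prove (i) by rerunning the analysis of Theorem~\ref{thm:frontier-achieve} with the cost term removed, and then derive (ii) from (i) and Proposition~\ref{prop:region}.

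\textbf{Uniform goodness and structure identification.} Uniform goodness of $\pi_\tau$ on $\Thstrict$ comes from Lemma~\ref{lem:ug}. That lemma already covers the uncosted rule. It only uses the explicit cap $\rho\le NK\kappa_k$ (feasible by Lemma~\ref{lem:capfeas}), together with repair and certification, and it needs neither separation nor uniqueness.

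\textbf{Convergence of targets at a fixed $\theta\in\Thsep$.} Here Lemma~\ref{lem:ident} identifies $\widehat m_k=\mstar$ and $\widehat\Eset_k=\Eset$ outside a summable family of bad epochs. On good epochs $\widehat q_k\to q$: the noise floor vanishes relative to the gaps, and $\kappa_k=k$ eventually exceeds every $q_{i,a}$.

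The solve step is a Euclidean projection of $\tau^{\mathrm{off}}$ onto the polyhedron $\widehat Y_k:=\operatorname{proj}_{\mathrm{off}}\{(z,\rho)\in\widehat{\mathcal P}_k:\rho\le NK\kappa_k\}$. Only the right-hand side of this polyhedron moves with $\widehat q_k$ and $\kappa_k$. A Hoffman bound \citep{hoffman1952approximate} gives Hausdorff convergence of $\widehat Y_k$ to $Y_\infty(\theta)$ on bounded sets, with rate $O(\|\widehat q_k-q\|)$. The cap is eventually inactive near the projection, because the points of $Y_\infty$ near $\tau$ have bounded mass. Since projections onto closed convex sets are $1$-Lipschitz in the set (in Hausdorff distance on bounded regions), the selected off-stable target converges to $\Pi_{Y_\infty(\theta)}(\tau^{\mathrm{off}})$.

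\textbf{Tracking and regret.} Lemma~\ref{lem:track} turns this convergence into $\E_\theta[N_{i,a}(T)]/\log T\to y_{i,a}$, with bad epochs contributing $o(\log T)$. Lemmas~\ref{lem:certify} and~\ref{lem:account} show that certification and estimation rounds add only $o(\log T)$ off-stable counts. Then \eqref{eq:regretcount} gives $R_i/\log T\to\sum_a\Delta_{i,a}y_{i,a}$. If $\tau^{\mathrm{off}}\in Y_\infty$, the projection is $\tau^{\mathrm{off}}$ itself.

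\textbf{Part (ii), the inclusion $\mathcal A\subseteq G\Xset$.} This is immediate from Proposition~\ref{prop:region}: convergence of the full vector gives a cluster point lying in $G(\theta)\Xset(\theta)$.

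\textbf{Part (ii), the converse.} Take $x\in\Xset(\theta)$. By Theorem~\ref{thm:reduction}, its marginals $z(x)$ meet the quotas. Set $\rho:=\sum_m x_m$ and add stable mass so that the stable-pair marginals complete $z$ to a feasible point of \eqref{eq:compactlp}. The row sums then equal $\rho$, and each column sum is at most $\rho$ because $x$ is a combination of matchings. Hence $z(x)^{\mathrm{off}}\in Y_\infty(\theta)$. Moreover $G(\theta)x$ has $i$-th coordinate $\sum_a\Delta_{i,a}z_{i,a}$, because stable pairs carry zero gap. Running $\pi_\tau$ with $\tau:=z(x)$ then attains $G(\theta)x$ by (i).

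\textbf{Consequences.} $G\Xset$ is the linear image of a polyhedron, hence a polyhedron (Lemma~\ref{lem:polyhedral}). The identity $\UGL=\mathcal A+\R^N_+$ is the definition \eqref{eq:UGL}. For the Pareto-minimal sets: a Pareto-minimal point of $\UGL$ must lie in $\mathcal A$, since otherwise it would be $a+v$ with $a\in\mathcal A$ and $v\ge0$, $v\ne0$, and $a$ would dominate it. Conversely, a point of $\mathcal A$ that is minimal in $\mathcal A$ is minimal in $\UGL$, because anything in $\UGL$ below it lies above some point of $\mathcal A$.

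\textbf{Main obstacle.} The hardest step is the projection-continuity step. Convergence has to hold uniformly enough to feed Lemma~\ref{lem:track}, and the projection onto a moving unbounded polyhedron has to stay well defined and bounded in the presence of the zero-cost stable ray and the growing cap. We plan to handle both by working only with off-stable coordinates, where the relevant feasible sets are bounded near $\tau$, and by applying the Hoffman constant of the fixed constraint matrix of \eqref{eq:compactlp}.
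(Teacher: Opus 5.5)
Your proposal follows the paper's proof essentially step for step. For (i) you get uniform goodness from Lemma~\ref{lem:ug}; besides the cap, that lemma also uses the fact that the schedule meets the empirical quotas, for the supply side of the ratchet, and the constraints of \eqref{eq:trackprog} guarantee this. You get plug-in convergence of the off-stable targets from a Hoffman bound for a fixed constraint matrix with moving quota right-hand sides, and then tracking, certification and accounting as in Theorem~\ref{thm:frontier-achieve}. For (ii) you use Proposition~\ref{prop:region} for $\subseteq$, tracking of the marginals $\tau:=z(x)$ of $x\in\Xset(\theta)$ for $\supseteq$, and $\mathrm{Pmin}(\mathcal A+\R^N_+)=\mathrm{Pmin}(\mathcal A)$. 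All of this matches the paper.

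The one justification that is wrong as stated is the continuity step. Metric projection onto closed convex sets is not $1$-Lipschitz in the set with respect to Hausdorff distance. It is not even Lipschitz with a constant depending only on a bounded region. Take the unit disk $C$ and $D:=C\cap\{v\le1-\varepsilon\}$, so $d_H(C,D)=\varepsilon$. For $x=(a,2)$ with $a=\sqrt{2\varepsilon-\varepsilon^2}$, the projections are $x/\|x\|$ and the corner $(a,1-\varepsilon)$, which are about $\sqrt{\varepsilon/2}$ apart.

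The correct general modulus is H\"older-$\tfrac12$, and that is exactly what the paper proves in Lemma~\ref{lem:trackplugin}. It compares the empirical minimizer with the topped-up proxy $\widetilde z_k$ of $y^\tau_\infty$. It then Hoffman-repairs the minimizer to a point $\bar z_k$ of the true capped set with $\Phi_\tau(\bar z_k^{\mathrm{off}})\le\Phi_\tau(y^\tau_\infty)+C\delta_k$. Finally the projection inequality \eqref{eq:projineq} gives $\|\widehat z_k^{\mathrm{off}}-y^\tau_\infty\|\le C\sqrt{\delta_k}$; when $\tau^{\mathrm{off}}\in Y_\infty(\theta)$, the proxy comparison gives $O(\delta_k)$ directly.

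Lemma~\ref{lem:track} needs only almost-sure convergence on good epochs and a uniform bound there. The same proxy comparison supplies that bound, since $\|\widehat z_k^{\mathrm{off}}-\tau^{\mathrm{off}}\|\le\|\widetilde z_k^{\mathrm{off}}-\tau^{\mathrm{off}}\|$. So your argument goes through once the $1$-Lipschitz claim is replaced by this estimate. Alternatively, because your sets are polyhedra with a fixed matrix and a moving right-hand side, you could invoke Lipschitz dependence of a strongly convex quadratic program's solution on its right-hand side. That is a specific polyhedral fact, though, not the general one you cited.
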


The proof (Appendix~\ref{app:c4-closure}) is short: $\subseteq$ in
(ii) is Proposition~\ref{prop:region},
and $\supseteq$ tracks the marginals of $x\in\Xset(\theta)$ by (i).
The ``$+\R^N_+$'' is not redundant.

\Needspace{8\baselineskip}
\begin{corollary}[Rigidity]\label{cor:rigid}
For $\theta\in\Thsep$, $\mathcal A(\theta)=\UGL(\theta)$ iff
$K>N$, or equivalently iff
$\R^N_+\subseteq\mathrm{cone}\{g(m;\theta):m\neq\mstar\}$.
At $\mu^{(3)}$, $(22,20,202)$ is
attainable and $(22,20,202+\varepsilon)$ is not for any
$\varepsilon>0$, since $r_1=22$, $r_2=20$ force $x_{m_{\mathrm B}}=200$,
$x_{m_{\mathrm E}}=2$, all other $x_m=0$: at the parallel end player
$3$'s rounds cannot
be wasted even deliberately.
\end{corollary}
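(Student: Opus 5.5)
The plan is to reduce everything to Theorem~\ref{thm:closure}(ii), which gives $\mathcal A(\theta)=G(\theta)\Xset(\theta)$ and $\UGL(\theta)=\mathcal A(\theta)+\R^N_+$. Write $\mathcal K:=\mathrm{cone}\{g(m;\theta):m\neq\mstar\}=G(\theta)\R_+^{\Mset\setminus\{\mstar\}}$. We then prove a chain of three equivalences: $\mathcal A=\UGL$, then $\R^N_+\subseteq\mathcal K$, then $K>N$. The numerical claim at $\mu^{(3)}$ is handled last by a direct computation.

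\emph{Cone criterion.} First, $\Xset(\theta)$ is upper-closed, $\Xset+\R_+^{\Mset\setminus\{\mstar\}}\subseteq\Xset$, because the quotas of Theorem~\ref{thm:reduction} are monotone in $x$. Hence $G\Xset+\mathcal K\subseteq G\Xset$.

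If $\R^N_+\subseteq\mathcal K$, then $\UGL=G\Xset+\R^N_+\subseteq G\Xset=\mathcal A$. The reverse inclusion $\mathcal A\subseteq\UGL$ is trivial.

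Conversely, suppose $\mathcal A=\UGL$. Fix $x_0\in\Xset$ and a coordinate $i$. For every $t>0$ there is $x_t\in\Xset$ with $Gx_t=Gx_0+te_i$. We need $x_t/t$ to stay bounded, and this uses Lemma~\ref{lem:positivity}. Every column of $G$ is nonnegative with a positive entry, so with $c:=\min_m\max_j g_j(m;\theta)>0$ we get $\mathbf 1^\top Gx\ge c\,\mathbf 1^\top x$ for $x\ge0$. Therefore $\|x_t\|_1\le(\mathbf 1^\top Gx_0+t)/c=O(t)$. Let $y\ge0$ be a cluster point of $x_t/t$ as $t\to\infty$. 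Passing to the limit in $G(x_t/t)=Gx_0/t+e_i$ gives $Gy=e_i$, so $e_i\in\mathcal K$. Since this holds for every $i$, $\R^N_+\subseteq\mathcal K$.

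\emph{Dimension criterion.} Suppose $K>N$. Then some arm $b$ lies outside the image of $\mstar$. Moving only player $i$ from $\mstar(i)$ to $b$ gives a complete matching $m\neq\mstar$ with $g(m;\theta)=\Delta_{i,b}e_i$. Here $\Delta_{i,b}>0$ because on $\Thsep$ the arm $\mstar(i)$ is the strict row maximum. Hence $e_i\in\mathcal K$ for every $i$.

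Now suppose $K=N$. Every $m\in\Mset$ is a permutation, so a non-stable $m$ differs from $\mstar$ in at least two players. On $\Thsep$ we have $g_j(m;\theta)>0$ exactly when $m(j)\neq\mstar(j)$, so every column of $G$ has support of size at least $2$. Any nonzero nonnegative combination of columns has support equal to the union of the supports of the columns it uses, hence of size at least $2$. Therefore no $e_i$ lies in $\mathcal K$. This case, where we must exclude $e_i$ from the cone, is the only one needing care, and the support-size argument settles it.

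\emph{The instance $\mu^{(3)}$.} Read the gap vectors off the matrix:
\[
g(m_{\mathrm A})=(0.1,1,0),\quad g(m_{\mathrm B})=(0.1,0.1,1),\quad g(m_{\mathrm C})=(0,0.1,0.01),
\]
\[
g(m_{\mathrm D})=(1,1,0.01),\quad g(m_{\mathrm E})=(1,0,1).
\]
The quotas are $x_{\mathrm A}+x_{\mathrm B}\ge200$, $x_{\mathrm D}+x_{\mathrm E}\ge2$ and $x_{\mathrm B}+x_{\mathrm C}\ge200$.

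For player $1$, $r_1=0.1(x_{\mathrm A}+x_{\mathrm B})+(x_{\mathrm D}+x_{\mathrm E})\ge22$, with equality iff the first two quotas are tight.

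For player $2$, $r_2=x_{\mathrm A}+x_{\mathrm D}+0.1(x_{\mathrm B}+x_{\mathrm C})\ge20$, with equality iff $x_{\mathrm A}=x_{\mathrm D}=0$ and $x_{\mathrm B}+x_{\mathrm C}=200$.

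Together these force $x_{\mathrm B}=200$, $x_{\mathrm C}=0$ and $x_{\mathrm E}=2$, so $r_3=x_{\mathrm B}+x_{\mathrm E}=202$. Thus $(22,20,202)$ is the unique point of $G\Xset=\mathcal A$ with $r_1=22$ and $r_2=20$. It is attainable by Theorem~\ref{thm:closure}(ii), while $(22,20,202+\varepsilon)$ lies in $\UGL\setminus\mathcal A$ for every $\varepsilon>0$.
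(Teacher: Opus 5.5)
Your proof is correct and follows the paper's route: Theorem~\ref{thm:closure}(ii) to identify $\mathcal A(\theta)=G(\theta)\Xset(\theta)$, then a cone criterion, then moving one player to an unmatched arm when $K>N$ and the support-size argument when $K=N$, then a direct computation at $\mu^{(3)}$. The one real difference is the ``only if'' half of the cone criterion: the paper uses the polyhedral identity $\mathrm{rec}(G\Xset)=G\,\mathrm{rec}(\Xset)=\mathrm{cone}\{g(m;\theta):m\neq\mstar\}$ and the fact that a closed convex $S$ satisfies $S+\R^N_+=S$ iff $\R^N_+\subseteq\mathrm{rec}(S)$, whereas you prove the needed inclusion directly from the positivity bound $\mathbf 1^\top Gx\ge c\,\mathbf 1^\top x$ and a cluster point of $x_t/t$, which is self-contained and does not need polyhedrality.
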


\FloatBarrier
\section{DIAGNOSTIC EXPERIMENTS}\label{sec:experiments}
We simulate a \emph{practical variant} of the policy on
$\mu^{(3)}$, horizon $T=10^7$, $50$ seeds per configuration ($20$
per split in the $s$-sweep of Figure~\ref{fig:frontier});
secondary experiments are in Appendix~\ref{app:experiments}.
All plotted regret is the sample-path \emph{pseudo-regret}
$\widetilde R_i(T):=\sum_a\Delta_{i,a}N_{i,a}(T)$, normalized by $b(T)$
($b(T)/\log T=1.19$ at $10^7$). The experiments diagnose the
scheduling mechanism and its finite-horizon surcharge.
The practical variants of Algorithm~\ref{alg:policy} use geometric epochs, deficit-based
execution, batched exploitation and a cap of $2000$; certification
is varied across configurations.
Table~\ref{tab:ablation}
(Appendix~\ref{app:experiments}) isolates the mechanisms.
The exploit-only ablation stops repair and exploration after
commitment and can retain a wrong matching; removing the noise floor
increases early exploration.

\begin{figure}[t]
\centering
\includegraphics[width=\columnwidth]{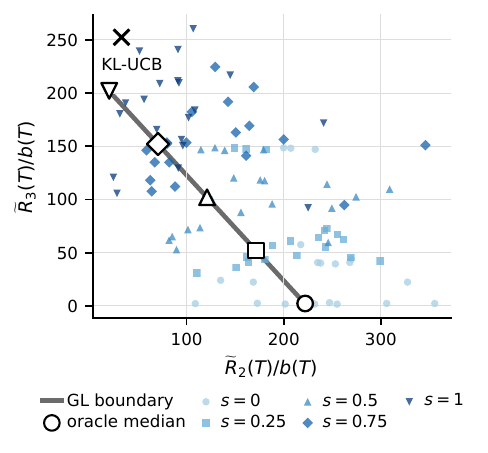}
\caption{Scheduling trade-off at $T=10^7$: final
$(\widetilde R_2/b(T),\,\widetilde R_3/b(T))$ for $\pi_s$,
$s\in\{0,0.25,0.5,0.75,1\}$, $20$ seeds each (filled), against the
segment $r_3=221.8-0.99\,r_2$ of the asymptotic Graves--Lai (GL)
boundary; open markers, shaped as the corresponding split, are oracle
$s$-split medians over $20$ seeds and sit on it. The plug-in
splits are ordered along the segment but displaced outward by the
finite-horizon transient.}
\label{fig:frontier}
\end{figure}

\paragraph{Oracle and plug-in trackers.}
Given the true $\mstar$ and the canonical marginals $z^\circ$
(Assumption~\ref{ass:unique}), the oracle tracker reaches
median $\widetilde R(T)/b(T)=(22.1,20.3,202.0)$ and
$w^\top\widetilde R/(C_wb)=1.002$. The plug-in tracker reaches median
$w^\top\widetilde R/b=299.0$, $23\%$ above $C_w$. Because the two
endpoints differ in weighted cost by only $0.8\%$, individual seeds
select either vertex; their coordinatewise median lies between the
endpoints and is not itself a coefficient limit. The certified
variant had no wrong-exploitation round in any of its $50$ seeds, but
its finite-horizon certification surcharge is substantial
(Table~\ref{tab:ablation}).

\paragraph{Scheduling and baselines.}
The oracle $s$-splits trace the segment
(Figure~\ref{fig:frontier}): the same quotas, scheduled with
different $s$, move regret between players $2$ and $3$ along the
asymptotic Graves--Lai boundary. On $33$ random markets up to $5\times5$ the
LPs \eqref{eq:coverlp} and \eqref{eq:compactlp} agree to machine precision; the plug-in runs also show
finite-horizon excess over $C_w$. Two baselines at the same horizon
$T=10^7$ compare scheduling costs (Table~\ref{tab:random}).
On the three probe instances, dedicated-cover exploration pays
$1.7\times$ to $2.3\times$ the tracker's weighted pseudo-regret. The standard layered KL-UCB baseline
is competitive in weighted pseudo-regret, but its mean vector on
$\mu^{(3)}$, $(30.1,35.3,274.6)$, sits near the parallel end, and it
has no explicit control parameter for moving along the segment. A
truncated run with theorem-style certification thresholds and a
practical cap has zero median wrong-exploitation rounds in both
$20$-seed configurations,
with the batching and other diagnostic deviations disclosed in
Appendix~\ref{app:experiments}; its finite-horizon surcharge remains large. For regularized face
selection, the oracle marginal optimizer matches all three interior
targets to solver precision; after execution, the maximum normalized
coordinate error is below $0.065$, and plug-in medians remain ordered by the
target. These runs quantify the finite-horizon error in selecting
and executing the schedules.

\newpage
\section{CONCLUSION}\label{sec:discussion}

The exact quota reduction separates information acquisition from
externality scheduling: pairwise quotas specify what must be learned,
while the marginal LP exposes how logarithmic regret can be distributed
across players. Under top-choice separation,
estimate--solve--track policies attain the resulting scalar optima;
capped regularization extends the construction to rule-selected points
on nonunique faces. Within this scope, the attainable coefficient set
is exactly the regret image of the quota polyhedron: scheduling
determines who pays.

\paragraph{Limitations.}
The analysis assumes Gaussian rewards with known unit variance, a
known common priority order, complete matchings and row-strict
preferences; the regret-region and attainability results also require
top-choice separation. Extensions beyond common priorities and to
two-sided uncertainty remain open. All guarantees are
logarithmic-asymptotic: we give no finite-horizon or
uniform-in-instance rates, and prescribing an arbitrary boundary point
may require instance calibration. The epoch and cap schedules
$T_k=\lceil e^{k^2}\rceil$ and $\kappa_k=k$ serve the proofs; at
$T=10^7$ the cap would be $\kappa_5=5$, far below the quota $200$ of
$\mu^{(3)}$, so the experiments use practical variants
(\S\ref{sec:experiments}, Appendix~\ref{app:experiments}), whose
finite-horizon surcharge remains substantial.

\clearpage
\section*{AI Use Statement}

In this work, we used generative AI tools for the following tasks that
require disclosure: providing critical ingredients for proofs of
mathematical claims and assisting in the writing of proofs; designing
the experiments and providing feedback on them; implementing the
methods and the simulation code, including the seeded scripts that
generate all synthetic problem instances (no data set was produced
directly by an AI model); and interpreting results. We have not used
generative AI tools to develop theoretical models or conceptual
frameworks, to formulate mathematical claims, to propose or refine
hypotheses, or for translation; cleaning or reformatting data sets and
qualitative or thematic data analysis are not applicable to this work.
Additionally, we used generative AI tools to suggest experimental
parameters, to create and edit software code and figures, to check
proofs and experiments, to identify relevant literature, and to edit
the manuscript for readability.

We have reviewed all AI-assisted work. The authors checked every proof
in Appendices~\ref{app:twobytwo}--\ref{app:c4} line by line, and
reviewed and tested the simulation code. Every number reported in the paper was
reconciled against the archived result files of the supplementary
material, and the exact certificates of Proposition~\ref{prop:family}
are re-verified in rational arithmetic by \texttt{frontier\_family.py}.
Every reference, including the specific theorem, equation and appendix
numbers we cite, was checked against the original source. We take
responsibility for the final content of this work, including text,
claims, code and results produced with the aid of generative AI.

\bibliographystyle{plainnat}
\bibliography{refs}

\begin{thebibliography}{42}
\providecommand{\natexlab}[1]{#1}
\providecommand{\url}[1]{\texttt{#1}}
\expandafter\ifx\csname urlstyle\endcsname\relax
  \providecommand{\doi}[1]{doi: #1}\else
  \providecommand{\doi}{doi: \begingroup \urlstyle{rm}\Url}\fi

\bibitem[Athanasopoulos et~al.(2025)Athanasopoulos, George, and
  Dimitrakakis]{athanasopoulos2025probably}
Andreas Athanasopoulos, Anne-Marie George, and Christos Dimitrakakis.
\newblock Probably correct optimal stable matching for two-sided markets under
  uncertainty.
\newblock In \emph{Proceedings of the 24th International Conference on
  Autonomous Agents and Multiagent Systems (AAMAS)}, pages 152--160, 2025.
\newblock arXiv:2501.03018.

\bibitem[Athanasopoulos et~al.(2026)Athanasopoulos, George, and
  Dimitrakakis]{athanasopoulos2026stable}
Andreas Athanasopoulos, Anne-Marie George, and Christos Dimitrakakis.
\newblock Probably correct optimal stable matching under two-sided uncertainty.
\newblock In \emph{Proceedings of the 42nd Conference on Uncertainty in
  Artificial Intelligence (UAI)}, volume 337 of \emph{Proceedings of Machine
  Learning Research}, pages 228--254. PMLR, 2026.

\bibitem[Baek and Farias(2024)]{baek2024fair}
Jackie Baek and Vivek~F. Farias.
\newblock Fair exploration via axiomatic bargaining.
\newblock \emph{Management Science}, 70\penalty0 (12):\penalty0 8922--8939,
  2024.

\bibitem[Basu(2025)]{basu2025competing}
Soumya Basu.
\newblock Competing bandits in matching markets via super stability.
\newblock In \emph{Proceedings of the 42nd International Conference on Machine
  Learning (ICML)}, volume 267 of \emph{Proceedings of Machine Learning
  Research}, pages 3226--3250. PMLR, 2025.
\newblock arXiv:2506.15926v1.

\bibitem[Capp{\'e} et~al.(2013)Capp{\'e}, Garivier, Maillard, Munos, and
  Stoltz]{cappe2013kullback}
Olivier Capp{\'e}, Aur{\'e}lien Garivier, Odalric-Ambrym Maillard, R{\'e}mi
  Munos, and Gilles Stoltz.
\newblock Kullback--{L}eibler upper confidence bounds for optimal sequential
  allocation.
\newblock \emph{The Annals of Statistics}, 41\penalty0 (3):\penalty0
  1516--1541, 2013.

\bibitem[Cen and Shah(2022)]{cen2022regret}
Sarah~H. Cen and Devavrat Shah.
\newblock Regret, stability \& fairness in matching markets with bandit
  learners.
\newblock In \emph{Proceedings of the 25th International Conference on
  Artificial Intelligence and Statistics (AISTATS)}, volume 151 of
  \emph{Proceedings of Machine Learning Research}, pages 8938--8968. PMLR,
  2022.

\bibitem[Combes et~al.(2015)Combes, {Talebi Mazraeh Shahi}, Prouti{\`e}re, and
  Lelarge]{combes2015combinatorial}
Richard Combes, Mohammad~Sadegh {Talebi Mazraeh Shahi}, Alexandre
  Prouti{\`e}re, and Marc Lelarge.
\newblock Combinatorial bandits revisited.
\newblock In \emph{Advances in Neural Information Processing Systems 28
  (NIPS)}, 2015.

\bibitem[Combes et~al.(2017)Combes, Magureanu, and
  Prouti{\`e}re]{combes2017minimal}
Richard Combes, Stefan Magureanu, and Alexandre Prouti{\`e}re.
\newblock Minimal exploration in structured stochastic bandits.
\newblock In \emph{Advances in Neural Information Processing Systems 30
  (NIPS)}, 2017.
\newblock arXiv:1711.00400.

\bibitem[Cuvelier et~al.(2021)Cuvelier, Combes, and
  Gourdin]{cuvelier2021asymptotically}
Thibaut Cuvelier, Richard Combes, and Eric Gourdin.
\newblock Asymptotically optimal strategies for combinatorial semi-bandits in
  polynomial time.
\newblock In \emph{Proceedings of the 32nd International Conference on
  Algorithmic Learning Theory (ALT)}, volume 132 of \emph{Proceedings of
  Machine Learning Research}, pages 505--528. PMLR, 2021.

\bibitem[Das and Kamenica(2005)]{das2005two}
Sanmay Das and Emir Kamenica.
\newblock Two-sided bandits and the dating market.
\newblock In \emph{Proceedings of the 19th International Joint Conference on
  Artificial Intelligence (IJCAI)}, pages 947--952, 2005.

\bibitem[Degenne and Koolen(2019)]{degenne2019multiple}
R{\'e}my Degenne and Wouter~M. Koolen.
\newblock Pure exploration with multiple correct answers.
\newblock In \emph{Advances in Neural Information Processing Systems 32
  (NeurIPS)}, 2019.

\bibitem[Dufoss{\'e} and U{\c c}ar(2016)]{dufosse2016notes}
Fanny Dufoss{\'e} and Bora U{\c c}ar.
\newblock Notes on {B}irkhoff--von {N}eumann decomposition of doubly stochastic
  matrices.
\newblock \emph{Linear Algebra and its Applications}, 497:\penalty0 108--115,
  2016.

\bibitem[Friedlander and Tseng(2007)]{friedlander2007exact}
Michael~P. Friedlander and Paul Tseng.
\newblock Exact regularization of convex programs.
\newblock \emph{SIAM Journal on Optimization}, 18\penalty0 (4):\penalty0
  1326--1350, 2007.

\bibitem[Gale and Shapley(1962)]{gale1962college}
David Gale and Lloyd~S. Shapley.
\newblock College admissions and the stability of marriage.
\newblock \emph{The American Mathematical Monthly}, 69\penalty0 (1):\penalty0
  9--15, 1962.

\bibitem[Garivier and Kaufmann(2016)]{garivier2016track}
Aur{\'e}lien Garivier and Emilie Kaufmann.
\newblock Optimal best arm identification with fixed confidence.
\newblock In \emph{Proceedings of the 29th Annual Conference on Learning Theory
  (COLT)}, volume~49 of \emph{Proceedings of Machine Learning Research}, pages
  998--1027. PMLR, 2016.

\bibitem[Garivier et~al.(2019)Garivier, M{\'e}nard, and
  Stoltz]{garivier2019explore}
Aur{\'e}lien Garivier, Pierre M{\'e}nard, and Gilles Stoltz.
\newblock Explore first, exploit next: The true shape of regret in bandit
  problems.
\newblock \emph{Mathematics of Operations Research}, 44\penalty0 (2):\penalty0
  377--399, 2019.

\bibitem[Geoffrion(1968)]{geoffrion1968proper}
Arthur~M. Geoffrion.
\newblock Proper efficiency and the theory of vector maximization.
\newblock \emph{Journal of Mathematical Analysis and Applications}, 22\penalty0
  (3):\penalty0 618--630, 1968.

\bibitem[Graves and Lai(1997)]{graves1997asymptotically}
Todd~L. Graves and Tze~Leung Lai.
\newblock Asymptotically efficient adaptive choice of control laws in
  controlled {M}arkov chains.
\newblock \emph{SIAM Journal on Control and Optimization}, 35\penalty0
  (3):\penalty0 715--743, 1997.

\bibitem[Hoffman(1952)]{hoffman1952approximate}
Alan~J. Hoffman.
\newblock On approximate solutions of systems of linear inequalities.
\newblock \emph{Journal of Research of the National Bureau of Standards},
  49\penalty0 (4):\penalty0 263--265, 1952.

\bibitem[Hosseini and Zhang(2024)]{hosseini2024bandit}
Hadi Hosseini and Duohan Zhang.
\newblock Bandit learning in matching markets: Utilitarian and {R}awlsian
  perspectives, 2024.
\newblock arXiv:2412.00301.

\bibitem[Hosseini et~al.(2024)Hosseini, Roy, and Zhang]{hosseini2024putting}
Hadi Hosseini, Sanjukta Roy, and Duohan Zhang.
\newblock Putting {G}ale \& {S}hapley to work: Guaranteeing stability through
  learning.
\newblock In \emph{Advances in Neural Information Processing Systems 37
  (NeurIPS)}, 2024.
\newblock \doi{10.52202/079017-2206}.

\bibitem[Isermann(1974)]{isermann1974proper}
Heinz Isermann.
\newblock Proper efficiency and the linear vector maximum problem.
\newblock \emph{Operations Research}, 22\penalty0 (1):\penalty0 189--191, 1974.

\bibitem[Kong and Li(2023)]{kong2023playeroptimal}
Fang Kong and Shuai Li.
\newblock Player-optimal stable regret for bandit learning in matching markets.
\newblock In \emph{Proceedings of the 2023 ACM-SIAM Symposium on Discrete
  Algorithms (SODA)}, pages 1512--1522, 2023.

\bibitem[Kong et~al.(2024)Kong, Wang, and Li]{kong2024improved}
Fang Kong, Zilong Wang, and Shuai Li.
\newblock Improved analysis for bandit learning in matching markets.
\newblock In \emph{Advances in Neural Information Processing Systems 37
  (NeurIPS)}, 2024.
\newblock \doi{10.52202/079017-2916}.

\bibitem[Kong et~al.(2025)Kong, Tang, Li, Lu, Lui, and
  Li]{kong2025indifference}
Fang Kong, Jingqi Tang, Mingzhu Li, Pinyan Lu, John C.~S. Lui, and Shuai Li.
\newblock Bandit learning in matching markets with indifference.
\newblock In \emph{Proceedings of the 13th International Conference on Learning
  Representations (ICLR)}, 2025.

\bibitem[Koolen(2013)]{koolen2013pareto}
Wouter~M. Koolen.
\newblock The {P}areto regret frontier.
\newblock In \emph{Advances in Neural Information Processing Systems 26
  (NIPS)}, 2013.

\bibitem[Lai and Robbins(1985)]{lai1985asymptotically}
Tze~Leung Lai and Herbert Robbins.
\newblock Asymptotically efficient adaptive allocation rules.
\newblock \emph{Advances in Applied Mathematics}, 6\penalty0 (1):\penalty0
  4--22, 1985.

\bibitem[Lattimore(2015)]{lattimore2015pareto}
Tor Lattimore.
\newblock The {P}areto regret frontier for bandits.
\newblock In \emph{Advances in Neural Information Processing Systems 28
  (NIPS)}, 2015.

\bibitem[Lattimore and Szepesv{\'a}ri(2020)]{lattimore2020bandit}
Tor Lattimore and Csaba Szepesv{\'a}ri.
\newblock \emph{Bandit Algorithms}.
\newblock Cambridge University Press, 2020.

\bibitem[Li et~al.(2025)Li, Wang, and Kong]{li2025survey}
Shuai Li, Zilong Wang, and Fang Kong.
\newblock A survey on bandit learning in matching markets.
\newblock In \emph{Proceedings of the 34th International Joint Conference on
  Artificial Intelligence (IJCAI)}, pages 10546--10554, 2025.

\bibitem[Lin et~al.(2026)Lin, Mauras, Perchet, and Merlis]{lin2026contextual}
Shiyun Lin, Simon Mauras, Vianney Perchet, and Nadav Merlis.
\newblock Adaptive bandit algorithms for contextual matching markets.
\newblock In \emph{Proceedings of the 43rd International Conference on Machine
  Learning (ICML)}, 2026.
\newblock arXiv:2605.28290.

\bibitem[Liu and Sellke(2022)]{liu2022pareto}
Allen~X. Liu and Mark Sellke.
\newblock The {P}areto frontier of instance-dependent guarantees in
  multi-player multi-armed bandits with no communication.
\newblock In \emph{Proceedings of the 35th Conference on Learning Theory
  (COLT)}, volume 178 of \emph{Proceedings of Machine Learning Research}, page
  3094. PMLR, 2022.
\newblock Full version: arXiv:2202.09653.

\bibitem[Liu et~al.(2020)Liu, Mania, and Jordan]{liu2020competing}
Lydia~T. Liu, Horia Mania, and Michael~I. Jordan.
\newblock Competing bandits in matching markets.
\newblock In \emph{Proceedings of the 23rd International Conference on
  Artificial Intelligence and Statistics (AISTATS)}, volume 108 of
  \emph{Proceedings of Machine Learning Research}, pages 1618--1628. PMLR,
  2020.

\bibitem[Mangasarian and Meyer(1979)]{mangasarian1979nonlinear}
Olvi~L. Mangasarian and Robert~R. Meyer.
\newblock Nonlinear perturbation of linear programs.
\newblock \emph{SIAM Journal on Control and Optimization}, 17\penalty0
  (6):\penalty0 745--752, 1979.

\bibitem[Pagare et~al.(2025)Pagare, Bandyopadhyay, and
  Juneja]{pagare2025optimal}
Tejas Pagare, Agniv Bandyopadhyay, and Sandeep Juneja.
\newblock Optimal algorithms for bandit learning in matching markets, 2025.
\newblock arXiv:2509.14466.

\bibitem[Raghavan et~al.(2018)Raghavan, Slivkins, {Wortman Vaughan}, and
  Wu]{raghavan2018externalities}
Manish Raghavan, Aleksandrs Slivkins, Jennifer {Wortman Vaughan}, and
  Zhiwei~Steven Wu.
\newblock The externalities of exploration and how data diversity helps
  exploitation.
\newblock In \emph{Proceedings of the 31st Conference on Learning Theory
  (COLT)}, volume~75 of \emph{Proceedings of Machine Learning Research}, pages
  1724--1738. PMLR, 2018.

\bibitem[R{\'e}veillard and Combes(2025)]{reveillard2025multimodal}
William R{\'e}veillard and Richard Combes.
\newblock Multimodal bandits: Regret lower bounds and optimal algorithms.
\newblock In \emph{Advances in Neural Information Processing Systems 38
  (NeurIPS)}, 2025.

\bibitem[Sankararaman et~al.(2021)Sankararaman, Basu, and
  Sankararaman]{sankararaman2021dominate}
Abishek Sankararaman, Soumya Basu, and Karthik~Abinav Sankararaman.
\newblock Dominate or delete: Decentralized competing bandits in serial
  dictatorship.
\newblock In \emph{Proceedings of the 24th International Conference on
  Artificial Intelligence and Statistics (AISTATS)}, volume 130 of
  \emph{Proceedings of Machine Learning Research}, pages 1252--1260. PMLR,
  2021.

\bibitem[Sarkar et~al.(2026)Sarkar, Dutta, and
  Ray~Chowdhury]{sarkar2026fairness}
Dhruv Sarkar, Soumyadeep Dutta, and Sayak Ray~Chowdhury.
\newblock Price of fairness in bandits: A tight minimax characterization, 2026.
\newblock arXiv:2607.13402.

\bibitem[Uba and Yamaguchi(2026)]{uba2026sleeping}
Shinnosuke Uba and Yutaro Yamaguchi.
\newblock Regret analysis of sleeping competing bandits, 2026.
\newblock arXiv:2603.19700.

\bibitem[{Van Parys} and Golrezaei(2024)]{vanparys2024optimal}
Bart P.~G. {Van Parys} and Negin Golrezaei.
\newblock Optimal learning for structured bandits.
\newblock \emph{Management Science}, 70\penalty0 (6):\penalty0 3951--3998,
  2024.
\newblock \doi{10.1287/mnsc.2020.02108}.
\newblock arXiv:2007.07302v3.

\bibitem[Wang and Li(2024)]{wang2024optimal}
Zilong Wang and Shuai Li.
\newblock Optimal analysis for bandit learning in matching markets with serial
  dictatorship.
\newblock \emph{Theoretical Computer Science}, 1010:\penalty0 114703, 2024.

\end{thebibliography}

\section*{Checklist}

\begin{enumerate}

  \item For all models and algorithms presented, check if you include:
  \begin{enumerate}
    \item A clear description of the mathematical setting, assumptions, algorithm, and/or model. [\textbf{Yes}. \S\ref{sec:model} fixes the model and the four standing modeling conventions; Table~\ref{tab:scope} (Appendix~\ref{app:scope}) lists the instance set and hypotheses of every main result; Algorithm~\ref{alg:policy} specifies the policy with all tie-breaks.]
    \item An analysis of the properties and complexity (time, space, sample size) of any algorithm. [\textbf{Yes}. The static compact LP has $NK+1$ variables and $O(NK)$ constraints and is polynomial-time solvable for rational input (\S\ref{sec:finite-lp}); schedules use at most $NK-N+1$ matchings. Regret and exploration counts are analyzed in \S\ref{sec:algorithm} and Appendix~\ref{app:c4}; no general running-time bound is claimed for the full online policy or an arbitrary target rule.]
    \item (Optional) Anonymized source code, with specification of all dependencies, including external libraries. [\textbf{Yes}. Anonymized code with pinned dependencies and a one-command driver is in the supplementary material.]
  \end{enumerate}

  \Needspace{5\baselineskip}
  \item For any theoretical claim, check if you include:
  \begin{enumerate}
    \item Statements of the full set of assumptions of all theoretical results. [\textbf{Yes}. Standing assumptions in \S\ref{sec:model}; per-result hypotheses in each statement and in Table~\ref{tab:scope}.]
    \item Complete proofs of all theoretical results. [\textbf{Yes}. Appendices~\ref{app:twobytwo}--\ref{app:c4}.]
    \item Clear explanations of any assumptions. [\textbf{Yes}. \S\ref{sec:model} (parameter-space paragraph, Remark~\ref{rem:local}) and Assumption~\ref{ass:unique} with its discussion.]
  \end{enumerate}

  \item For all figures and tables that present empirical results, check if you include:
  \begin{enumerate}
    \item The code, data, and instructions needed to reproduce the main experimental results (either in the supplemental material or as a URL). [\textbf{Yes}. Supplementary material; all data is synthetic and generated by the seeded scripts.]
    \item All the training details (e.g., data splits, hyperparameters, how they were chosen). [\textbf{Yes}. The theoretical and diagnostic policy parameters are stated in \S\ref{sec:experiments} and Appendix~\ref{app:experiments}; there is no training.]
    \item A clear definition of the specific measure or statistics and error bars (e.g., with respect to the random seed after running experiments multiple times). [\textbf{Yes}. Pseudo-regret is defined in \S\ref{sec:experiments}; Figures~\ref{fig:tracking}--\ref{fig:weighted} and Table~\ref{tab:ablation} report medians and interquartile ranges; Figure~\ref{fig:frontier} shows individual runs and oracle medians. Table~\ref{tab:random} reports means and standard deviations. The theorem-style diagnostic reports mean and standard deviation for pseudo-regret and medians for round counts; seed counts are stated with each experiment.]
    \item A description of the computing infrastructure used. (e.g., type of GPUs, internal cluster, or cloud provider). [\textbf{Yes}. A single laptop CPU (no GPU); the full driver, including the parallel same-horizon KL-UCB runs, takes an estimated fifty minutes; the supplementary README records the Python, NumPy/SciPy/Matplotlib, BLAS, OS and CPU/RAM details, and describes the observed sensitivity to SciPy and numerical-library builds.]
  \end{enumerate}

  \item If you are using existing assets (e.g., code, data, models) or curating/releasing new assets, check if you include:
  \begin{enumerate}
    \item Citations of the creator If your work uses existing assets. [\textbf{Not Applicable}. No third-party assets beyond standard open-source scientific libraries, which are listed with versions in the supplementary material.]
    \item The license information of the assets, if applicable. [\textbf{Yes}. The supplementary code is released under the MIT license (anonymized \texttt{LICENSE} file in the package); third-party dependencies retain their respective licenses, documented in the supplementary README.]
    \item New assets either in the supplemental material or as a URL, if applicable. [\textbf{Yes}. The simulation code is provided in the supplementary material.]
    \item Information about consent from data providers/curators. [\textbf{Not Applicable}. All data is synthetic.]
    \item Discussion of sensible content if applicable, e.g., personally identifiable information or offensive content. [\textbf{Not Applicable}.]
  \end{enumerate}

  \item If you used crowdsourcing or conducted research with human subjects, check if you include:
  \begin{enumerate}
    \item The full text of instructions given to participants and screenshots. [\textbf{Not Applicable}.]
    \item Descriptions of potential participant risks, with links to Institutional Review Board (IRB) approvals if applicable. [\textbf{Not Applicable}.]
    \item The estimated hourly wage paid to participants and the total amount spent on participant compensation. [\textbf{Not Applicable}.]
  \end{enumerate}

\end{enumerate}

\clearpage
\appendix
\thispagestyle{empty}
\onecolumn
\raggedbottom
\aistatstitle{Exact Regret Frontiers and Externality Scheduling in
Centralized Serial-Dictatorship Bandits:\\
Supplementary Materials}

\section{Scope of the main results}\label{app:scope}

Table~\ref{tab:scope} collects, for each main result, its instance
set, additional hypotheses, and conclusion type.

\begin{table}[ht]
\centering\small
\begin{tabular}{>{\raggedright\arraybackslash}p{0.30\textwidth} l
>{\raggedright\arraybackslash}p{0.26\textwidth}
>{\raggedright\arraybackslash}p{0.14\textwidth}}
\toprule
Result & Instances & Additional hypotheses & Type \\
\midrule
Thm~\ref{thm:lb22} ($2\times2$ constants) &
$\theta_{\Delta,\gamma}$ & uniform goodness on the one-row class
$\Cclass$ & necessary + attained \\
Thm~\ref{thm:infoalloc} (information allocation) &
$\Thstrict$ & uniform goodness on $\Thstrict$ & necessary \\
Prop~\ref{prop:region}, Thm~\ref{thm:weighted} (regret region,
$C_w$) & $\Thsep$ & uniform goodness on $\Thstrict$ & necessary \\
Thm~\ref{thm:reduction} (exact quota reduction) &
$\Thstrict$ & none & structural \\
Thm~\ref{thm:compact}, Prop~\ref{prop:sparse} (compact LP,
schedules) & $\Thsep$ & $w>0$ & structural \\
Prop~\ref{prop:frontier33} ($3\times3$ frontier) &
$\mu^{(3)}$ & none & necessary (explicit) \\
Thm~\ref{thm:achieve} (canonical achievability) &
$\Thsep$ & canonical marginal uniqueness (Assumption~\ref{ass:unique});
policy uniformly good on $\Thstrict$ & attainable \\
Prop~\ref{prop:frontier-attain}, Cor~\ref{cor:frontier-closed}
($3\times3$ frontier attained) & $\mu^{(3)}$ & split parameter $s$
instance-independent & attainable \\
Thm~\ref{thm:frontier-achieve}, Cor~\ref{cor:frontier-general}
(face selection) & $\Thsep$ & target rule $\Psi$ locally
Lipschitz, data-only; limit instance-dependent & attainable \\
Prop~\ref{prop:proper}, Thm~\ref{thm:closure}, Cor~\ref{cor:rigid}
(attainable set $=G(\theta)\Xset(\theta)$; boundary = attainable
frontier) & $\Thsep$ & target $\tau$ arbitrary & necessary + attained \\
Prop~\ref{prop:family} ($N$-player family) & $\mu^{(N)}$ &
parts (i)--(ii) for every $N\ge3$; part (iii) for $N\in\{3,4,5,6\}$ (exact certificates) & structural \\
\bottomrule
\end{tabular}
\caption{Scope of the main results. The four standing modeling conventions
  (Gaussian rewards with known common variance normalized to one,
  known common priority order, complete
matching, centralized semi-bandit feedback) apply throughout and are
not repeated.}
\label{tab:scope}
\end{table}

\section{Extended related work}\label{app:related}

Table~\ref{tab:related-map} fixes the comparison axes used below.  In
particular, the word ``frontier'' is not used for a trade-off
between instances or benchmark gaps.

\begin{table}[ht]
\centering\small
\setlength{\tabcolsep}{4pt}
\caption{Comparison by object and conclusion.  The present paper studies
player-level utility-regret coefficients at a fixed top-choice-separated
instance.}
\label{tab:related-map}
\begin{tabular}{p{0.24\textwidth}p{0.29\textwidth}p{0.37\textwidth}}
\toprule
Line of work & Object & Difference from this paper \\
\midrule
Matching-bandit order-level work & Stable regret up to order or constant factors & Does not characterize the full player-level coefficient set or who pays under alternative complete-matching schedules. \\
Binary stable-regret work & A scalar count of unstable rounds & Studies a scalar binary criterion and may allow two-sided uncertainty; our target is the vector of utility regrets under one-sided Gaussian learning. \\
Group-fair exploration & Attainable incremental-utility profiles and a bargaining point & Uses grouped arrivals and Nash bargaining; our matching constraints yield an explicit polyhedral coefficient set and recoverable schedules. \\
Pareto regret work & Trade-offs across benchmark experts or arms, or across gap regimes & The trade-off is across players at a fixed instance, with exact quotas and attainable schedules. \\
\bottomrule
\end{tabular}
\end{table}

\paragraph{Bandits in matching markets.}
Bandit learning in two-sided matching markets goes back to
\citet{das2005two}.
\citet{liu2020competing} introduced the competing-bandits model and
proved logarithmic stable-regret bounds for centralized
explore-then-commit (ETC) and UCB variants; their Example~2 identifies the dependence on another
player's reward gap without giving an exact asymptotic constant.
\citet{sankararaman2021dominate} initiated the serial-dictatorship
line in a decentralized setting, and \citet{wang2024optimal} matched
the serial-dictatorship lower-bound rate.
\citet{uba2026sleeping} extend the model to time-varying
availability, with an instance-dependent
$\Omega(N(K-N+1)\log T_i/\Delta^2)$ lower bound, where $T_i$ counts
player $i$'s available rounds; \citet{lin2026contextual} study
contextual utilities.
\citet{kong2023playeroptimal} obtain player-optimal stable-regret
guarantees for every player in general decentralized markets,
matching earlier lower bounds under special preference conditions.
\citet{kong2024improved} improve the player-optimal stable-regret
bound to $O(N^2\log T/\Delta^2+K\log T/\Delta)$, removing the
dependence on the number of arms $K$ from the leading term.
We study exact instance-dependent constants for the regret vector
and their dependence on the exploration schedule.
\citet{li2025survey} survey the model variants, including
indifference \citep{kong2025indifference}, which $\Thstrict$
excludes here.

\citet{basu2025competing} studies two-sided preference uncertainty
using super-stability and proves a centralized instance-dependent
lower bound for binary stable regret. The arXiv version
(2506.15926v1, Appendix F.1) derives inverse-KL pair requirements
by restricting the confusing alternatives to single-pair
perturbations, then bounds the resulting matching-allocation
program through row and column loads. Thus pairwise information
requirements already appear in that lower-bound relaxation.
Under our one-sided fixed-priority model, we prove equivalence
with the full information constraints and characterize the exact
attainable set of player utility-regret coefficient vectors at
top-choice-separated instances.
\citet{cen2022regret} study regret, stability and fairness when
the platform can set monetary transfers.
\citet{hosseini2024bandit} optimize utilitarian and Rawlsian
welfare across the two sides. Our platform sets no transfers; its
frontier describes feasible divisions of the exploration burden.

\paragraph{Externalities of exploration and regret trade-offs.}
\citet{raghavan2018externalities} exhibit negative group
externalities of exploration in linear contextual bandits.
\citet{baek2024fair} show how regret-optimal exploration can burden
some groups and formulate Nash bargaining over attainable
incremental utilities, measured by regret reductions relative to
learning separately. Here complete-matching constraints yield an
explicit polyhedral set of player-regret coefficients, with
recoverable schedules
(\S\ref{sec:finite-lp}--\S\ref{sec:algorithm}).
Other Pareto regret frontiers compare worst-case guarantees
against different benchmark experts or arms for a single learner
\citep{koolen2013pareto,lattimore2015pareto}.
\citet{liu2022pareto} characterize trade-offs across gap regimes
in cooperative multi-player bandits without communication.
Our frontier compares players at a fixed instance.
\citet{sarkar2026fairness} quantify the minimax price of fairness
across rounds using generalized $p$-means, addressing the burden
on early participants.

\paragraph{Stable-matching identification and regret.}
\citet{hosseini2024putting} initiate the study of the sample
complexity of finding a stable matching when one side of the market
learns its preferences.
\citet{pagare2025optimal} study fixed-confidence identification
with a single player--arm sample per round; their protocol does
not require complete matchings.
\citet{athanasopoulos2025probably} study identification under
one-sided uncertainty. \citet{athanasopoulos2026stable} extend
this objective to two-sided uncertainty with centralized
semi-bandit feedback, and also derive regret bounds that avoid
dependence on the minimum reward gap. These sample-complexity
and regret guarantees address related learning objectives; our
target is the exact attainable player utility-regret coefficient set.
In multiple-answer pure exploration, \citet{degenne2019multiple}
show that oracle allocations can form a nonconvex set and propose
Sticky Track-and-Stop to select a consistent answer. Our
cost-optimal face is convex, and regularization selects a
specified projection onto it; attaining a prescribed boundary
point may require instance calibration (\S\ref{sec:c4-frontier}).

\paragraph{Structured bandit lower bounds.}
Our general program instantiates the asymptotic machinery of
\citet{lai1985asymptotically}, \citet{graves1997asymptotically}
and \citet{combes2017minimal}, with semi-bandit additivity
\citep{combes2015combinatorial} and the fundamental inequality of
\citet{garivier2019explore}.
\citet[Table~1 and Theorem~3]{cuvelier2021asymptotically} study
polynomial-time computation of the combinatorial Graves--Lai
program. Their framework uses budgeted linear maximization; for
matchings, its available oracle is approximate.
\citet{reveillard2025multimodal} give an efficient solution for
multimodal reward structures. Under serial dictatorship, our
information constraints reduce exactly to pairwise quotas before
optimization. The resulting LP has $NK+1$ variables and is
solvable in time polynomial in the bit size of rational data
(\S\ref{sec:finite-lp}).

\paragraph{Selecting exploration allocations.}
\citet{vanparys2024optimal} construct continuous selections for
nonunique near-optimal exploration allocations in structured
regret minimization. The deep update of their DUal Structure-based
Algorithm (DUSA) minimizes the squared
norm of exploration rates and dual variables subject to a
near-optimal cost constraint; Equation (20), Proposition 4 and
Appendix H.3 of arXiv:2007.07302v3 give the rule and its continuity
argument. Our use of quadratic selection builds on this general
idea. Here a capped program tracks the projection of a fixed,
locally Lipschitz target onto the exact cost-optimal face, yielding
player-regret coefficients on the matching frontier. Pointwise
attainment of a prescribed boundary point may require an
instance-calibrated target.

\section{The two-by-two market: tools and deferred proofs}
\label{app:twobytwo}

\subsection{Setting and exact constants}\label{sec:twobytwo}

The smallest nontrivial market already contains the two mechanisms
the theory is built on: exploration for the high-priority player is
an \emph{information quota} of $\tfrac{2}{\Delta^2}\log T$ rounds,
each paid for by the low-priority player at unit price $\gamma$.
Let $N=K=2$ with priority $p_1\succ p_2$, let
$\theta_{\Delta,\gamma}$ have means $\mu_{1,\cdot}=(\Delta,0)$,
$\mu_{2,\cdot}=(0,\gamma)$, $\Delta,\gamma>0$, so
$\mstar=(a_1,a_2)$; this is the instance of Example~2 of
\citet{liu2020competing}, with the Gaussian means of their Figure~1
($\gamma$ generalizes their value $1$).
Let $N_{1,2}(T):=\sum_{t\le T}\ind{m_t(p_1)=a_2}$ count the rounds
in which $p_1$ occupies the wrong arm, exactly the rounds in which
$p_2$ is displaced, and let $\Cclass$ be the class keeping the
$p_2$-row fixed while the $p_1$-row ranges over all \emph{strict}
pairs; here \emph{uniformly good} means uniformly good on
$\Cclass$. The key device is a \emph{simulation reduction}
(Lemma~\ref{lem:reduction}, Appendix~\ref{app:22reduction}): every
complete-matching policy $A$ induces a two-armed bandit algorithm
$A^{\#}$, uniformly good whenever $A$ is, under which
$T_{a_2}(T)$ has the law of $N_{1,2}(T)$; the crux is that
$A^{\#}$ simulates the \emph{entire} market internally, generating
$p_2$'s reward from the known fixed $p_2$-row, so no Gale--Shapley
implementation \citep{gale1962college} is invoked even when $A$'s matchings depend on
$p_2$'s past rewards.

\begin{theorem}[Exact two-by-two constants]\label{thm:lb22}
Let $A$ be uniformly good on $\Cclass$ and satisfy
Assumption~\ref{ass:complete}. Then on $\theta_{\Delta,\gamma}$,
\begin{equation}
\liminf_{T\to\infty}\tfrac{\E[N_{1,2}(T)]}{\log T}
\ge\tfrac{2}{\Delta^2},
\label{eq:lbcount}
\end{equation}
and, since $R_2(T)\ge\gamma\,\E[N_{1,2}(T)]$ while
$R_1(T)=\Delta\,\E[N_{1,2}(T)]$,
$\liminf_T R_2(T)/\log T\ge2\gamma/\Delta^2$ and
$\liminf_T R_1(T)/\log T\ge2/\Delta$. Conversely, the policy
$A^{\mathrm{ucb}}$, in which $p_1$'s arm is
selected by KL-UCB with $f(t)=\log t+3\log\log t$
\citep[Theorem~1]{cappe2013kullback} on $p_1$'s own rewards and
$p_2$ receives the remaining arm, is uniformly good on $\Cclass$
and attains all three bounds with equality.
\end{theorem}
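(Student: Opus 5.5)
The lower bound comes from the simulation reduction (Lemma~\ref{lem:reduction}) plus the classical Lai--Robbins bound. The upper bound is a direct accounting of KL-UCB's suboptimal pulls.

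\textbf{Lower bound.} Given $A$ uniformly good on $\Cclass$, I form the two-armed bandit algorithm $A^{\#}$ of Lemma~\ref{lem:reduction}. It simulates $p_2$'s rewards internally from the known fixed row and feeds $p_1$'s rewards from the real bandit.

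The lemma gives two things. First, $A^{\#}$ is uniformly good on the class of strict two-armed Gaussian unit-variance bandits. This holds because $p_1$'s regret under $A$ equals the bandit regret of $A^{\#}$. Under complete matching with $N=K=2$, $p_1$ is always on arm $a_1$ or $a_2$, so at a $p_1$-row $\lambda$ we have $R_1(T;\lambda)=|\lambda_{1,1}-\lambda_{1,2}|\,\E_\lambda[\text{wrong pulls}]$. Uniform goodness of $A$ needs only $(R_1)_+=o(T^\alpha)$, and $(R_1)_+$ is automatically nonnegative here because $p_1$ has top priority. Second, under $A^{\#}$ the count $T_{a_2}(T)$ has the law of $N_{1,2}(T)$ under $A$.

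The Lai--Robbins lower bound for uniformly good two-armed Gaussian bandits then gives
\[
\liminf_T \frac{\E[T_{a_2}(T)]}{\log T}\ge \frac{1}{\KL(\Normal(0,1)\,\|\,\Normal(\Delta,1))}=\frac{2}{\Delta^2}.
\]
This is \eqref{eq:lbcount}. For a self-contained argument I would instead take the alternative raising $\mu_{1,2}$ to $\Delta+\varepsilon$, apply Theorem~\ref{thm:infoalloc}-style reasoning via Garivier et al.\ to $Z_T=N_{1,2}(T)/T$, and let $\varepsilon\downarrow0$.

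The regret bounds then follow. When $p_1$ holds $a_2$, $p_2$ holds $a_1$ and loses $\gamma-0=\gamma$. When $p_1$ holds $a_1$, $p_2$ holds $a_2$ and loses nothing. Hence $R_2=\gamma\E[N_{1,2}]$, which in particular gives the stated $\ge$. Similarly $R_1=\Delta\E[N_{1,2}]$ exactly. Dividing by $\log T$ and taking liminf gives both coefficient bounds.

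\textbf{Upper bound.} Under $A^{\mathrm{ucb}}$, $p_1$ runs KL-UCB with exploration function $f(t)=\log t+3\log\log t$ on its own Gaussian rewards; for unit-variance Gaussians the KL index is the UCB $\widehat\mu+\sqrt{2f(t)/n}$. Cappé et al.\ (Theorem~1) give, at every two-armed instance, $\E[T_{\mathrm{sub}}(T)]\le \frac{2}{\mathrm{gap}^2}\log T+o(\log T)$. At $\theta_{\Delta,\gamma}$ this is $\limsup\E[N_{1,2}]/\log T\le2/\Delta^2$, and the accounting identities above then make all three bounds equalities.

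\textbf{Uniform goodness of $A^{\mathrm{ucb}}$ on $\Cclass$.} This is the step I expect to need the most care. For every strict $p_1$-row, KL-UCB's logarithmic bound gives $R_1=O(\log T)$. Player $p_2$'s stable arm is whatever $p_1$ does not take at the stable matching, so $p_2$ loses at most $|\gamma|$, in either sign, exactly on $p_1$'s suboptimal pulls. Therefore $|R_2|=O(\log T)=o(T^\alpha)$ at every member of $\Cclass$, including rows where $p_1$ prefers $a_2$ and the stable matching flips.

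The main obstacle is verifying the hypotheses of Lemma~\ref{lem:reduction}, i.e.\ that the law of $N_{1,2}$ is preserved even when $A$ adapts to $p_2$'s rewards. Since that lemma is stated earlier, I take it as given and only check that $\Cclass$ maps onto the full strict two-armed class.
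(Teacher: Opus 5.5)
Your proposal is correct and follows essentially the same route as the paper. The shared steps are the simulation reduction plus Lai--Robbins for the count bound, the pathwise accounting of $p_2$'s displacement for the regret bounds, and KL-UCB's logarithmic bound with a sign-aware count of $p_1$'s suboptimal pulls (rather than of $N_{1,2}$) for uniform goodness on $\Cclass$. One point needs careful phrasing: $\Normal(\Delta,1)$ only ties the arms, so the constant $2/\Delta^2$ must come from strictly better alternatives $\Normal(\Delta+\varepsilon,1)$ with $\varepsilon\downarrow0$ (the $K_{\inf}$ form of Lai--Robbins). This is exactly what your self-contained variant and the paper do.
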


The lower bound (Appendix~\ref{app:22lb}) uses a strictly optimal
alternative $\Normal(\Delta+\varepsilon,1)$,
$\varepsilon\downarrow0$, plus the externality inequality:
whenever $p_1$ holds $a_2$, injectivity displaces $p_2$ to a
mean-$0$ arm. The upper bound (Appendix~\ref{app:22ub}) combines
KL-UCB optimality with leftover-arm accounting;
Appendix~\ref{app:22partial} treats partial matchings for the
two-by-two lower bound, and
one-shot balanced explore-then-commit pays a factor of two even
when oracle-tuned (Remark~\ref{rem:etc}, Appendix~\ref{app:etc}).

\begin{corollary}[Degenerate frontier in the two-by-two market]
\label{cor:closed22}
The unique Pareto-minimal attainable asymptotic regret vector on
$\theta_{\Delta,\gamma}$ is $(2/\Delta,\,2\gamma/\Delta^2)$: in
this instance there is no scheduling trade-off. A non-degenerate
frontier is exhibited at $N=3$ (\S\ref{sec:frontier}).
\end{corollary}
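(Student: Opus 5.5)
The corollary follows from Theorem~\ref{thm:lb22} together with the structural results of \S\ref{sec:finite-lp} and Theorem~\ref{thm:closure}; it is essentially a two-line consequence once the attainable set is identified. Note first that $\theta_{\Delta,\gamma}\in\Thsep$: the best arms are $a_1$ for $p_1$ and $a_2$ for $p_2$, which are distinct. Hence everything stated for separated instances applies. With $N=K=2$ there is exactly one non-stable complete matching, $m=(a_2,a_1)$. Its gap vector is $g(m)=(\Delta,\gamma)$, and the eligible set is $\Eset=\{(1,a_2)\}$ with quota $q_{1,a_2}=2/\Delta^2$. Player $2$ has no eligible pair, since $a_1=\mstar(1)$ is excluded.

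\textbf{Step 1 (the image is a ray).} By Theorem~\ref{thm:reduction}, $\Xset(\theta_{\Delta,\gamma})=\{x_m\ge 2/\Delta^2\}$. Therefore
\[
G\Xset=\{x_m(\Delta,\gamma):x_m\ge2/\Delta^2\},
\]
a closed ray starting at $u_0:=(2/\Delta,\,2\gamma/\Delta^2)$. Every point of this ray dominates $u_0$ componentwise, because $\Delta,\gamma>0$. It follows that $u_0$ is the unique Pareto-minimal point of $G\Xset$. It is also the unique Pareto-minimal point of $\UGL=G\Xset+\R^2_+=u_0+\R^2_+$.

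\textbf{Step 2 (attainability and minimality).} Theorem~\ref{thm:closure}(ii) gives $\mathcal A(\theta_{\Delta,\gamma})=G\Xset$, so $u_0$ is attainable, and by Step~1 every attainable vector dominates it. If one prefers to stay within the two-by-two appendix, which is stated under the weaker uniform-goodness class $\Cclass$, the same conclusion follows from Theorem~\ref{thm:lb22}. Its lower bounds give, coordinatewise, $\liminf R_1/\log T\ge2/\Delta$ and $\liminf R_2/\log T\ge2\gamma/\Delta^2$, so any limit vector dominates $u_0$. Its upper half shows $A^{\mathrm{ucb}}$ attains $u_0$ with equality.

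\textbf{Main obstacle.} The only care needed is consistency of the uniform-goodness class. The paper's notion of \emph{attainable} requires uniform goodness on $\Thstrict$, while Theorem~\ref{thm:lb22} is stated on $\Cclass\subset\Thstrict$. The lower bound transfers for free, since a policy uniformly good on $\Thstrict$ is uniformly good on $\Cclass$. For attainment, the cleanest route is to cite Theorem~\ref{thm:closure}(i) with $\tau^{\mathrm{off}}=(2/\Delta^2)\zeta^m$ restricted off-stable. This gives a policy uniformly good on $\Thstrict$ that attains $u_0$, which avoids having to check that $A^{\mathrm{ucb}}$ is uniformly good beyond $\Cclass$.

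Finally, since the attainable set has a single Pareto-minimal point, no choice of schedule can move regret between the players. This is the sense in which the frontier is degenerate, in contrast with the segment of Proposition~\ref{prop:frontier33}.
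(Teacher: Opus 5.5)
Your proof is correct. Your fallback paragraph is essentially the paper's argument: the corollary is read off directly from Theorem~\ref{thm:lb22}. Its coordinatewise lower bounds (simulation reduction of Lemma~\ref{lem:reduction}, Lai--Robbins, and the externality inequality) force every attainable vector to dominate $(2/\Delta,2\gamma/\Delta^2)$, and $A^{\mathrm{ucb}}$ attains both coordinates simultaneously.

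Your primary route is genuinely different. You obtain $\Xset(\theta_{\Delta,\gamma})=\{x_m\ge2/\Delta^2\}$ from Theorem~\ref{thm:reduction}; this is the same set the paper derives in Corollary~\ref{cor:recovery22}. You then invoke Theorem~\ref{thm:closure}(ii) to identify the whole attainable set as the ray $\{x(\Delta,\gamma):x\ge2/\Delta^2\}$. That gives more than the corollary asks. It shows the attainable set is one-dimensional and strictly smaller than $\UGL=u_0+\R^2_+$, an instance of the $K=N$ rigidity in Corollary~\ref{cor:rigid}. The price is that it rests on the full estimate--solve--track achievability machinery and on uniform goodness over all of $\Thstrict$.

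The paper's route, by contrast, is self-contained in the two-by-two appendix. Its lower bound also covers the larger class of policies that are uniformly good only on $\Cclass$. Your lower-bound half could be extended to that class through Remark~\ref{rem:finiteclass}, since the only single-entry alternatives here lie in $\Cclass$.

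On your ``main obstacle'': the detour through Theorem~\ref{thm:closure}(i) is valid but unnecessary. The leftover-arm accounting of Appendix~\ref{app:22ub} works verbatim for any strict second row, giving $|R_2|$ equal to $p_2$'s gap times the expected number of $p_1$'s suboptimal rounds. So $A^{\mathrm{ucb}}$ is in fact uniformly good on all of $\Thstrict$ when $N=K=2$.
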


\subsection{Tools}\label{app:22tools}

We use two standard facts; proofs can be found in
\citet{lattimore2020bandit}.

\begin{lemma}[Gaussian KL]\label{lem:gausskl}
$\KL\bigl(\Normal(\mu_0,1)\,\|\,\Normal(\mu_1,1)\bigr)
=(\mu_0-\mu_1)^2/2$.
\end{lemma}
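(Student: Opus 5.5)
The plan is a direct computation from the Gaussian density formula. I will write $\varphi_j(x)=(2\pi)^{-1/2}\exp(-(x-\mu_j)^2/2)$ for $j\in\{0,1\}$ and use the definition
\[
\KL\bigl(\Normal(\mu_0,1)\,\|\,\Normal(\mu_1,1)\bigr)
=\E_{X\sim\Normal(\mu_0,1)}\bigl[\log\varphi_0(X)-\log\varphi_1(X)\bigr].
\]
This is legitimate because the two laws are mutually absolutely continuous with everywhere positive densities, and the log-likelihood ratio is a polynomial in $X$, so it is integrable under the Gaussian law.

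The normalizing constants cancel, which leaves the log-ratio $\tfrac12\bigl[(X-\mu_1)^2-(X-\mu_0)^2\bigr]$. I will expand this as $\tfrac12\bigl[2X(\mu_0-\mu_1)+\mu_1^2-\mu_0^2\bigr]$, which is affine in $X$. Taking the expectation under $\E X=\mu_0$ then gives
\[
\tfrac12\bigl[2\mu_0(\mu_0-\mu_1)+\mu_1^2-\mu_0^2\bigr]=\tfrac12(\mu_0-\mu_1)^2,
\]
which is the claimed value. An equivalent route is to write $X=\mu_0+\varepsilon$ with $\E\varepsilon=0$: the log-ratio becomes $\tfrac12\bigl[(\varepsilon+\mu_0-\mu_1)^2-\varepsilon^2\bigr]$, and its expectation is $\tfrac12(\mu_0-\mu_1)^2$ because the cross term has mean zero.

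There is no real obstacle. The only point that needs explicit justification is integrability of the log-likelihood ratio, so that the KL divergence is finite and equals the expectation of that ratio. This holds because the ratio is affine in $X$ and Gaussians have finite first moments.
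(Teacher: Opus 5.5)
Your computation is correct: the log-likelihood ratio is affine in $X$, so it is integrable, and taking its expectation under $\Normal(\mu_0,1)$ gives exactly $(\mu_0-\mu_1)^2/2$. The paper does not prove this lemma itself; it lists it as a standard fact and refers to \citet{lattimore2020bandit}, and your direct density calculation is the usual argument behind that citation.
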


\begin{theorem}[{Lai--Robbins;
\citealp{lai1985asymptotically}; see also
\citealp[\S16]{lattimore2020bandit}}]\label{thm:lr}
Let $A^{\#}$ be a two-armed bandit algorithm that is uniformly good
over all unit-variance Gaussian instances. Suppose arm $a$ is
suboptimal under $\theta=(\nu_1,\nu_2)$. Then for \emph{every}
alternative unit-variance Gaussian law $\nu'_a$ under which arm $a$ becomes \emph{strictly}
optimal,
\[
\liminf_{T\to\infty}\frac{\E_\theta[T_a(T)]}{\log T}
\;\ge\;\frac{1}{\KL(\nu_a\,\|\,\nu'_a)},
\]
where $T_a(T)$ is the number of pulls of arm $a$ in $T$ rounds.
\end{theorem}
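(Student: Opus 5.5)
The plan is the standard Lai--Robbins change-of-measure argument, made self-contained for two Gaussian arms. Fix the alternative instance $\theta'=(\nu_1',\nu_2')$ obtained from $\theta$ by replacing only $\nu_a$ with $\nu'_a$; the other arm's law is unchanged. Under $\theta'$ arm $a$ is strictly optimal, so the other arm $b$ is suboptimal with a positive gap. Uniform goodness of $A^{\#}$ on all unit-variance Gaussian instances gives $\E_{\theta'}[T_b(T)]=o(T^{\alpha})$ for every $\alpha>0$. It also gives $\E_\theta[T_a(T)]=o(T^{\alpha})$, since arm $a$ is suboptimal under $\theta$.

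First, the divergence decomposition. Because only arm $a$'s law differs, the KL divergence between the laws of the observation--action history up to time $T$ under $\theta$ and $\theta'$ is $\E_\theta[T_a(T)]\,\KL(\nu_a\|\nu'_a)$. This is the chain rule applied to the policy's kernel, which is common to both instances, and to the per-round observation laws; only pulls of arm $a$ contribute. Next, I will apply the data-processing inequality to the $[0,1]$-valued statistic $Z_T=T_a(T)/T$, in the form used by \citet{garivier2019explore}: writing $\mathrm{kl}$ for the Bernoulli divergence,
\[
\E_\theta[T_a(T)]\,\KL(\nu_a\|\nu'_a)\ \ge\ \mathrm{kl}\bigl(\E_\theta[Z_T],\E_{\theta'}[Z_T]\bigr)\ \ge\ \bigl(1-\E_\theta[Z_T]\bigr)\log\frac{1}{1-\E_{\theta'}[Z_T]}-\log 2 .
\]

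Then I plug in the uniform-goodness bounds. We have $\E_\theta[Z_T]=o(T^{\alpha-1})\to0$ and $1-\E_{\theta'}[Z_T]=\E_{\theta'}[T_b(T)]/T=o(T^{\alpha-1})$. Hence $\log\bigl(1/(1-\E_{\theta'}[Z_T])\bigr)\ge(1-\alpha)\log T-o(\log T)$. Dividing by $\log T$, taking the liminf, and letting $\alpha\downarrow0$ yields the claimed inequality. Here $\KL(\nu_a\|\nu'_a)>0$ because $\nu'_a\ne\nu_a$, as arm $a$ changes from suboptimal to optimal, and $\KL(\nu_a\|\nu'_a)<\infty$ by Lemma~\ref{lem:gausskl}.

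The only delicate point is ensuring that the alternative remains in the class where uniform goodness is assumed, and that strict optimality holds. Strictness is needed so that $b$ has a positive gap under $\theta'$; with only a tie, the bound $\E_{\theta'}[T_b(T)]=o(T^\alpha)$ would fail. This is why the statement requires strict optimality, and it is what later forces the $\varepsilon\downarrow0$ limits in Theorem~\ref{thm:lb22}. Since both instances are unit-variance Gaussian, class membership is immediate. If the reader prefers, one may simply cite \citet[\S16]{lattimore2020bandit}, where this exact argument appears.
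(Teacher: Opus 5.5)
Your proof is correct. The paper does not prove this statement itself. It quotes it and defers to \citet[\S16]{lattimore2020bandit}, where the divergence is converted into $\log T$ through the Bretagnolle--Huber inequality. There, the event $\{T_a(T)\ge T/2\}$ bounds the sum of the regrets under $\theta$ and $\theta'$ from below by a constant times $T\min(\Delta,\Delta')\exp\bigl(-\KL(\Prob^T_\theta\|\Prob^T_{\theta'})\bigr)$. Consistency of both regrets then forces the divergence to be at least $(1-o(1))\log T$.

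You instead apply the data-processing inequality for the binary $\operatorname{kl}$ to $Z_T=T_a(T)/T$, i.e.\ the fundamental inequality of \citet{garivier2019explore}. This is the same device the paper uses for its own matching-level bound, Theorem~\ref{thm:infoalloc}: Lemmas~\ref{lem:fundamental} and~\ref{lem:divdecomp} with $Z_T=N_{\mstar(\theta)}(T)/T$.

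Both routes give the same constant. Yours works with pull counts directly, so no gap factor has to be carried and then discarded. It also lets the two-by-two and general lower bounds rest on a single inequality. The textbook route works with regrets and a single event instead.

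One detail you leave implicit: taking $\log\bigl(1/(1-\E_{\theta'}Z_T)\bigr)$ requires $\E_{\theta'}[T_b(T)]>0$. This holds for large $T$, because the left side $\E_\theta[T_a(T)]\KL(\nu_a\|\nu'_a)$ is finite while $\E_\theta Z_T<1$. The paper makes the same remark in its proof of Theorem~\ref{thm:infoalloc}.
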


Note the strictness requirement: an alternative that merely
\emph{ties} the two arms carries no lower-bound information, because
every algorithm has zero bandit regret on a tied instance. Such
instances lie outside $\Cclass$, but the simulated algorithm of
Lemma~\ref{lem:reduction} is also defined on them. This is why the
optimization over alternatives in the proof of
Theorem~\ref{thm:lb22} is a limit, not a minimum.

\subsection{Proof of Lemma~\ref{lem:reduction}}\label{app:22reduction}

\begin{lemma}[Simulation reduction]\label{lem:reduction}
For every (possibly randomized) platform policy $A$ satisfying
Assumption~\ref{ass:complete} there is a two-armed bandit algorithm
$A^{\#}$ such that for every $\theta'\in\Cclass$ and every $T$, the
law of $N_{1,2}(T)$ under $A$ in the market $\theta'$ equals the
law of $T_{a_2}(T)$ under $A^{\#}$ facing the bandit
$(\Normal(\mu'_{1,a_1},1),\Normal(\mu'_{1,a_2},1))$. If $A$ is
uniformly good on $\Cclass$, then $A^{\#}$ is uniformly good over
all unit-variance Gaussian two-armed instances.
\end{lemma}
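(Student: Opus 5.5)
The plan is to build $A^{\#}$ as an internal simulator of the whole market, driven by $A$, with only player $p_1$'s rewards taken from the external bandit. Fix the known $p_2$-row $(\mu_{2,a_1},\mu_{2,a_2})$, which is common to all of $\Cclass$. At each round $t$, $A^{\#}$ does the following.
\begin{enumerate}
\item It feeds $A$ the simulated history and draws $M_t$ from $A$'s kernel, using its own internal randomness where $A$ randomizes.
\item It pulls the external arm $M_t(1)\in\{a_1,a_2\}$ and receives $X_{1,t}$.
\item It generates $X_{2,t}\sim\Normal(\mu_{2,M_t(2)},1)$ internally, from fresh independent randomness.
\item It appends $(M_t,X_{1,t},X_{2,t})$ to the simulated history.
\end{enumerate}
By Assumption~\ref{ass:complete} with $N=K=2$, $M_t(2)$ is determined by $M_t(1)$, so the simulation is well defined. $A^{\#}$ is a legitimate non-anticipating bandit algorithm: its choice at round $t$ depends only on past external rewards and internal randomness.

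\textbf{Law equality.} The key step is an induction on $t$ showing that the joint law of $(M_s,X_{1,s},X_{2,s})_{s\le t}$ under $A^{\#}$ facing $(\Normal(\mu'_{1,a_1},1),\Normal(\mu'_{1,a_2},1))$ equals that under $A$ in market $\theta'$. Both processes share three features:
\begin{itemize}
\item the same kernel from history to action;
\item conditional on $M_t=m$, independent coordinates $X_{1,t}\sim\Normal(\mu'_{1,m(1)},1)$, which holds by the bandit's reward law, and $X_{2,t}\sim\Normal(\mu_{2,m(2)},1)$, which holds by construction and because $\theta'$ has the same $p_2$-row;
\item independence from the past given the action.
\end{itemize}
The finite-dimensional laws therefore coincide. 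In particular $T_{a_2}(T)=\sum_t\ind{M_t(1)=a_2}$ has the law of $N_{1,2}(T)$.

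\textbf{Uniform goodness.} Take any unit-variance Gaussian two-armed instance $(\nu_1,\nu_2)$ with means $(m_1,m_2)$.
\begin{itemize}
\item If $m_1\neq m_2$, the market $\theta'$ with $p_1$-row $(m_1,m_2)$ lies in $\Cclass$. The bandit regret of $A^{\#}$ equals $|m_1-m_2|\,\E[T_{\text{worse arm}}(T)]$, and by the law equality this is $|m_1-m_2|$ times the expected number of rounds in which $p_1$ is off its best arm. For $p_1$, the top-priority player, that best arm is $\mstar(1)$, so this quantity equals $R_1(T;\theta')$, which is $o(T^\alpha)$.
\item If $m_1=m_2$, the bandit regret is identically zero, so uniform goodness holds trivially.
\end{itemize}
One subtlety is that $A$'s matchings are defined on all histories, so the simulation remains defined even on tied instances outside $\Cclass$. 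That is all that is needed here.

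\textbf{Main obstacle.} I expect the main difficulty to be stating the law equality rigorously when $A$'s matchings depend on $p_2$'s past rewards. This is handled by writing both joint densities as products of the policy kernels and Gaussian densities and checking that they agree factor by factor. This works because the $p_2$-row is known and fixed across $\Cclass$, so the internally simulated $p_2$ rewards have exactly the right law. A second point needs care: positivity of $R_1$ on separated rows. Here $g_1\ge0$ whenever $p_1$'s row is strict, because $p_1$ always obtains its top arm. So $(R_1)_+=R_1$ and the translation of uniform goodness is exact.
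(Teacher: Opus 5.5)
Your proposal is correct and follows the paper's proof essentially step for step. $A^{\#}$ runs $A$ on a simulated market history, takes $p_1$'s rewards from the external bandit and draws $p_2$'s rewards internally from the known fixed row, so the two interaction laws coincide; your density-factorization argument just makes explicit the paper's ``same distribution'' claim. Uniform goodness then follows from $R_1(T;\theta')=\Delta'\,\E[\text{wrong-arm pulls}]$ on strict rows, where $g_1\ge0$ gives $(R_1)_+=R_1$ as you note, together with zero regret on tied instances.
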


$A^{\#}$ maintains a simulated market history $H_{t-1}$ (all past
matchings and both players' rewards) and reproduces $A$'s internal
randomness. At round $t$ it computes the matching $m_t=A(H_{t-1})$.
It then
\begin{itemize}
  \item pulls the external arm $m_t(p_1)$, receives the real reward
  $X_t$, and writes $X_t$ into $H_t$ as $p_1$'s reward; and
  \item draws $p_2$'s reward $Y_t\sim\Normal(\mu_{2,m_t(p_2)},1)$
  \emph{internally} and writes it into $H_t$.
\end{itemize}
The internal draw is possible because the $p_2$-row $(0,\gamma)$ is the
same for every $\theta'\in\Cclass$, hence known to $A^{\#}$. Under any
$\theta'\in\Cclass$ the true market feedback at round $t$ is an
independent pair with exactly these laws, so the simulated history has
the same distribution as $A$'s true interaction with the market. This
holds even though $A$'s future matchings may depend on $p_2$'s past
rewards: those rewards are generated from the correct laws inside the
simulation and constitute internal randomization of $A^{\#}$.
By construction $A^{\#}$ pulls $a_2$ at round $t$ exactly when the
simulated $A$ matches $p_1$ to $a_2$, so
$T_{a_2}(T)=N_{1,2}(T)$ pathwise under the natural coupling.

For uniform goodness: fix a bandit instance
$(\mu'_{1,a_1},\mu'_{1,a_2})$ with gap $\Delta'>0$. Since $p_1$ has top
priority, its stable partner in $\theta'$ is its best arm, and, under
Assumption~\ref{ass:complete}, $p_1$ is matched in every round, so the
bandit regret of $A^{\#}$ equals
$\Delta'\,\E[\text{wrong-arm pulls}]=R_1^{A}(T;\theta')=o(T^{\alpha})$
for every $\alpha>0$. Tied bandit instances ($\Delta'=0$) lie outside
$\Cclass$, but on them every algorithm has zero bandit regret, so
$A^{\#}$ is uniformly good over the full Gaussian family.
\hfill$\qed$

\subsection{Proof of the lower bounds in
Theorem~\ref{thm:lb22}}\label{app:22lb}

\begin{proof}
\emph{(i) Forced exploration of $p_1$.}
By Lemma~\ref{lem:reduction}, $A^{\#}$ is a uniformly good bandit
algorithm; under $\theta_{\Delta,\gamma}$ its arm laws are
$\Normal(\Delta,1)$ (optimal) and $\Normal(0,1)$ (suboptimal). Fix
$\varepsilon>0$. The alternative
$\nu'_{a_2}=\Normal(\Delta+\varepsilon,1)$ makes arm $a_2$
\emph{strictly} optimal (note that $\Normal(\Delta,1)$ would only
tie the arms and is therefore inadmissible in Theorem~\ref{thm:lr}),
so
\[
\liminf_{T\to\infty}\frac{\E[N_{1,2}(T)]}{\log T}
=\liminf_{T\to\infty}\frac{\E[T_{a_2}(T)]}{\log T}
\;\ge\;
\frac{1}{\KL\bigl(\Normal(0,1)\,\|\,\Normal(\Delta+\varepsilon,1)\bigr)}
=\frac{2}{(\Delta+\varepsilon)^2},
\]
using the Gaussian divergence identity of Lemma~\ref{lem:gausskl}.
Letting $\varepsilon\downarrow0$ yields \eqref{eq:lbcount}.

\emph{(ii) The externality inequality.}
Consider any round with $m_t(p_1)=a_2$. By injectivity $p_2$ cannot
hold $a_2$, so its mean reward in that round is at most
$\mu_{2,a_1}=0=\mu_{2,a_2}-\gamma$; every other round contributes a
nonnegative amount to $R_2$. Taking expectations,
$R_2(T)\ge\gamma\,\E[N_{1,2}(T)]$, and the $R_2$ bound follows from
\eqref{eq:lbcount}. (This step uses neither
Assumption~\ref{ass:complete} nor any property of $A$.)

\emph{(iii) Player 1.}
Each round with $m_t(p_1)=a_2$ costs $p_1$ exactly $\Delta$ and
every other round costs $0$ under Assumption~\ref{ass:complete}, so
$R_1(T)=\Delta\,\E[N_{1,2}(T)]$ and the $R_1$ bound follows.
\end{proof}

\subsection{Proof of the upper bound in
Theorem~\ref{thm:lb22}}\label{app:22ub}

Since the index of $A^{\mathrm{ucb}}$ depends only on $p_1$'s past
rewards, $p_1$'s assignment process is a genuine two-armed bandit run
of KL-UCB. By the asymptotic optimality of KL-UCB for exponential
families \citep[Theorem~1]{cappe2013kullback},
\[
\E[N_{1,2}(T)]
\;\le\;
\frac{\log T}{\KL(\Normal(0,1)\,\|\,\Normal(\Delta,1))}\,(1+o(1))
=\Bigl(\frac{2}{\Delta^2}+o(1)\Bigr)\log T,
\]
and the matching lower bound \eqref{eq:lbcount} turns the $\limsup$
into a limit. The policy outputs a complete matching with $p_2$ on the
leftover arm, so $R_1(T)=\Delta\,\E[N_{1,2}(T)]$ and
$R_2(T)=\gamma\,\E[N_{1,2}(T)]$, which gives the stated constants.

For uniform goodness, fix any $\theta'\in\Cclass$, let $\Delta'>0$ be
$p_1$'s gap and let $S(T)$ count the rounds $p_1$ spends on its
suboptimal arm; KL-UCB gives $\E_{\theta'}[S(T)]=O(\log T)$. Then
$R_1(T;\theta')=\Delta'\,\E[S(T)]$ and
$|R_2(T;\theta')|=\gamma\,\E[S(T)]$: if $p_1$'s best arm is $a_1$, each
suboptimal round displaces $p_2$ to $a_1$ at cost $\gamma$, so
$R_2=\gamma\,\E[S(T)]\ge0$; if $p_1$'s best arm is $a_2$, each
suboptimal round assigns $p_2$ its preferred arm $a_2$, so
$R_2=-\gamma\,\E[S(T)]\le0$. (In the latter case $N_{1,2}(T)$ itself is
linear in $T$, so a bound in terms of $\E[N_{1,2}]$ would not
suffice.) In
both cases $R_1(T;\theta')$ and $|R_2(T;\theta')|$ are $O(\log T)$,
hence $o(T^{\alpha})$ for every $\alpha>0$.
\hfill$\qed$

\subsection{Partial matchings and the equality case}
\label{app:22partial}

\begin{remark}[When the externality inequality is an equality]
\label{rem:equality}
For a general policy only $R_2(T)\ge\gamma\,\E[N_{1,2}(T)]$ can be
asserted. Every round with $m_t(p_1)=a_2$ contributes \emph{exactly}
$\gamma$ (by injectivity $p_2$ then holds $a_1$ or is unmatched, both
of mean $0$), so equality holds if and only if, with probability one,
$p_2$ receives $a_2$ in every round in which $p_1$ does \emph{not}
receive $a_2$. Under Assumption~\ref{ass:complete} this is automatic in
the two-by-two market. The lower bound itself does not need the
assumption: step (ii) of the proof of Theorem~\ref{thm:lb22} holds
for arbitrary policies, and the
reduction extends to policies that may leave $p_1$ unmatched by giving
$A^{\#}$ a third, uninformative action of known reward $0$. On the
two instances used in the change-of-measure argument the optimal
$p_1$-mean is positive ($\Delta$ and $\Delta+\varepsilon$,
respectively), so the unmatched action is strictly suboptimal there, and
uniform goodness controls it.
\end{remark}

\subsection{One-shot balanced ETC}\label{app:etc}

\begin{remark}[The price of one-shot balanced ETC]\label{rem:etc}
The balanced explore-then-commit platform (each of the two complete
matchings is played $h$ times, after which the platform commits to the
empirically stable matching for the remaining rounds) with oracle-tuned $h$ (the optimal $h^{\star}$
depends on the unknown $\Delta$) achieves
$R_2^{\mathrm{ETC}}(T)=\bigl(\tfrac{4\gamma}{\Delta^2}+o(1)\bigr)\log
T$, a factor of two above the optimum $\tfrac{2\gamma}{\Delta^2}$. The
factor is the price of \emph{this} schedule: its one-shot balanced test
has error exponent $\Delta^2/4$, whereas the information optimum is
$\KL(\Normal(0,1)\,\|\,\Normal(\Delta,1))=\Delta^2/2$. We do not claim
that every non-adaptive schedule incurs this loss.
\end{remark}

The balanced ETC platform
with parameter $h$ plays each of the two complete matchings
$(p_1\!\to\!a_1,\,p_2\!\to\!a_2)$ and $(p_1\!\to\!a_2,\,p_2\!\to\!a_1)$
exactly $h$ times during the first $2h$ rounds, then commits for the
remaining $T-2h$ rounds to the stable matching induced by the empirical
means $\widehat\mu_{i,j}$.

\paragraph{Exploration phase.}
$p_2$ spends $h$ rounds on each arm, so its exploration regret is
$h(\gamma-0)+h\cdot 0=h\gamma$; similarly $p_1$'s is $h\Delta$.

\paragraph{Commit phase.}
$p_2$ keeps its stable partner $a_2$ unless $p_1$'s empirical ranking
is inverted, i.e., unless
$\widehat\mu_{1,a_2}\ge\widehat\mu_{1,a_1}$; in that event $p_1$
(having priority) seizes $a_2$ and $p_2$ is displaced to $a_1$ for all
remaining rounds. Each empirical mean averages $h$ unit-variance
samples, so
$\widehat\mu_{1,a_1}-\widehat\mu_{1,a_2}\sim\Normal(\Delta,2/h)$ and
\[
\Prob\bigl(\widehat\mu_{1,a_2}\ge\widehat\mu_{1,a_1}\bigr)
=\Phi\Bigl(-\Delta\sqrt{h/2}\Bigr),
\]
where $\Phi$ and $\phi$ are the standard normal CDF and density. Hence
\begin{equation}\label{eq:etcregret}
R_2^{\mathrm{ETC}}(T)=h\gamma+(T-2h)\,\gamma\,
\Phi\Bigl(-\Delta\sqrt{h/2}\Bigr),
\qquad
R_1^{\mathrm{ETC}}(T)=h\Delta+(T-2h)\,\Delta\,
\Phi\Bigl(-\Delta\sqrt{h/2}\Bigr).
\end{equation}

\paragraph{Optimal exploration length.}
The exploration length below is tuned with oracle knowledge of
$\theta_{\Delta,\gamma}$: the optimal $h^{\star}$ depends on the
unknown gap $\Delta$. Treating $h$ as continuous and minimizing the
bracket in \eqref{eq:etcregret}, the first-order condition is
\[
1=2\Phi\Bigl(-\Delta\sqrt{h/2}\Bigr)
+(T-2h)\,\frac{\Delta}{2\sqrt{2h}}\,
\phi\Bigl(\Delta\sqrt{h/2}\Bigr).
\]
The choice $h=\lceil4\log T/\Delta^2\rceil$ gives an objective
of order $\log T$, so a continuous minimizer has $h^\star=O(\log T)$;
it also satisfies $h^\star\to\infty$, since bounded $h$ leaves a
positive error probability and a linear commit cost. Thus the
minimizer is interior for large $T$, and its first-order condition
gives $T\,\tfrac{\Delta}{2\sqrt{2h^\star}}
\phi(\Delta\sqrt{h^\star/2})=1+o(1)$.
Taking logarithms and iterating once,
\begin{equation}\label{eq:hstar}
h^{\star}
=\frac{4}{\Delta^2}\Bigl(\log T-\tfrac12\log\log T\Bigr)+O(1)
=\frac{4}{\Delta^2}\log T+O(\log\log T).
\end{equation}
At $h=h^{\star}$ the commit term is $O(1)$: by the tail estimate
$\Phi(-x)\sim\phi(x)/x$ one gets
$(T-2h^{\star})\,\Phi(-\Delta\sqrt{h^{\star}/2})\to 4/\Delta^2$. The
exploration term therefore dominates and
\begin{equation}\label{eq:etcfinal}
R_2^{\mathrm{ETC}}(T)=\gamma\,h^{\star}+O(1)
=\Bigl(\frac{4\gamma}{\Delta^2}+o(1)\Bigr)\log T .
\end{equation}
Rounding $h^\star$ to an integer changes the regret by $O(1)$ and
preserves the leading constant. The error exponent $\Delta^2/4$ announced in Remark~\ref{rem:etc} is
that of $\Phi(-\Delta\sqrt{h/2})$ in $h$.

\section{Information lower bounds and regret geometry}\label{app:c1}

\subsection{Proof of Lemma~\ref{lem:nocancel} (hierarchical
no-cancellation)}\label{app:c1-nocancel}

For $\lambda\in\Thstrict$ define the expected deviation count
$S(T;\lambda)
:=\E_\lambda\sum_{t\le T}\ind{M_t\neq\mstar(\lambda)}$; a policy is
\emph{uniformly stabilizing on $\Theta'$} if
$S(T;\lambda)=o(T^{\alpha})$ for every $\lambda\in\Theta'$,
$\alpha>0$. In a general market a lower-priority player may profit
from a higher-priority deviation, so regret and stabilization can
diverge; strict serial dictatorship excludes such cancellation at
leading order.

\begin{lemma}[Hierarchical no-cancellation]\label{lem:nocancel}
Fix $\lambda\in\Thstrict$ and a complete-matching policy. The
following are equivalent:
(i) $(R_i(T;\lambda))_+=o(T^{\alpha})$ for every $i,\alpha>0$;
(ii) $S(T;\lambda)=o(T^{\alpha})$ for every $\alpha>0$;
(iii) $|R_i(T;\lambda)|=o(T^{\alpha})$ for every $i,\alpha>0$.
In particular, uniformly good $\iff$ uniformly stabilizing, on
every $\Theta'\subseteq\Thstrict$.
\end{lemma}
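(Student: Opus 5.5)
We prove the cycle (ii)$\Rightarrow$(iii)$\Rightarrow$(i)$\Rightarrow$(ii). Throughout we use the exact accounting identity \eqref{eq:regretcount}, $R_i(T;\lambda)=\sum_m \E_\lambda[N_m(T)]\,g_i(m;\lambda)$, which holds under Assumption~\ref{ass:complete} for every $\lambda\in\Thstrict$. Set $G:=\max_{i,m}|g_i(m;\lambda)|<\infty$; this is finite because there are finitely many matchings.

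\emph{(ii)$\Rightarrow$(iii) and (iii)$\Rightarrow$(i).} Since $g_i(\mstar(\lambda);\lambda)=0$, the identity gives $|R_i(T;\lambda)|\le G\sum_{m\ne\mstar}\E_\lambda[N_m(T)]=G\,S(T;\lambda)$. Hence (ii) implies (iii). The implication (iii)$\Rightarrow$(i) is immediate from $(R_i)_+\le|R_i|$.

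\emph{(i)$\Rightarrow$(ii).} We argue by induction down the priority order. Let
\[
D_i(T):=\E_\lambda\sum_{t\le T}\ind{M_t(i)\ne\mstar(i)} .
\]
By a union bound over players, $S(T;\lambda)\le\sum_i D_i(T)$, so it suffices to show that every $D_i(T)$ is $o(T^\alpha)$. Define the strict positive margin
\[
\delta_i:=\min\bigl\{\mu_{i,\mstar(i)}-\mu_{i,a}:\ a\notin\{\mstar(1),\dots,\mstar(i-1),\mstar(i)\}\bigr\}>0 .
\]
It is positive because row-strictness makes $\mstar(i)$ the unique best arm among those not taken by higher-priority players.

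Now split the rounds according to the state of players $1,\dots,i-1$.
\begin{itemize}
\item In a round where players $1,\dots,i-1$ all hold their stable arms, injectivity forces $M_t(i)\notin\{\mstar(1),\dots,\mstar(i-1)\}$. So if $M_t(i)\ne\mstar(i)$, then $g_i(M_t;\lambda)\ge\delta_i$. Otherwise $g_i(M_t;\lambda)=0$.
\item In a round where some higher-priority player deviates, we only have $g_i(M_t;\lambda)\ge -G$. The number of such rounds is at most $\sum_{j<i}D_j(T)$ in expectation.
\end{itemize}
Combining the two cases gives
\[
R_i(T;\lambda)\ \ge\ \delta_i\Bigl(D_i(T)-\sum_{j<i}D_j(T)\Bigr)-G\sum_{j<i}D_j(T).
\]
Rearranging,
\[
D_i(T)\ \le\ \frac{(R_i(T;\lambda))_+}{\delta_i}+\Bigl(1+\frac{G}{\delta_i}\Bigr)\sum_{j<i}D_j(T).
\]
For $i=1$ the sum is empty, so $D_1(T)\le (R_1)_+/\delta_1=o(T^\alpha)$. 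The induction hypothesis then propagates the $o(T^\alpha)$ bound to every $D_i(T)$.

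The main point needing care is the negative contributions to $R_i$. A lower-priority player can gain from a higher-priority deviation, so a small $(R_i)_+$ alone does not control $D_i$. The hierarchical induction is exactly what absorbs these gains, since they are charged to rounds already counted by $D_j$ with $j<i$.

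The final equivalence between uniformly good and uniformly stabilizing on any $\Theta'\subseteq\Thstrict$ follows by applying (i)$\Leftrightarrow$(ii) at each $\lambda\in\Theta'$.
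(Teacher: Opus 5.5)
Your proof is correct and is essentially the paper's argument. Both run the same induction down the priority order, use the free-set margin $\delta_i$, charge negative contributions to higher-priority deviations at cost at most $G$, and get (ii)$\Rightarrow$(iii) from $|R_i|\le G\,S(T;\lambda)$. The only difference is bookkeeping: the paper partitions rounds by earliest deviation level, so that $S(T;\lambda)=\sum_iB_i(T)$ exactly and $R_i\ge\delta_iB_i-\Gamma\sum_{j<i}B_j$, whereas you use per-player counts $D_i$ with union bounds.

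One small patch is needed. For $K=N$ and $i=N$, your $\delta_N$ is a minimum over the empty set. In that case injectivity rules out $M_t(N)\ne\mstar(N)$ whenever players $1,\dots,N-1$ hold their stable arms, so any positive value makes your inequality valid (the paper takes $\delta_N:=1$).
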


Fix $\lambda\in\Thstrict$ and write $\mstar=\mstar(\lambda)$. For a
complete matching $m$ define the \emph{earliest deviation level}
\[
\mathrm{lev}(m):=\min\{j\in[N]:m(j)\neq\mstar(j)\},
\qquad \mathrm{lev}(\mstar):=\infty,
\]
and for $i\in[N]$ the expected level-$i$ deviation count
\[
B_i(T):=\E_\lambda\sum_{t=1}^{T}\ind{\mathrm{lev}(M_t)=i},
\qquad\text{so that}\qquad
S(T;\lambda)=\sum_{i=1}^{N}B_i(T).
\]
Let $\Gamma:=\max_i\max_{a,b}(\lambda_{i,a}-\lambda_{i,b})\ge0$ bound
the magnitude of any per-round utility deviation. For level $i$ let
$F_i:=[K]\setminus\{\mstar(1),\dots,\mstar(i-1)\}$ be the arms still
unclaimed by higher-priority players in their stable positions, and set
\[
\delta_i:=\lambda_{i,\mstar(i)}
-\max\bigl\{\lambda_{i,a}:a\in F_i\setminus\{\mstar(i)\}\bigr\}>0
\]
whenever $F_i\setminus\{\mstar(i)\}\neq\emptyset$; positivity holds
because, by the serial-dictatorship construction and row strictness,
$\mstar(i)$ is the unique maximizer of row $i$ over $F_i$. If
$F_i\setminus\{\mstar(i)\}=\emptyset$ (possible only when $K=N$ and
$i=N$), no round can have $\mathrm{lev}(M_t)=i$: injectivity and
$M_t(j)=\mstar(j)$ for $j<i$ force $M_t(i)\in F_i=\{\mstar(i)\}$. In
that case $B_i(T)=0$ and we set $\delta_i:=1$.

\paragraph{Key decomposition.}
Consider the contribution of round $t$ to
$R_i(T;\lambda)=\E_\lambda\sum_t
(\lambda_{i,\mstar(i)}-\lambda_{i,M_t(i)})$, classified by
$\mathrm{lev}(M_t)$:
\begin{itemize}
  \item $\mathrm{lev}(M_t)>i$: then $M_t(i)=\mstar(i)$ and the contribution is
  $0$;
  \item $\mathrm{lev}(M_t)=i$: then $M_t(j)=\mstar(j)$ for $j<i$ and, by
  injectivity, $M_t(i)\in F_i\setminus\{\mstar(i)\}$, so the
  contribution is at least $\delta_i$;
  \item $\mathrm{lev}(M_t)<i$: the contribution is at least $-\Gamma$.
\end{itemize}
Taking expectations,
\begin{equation}\label{eq:keydecomp}
R_i(T;\lambda)\;\ge\;\delta_i\,B_i(T)-\Gamma\sum_{j<i}B_j(T),
\qquad i\in[N].
\end{equation}

\paragraph{(iii) $\Rightarrow$ (i).} Trivial.

\paragraph{(i) $\Rightarrow$ (ii).}
Fix $\alpha>0$ and induct on $i$. For $i=1$, \eqref{eq:keydecomp}
gives $B_1(T)\le(R_1(T))_+/\delta_1=o(T^{\alpha})$. If
$B_j(T)=o(T^{\alpha})$ for all $j<i$, then \eqref{eq:keydecomp} gives
\[
B_i(T)\le\frac{(R_i(T))_++\Gamma\sum_{j<i}B_j(T)}{\delta_i}
=o(T^{\alpha}).
\]
Summing over $i$ yields $S(T;\lambda)=o(T^{\alpha})$.

\paragraph{(ii) $\Rightarrow$ (iii).}
The contribution of round $t$ to $R_i$ is nonzero only if
$M_t(i)\neq\mstar(i)$, which implies $M_t\neq\mstar$, and its magnitude
is at most $\Gamma$. Hence
$|R_i(T;\lambda)|\le\Gamma\,S(T;\lambda)=o(T^{\alpha})$.
\hfill$\qed$

\subsection{Change-of-measure toolbox}\label{app:c1-com}

\begin{lemma}[{Data-processing inequality;
\citealp[Lemma~1]{garivier2019explore}}]\label{lem:fundamental}
For a fixed non-anticipating policy, horizon $T$, and instances
$\theta,\lambda$, every random variable $Z\in[0,1]$ measurable with
respect to the interaction $(M_1,X_1,\dots,M_T,X_T)$ satisfies
\[
\KL\bigl(\Prob^T_\theta\,\big\|\,\Prob^T_\lambda\bigr)
\;\ge\;\operatorname{kl}\bigl(\E_\theta Z,\;\E_\lambda Z\bigr),
\]
where $\operatorname{kl}(x,y)$ is the binary relative entropy.
\end{lemma}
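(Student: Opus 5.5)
The plan is the standard randomization-plus-data-processing argument: turn the $[0,1]$-valued statistic $Z$ into a Bernoulli statistic without changing its mean, and then use the fact that KL cannot increase under a measurable map. Throughout, $\Prob^T_\theta$ and $\Prob^T_\lambda$ denote the laws of $(M_1,X_1,\dots,M_T,X_T)$ under the common history-to-action kernel of the policy. If $\KL(\Prob^T_\theta\|\Prob^T_\lambda)=\infty$ there is nothing to prove, so I assume it is finite. In fact the two laws are mutually absolutely continuous here: the policy kernel is the same under both instances and the observation kernels are nondegenerate Gaussians. Hence $\Prob^T_\theta\ll\Prob^T_\lambda$ and the density $L=d\Prob^T_\theta/d\Prob^T_\lambda$ exists.

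\emph{Step 1 (Bernoulli lift).} I enlarge the sample space by an auxiliary $U\sim\mathrm{Unif}[0,1]$, independent of the interaction, with the same law under both instances. I then set $B:=\ind{U\le Z}$. The joint laws $Q_\theta:=\Prob^T_\theta\otimes\mathrm{Unif}$ and $Q_\lambda:=\Prob^T_\lambda\otimes\mathrm{Unif}$ have density $L$ with respect to each other, because the $U$-factor cancels. By the chain rule, $\KL(Q_\theta\|Q_\lambda)=\KL(\Prob^T_\theta\|\Prob^T_\lambda)$. Conditioning on the interaction and using Fubini gives $Q_\theta(B=1)=\E_\theta Z$, and likewise $Q_\lambda(B=1)=\E_\lambda Z$.

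\emph{Step 2 (data processing on a two-cell partition).} It remains to show $\KL(Q_\theta\|Q_\lambda)\ge \KL(\mathrm{Ber}(p)\|\mathrm{Ber}(q))=\operatorname{kl}(p,q)$, where $p=Q_\theta(B=1)$ and $q=Q_\lambda(B=1)$. I split the integral $\int L\log L\,dQ_\lambda$ over the events $\{B=1\}$ and $\{B=0\}$. On each event $A$ I apply Jensen's inequality to the convex function $x\mapsto x\log x$ under the conditional measure $Q_\lambda(\cdot\mid A)$. This gives
\[
\int_A L\log L\,dQ_\lambda\ \ge\ Q_\theta(A)\log\frac{Q_\theta(A)}{Q_\lambda(A)}.
\]
This is the log-sum inequality. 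Events with $Q_\lambda(A)=0$ also have $Q_\theta(A)=0$ by absolute continuity and contribute $0$. Summing the two cells yields exactly $\operatorname{kl}(p,q)$, with the usual conventions $0\log0=0$ at the endpoints $p,q\in\{0,1\}$.

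I expect the main point of care to be Step 1 rather than any computation: one must check that adjoining $U$ leaves the divergence unchanged and that $B$ is measurable with respect to the enlarged interaction. Both follow from product structure and the fact that $U$'s law does not depend on the instance. As an alternative route avoiding randomization, one can use the Donsker--Varadhan variational formula with test functions $sZ$, together with the convexity bound $e^{sZ}\le 1-Z+Ze^{s}$ for $Z\in[0,1]$. Optimizing over $s$ then gives the same $\operatorname{kl}$ bound, and I would fall back on this if the coupling bookkeeping proved awkward.
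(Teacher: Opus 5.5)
Your proof is correct and is essentially the standard argument behind the result: the paper gives no proof of its own and simply invokes \citet[Lemma~1]{garivier2019explore}, whose proof likewise tensors both laws with an independent uniform variable (which leaves the divergence unchanged), passes to the event $\{U\le Z\}$ with probabilities $\E_\theta Z$ and $\E_\lambda Z$, and applies data processing to that indicator, exactly as your log-sum step does. Your mutual-absolute-continuity remark is harmless but unnecessary, since a finite divergence already gives $\Prob^T_\theta\ll\Prob^T_\lambda$.
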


\begin{lemma}[Divergence decomposition]\label{lem:divdecomp}
For a fixed non-anticipating policy, let $\Prob^T_\theta$ denote
the law of the interaction under $\theta$. For all
$\theta,\lambda\in\Thstrict$,
$\KL(\Prob^T_\theta\,\|\,\Prob^T_\lambda)
=\sum_{m\in\Mset}\E_\theta[N_m(T)]\,D_m(\theta,\lambda)$.
\end{lemma}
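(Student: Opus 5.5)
The plan is to prove the divergence decomposition by the standard chain rule for KL divergence, applied to the interaction law round by round. Write the interaction up to horizon $T$ as $H_T=(M_1,X_1,\dots,M_T,X_T)$, with $X_t=(X_{1,t},\dots,X_{N,t})$ the vector of matched-pair rewards. Since the same history-to-action kernel $\pi_t(\cdot\mid H_{t-1})$ is used under every instance, the density of $H_T$ under $\theta$ factorizes as
\[
\prod_{t=1}^{T}\pi_t(M_t\mid H_{t-1})\,p_\theta^{M_t}(X_t),
\]
where $p^m_\theta$ is the density of $P^m_\theta$. The same factorization holds under $\lambda$ with the identical policy factors, so in the log-likelihood ratio the policy terms cancel exactly. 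This cancellation is valid even for randomized policies, once any internal randomization is either included in the history with an instance-free law or integrated into the kernel.

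The log-likelihood ratio is therefore
\[
\sum_{t\le T}\log\frac{p^{M_t}_\theta(X_t)}{p^{M_t}_\lambda(X_t)}.
\]
Taking $\E_\theta$ and conditioning on $(H_{t-1},M_t)$, the conditional law of $X_t$ is $P^{M_t}_\theta$, independent of the past. The conditional expectation of the $t$-th term is therefore $\KL(P^{M_t}_\theta\|P^{M_t}_\lambda)=D_{M_t}(\theta,\lambda)$, the latter computed in \eqref{eq:Dm} via the Gaussian KL (Lemma~\ref{lem:gausskl}) and additivity over independent coordinates. Summing over $t$ gives
\[
\E_\theta\sum_{t}D_{M_t}(\theta,\lambda)
=\sum_m\E_\theta\bigl[N_m(T)\bigr]D_m(\theta,\lambda),
\]
by writing $D_{M_t}=\sum_m\ind{M_t=m}D_m$ and swapping finite sums.

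The only point needing care is integrability, which justifies taking expectations of the log-ratio. Each term is a quadratic polynomial in Gaussian variables with finitely many possible mean configurations, since $\Mset$ is finite. Hence all moments are bounded uniformly in the history, and the tower property applies without issue.

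I expect no real obstacle. The one subtlety worth stating explicitly is that the policy kernel must be instance-independent, as the model stipulates. Absolute continuity also holds, since Gaussian laws are mutually absolutely continuous, so the KL divergence is finite and equals the expected log-likelihood ratio.
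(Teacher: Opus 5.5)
Your proposal is correct and follows the paper's proof essentially step for step: factorize the interaction density using the instance-independent policy kernel, cancel the policy factors in the log-likelihood ratio, and condition each remaining term on $(H_{t-1},M_t)$ to obtain $D_{M_t}(\theta,\lambda)$ before regrouping by matching. Your remarks on integrability and mutual absolute continuity make explicit details that the paper leaves implicit.
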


\begin{proof}[Proof of Lemma~\ref{lem:divdecomp}]
Write $I_T=(M_1,X_1,\dots,M_T,X_T)$ and let
$\pi_t(\cdot\mid I_{t-1})$ denote the action kernel at round $t$,
which by the protocol of \S\ref{sec:model} is the \emph{same} under
both instances. With respect to a common dominating measure, the
density of $\Prob^T_\theta$ factors as
$\prod_{t\le T}\pi_t(M_t\mid I_{t-1})\,p^{M_t}_\theta(X_t)$, where
$p^{m}_\theta$ is the density of $P^m_\theta$, and likewise under
$\lambda$. The policy factors are identical and cancel in the
log-likelihood ratio, leaving
\[
\log\frac{d\Prob^T_\theta}{d\Prob^T_\lambda}(I_T)
=\sum_{t=1}^{T}
\log\frac{p^{M_t}_\theta(X_t)}{p^{M_t}_\lambda(X_t)} .
\]
Taking $\E_\theta$ and conditioning the $t$-th term on
$(I_{t-1},M_t)$ replaces that term by its conditional expectation
$D_{M_t}(\theta,\lambda)$, whence
\[
\KL\bigl(\Prob^T_\theta\,\big\|\,\Prob^T_\lambda\bigr)
=\E_\theta\sum_{t=1}^{T}D_{M_t}(\theta,\lambda)
=\sum_{m\in\Mset}\E_\theta[N_m(T)]\,D_m(\theta,\lambda).
\]
The additive form of $D_m$ over matched pairs in \eqref{eq:Dm} reflects
the product structure of $P^m_\theta$. Compare
\citet[Lemma~15.1]{lattimore2020bandit} for the single-agent case and
\citet{combes2015combinatorial} for the semi-bandit form.
\end{proof}

We will use the elementary bound
\begin{equation}\label{eq:klbound}
\operatorname{kl}(x,y)\;\ge\;x\log\tfrac1y-\log2,
\qquad x\in[0,1],\ y\in(0,1).
\end{equation}

\subsection{Proof of Theorem~\ref{thm:infoalloc}}
\label{app:c1-infoalloc}

Fix $\theta\in\Thstrict$ and $\lambda\in\Lset(\theta)$. By
Lemma~\ref{lem:nocancel}, the policy is uniformly stabilizing, so for
every $\alpha>0$,
\[
S(T;\theta)=o(T^{\alpha})
\qquad\text{and}\qquad
S(T;\lambda)=o(T^{\alpha}).
\]
Define $Z_T:=N_{\mstar(\theta)}(T)/T\in[0,1]$. Under $\theta$,
\[
\E_\theta Z_T=1-\frac{S(T;\theta)}{T}\;\longrightarrow\;1 .
\]
Under $\lambda$, every round executing $\mstar(\theta)$ is a deviation
from $\mstar(\lambda)$, because $\mstar(\lambda)\neq\mstar(\theta)$ as
injective maps; hence
\[
\E_\lambda Z_T\le\frac{S(T;\lambda)}{T}=o(T^{\alpha-1})
\qquad\text{for every }\alpha>0 .
\]
Since $D_{\mstar(\theta)}(\theta,\lambda)=0$, Lemmas
\ref{lem:divdecomp} and \ref{lem:fundamental}, which together form
the fundamental inequality of \citet{garivier2019explore}, give
\[
\sum_{m\neq\mstar(\theta)}\E_\theta[N_m(T)]\,D_m(\theta,\lambda)
=\KL\bigl(\Prob^T_\theta\,\big\|\,\Prob^T_\lambda\bigr)
\ge\operatorname{kl}\bigl(\E_\theta Z_T,\E_\lambda Z_T\bigr).
\]
The left-hand side is finite (finitely many Gaussian
Kullback--Leibler terms), so $\E_\lambda Z_T>0$ whenever
$\E_\theta Z_T>0$ and \eqref{eq:klbound} applies. Fix
$\alpha\in(0,1)$. For all $T$ large enough,
$\E_\lambda Z_T\le T^{\alpha-1}$ and $\E_\theta Z_T\ge\tfrac12$, so
\[
\sum_{m\neq\mstar}\E_\theta[N_m(T)]\,D_m(\theta,\lambda)
\;\ge\;\E_\theta Z_T\,(1-\alpha)\log T-\log2 .
\]
Dividing by $\log T$ and using $\E_\theta Z_T\to1$,
\[
\liminf_{T\to\infty}
\frac{\sum_{m\neq\mstar}\E_\theta[N_m(T)]\,D_m(\theta,\lambda)}
{\log T}\;\ge\;1-\alpha .
\]
Letting $\alpha\downarrow0$ completes the proof. \hfill$\qed$

\subsection{The upper-closed Graves--Lai region: deferred proofs}
\label{app:c1-region}

\begin{proposition}[Cluster points of matching frequencies]
\label{prop:cluster}
Under the assumptions of Theorem~\ref{thm:infoalloc}, if along a
subsequence $T_k\to\infty$ we have
$\E_\theta[N_m(T_k)]/\log T_k\to x_m\in[0,\infty)$ for every
$m\neq\mstar(\theta)$, then $x\in\Xset(\theta)$.
\end{proposition}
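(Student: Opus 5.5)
The plan is to pass to the limit, constraint by constraint, in Theorem~\ref{thm:infoalloc}. Nonnegativity is immediate: each $x_m$ is a limit of nonnegative quantities $\E_\theta[N_m(T_k)]/\log T_k$. It therefore suffices to fix an arbitrary $\lambda\in\Lset(\theta)$ and show
\[
\sum_{m\neq\mstar}x_m D_m(\theta,\lambda)\ge1 .
\]

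First, I will note that each coefficient $D_m(\theta,\lambda)$ is a finite, nonnegative constant. It does not depend on $T$; by \eqref{eq:Dm} it is a finite sum of squared mean differences. The index set $\Mset\setminus\{\mstar(\theta)\}$ is also finite. Consequently the normalized weighted sum
\[
\frac{1}{\log T_k}\sum_{m\neq\mstar}\E_\theta[N_m(T_k)]\,D_m(\theta,\lambda)
\]
converges along the subsequence to $\sum_{m\neq\mstar}x_mD_m(\theta,\lambda)$. This uses only linearity of limits over finitely many terms, each converging to a finite value by hypothesis.

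Second, I will invoke Theorem~\ref{thm:infoalloc}. Its hypotheses are exactly those assumed here: uniform goodness on $\Thstrict$ and Assumption~\ref{ass:complete}. It gives that the $\liminf$ over the full sequence $T\to\infty$ of the same normalized sum is at least $1$. A $\liminf$ along the whole sequence is a lower bound for the limit along any subsequence. Hence the subsequential limit computed in the first step is at least $1$, which is the required constraint for $\lambda$. Since $\lambda\in\Lset(\theta)$ was arbitrary, $x\in\Xset(\theta)$.

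I do not expect a genuine obstacle. The one point needing care is that all coordinates converge \emph{along a common subsequence} with \emph{finite} limits, and the statement assumes exactly this. Otherwise one could not exchange the limit with the finite sum, and a coordinatewise-liminf vector could fail to satisfy the constraint (cf.\ Remark~\ref{rem:liminf}). The argument also does not need $\Lset(\theta)$ to be finite, because each constraint is verified separately. The quota reduction of Theorem~\ref{thm:reduction} is therefore not needed here.
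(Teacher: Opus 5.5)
Your proposal is correct and is essentially the paper's proof: for each fixed $\lambda\in\Lset(\theta)$, finiteness of $\Mset$ lets you pass the subsequential limit through the sum, and Theorem~\ref{thm:infoalloc}'s full-sequence $\liminf$ bound gives $\sum_{m\neq\mstar}x_mD_m(\theta,\lambda)\ge1$. Your explicit note on nonnegativity of $x$ fills in a detail the paper leaves implicit.
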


\begin{proof}[Proof of Proposition~\ref{prop:cluster}]
$\Mset$ is finite, so for each $\lambda\in\Lset(\theta)$,
\[
\sum_{m\neq\mstar}x_mD_m(\theta,\lambda)
=\lim_k\frac{1}{\log T_k}\sum_{m\neq\mstar}\E_\theta[N_m(T_k)]\,
D_m(\theta,\lambda)\;\ge\;1
\]
by Theorem~\ref{thm:infoalloc}.
\end{proof}

\begin{lemma}[Positivity under separation]\label{lem:positivity}
For $\theta\in\Thsep$, every entry of $G(\theta)$ is nonnegative,
and every column has a strictly positive entry.
\end{lemma}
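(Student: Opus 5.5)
The plan is to work entry by entry from the definitions $g_i(m;\theta)=\mu_{i,\mstar(i)}-\mu_{i,m(i)}$, using the characterisation of $\Thsep$ recalled in \S\ref{sec:model}: for $\theta\in\Thsep$ the top choices $\arg\max_a\mu_{i,a}$ are distinct across players. The first step is to check that serial dictatorship then gives every player its global best arm. We argue by induction on priority. Player $p_1$ takes $\arg\max_a\mu_{1,a}$. Suppose players $p_1,\dots,p_{i-1}$ have taken their own top choices. These arms are distinct from $p_i$'s top choice, by separation, so that arm is still free. Hence $\mstar(i)=\arg\max_a\mu_{i,a}$ for every $i$.

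\emph{Nonnegativity.} Since $\mstar(i)$ maximises row $i$ over all arms, $\mu_{i,\mstar(i)}\ge\mu_{i,a}$ for every $a$. Taking $a=m(i)$ gives $g_i(m;\theta)\ge0$ for every matching $m$ and every player $i$.

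\emph{A strictly positive entry.} Fix a column $m\neq\mstar$. Both are injective maps $[N]\to[K]$ and they differ, so some player $i$ has $m(i)\neq\mstar(i)$. Row-strictness ($\theta\in\Thstrict$) gives $\mu_{i,m(i)}\neq\mu_{i,\mstar(i)}$. Combined with the maximality just established, this yields $\mu_{i,m(i)}<\mu_{i,\mstar(i)}$, so $g_i(m;\theta)>0$.

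The only step needing care is the inductive identification of $\mstar(i)$ with the global row maximiser. Separation alone ensures the top choices are distinct, but we still need each top choice to be unclaimed when that player picks, which is exactly what the induction supplies. Everything else is immediate.
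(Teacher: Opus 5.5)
Your proof is correct and follows the paper's argument: identify $\mstar(i)$ with the global maximiser of row $i$, read off nonnegativity, and obtain a strictly positive entry from row strictness at a player where $m$ and $\mstar$ differ. The only difference is that you prove $\mstar(i)=\arg\max_a\mu_{i,a}$ by induction on priority, whereas the paper takes this identification as given from the definition of $\Thsep$ in \S\ref{sec:model}.
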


\begin{proof}
For $\theta\in\Thsep$, $\mstar(i)=\arg\max_a\mu_{i,a}$, so
$g_i(m;\theta)\ge0$ with equality iff $m(i)=\mstar(i)$. If
$m\neq\mstar$ then $m(i)\neq\mstar(i)$ for some $i$, and row
strictness makes the corresponding gap strictly positive.
\end{proof}

\begin{lemma}[$\UGL(\theta)$ is a closed convex upper set]
\label{lem:closed}
For $\theta\in\Thsep$, $\UGL(\theta)$ is convex, closed, and
satisfies $\UGL(\theta)+\R_+^N=\UGL(\theta)$.
\end{lemma}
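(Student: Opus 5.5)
Three properties of $\UGL(\theta)=G(\theta)\Xset(\theta)+\R_+^N$ need to be checked: upper closedness, convexity and closedness. Upper closedness is immediate, because $\R_+^N+\R_+^N=\R_+^N$ gives $\UGL(\theta)+\R_+^N=G(\theta)\Xset(\theta)+\R_+^N=\UGL(\theta)$. For convexity, Theorem~\ref{thm:reduction} says $\Xset(\theta)$ is defined by finitely many linear inequalities ($x\ge0$ and $z_{i,a}(x)\ge q_{i,a}$ for $(i,a)\in\Eset$). So $\Xset(\theta)$ is a convex polyhedron. Its linear image $G(\theta)\Xset(\theta)$ is then convex, and the Minkowski sum with the convex cone $\R_+^N$ is convex as well.

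The main step is closedness, because a linear image of a closed unbounded set need not be closed. The cleanest route is to note that $\UGL(\theta)$ is itself the linear image of a polyhedron. The set
\[
P:=\{(x,s):x\in\Xset(\theta),\,s\in\R_+^N\}
\]
is a polyhedron, and the map $(x,s)\mapsto G(\theta)x+s$ carries $P$ onto $\UGL(\theta)$. By Minkowski--Weyl, $P$ is the convex hull of finitely many points plus the conic hull of finitely many directions. A linear map sends such a set to a set of the same form, and a finitely generated convex set of this form is closed. This argument also gives polyhedrality, which anticipates Lemma~\ref{lem:polyhedral}.

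If one wants to avoid Minkowski--Weyl and use separation directly, the alternative is a sequence argument. Take $u_k=G(\theta)x_k+s_k\to u$. By Lemma~\ref{lem:positivity}, $G(\theta)\ge0$ and every column has a strictly positive entry. Hence $\sum_m x_{k,m}\le \max_i u_{k,i}/\min_m\max_i g_i(m;\theta)$, so $(x_k)$ is bounded. Since $s_k=u_k-G(\theta)x_k$, the sequence $(s_k)$ is bounded too. Pass to a convergent subsequence $(x_k,s_k)\to(x,s)$. Then $x\in\Xset(\theta)$ because $\Xset(\theta)$ is closed, $s\ge0$, and $u=G(\theta)x+s$. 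This version uses separation exactly where the paper flags it, through the bound on normalized counts by regret. I would present it as the main proof and mention the polyhedral argument as the alternative.

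The only care point is that $\Xset(\theta)$ has nonempty interior in the relevant sense, though this matters only later. Nonemptiness follows from Proposition~\ref{prop:sparse}, or from a schedule covering each quota with a large multiple of matchings. Nonemptiness is not needed for closedness or convexity.
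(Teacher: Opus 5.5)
Your sequence argument is the paper's proof: bound the allocations using Lemma~\ref{lem:positivity}, extract a convergent subsequence, and use closedness of $\Xset(\theta)$. Your polyhedral alternative is the route the paper takes separately in Lemma~\ref{lem:polyhedral}. There is one slip: the bound $\sum_m x_{k,m}\le\max_i u_{k,i}/\min_m\max_i g_i(m;\theta)$ is false in general. For example, two matchings with disjoint gap supports, each with gap entries $1$ and mass $L$, give $\sum_m x_m=2L$ while $\max_i(G(\theta)x)_i\approx L$. Use instead $\sum_m x_{k,m}\max_i g_i(m;\theta)\le\sum_i(G(\theta)x_k)_i\le\sum_i u_{k,i}$, as the paper does with $\langle\mathbf 1,u_n\rangle$; this still gives boundedness, so the conclusion stands.
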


\begin{proof}[Proof of Lemma~\ref{lem:closed}]
$\Xset(\theta)$ is the intersection of the nonnegative orthant with
closed halfspaces, hence closed and convex; convexity and the
upper-set property of $\UGL(\theta)$ follow because $G(\theta)$ is
linear and $\R_+^N$ is a convex cone. For closedness, let
$u_n=G(\theta)x_n+s_n\to u$ with $x_n\in\Xset(\theta)$ and
$s_n\ge0$. With
$c:=\min_{m\neq\mstar}\sum_ig_i(m;\theta)>0$
(Lemma~\ref{lem:positivity}),
$\langle\mathbf1,u_n\rangle\ge\langle\mathbf1,G(\theta)x_n\rangle
\ge c\,\|x_n\|_1$, so $(x_n)$ is bounded; along a subsequence
$x_n\to x$, and $x\in\Xset(\theta)$ by closedness. Then
$s_n=u_n-G(\theta)x_n\to u-G(\theta)x\ge0$, so
$u=G(\theta)x+\bigl(u-G(\theta)x\bigr)\in\UGL(\theta)$.
\end{proof}

\begin{lemma}[Polyhedrality]\label{lem:polyhedral}
For $\theta\in\Thsep$, $\Xset(\theta)$ and $\UGL(\theta)$ are
polyhedra, and $\UGL(\theta)\subseteq\R^N_+$.
\end{lemma}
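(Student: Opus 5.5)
The plan is to obtain polyhedrality of $\Xset(\theta)$ from Theorem~\ref{thm:reduction}, and then to get $\UGL(\theta)$ as a linear image plus a finitely generated cone. By Theorem~\ref{thm:reduction}, for $\theta\in\Thsep\subseteq\Thstrict$,
\[
\Xset(\theta)=\{x\in\R^{\Mset\setminus\{\mstar\}}:\ x\ge0,\ z_{i,a}(x)\ge q_{i,a}\ \forall(i,a)\in\Eset\}.
\]
Each $z_{i,a}(x)=\sum_{m\neq\mstar:\,m(i)=a}x_m$ is a linear functional of $x$, and both $\Eset$ and $\Mset$ are finite. Hence $\Xset(\theta)$ is the intersection of finitely many closed halfspaces, i.e.\ a polyhedron. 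The infinite family of constraints indexed by $\Lset(\theta)$ in \eqref{eq:Xset} is thereby replaced by finitely many, and this replacement is the only nontrivial input.

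Next I would handle $\UGL(\theta)=G(\theta)\Xset(\theta)+\R^N_+$. The set $\R^N_+$ is the polyhedral cone generated by $e_1,\dots,e_N$. By Minkowski--Weyl, write $\Xset(\theta)=\mathrm{conv}(V)+\mathrm{cone}(D)$ with $V,D$ finite, and note that $\Xset(\theta)$ is nonempty: for instance, put mass $\max_{(i,a)\in\Eset}q_{i,a}$ on a matching $m$ with $m(i)=a$ for each eligible pair. Then
\[
\UGL(\theta)=\mathrm{conv}\bigl(G(\theta)V\bigr)+\mathrm{cone}\bigl(G(\theta)D\cup\{e_1,\dots,e_N\}\bigr),
\]
which is finitely generated and therefore a polyhedron by the converse direction of Minkowski--Weyl. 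Equivalently, one can note that the linear image of a polyhedron is a polyhedron (Fourier--Motzkin) and that the Minkowski sum of two polyhedra is a polyhedron, since it is the image of their product under addition. This gives a second, self-contained route.

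For the inclusion $\UGL(\theta)\subseteq\R^N_+$ I would use Lemma~\ref{lem:positivity}: on $\Thsep$ every entry of $G(\theta)$ is nonnegative. Hence $x\ge0$ implies $G(\theta)x\ge0$, and adding $s\in\R^N_+$ keeps the sum in $\R^N_+$.

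There is no real obstacle. The one point to state carefully is that Theorem~\ref{thm:reduction} is what makes the constraint set finite; without it, $\Xset(\theta)$ would only be a closed convex set.
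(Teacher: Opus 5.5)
Your proposal is correct and follows the same route as the paper: Theorem~\ref{thm:reduction} reduces $\Xset(\theta)$ to finitely many linear inequalities, Minkowski--Weyl (linear images and Minkowski sums of polyhedra are polyhedra) gives polyhedrality of $\UGL(\theta)$, and Lemma~\ref{lem:positivity} with $x\ge0$ gives $\UGL(\theta)\subseteq\R^N_+$. Your explicit V-representation and the nonemptiness check are extra detail that the paper's proof omits but that is consistent with it.
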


\begin{proof}
By Theorem~\ref{thm:reduction} (whose proof uses only
$\theta\in\Thstrict$ and is independent of this section),
$\Xset(\theta)=\{x\ge0:z_{i,a}(x)\ge q_{i,a}\ \forall(i,a)\in\Eset\}$
is cut out by finitely many linear inequalities. The image of a
polyhedron under the linear map $G(\theta)$ is a polyhedron, and
the Minkowski sum of two polyhedra is a polyhedron
(Minkowski--Weyl), so $\UGL(\theta)=G(\theta)\Xset(\theta)+\R^N_+$
is a polyhedron. Nonnegativity follows from $x\ge0$ and
Lemma~\ref{lem:positivity}.
\end{proof}

\begin{proof}[Proof of Proposition~\ref{prop:proper}]
\emph{(i)} Fix $u\in\UGL(\theta)$ and let
$L:=\UGL(\theta)\cap(u-\R^N_+)$, a nonempty closed set
(Lemma~\ref{lem:closed}) contained in the box $[0,u]$ by
Lemma~\ref{lem:polyhedral}, hence compact. Let $u'$ minimize the
continuous function $\langle\mathbf 1,\cdot\rangle$ over $L$. If
some $u''\in\UGL(\theta)$ satisfied $u''\le u'$, $u''\neq u'$, then
$u''\in L$ and $\langle\mathbf 1,u''\rangle<\langle\mathbf 1,u'\rangle$,
a contradiction; so $u'$ is Pareto-minimal and $u'\le u$.

\emph{(ii), ``if''.} If $\langle w,u\rangle=C_w(\theta)
=\min_{\UGL(\theta)}\langle w,\cdot\rangle$ with $w>0$ and
$u''\le u$, $u''\neq u$, $u''\in\UGL(\theta)$, then
$\langle w,u''\rangle<\langle w,u\rangle$, impossible.

\emph{(ii), ``only if''.} Let $u$ be Pareto-minimal and write
$u=G(\theta)x+v$ with $x\in\Xset(\theta)$, $v\ge0$; minimality
forces $v=0$, since $G(\theta)x\in\UGL(\theta)$ lies below $u$.
Then $x$ is an efficient solution of the multiobjective linear
program $\min\{G(\theta)x:x\in\Xset(\theta)\}$: an $x'\in\Xset(\theta)$
with $G(\theta)x'\le G(\theta)x$, $\neq$, would produce a point of
$\UGL(\theta)$ componentwise below $u$. We give a direct argument, a
variant of the linear-programming duality proof of
\citet[Theorem~1]{isermann1974proper} that every efficient solution of
a linear vector optimization problem is optimal for some strictly
positive weight vector (by \citealp[Theorem~1]{geoffrion1968proper},
such optima are properly efficient, which gives
\citealp[Theorem~2]{isermann1974proper}). Fix any finite
$H$-representation $\UGL(\theta)=\{v:Av\ge b\}$, which exists by
Lemma~\ref{lem:polyhedral}, and consider the linear program
\[
\max\{\langle\mathbf 1,s\rangle: As\le Au-b,\ s\ge0\},
\]
whose feasible $s$ are exactly the vectors with $u-s\in\UGL(\theta)$.
It is feasible ($s=0$) and bounded ($\UGL(\theta)\subseteq\R^N_+$
gives $s\le u$), and its value is $0$ because $u$ is Pareto-minimal.
By strong duality its dual $\min\{\langle Au-b,y\rangle:A^\top y\ge
\mathbf 1,\ y\ge0\}$ has an optimal $y$ with $\langle Au-b,y\rangle=0$.
Put $w:=A^\top y\ge\mathbf 1$, so $w\in\R^N_{++}$. For every
$v\in\UGL(\theta)$, $\langle w,v\rangle=\langle y,Av\rangle\ge\langle
y,b\rangle=\langle y,Au\rangle=\langle w,u\rangle$; hence
$\langle w,u\rangle=\min_{\UGL(\theta)}\langle w,\cdot\rangle=C_w(\theta)$
by \eqref{eq:Cw}. (Equivalently: the normal cone of $\UGL(\theta)$
at a Pareto-minimal point contains $-w$, with all coordinates at
most $-1$.) The final assertion combines (ii) with
the fact that each $\arg\min_{\UGL(\theta)}\langle w,\cdot\rangle$
is a face of the polyhedron $\UGL(\theta)$.
\end{proof}

\begin{proof}[Proof of Proposition~\ref{prop:region}]
Set $c:=\min_{m\neq\mstar}\sum_i g_i(m;\theta)>0$
(Lemma~\ref{lem:positivity}, finiteness of $\Mset$). By
\eqref{eq:regretcount},
$\sum_iR_i(T_k)\ge c\sum_{m\neq\mstar}\E_\theta[N_m(T_k)]$, so the
normalized counts are bounded along the subsequence; extract a
further subsequence along which they converge to some $x\ge0$. By
Proposition~\ref{prop:cluster}, $x\in\Xset(\theta)$, and
\eqref{eq:regretcount} passes to the limit coordinatewise (finitely
many matchings), giving $r=G(\theta)x$.
\end{proof}

\begin{remark}[Coordinatewise liminfs]\label{rem:liminf}
Proposition~\ref{prop:region} is phrased through cluster points of
the whole vector $R(T;\theta)/\log T$ because the coordinatewise
liminf $\underline r:=\bigl(\liminf_{T}R_i(T;\theta)/\log T\bigr)_{i}$
can lie outside $\UGL(\theta)$. Take $\theta=\mu^{(3)}$ and let
$\tau^{\mathrm{par}}$ and $\tau^{\mathrm{pro}}$ be the pair marginals
of the endpoint allocations $\{m_{\mathrm B}\!:200,\ m_{\mathrm E}\!:2\}$
and $\{m_{\mathrm A}\!:200,\ m_{\mathrm C}\!:200,\ m_{\mathrm D}\!:2\}$
of Proposition~\ref{prop:frontier33}, whose regret vectors are
$(22,20,202)$ and $(22,222,2.02)$. Let $k^{(1)}:=k_0$ and
$k^{(j+1)}:=(k^{(j)})^2$, and run the uncosted policy $\pi_\tau$ of
\S\ref{sec:c4-attainable} with the epoch-dependent target
$\tau^{\mathrm{par}}$ in the epochs $k\in(k^{(j)},k^{(j+1)}]$ with $j$
odd and $\tau^{\mathrm{pro}}$ in those with $j$ even.

\emph{Uniform goodness.} The proof of Lemma~\ref{lem:ug} uses only the
quota constraints and the mass cap of the solved program, which do not
involve the target, so this complete-matching policy is uniformly good
on $\Thstrict$.

\emph{Counts.} Both targets are off-stable blocks of points of
$\mathcal P(\theta)$ (proof of Theorem~\ref{thm:closure}(ii)), so
Lemma~\ref{lem:trackplugin}, applied to each of the two targets,
shows that on $A_k\cap B_k$ and for
$k\ge\max\{k_2(\theta,\tau^{\mathrm{par}}),k_2(\theta,\tau^{\mathrm{pro}})\}$
the selected off-stable marginals are within $Ck^{-1/3}$ of the target
of epoch $k$, with $C$ the larger of the two constants. The repair, certification and accounting arguments of
Theorem~\ref{thm:closure}(i) apply unchanged, because they use only
the two target-free properties above. Only the Ces\`aro step of
Lemma~\ref{lem:track} changes. Since $b_k\sim k^2$, the exploration
of the epochs $k\le k^{(j)}$ contributes
$O(b_{k^{(j)}})=o(b_{k^{(j+1)}})$ rounds to each pair in
expectation, while the block $(k^{(j)},k^{(j+1)}]$ contributes
$b_{k^{(j+1)}}$ times its target plus $o(b_{k^{(j+1)}})$. Hence, by
Step~1 of Appendix~\ref{app:c2-frontier},
$R(T_{k^{(j+1)}};\theta)/\log T_{k^{(j+1)}}$ tends to $(22,20,202)$
along odd $j$ and to $(22,222,2.02)$ along even $j$.

\emph{Conclusion.} Consequently $\underline r\le(22,20,2.02)$
coordinatewise. Every $u=G(\theta)x+v\in\UGL(\theta)$ satisfies
$u\ge G(\theta)x\ge r(t)$ for some segment point $r(t)$ by Step~2 of
Appendix~\ref{app:c2-frontier}. If $\underline r$ belonged to
$\UGL(\theta)$, then $r(t)\le(22,20,2.02)$ would require
$220-t\le20$ and $4+0.99t\le2.02$, that is, $t\ge200$ and
$t\le-2$. Hence $\underline r\notin\UGL(\theta)$, although every
cluster point of $R(T;\theta)/\log T$ lies in
$G(\theta)\Xset(\theta)$ by Proposition~\ref{prop:region}.
\end{remark}

\begin{proof}[Proof of Theorem~\ref{thm:weighted}]
By \eqref{eq:regretcount} and Lemma~\ref{lem:positivity},
$w^\top R(T)\ge0$; let $L$ be the liminf, and assume $L<\infty$
(else there is nothing to prove). Pick a subsequence realizing $L$.
Since $c_w:=\min_{m\neq\mstar}\langle w,g(m;\theta)\rangle>0$, the
normalized counts are bounded along it; extract a further
subsequence converging to some $x$, which lies in $\Xset(\theta)$
by Proposition~\ref{prop:cluster}. Then
$L=\sum_{m\neq\mstar}x_m\langle w,g(m;\theta)\rangle\ge C_w(\theta)$.
\end{proof}

\begin{remark}[Nonemptiness]\label{rem:feasible}
For $\theta\in\Thsep$, every $\lambda\in\Lset(\theta)$ singles out
a pair of $\Eset$ through its earliest deviation, with
$D_m(\theta,\lambda)\ge\Delta_{i,a}^2/2$ for every $m$ with
$m(i)=a$; a uniform allocation at level
$\max_{\Eset}2/\Delta_{i,a}^2$ is therefore feasible, so
$\Xset(\theta)\neq\emptyset$ and $C_w(\theta)<\infty$.
\end{remark}

\paragraph{Details.}
Every $\lambda\in\Lset(\theta)$ singles out a pair in $\Eset$: let
$i$ be the first player with
$\mstar_\lambda(i)\neq\mstar_\theta(i)$. Players $1,\dots,i-1$ keep
their assignments, and since $\lambda$ agrees with $\theta$ on the
pairs of $\mstar(\theta)$, the arm $a:=\mstar_\lambda(i)$ satisfies
$(i,a)\in\Eset$ and $\lambda_{i,a}>\mu_{i,\mstar(i)}>\mu_{i,a}$;
hence $|\lambda_{i,a}-\mu_{i,a}|>\Delta_{i,a}$ and, by
\eqref{eq:Dm}, \emph{every} matching $m$ with $m(i)=a$ has
$D_m(\theta,\lambda)\ge\Delta_{i,a}^2/2$. If
$\Lset(\theta)=\emptyset$ (in particular when $\Eset=\emptyset$,
which happens only in the trivial market $K=1$), the constraints in
\eqref{eq:Xset} are vacuous and $x=0$ is feasible. Otherwise set
$L:=\max_{(i,a)\in\Eset}2/\Delta_{i,a}^2$ and $x_m:=L$ for every
$m\neq\mstar$. Since $K\ge N$, every pair $(i,a)\in\Eset$ extends
to a complete injective matching $m$ with $m(i)=a$, and
$m\neq\mstar$ because $a\neq\mstar(i)$; for the pair singled out by
any given $\lambda$, that matching alone yields
$x_mD_m(\theta,\lambda)\ge(2/\Delta_{i,a}^2)(\Delta_{i,a}^2/2)=1$.
Hence $x\in\Xset(\theta)$, so $\Xset(\theta)\neq\emptyset$ and
$C_w(\theta)<\infty$.

\subsection{Recovery of the two-by-two constants}
\label{app:c1-recovery}

\begin{corollary}[Two-by-two recovery]\label{cor:recovery22}
Let $N=K=2$ and let $\theta_{\Delta,\gamma}$ be the instance of
Appendix~\ref{sec:twobytwo}. Every complete-matching policy that is uniformly
good on $\Thstrict$ satisfies
\[
\liminf_{T\to\infty}\frac{R_1(T)}{\log T}\ge\frac{2}{\Delta},
\qquad
\liminf_{T\to\infty}\frac{R_2(T)}{\log T}\ge\frac{2\gamma}{\Delta^2}.
\]
\end{corollary}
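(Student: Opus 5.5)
The plan is to specialize the general machinery to $N=K=2$ and $\theta=\theta_{\Delta,\gamma}$, using Theorem~\ref{thm:weighted} (or, more directly, Proposition~\ref{prop:cluster} together with Theorem~\ref{thm:reduction}) rather than the simulation reduction of Appendix~\ref{app:twobytwo}.

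First, we check that $\theta_{\Delta,\gamma}\in\Thsep$: row $1$ is $(\Delta,0)$ and row $2$ is $(0,\gamma)$, both strict, with distinct top arms $a_1,a_2$. Hence $\mstar=(a_1,a_2)$. The only non-stable complete matching is $m'=(a_2,a_1)$, and its gap vector is $g(m';\theta)=(\Delta,\gamma)$.

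Second, we identify the eligible set. Player $1$ has the pair $(1,a_2)$ with $\Delta_{1,a_2}=\Delta$, so $q_{1,a_2}=2/\Delta^2$. Player $2$'s only free arm is $\mstar(2)=a_2$, so it contributes nothing. Theorem~\ref{thm:reduction} then gives
\[
\Xset(\theta)=\{x_{m'}\ge 2/\Delta^2\}.
\]

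Third, we convert this into regret bounds. By \eqref{eq:regretcount}, $R_1(T)=\Delta\,\E[N_{m'}(T)]$ and $R_2(T)=\gamma\,\E[N_{m'}(T)]$. Now take any subsequence $T_k$ realizing $\liminf_T R_1(T)/\log T$. We may assume this value is finite, since otherwise there is nothing to prove. Along $T_k$, the ratio $\E[N_{m'}(T_k)]/\log T_k$ is bounded, so we can extract a further convergent subsequence with limit $x_{m'}$. Proposition~\ref{prop:cluster} gives $x_{m'}\in\Xset(\theta)$, i.e.\ $x_{m'}\ge 2/\Delta^2$. Therefore the liminf is at least $\Delta\cdot 2/\Delta^2=2/\Delta$. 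The same argument with $\gamma$ in place of $\Delta$ yields $2\gamma/\Delta^2$ for player $2$. Equivalently, one can apply Theorem~\ref{thm:weighted} with $w=e_1+\epsilon e_2$ and let $\epsilon\downarrow0$, but the direct route avoids the limit in $w$.

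The only genuine subtlety is that Proposition~\ref{prop:cluster} and Theorem~\ref{thm:infoalloc} require uniform goodness on all of $\Thstrict$, which the hypothesis supplies. The necessity half of Theorem~\ref{thm:reduction} needs the single-entry alternative raising $\mu_{1,a_2}$ to $\Delta+\varepsilon$ to lie in $\Lset(\theta)$. This holds: the alternative is row-strict, it changes $\mstar$ to $(a_2,a_1)$, and it agrees with $\theta$ on the stable pairs $(1,a_1)$ and $(2,a_2)$, so $D_{\mstar}=0$. This is the step to check carefully. Everything else is bookkeeping, and since $m'$ is the only off-stable matching there is no scheduling freedom, which is consistent with Corollary~\ref{cor:closed22}.
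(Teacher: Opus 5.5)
Your proof is correct and follows the paper's argument. Both identify $\Xset(\theta_{\Delta,\gamma})=\{x_{m'}\ge 2/\Delta^2\}$ and then apply Proposition~\ref{prop:cluster} along a subsequence realizing the liminf, using $R_1=\Delta\,\E[N_{m'}]$ and $R_2=\gamma\,\E[N_{m'}]$ from \eqref{eq:regretcount}; the only difference is that you obtain $\Xset$ from Theorem~\ref{thm:reduction}, whereas the paper computes it directly from the definition by showing that $D_{m'}(\theta,\lambda)$ ranges over $(\Delta^2/2,\infty)$ as $\lambda$ ranges over $\Lset(\theta)$.
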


\begin{proof}
Besides $\mstar=(a_1,a_2)$ there is a single matching
$m^e=(a_2,a_1)$, with gap column $g(m^e)=(\Delta,\gamma)$. An
alternative $\lambda\in\Lset(\theta_{\Delta,\gamma})$ must satisfy
$\lambda_{1,a_1}=\Delta$, $\lambda_{2,a_2}=\gamma$, and
$\mstar(\lambda)\neq\mstar$ forces $\lambda_{1,a_2}>\Delta$; the entry
$\lambda_{2,a_1}\neq\gamma$ is otherwise free. Hence
$D_{m^e}(\theta,\lambda)
=\tfrac12\lambda_{1,a_2}^2+\tfrac12\lambda_{2,a_1}^2
$ ranges over $(\Delta^2/2,\infty)$, with infimum $\Delta^2/2$
approached as $\lambda_{1,a_2}\downarrow\Delta$ and
$\lambda_{2,a_1}\to0$. The constraints
$x_{m^e}D_{m^e}(\theta,\lambda)\ge1$ for all $\lambda$ therefore
collapse to
$\Xset(\theta_{\Delta,\gamma})=\{x_{m^e}\ge2/\Delta^2\}$.
Now let $L:=\liminf R_2(T)/\log T$ and assume $L<\infty$. Along a
realizing subsequence,
$\E[N_{m^e}(T_k)]/\log T_k=R_2(T_k)/(\gamma\log T_k)\to L/\gamma$ by
\eqref{eq:regretcount}, so $L/\gamma\ge2/\Delta^2$ by
Proposition~\ref{prop:cluster}. The bound for $R_1$ is identical with
$\gamma$ replaced by $\Delta$.
\end{proof}

The quantification differs slightly from Appendix~\ref{sec:twobytwo}: here
uniform goodness is required on all of $\Thstrict$, whereas the
dedicated argument there needs it only on the one-row class $\Cclass$
and is in that respect sharper.

\section{Pairwise quotas, marginal schedules, and frontier geometry}\label{app:c2}

\subsection{Proofs of Lemma~\ref{lem:singleentry},
Theorem~\ref{thm:reduction} and Theorem~\ref{thm:compact}}
\label{app:c2-proofs}

\begin{lemma}[Single-entry alternatives]\label{lem:singleentry}
Let $\theta\in\Thstrict$ and $(i,a)\in\Eset$. For every
sufficiently small $\varepsilon>0$, the
instance $\lambda^{(i,a,\varepsilon)}$ that agrees with $\theta$
except for
$\lambda^{(i,a,\varepsilon)}_{i,a}=\mu_{i,\mstar(i)}+\varepsilon$
belongs to $\Lset(\theta)$; its serial-dictatorship matching first
deviates at player $i$, with
$\mstar_{\lambda^{(i,a,\varepsilon)}}(i)=a$; and for every
$m\in\Mset$,
\[
D_m\bigl(\theta,\lambda^{(i,a,\varepsilon)}\bigr)
=\ind{m(i)=a}\,\tfrac{1}{2}(\Delta_{i,a}+\varepsilon)^2 .
\]
\end{lemma}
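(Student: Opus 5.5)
We check directly the four defining properties of $\Lset(\theta)$ for $\lambda:=\lambda^{(i,a,\varepsilon)}$, and then compute $D_m$ from \eqref{eq:Dm}. Take $\varepsilon>0$ smaller than the minimal distance between $\mu_{i,\mstar(i)}$ and every other entry $\mu_{i,b}$ with $b\neq a$ that exceeds it. Such a positive threshold exists by row strictness, since row $i$ has finitely many entries.

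\emph{Row strictness.} Only row $i$ changes. The new entry $\mu_{i,\mstar(i)}+\varepsilon$ differs from $\mu_{i,\mstar(i)}$ because $\varepsilon>0$. By the choice of $\varepsilon$ it also differs from every $\mu_{i,b}$ with $b\notin\{a,\mstar(i)\}$: either $\mu_{i,b}<\mu_{i,\mstar(i)}$, or $\mu_{i,b}>\mu_{i,\mstar(i)}+\varepsilon$. Hence $\lambda\in\Thstrict$.

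\emph{Serial dictatorship under $\lambda$.} Run the algorithm level by level. Players $1,\dots,i-1$ have identical rows, so they choose the same arms $\mstar(1),\dots,\mstar(i-1)$. Player $i$ then faces $F_i=[K]\setminus\{\mstar(1),\dots,\mstar(i-1)\}$, which contains $a$ because $(i,a)\in\Eset$. Under $\theta$, $\mstar(i)$ is the unique maximizer of row $i$ over $F_i$, so every $b\in F_i\setminus\{a,\mstar(i)\}$ has $\mu_{i,b}<\mu_{i,\mstar(i)}<\lambda_{i,a}$. Thus $a$ is the unique best remaining arm for $i$ under $\lambda$, which gives $\mstar_\lambda(i)=a\neq\mstar(i)$ and $\mstar_\lambda(j)=\mstar(j)$ for $j<i$. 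So the first deviation is at $i$ and $\mstar(\lambda)\neq\mstar(\theta)$. Lower-priority players may cascade, but this is irrelevant to the claims.

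\emph{The divergence.} Since $\lambda$ and $\theta$ differ only at entry $(i,a)$, formula \eqref{eq:Dm} gives $D_m(\theta,\lambda)=\tfrac12(\mu_{i,a}-\lambda_{i,a})^2\ind{m(i)=a}$. Moreover $\lambda_{i,a}-\mu_{i,a}=\Delta_{i,a}+\varepsilon$, which is the displayed identity. Because $\mstar(i)\neq a$, it follows that $D_{\mstar(\theta)}(\theta,\lambda)=0$. Hence $\lambda\in\Lset(\theta)$.

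The only delicate step is the choice of $\varepsilon$. It must avoid ties in row $i$ so that the alternative stays in $\Thstrict$ (Remark~\ref{rem:local}), while $\lambda$ is allowed to leave $\Thsep$. Everything else is bookkeeping.
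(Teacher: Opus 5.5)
Your proof is correct and follows the paper's argument essentially step by step: you secure row strictness by a small $\varepsilon$, observe that higher-priority players keep their arms while $a$ strictly beats every free arm at level $i$, and compute $D_m$ from the single modified entry. Your threshold, which only involves entries of row $i$ lying above $\mu_{i,\mstar(i)}$, is slightly less restrictive than the paper's $\min_{b\neq\mstar(i)}|\mu_{i,b}-\mu_{i,\mstar(i)}|$, but this changes nothing in the argument.
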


\begin{proof}[Proof of Lemma~\ref{lem:singleentry}]
\emph{Row strictness.} Row $i$ of $\lambda^{(i,a,\varepsilon)}$ has
entries
$\{\mu_{i,b}\}_{b\neq a}\cup\{\mu_{i,\mstar(i)}+\varepsilon\}$. The
new entry differs from $\mu_{i,\mstar(i)}$ because $\varepsilon>0$,
and from every other $\mu_{i,b}$ once
$\varepsilon<\min_{b\neq\mstar(i)}|\mu_{i,b}-\mu_{i,\mstar(i)}|$, a
positive bound by row strictness; the other rows are unchanged. Hence
$\lambda^{(i,a,\varepsilon)}\in\Thstrict$ for all sufficiently small
$\varepsilon>0$.

\emph{Diagonal agreement.} The modified entry is $(i,a)$ with
$a\neq\mstar(i)$, so it lies off the $\mstar$-diagonal and
$D_{\mstar(\theta)}(\theta,\lambda^{(i,a,\varepsilon)})=0$. (Note
that $a$ may coincide with $\mstar(j)$ for some \emph{lower}-priority
player $j>i$; the modified entry is in row $i$, not row $j$, so the
diagonal is still untouched.)

\emph{Serial dictatorship under the alternative.} Players $j<i$
have unchanged rows and unchanged free sets, hence unchanged
assignments. At level $i$ the free set is still $F_i$; it contains
$a$ (as $(i,a)\in\Eset$), and
$\lambda_{i,a}=\mu_{i,\mstar(i)}+\varepsilon$ strictly exceeds
$\lambda_{i,b}=\mu_{i,b}\le\mu_{i,\mstar(i)}$ for every
$b\in F_i\setminus\{a\}$. Hence
$\mstar_{\lambda}(i)=a\neq\mstar(i)$, so
$\mstar(\lambda)\neq\mstar(\theta)$ and
$\lambda^{(i,a,\varepsilon)}\in\Lset(\theta)$.

\emph{Information.} The two instances differ in the single entry
$(i,a)$, and
$\lambda_{i,a}-\mu_{i,a}
=\mu_{i,\mstar(i)}+\varepsilon-\mu_{i,a}=\Delta_{i,a}+\varepsilon$;
the formula for $D_m$ follows from \eqref{eq:Dm}.
\end{proof}

\begin{proof}[Proof of Theorem~\ref{thm:reduction}]
\emph{($\subseteq$, the quotas are necessary.)}
Fix $(i,a)\in\Eset$ and $x\in\Xset(\theta)$. Applying the defining
constraint of $\Xset(\theta)$ at
$\lambda=\lambda^{(i,a,\varepsilon)}$ and using
Lemma~\ref{lem:singleentry},
\[
1\;\le\;\sum_{m\neq\mstar}x_m\,
D_m\bigl(\theta,\lambda^{(i,a,\varepsilon)}\bigr)
=z_{i,a}(x)\,\frac{(\Delta_{i,a}+\varepsilon)^2}{2},
\]
so $z_{i,a}(x)\ge2/(\Delta_{i,a}+\varepsilon)^2$ for every
sufficiently small $\varepsilon>0$; letting $\varepsilon\downarrow0$
gives the closed constraint $z_{i,a}(x)\ge2/\Delta_{i,a}^2$.

\emph{($\supseteq$, the quotas are sufficient.)}
Let $x$ satisfy all quotas and let $\lambda\in\Lset(\theta)$ be
arbitrary. Let $i$ be the first player with
$\mstar_\lambda(i)\neq\mstar_\theta(i)$. Players $1,\dots,i-1$
keep their assignments, and $\lambda$ agrees with $\theta$ on every
stable pair. Thus $a:=\mstar_\lambda(i)$ satisfies $(i,a)\in\Eset$ and
$\lambda_{i,a}>\mu_{i,\mstar(i)}>\mu_{i,a}$. This argument uses only
row strictness and the serial-dictatorship construction. Consequently,
$\lambda_{i,a}-\mu_{i,a}>\Delta_{i,a}$, so every matching $m$ with
$m(i)=a$ has $D_m(\theta,\lambda)\ge\Delta_{i,a}^2/2$. Hence
\[
\sum_{m\neq\mstar}x_m\,D_m(\theta,\lambda)
\;\ge\;\frac{\Delta_{i,a}^2}{2}\,z_{i,a}(x)
\;\ge\;\frac{\Delta_{i,a}^2}{2}\cdot\frac{2}{\Delta_{i,a}^2}
=1 . \qedhere
\]
\end{proof}

\begin{remark}[A finite-dimensional alternative class suffices]
\label{rem:finiteclass}
The necessity half of the proof uses only the single-entry
alternatives $\lambda^{(i,a,\varepsilon)}$. Consequently the
conclusions of Proposition~\ref{prop:region} and
Theorem~\ref{thm:weighted} remain valid for policies that are
uniformly good merely on the union of $|\Eset|$ one-parameter families
$\{\theta\}\cup\{\lambda^{(i,a,\varepsilon)}:(i,a)\in\Eset,\
\varepsilon>0\text{ small}\}$: Proposition~\ref{prop:cluster} then
places every cluster point in the set cut out by these
alternatives, which is $\Xset(\theta)$ by the argument above. This
mirrors the two-by-two case, where Theorem~\ref{thm:lb22} needs
uniform goodness only on the one-row class $\Cclass$.
\end{remark}

\begin{corollary}[Matching-cover LP]\label{cor:quota}
For $\theta\in\Thsep$ and $w\in\R^N_{++}$, $C_w(\theta)$ equals the
value of \eqref{eq:coverlp}.
\end{corollary}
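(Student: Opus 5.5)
The corollary is a direct substitution. By definition \eqref{eq:Cw}, $C_w(\theta)$ is the infimum of the linear functional $x\mapsto\sum_{m\neq\mstar}x_m\langle w,g(m;\theta)\rangle$ over $\Xset(\theta)$. Theorem~\ref{thm:reduction}, which holds on all of $\Thstrict\supseteq\Thsep$, identifies $\Xset(\theta)$ with the quota polyhedron $\{x\ge0:z_{i,a}(x)\ge q_{i,a}\ \forall(i,a)\in\Eset\}$. Hence the feasible set of \eqref{eq:coverlp} is exactly $\Xset(\theta)$, the two objectives coincide, and the infimum in \eqref{eq:Cw} equals the infimum of \eqref{eq:coverlp}.

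The remaining point is that \eqref{eq:coverlp} is written as a $\min$, so I will check that the infimum is attained and finite. I will verify three things.

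\begin{enumerate}
\item \emph{Feasibility.} Remark~\ref{rem:feasible} gives $\Xset(\theta)\neq\emptyset$. If $\Eset=\emptyset$, then $x=0$ is feasible and trivially optimal.
\item \emph{Bounded below.} For $\theta\in\Thsep$, Lemma~\ref{lem:positivity} gives $c_w:=\min_{m\neq\mstar}\langle w,g(m;\theta)\rangle>0$, because $w\in\R^N_{++}$ and every gap column is nonnegative with a positive entry. So the objective is at least $c_w\|x\|_1\ge0$.
\item \emph{Attainment.} A feasible LP whose objective is bounded below attains its optimum. Alternatively, restrict to the compact set $\{x\in\Xset(\theta):c_w\|x\|_1\le\text{value at a feasible point}\}$ and apply continuity.
\end{enumerate}

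No step is a real obstacle. The only care needed is to use separation so that $c_w>0$. Without it, some columns could have negative weighted cost, but finite attainment on $\Thsep$ is all the corollary requires.
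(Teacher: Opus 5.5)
Your proof is correct and follows the paper's own argument: substitute the quota description of $\Xset(\theta)$ from Theorem~\ref{thm:reduction} into the definition \eqref{eq:Cw}. Your additional check that the infimum is attained is sound and justifies writing $\min$, which the paper leaves implicit. It uses feasibility from Remark~\ref{rem:feasible}, a nonnegative objective via Lemma~\ref{lem:positivity}, and LP attainment for a feasible program bounded below.
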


\begin{proof}
Substitute the description of $\Xset(\theta)$ from
Theorem~\ref{thm:reduction} into the definition \eqref{eq:Cw}.
\end{proof}

\begin{proof}[Proof of Theorem~\ref{thm:compact}]
\emph{($\le$.)} Given $x$ feasible for \eqref{eq:coverlp}, set
$z_{i,a}:=z_{i,a}(x)$ and $\rho:=\sum_{m\neq\mstar}x_m$. The quota
constraints hold by assumption; the flow constraints hold because
each matching assigns every player exactly one arm and each arm to
at most one player; and
\[
\sum_{i,a}w_i\Delta_{i,a}z_{i,a}
=\sum_{m\neq\mstar}x_m\sum_iw_i\Delta_{i,m(i)}
=\sum_{m\neq\mstar}x_m\langle w,g(m;\theta)\rangle .
\]

\emph{($\ge$.)} Let $(z,\rho)$ be feasible for
\eqref{eq:compactlp}. If $\rho=0$ then $z=0$, which is feasible
only when $\Eset=\emptyset$; in that case both programs have value
$0$. If $\rho>0$, then $\zeta:=z/\rho\in P$, so by
Lemma~\ref{lem:integrality} and Minkowski's theorem
$\zeta=\sum_k\beta_k\zeta^{m_k}$ for a convex combination of
matchings; set $x'_{m_k}:=\rho\beta_k$ and $x'_m:=0$ for every other
matching $m$, so that $\sum_mx'_m=\rho$ and
the marginals of $x'$ equal $z$. Discard the possible
$\mstar$-coordinate and let $x:=(x'_m)_{m\neq\mstar}$. For
$(i,a)\in\Eset$ we have $a\neq\mstar(i)$, so
$z_{i,a}(x)=z_{i,a}-x'_{\mstar}\ind{a=\mstar(i)}=z_{i,a}\ge q_{i,a}$:
the quotas survive. The objective also survives, because the
$\mstar$-term contributes $\sum_iw_i\Delta_{i,\mstar(i)}=0$. Hence
$x$ is feasible for \eqref{eq:coverlp} with the same value.
\end{proof}

\subsection{Dual programs}\label{app:c2-dual}

The dual of \eqref{eq:coverlp} assigns a price $y_{i,a}\ge0$ to
each quota:
\begin{equation}\label{eq:duallp}
\max_{y\ge0}\;\sum_{(i,a)\in\Eset}q_{i,a}\,y_{i,a}
\quad\text{s.t.}\quad
\sum_{i} y_{i,m(i)}\;\le\;\bigl\langle w,\,g(m;\theta)\bigr\rangle
\quad\forall m\neq\mstar,
\end{equation}
with the convention $y_{i,a}=0$ for $(i,a)\notin\Eset$. Checking
dual feasibility is the weighted assignment problem
\begin{equation}\label{eq:separation}
\max_{m\in\Mset}\;
\sum_{i}\bigl(y_{i,m(i)}-w_i\Delta_{i,m(i)}\bigr)\;\le\;0 ,
\end{equation}
which is what Hungarian pricing in column generation solves. By
complementary slackness, an optimal schedule uses a matching only
if its weighted utility loss is exactly paid for by the exploration
quotas it serves, and a quota has positive price only if it is met
with equality.

\subsection{Proof of Lemma~\ref{lem:integrality}}
\label{app:c2-integrality}

\begin{lemma}[Rectangular assignment polytope]\label{lem:integrality}
The polytope
$P:=\{\zeta\in\R_+^{N\times K}:
\sum_a\zeta_{i,a}=1\ \forall i,\
\sum_i\zeta_{i,a}\le1\ \forall a\}$
has vertex set exactly the incidence matrices $\zeta^m$,
$\zeta^m_{i,a}=\ind{m(i)=a}$, of injective matchings $m\in\Mset$.
\end{lemma}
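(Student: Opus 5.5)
The plan is to show two inclusions: every incidence matrix $\zeta^m$ is a vertex of $P$, and every vertex of $P$ is such an incidence matrix. Since $P$ is bounded (all entries lie in $[0,1]$), it is a polytope and hence the convex hull of its vertices.

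\emph{Incidence matrices are vertices.} Each $\zeta^m$ is $0/1$, has row sums equal to $1$, and has column sums at most $1$ by injectivity, so $\zeta^m\in P$. Suppose $\zeta^m=\tfrac12(\zeta'+\zeta'')$ with $\zeta',\zeta''\in P$. Every entry of $\zeta',\zeta''$ lies in $[0,1]$, so wherever $\zeta^m$ has a $0$ both must have a $0$, and wherever it has a $1$ both must have a $1$. Hence $\zeta'=\zeta''=\zeta^m$, and $\zeta^m$ is extreme.

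\emph{Vertices are integral.} The plan is to invoke total unimodularity. The constraint matrix of $P$ is the node--edge incidence matrix of the bipartite graph $K_{N,K}$, with player rows and arm rows, stacked with the identity for $\zeta\ge0$. Bipartite incidence matrices are totally unimodular, and appending identity rows preserves this property. The right-hand sides ($1$ for equalities and upper bounds, $0$ for nonnegativity) are integral. So every vertex is integral, i.e.\ a $0/1$ matrix. A $0/1$ matrix with unit row sums picks exactly one arm per player, and column sums at most $1$ make that choice injective. It is therefore $\zeta^m$ for some $m\in\Mset$.

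\emph{Backup argument.} If a self-contained proof is preferred over citing total unimodularity, I would use a direct cycle/path argument. Take a non-integral $\zeta\in P$ and consider the bipartite support graph of its fractional entries. A player row containing a fractional entry contains at least two, since its sum is exactly $1$. An arm column may have only one fractional entry only when its column sum is strictly below $1$, so such an arm is a slack endpoint.

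Walking alternately through player and arm nodes, we either close an even cycle or reach a path ending at slack arms; every player has degree at least $2$, so the path can always be continued from a player. Alternately adding and subtracting $\epsilon$ along the cycle or path keeps every row sum fixed. Interior arm columns have their sums fixed, and the endpoint arms have slack to absorb the change. This gives $\zeta\pm\epsilon d\in P$ for small $\epsilon>0$, so $\zeta$ is not a vertex.

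The main obstacle in this backup route is the path case: the endpoints must be arms with strict slack, because we cannot end at a player whose row sum is pinned to $1$. Handling this endpoint bookkeeping carefully is the only delicate step; the rest is routine.
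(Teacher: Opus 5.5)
Your proposal is correct and follows essentially the same route as the paper: total unimodularity of the bipartite incidence matrix (with the appended identity or slack rows) gives integral vertices, and the row and column constraints identify these $0/1$ points as the $\zeta^m$. Your midpoint argument that each $\zeta^m$ is extreme is the paper's observation that $\zeta^m$ is an extreme point of the cube containing $P\subseteq[0,1]^{N\times K}$. Your backup cycle/path perturbation is also sound and would make the lemma self-contained; the endpoint arms are handled correctly, since a fractional arm entry of degree one forces strict column slack.
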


The constraint matrix of $P$ is the vertex--edge incidence matrix of
the complete bipartite graph on players and arms: the variable
$\zeta_{i,a}$ has coefficient $1$ in the row constraint of player $i$
and in the column constraint of arm $a$, and $0$ elsewhere. Incidence
matrices of bipartite graphs are totally unimodular; appending slack
variables for the inequality constraints preserves total
unimodularity, and the right-hand side is integral, so every vertex of
$P$ is integral. An integral point of $P$ is a $0/1$ matrix with unit
row sums and column sums at most $1$, which is precisely the incidence
matrix $\zeta^m$ of an injective matching $m\in\Mset$. Conversely,
every $\zeta^m$ is a vertex: it is an extreme point of the cube
$[0,1]^{N\times K}$ and lies in $P\subseteq[0,1]^{N\times K}$.
\hfill$\qed$

\subsection{Proof of Proposition~\ref{prop:sparse}}
\label{app:c2-sparse}

\begin{remark}[A harmless zero-cost ray]\label{rem:ray}
Adding $c\,\zeta^{\mstar}$ to $z$ and $c$ to $\rho$ preserves
feasibility, quotas and cost, so \eqref{eq:compactlp} is never
uniquely solved in $(z,\rho)$. Only the off-stable marginals carry
information or cost; when a specific optimizer is needed we select
canonically by minimizing $\rho$, and uniqueness assumptions refer
to this canonical off-stable solution.
\end{remark}

Let $(z,\rho)$ be feasible with $\rho>0$ and set $\zeta:=z/\rho\in P$.
The $N$ row equalities are linearly independent, so the affine hull of
$P$ has dimension at most $NK-N$. By Lemma~\ref{lem:integrality} and
Minkowski's theorem, $\zeta$ is a convex combination of vertices
$\zeta^m$, and by Carath\'eodory's theorem in the affine hull, at most
$NK-N+1$ of them suffice: $\zeta=\sum_k\beta_k\zeta^{m_k}$. Setting
$c_k:=\rho\beta_k$ gives the schedule. If some $m_k=\mstar$, dropping
that term leaves the quotas intact, because $(i,a)\in\Eset$ has
$a\neq\mstar(i)$ so $\zeta^{\mstar}$ contributes nothing to any quota
marginal, and leaves the cost intact, because
$\sum_iw_i\Delta_{i,\mstar(i)}=0$.

\paragraph{Constructive version.}
The column deficits $1-\sum_i\zeta_{i,a}\ge0$ sum to $K-N$, so any
transportation solution (northwest-corner filling suffices)
distributes them into $K-N$ dummy rows of unit row sum, extending
$\zeta$ to a $K\times K$ doubly stochastic matrix $\widetilde\zeta$.
Standard Birkhoff--von Neumann peeling decomposes $\widetilde\zeta$
into at most $(K-1)^2+1$ permutation matrices; deleting the dummy rows
of each permutation leaves injective matchings $[N]\to[K]$ whose
combination is $\zeta$. If the Carath\'eodory bound $NK-N+1$ is
desired, iteratively remove affinely dependent terms: as long as more
than $NK-N+1$ matchings carry positive weight, their incidence
vectors are affinely dependent, and moving along the dependency until
the first coefficient hits zero shrinks the support by at least one.
\hfill$\qed$

\subsection{Proof of Proposition~\ref{prop:frontier33}}
\label{app:c2-frontier}

The five non-stable matchings are listed with their gap vectors and
the quotas they serve:
\[
\begin{array}{c|c|c|c}
\text{matching} & \text{assignment} & g(m)=(g_1,g_2,g_3) &
\text{quotas served}\\ \hline
m_{\mathrm A} & (a_2,a_1,a_3) & (0.1,\;1,\;0) & (1,a_2)\\
m_{\mathrm B} & (a_2,a_3,a_1) & (0.1,\;0.1,\;1) &
 (1,a_2)\ \text{and}\ (2,a_3)\\
m_{\mathrm C} & (a_1,a_3,a_2) & (0,\;0.1,\;0.01) & (2,a_3)\\
m_{\mathrm D} & (a_3,a_1,a_2) & (1,\;1,\;0.01) & (1,a_3)\\
m_{\mathrm E} & (a_3,a_2,a_1) & (1,\;0,\;1) & (1,a_3)
\end{array}
\]

Write $x=(x_{\mathrm A},x_{\mathrm B},x_{\mathrm C},x_{\mathrm D},
x_{\mathrm E})$ for an allocation over the five non-stable matchings
of Example~\ref{ex:33}. By Theorem~\ref{thm:reduction},
\[
x\in\Xset
\iff
x_{\mathrm A}+x_{\mathrm B}\ge200,\quad
x_{\mathrm B}+x_{\mathrm C}\ge200,\quad
x_{\mathrm D}+x_{\mathrm E}\ge2 ,
\]
since $m_{\mathrm A},m_{\mathrm B}$ are the matchings containing
$(p_1,a_2)$; $m_{\mathrm B},m_{\mathrm C}$ those containing
$(p_2,a_3)$; and $m_{\mathrm D},m_{\mathrm E}$ those containing
$(p_1,a_3)$.

\paragraph{Step 1: the family and its image.}
For $h\in[0,200]$, $y\in[0,2]$, the allocation
$x(h,y)=(200-h,\,h,\,200-h,\,y,\,2-y)$ meets the three constraints
with equality. Its regret vector is
\begin{align*}
r_1&=0.1(x_{\mathrm A}+x_{\mathrm B})
   +1\cdot(x_{\mathrm D}+x_{\mathrm E})=0.1\cdot200+2=22,\\
r_2&=1\cdot x_{\mathrm A}+0.1\,x_{\mathrm B}+0.1\,x_{\mathrm C}
   +1\cdot x_{\mathrm D}
   =(200-h)+0.1h+0.1(200-h)+y=220-(h-y),\\
r_3&=1\cdot x_{\mathrm B}+0.01\,x_{\mathrm C}+0.01\,x_{\mathrm D}
   +1\cdot x_{\mathrm E}
   =h+0.01(200-h)+0.01y+(2-y)=4+0.99(h-y).
\end{align*}
With $t:=h-y\in[-2,200]$ this is exactly $r(t)$, and every
$t\in[-2,200]$ is realized (e.g., $y=0$ for $t\ge0$ and $h=0$ for
$t<0$).

\paragraph{Step 2: monotone projection.}
Let $x\in\Xset$ be arbitrary and define
\[
x''_{\mathrm B}:=\min(x_{\mathrm B},200),\quad
x''_{\mathrm A}=x''_{\mathrm C}:=200-x''_{\mathrm B},\quad
x''_{\mathrm D}:=\min(x_{\mathrm D},2),\quad
x''_{\mathrm E}:=2-x''_{\mathrm D}.
\]
Then $x''=x(h,y)$ with $h=x''_{\mathrm B}$, $y=x''_{\mathrm D}$, and
$x''\le x$ componentwise: if $x_{\mathrm B}\le200$ this reads
$200-x_{\mathrm B}\le x_{\mathrm A}$ and
$200-x_{\mathrm B}\le x_{\mathrm C}$, which are the first two quota
constraints; if $x_{\mathrm B}>200$ then
$x''_{\mathrm B}=200<x_{\mathrm B}$ and
$x''_{\mathrm A}=x''_{\mathrm C}=0$; and analogously
$x''_{\mathrm D}\le x_{\mathrm D}$,
$x''_{\mathrm E}\le x_{\mathrm E}$ from the third constraint. Since
every gap vector is componentwise nonnegative
(Lemma~\ref{lem:positivity}), $G x''\le G x$: every feasible regret
vector lies componentwise above a point of the segment.

\paragraph{Step 3: conclusion.}
Along the segment, $r_2(t)=220-t$ is strictly decreasing and
$r_3(t)=4+0.99t$ strictly increasing, so no two segment points are
componentwise ordered. Now take any Pareto-minimal $u\in\UGL$. Writing
$u=Gx+v$ with $v\ge0$, minimality forces $v=0$ (otherwise
$Gx\in\UGL$ satisfies $Gx<u$ in some coordinate and $Gx\le u$),
and Step 2 yields a segment point $r(t)\le u$; minimality forces
$u=r(t)$. Conversely, fix $t$ and suppose $u'\in\UGL$ satisfies
$u'\le r(t)$, $u'\neq r(t)$. As before we may take $u'=Gx'$, and Step
2 gives $t'$ with $r(t')\le Gx'\le r(t)$. Comparing the second
coordinates gives $t'\ge t$ and the third gives $t'\le t$, so $t'=t$
and $r(t)\le Gx'\le u'\le r(t)$ forces $u'=r(t)$, a contradiction.
Hence the Pareto-minimal boundary is exactly
$\{r(t):t\in[-2,200]\}$.
\hfill$\qed$

\subsection{Proof of Proposition~\ref{prop:family}}
\label{app:c2-family}

Define $\mu^{(N)}$ by $\mu_{i,a_i}=1$,
$\mu_{i,a_{i+1}}=0.9$ for $i<N$, and $\mu_{i,a_1}=0$ for $i\ge2$.
For $2\le j<i$, let
$\mu_{i,a_j}=0.99-0.08\,(j-2)/(N-2)$; for $j\ge3$, let
$\mu_{1,a_j}=0.8\,(j-3)/(N-2)$, and for $i\ge2$ and $j\ge i+2$, let
$\mu_{i,a_j}=0.8\,(j-i-1)/(N-2)$. Thus $\mu^{(3)}$ is
Example~\ref{ex:33}.

\begin{proposition}[The $N$-player family]\label{prop:family}
(i)~$\mu^{(N)}\in\Thsep$ with $|\Eset|=N(N-1)/2$. (ii)~Every
Pareto-minimal point of $\UGL(\mu^{(N)})$ has
$r_1=c_1^{(N)}:=\sum_{a\neq a_1}2/\Delta_{1,a}$, so the top
player's cost is schedule-independent and every face of the
necessary lower boundary has dimension at most $N-2$. (iii)~For
$N\in\{3,4,5,6\}$ the boundary contains a face of dimension exactly
$N-2$, and for $N\in\{4,5,6\}$ the affine hull of the
Pareto-minimal boundary is the whole hyperplane $\{r_1=c_1^{(N)}\}$.
\end{proposition}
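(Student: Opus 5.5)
Parts (i) and (ii) follow by hand from the definitions and Proposition~\ref{prop:proper}; part (iii) is a finite exact computation.

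\emph{Part (i).} I would check row strictness row by row, using $K=N$.
\begin{itemize}
\item Row $1$ has entries $1$, $0.9$, and $0.8(j-3)/(N-2)$ for $j\ge3$. The last values are distinct, lie in $[0,0.8]$, and equal $0$ only at $j=3$.
\item A row $i\ge2$ has $\mu_{i,a_1}=0$ and $\mu_{i,a_i}=1$, plus $\mu_{i,a_{i+1}}=0.9$ when $i<N$. Its entries for $2\le j<i$ decrease strictly in $[0.91,0.99]$. Its entries for $j\ge i+2$ increase strictly in $(0,0.8]$.
\end{itemize}
These blocks are pairwise disjoint, so each row is strict. Each row's unique maximum is $\mu_{i,a_i}=1$, so the top choices are distinct and $\mstar=(a_1,\dots,a_N)$. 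Since $K=N$, the eligible pairs of player $i$ are $(i,a_j)$ with $j>i$. There are $N-i$ of them, giving $|\Eset|=\sum_i(N-i)=N(N-1)/2$.

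\emph{Part (ii).} Player $1$ has the quotas $(1,a)$, $a\ne a_1$, and $r_1=\sum_{a\ne a_1}\Delta_{1,a}z_{1,a}$. Every feasible point therefore has $r_1\ge\sum_a\Delta_{1,a}q_{1,a}=c_1^{(N)}$. By Proposition~\ref{prop:proper}, a Pareto-minimal $u$ equals $G x$ for some $x$ that minimizes $\sum_m x_m\langle w,g(m)\rangle$ over $\Xset$ for some $w>0$. Suppose $z_{1,a}(x)>q_{1,a}$ for some $a$. Then some matching $m$ with $m(1)=a$ carries positive mass. I would apply an exchange: move a small mass $\epsilon$ from $m$ to $m'$. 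Here $m'$ gives $p_1$ the arm $a_1$, gives the player $j\ge2$ who held $a_1$ in $m$ the arm $a$, and leaves everyone else unchanged. If $m'=\mstar$, the mass is simply dropped.

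The exchange keeps feasibility and lowers the cost:
\begin{itemize}
\item Quotas survive. The pair $(1,a)$ had slack, and $(j,a_1)\notin\Eset$ because $a_1$ is claimed by $p_1$.
\item Player $1$'s cost falls by $\epsilon\Delta_{1,a}>0$.
\item Player $j$'s cost does not rise, since $\mu_{j,a_1}=0$ is the strict row minimum for every $j\ge2$ (all other entries of those rows are positive).
\end{itemize}
This contradicts $w$-optimality, so $r_1=c_1^{(N)}$ on the boundary. For the dimension bound, each face $\arg\min_{\UGL}\langle w,\cdot\rangle$ lies in the hyperplane $\{\langle w,r\rangle=C_w\}$ and in $\{r_1=c_1^{(N)}\}$. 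These two hyperplanes are distinct because $w>0$ is not parallel to $e_1$ when $N\ge2$. Hence every face has dimension at most $N-2$.

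\emph{Part (iii).} This is certificate-based and covers only $N\le6$. For each $N\in\{3,4,5,6\}$ I would pick a rational weight $w>0$, which can be found by solving the compact LP \eqref{eq:compactlp}. I would then exhibit $N-1$ rational feasible allocations with affinely independent images, each of cost $C_w$. Optimality is certified by a rational dual solution $y$ of \eqref{eq:duallp}: dual feasibility is a finite check over matchings, and equal objective values close the argument. For $N\in\{4,5,6\}$ I would also exhibit $N$ affinely independent Pareto-minimal points, each certified optimal for its own positive weight. Their affine hull, contained in $\{r_1=c_1^{(N)}\}$ by (ii), is then that whole hyperplane.

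The main obstacle is (iii): choosing weights and vertices whose certificates are exact in rational arithmetic, which is done by the accompanying script. In (ii), the delicate point is that the exchange must never destroy a quota or raise another player's cost. The construction of $\mu^{(N)}$, with $0$ as the row minimum in column $a_1$, guarantees exactly this.
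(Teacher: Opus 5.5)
Your proposal is correct and follows the paper's proof essentially step for step: the row-by-row strictness check for (i); for (ii), the exchange that swaps $a$ and $a_1$ between $p_1$ and the player $j$ holding $a_1$, followed by the two-hyperplane dimension bound; and for (iii), exact rational primal--dual certificates. The differences are minor: you contradict $w$-optimality where the paper contradicts Pareto-minimality directly, you can skip the paper's $K>N$ case because $K=N$ here, and you use the cover dual \eqref{eq:duallp} where the paper uses the dual of \eqref{eq:compactlp} (either is a valid finite certificate). The one point to tighten is that your dimension bound only treats the exposed faces $\arg\min_{\UGL}\langle w,\cdot\rangle$ with $w>0$; to cover any face or convex subset of the boundary, add the paper's observation that such a set lies in the face exposed by a positive weight supporting one of its relative-interior points.
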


\emph{(i)} Within row $i$ the entries are $1$, $0.9$ (if $i<N$),
$0$ (if $i\ge2$), the mid-arm values $0.99-0.08(j-2)/(N-2)$ for
$2\le j<i$, and the far-arm values $0.8k/(N-2)$ with
$1\le k\le N-i-1$ (row $1$: $0\le k\le N-3$). The mid-arm values lie
in $(0.91,0.99]$, hence strictly between $0.9$ and $1$, and the
far-arm values are at most $0.8(N-3)/(N-2)<0.8<0.9$; so all row
entries are pairwise distinct for every $N\ge3$, $1$ is the unique
row maximum, and $a_i$ is the top choice of player $i$ alone. For
$i\ge2$ the far-arm values are positive, so $\mu_{i,a_1}=0$ is the
unique minimum of row $i$. Player $i$'s free non-stable arms are
$a_{i+1},\dots,a_N$, giving $\sum_i(N-i)=N(N-1)/2$ quotas.

\emph{(ii)} We prove a slightly more general statement: let
$\theta\in\Thsep$ with $K\ge N$ and suppose that $a_1:=\mstar(1)$
satisfies $\mu_{j,a_1}\le\mu_{j,a}$ for every player $j\ge2$ and
every arm $a$. Then every Pareto-minimal $u\in\UGL(\theta)$ has
$u_1=\sum_{a\neq a_1}2/\Delta_{1,a}$. Indeed, write $u=G(\theta)x$
with $x\in\Xset(\theta)$ (minimality forces the $\R^N_+$-part to
vanish). Since all arms are free for player $1$,
$u_1=\sum_{a\neq a_1}\Delta_{1,a}z_{1,a}(x)$ with
$z_{1,a}(x)\ge q_{1,a}=2/\Delta_{1,a}^2$, so $u_1\ge c_1$ with
equality iff every player-$1$ quota is tight. Suppose
$z_{1,a}(x)>q_{1,a}$ for some $a$ and pick $m$ with $x_m>0$,
$m(1)=a$. If $a_1$ is unmatched under $m$ (possible only when
$K>N$), let $m'$ agree with $m$ except $m'(1)=a_1$: moving a small
mass from $m$ to $m'$ (or deleting it if $m'=\mstar$) keeps
$z_{1,a}\ge q_{1,a}$, changes no off-stable marginal other than
$z_{1,a}$ (every other change is on stable pairs, such as
$(1,a_1)$), lowers $u_1$ and leaves every other coordinate unchanged, so the new
allocation is feasible, componentwise no larger than $u$ and strictly
smaller in the first coordinate, a contradiction.
Otherwise some player $j\ge2$ holds $a_1$ under $m$. Let $m'$ agree
with $m$ except $m'(1)=a_1$, $m'(j)=a$, and
move mass $\varepsilon\in(0,x_m]$ from $m$ to $m'$ (if $m'=\mstar$,
simply delete it). Quotas: $z_{1,a}$ decreases by $\varepsilon$ and
stays $\ge q_{1,a}$ for small $\varepsilon$; $(j,a_1)\notin\Eset$
since $a_1$ is taken at level $1$; every other marginal is
unchanged or increases. Regret: player $1$ loses
$\varepsilon\Delta_{1,a}>0$, player $j$ changes by
$\varepsilon(\Delta_{j,a}-\Delta_{j,a_1})\le0$ by the hypothesis on
$a_1$, and nobody else changes. The new allocation is feasible, and
its regret vector is componentwise no larger than $u$ and strictly
smaller in the first coordinate, contradicting
minimality. For $\mu^{(N)}$ the hypothesis holds because
$\mu_{j,a_1}=0$ is the unique row minimum for every $j\ge2$.

For the dimension bound, let $\mathcal F$ be a convex subset of the
Pareto-minimal boundary, for instance a face of $\UGL$ contained in
it, and pick $u$ in its relative interior. By
Proposition~\ref{prop:proper}(ii), $u$ lies in the face
$\arg\min_{\UGL}\langle w,\cdot\rangle$ of $\UGL$ for some $w>0$, and
a face of a convex set that contains a relative interior point of a
convex subset contains the whole subset; hence $\mathcal F$ lies in
that face, and so in the two hyperplanes
$\{\langle w,\cdot\rangle=C_w\}$ and $\{r_1=c_1^{(N)}\}$, whose
normals $w$ and $e_1$ are linearly independent for $N\ge2$. Thus
$\dim\mathcal F\le N-2$.

\emph{(iii)} A point $u=G x$ with $x$ optimal for some $w>0$ is
Pareto-minimal (Proposition~\ref{prop:proper}(ii)), and optimality
of a primal feasible point is certified by any dual feasible point
of equal objective value (weak duality). The supplementary script
\texttt{frontier\_family.py} therefore proves (iii) by exhibiting,
for each $N\in\{3,\dots,6\}$: (a)~a rational weight $w^\star>0$, a
rational dual-feasible solution of \eqref{eq:compactlp} at
$w^\star$ with value $V$, and $N-1$ rational primal-feasible
marginal vectors of cost $V$ whose regret vectors are affinely
independent, so that the optimal face at $w^\star$ has dimension
$N-2$; (b)~for $N\ge4$, $N$ rational weights with an exact
primal--dual pair each, whose $N$ regret vectors are affinely
independent. The weights in (a) are $w^\star=(1,0.99,1)$,
$(1,0.99,1,1)$, $(1,0.99,\tfrac{289}{300},1,1)$ and
$(1,0.99,0.97,1,1,1)$.
All certificates are stored in
\texttt{results/frontier\_family\_certificates.json} and
re-verified from the file by the function \texttt{verify}, which
uses rational arithmetic only, no floating point and no LP solver.
\hfill$\qed$

\paragraph{The $N=4$ face certificate.}
For concreteness we print the certificate for $N=4$ (the others have the
same format). Order the variables of \eqref{eq:compactlp} as
$(z_{1,a_1},\dots,z_{1,a_4},z_{2,a_1},\dots,z_{4,a_4},\rho)$ and its
constraints as the $N$ player rows $\sum_a z_{i,a}-\rho=0$, the $N$
arm rows $\sum_i z_{i,a}-\rho\le0$ and the $|\Eset|=6$ quota rows
$-z_{i,a}\le-q_{i,a}$ for $(i,a)\in\Eset$ in the order
$(1,a_2),\ (1,a_3),\ (1,a_4),\ (2,a_3),\ (2,a_4),\ (3,a_4)$.
At $w^\star=(1,\tfrac{99}{100},1,1)$ the cost vector is
$c_{i,a}=w^\star_i(\mu_{i,a_i}-\mu_{i,a})$. The dual certificate is
\[
y_{\mathrm{eq}}=\bigl(0,\ \tfrac{99}{100},\ 1,\ 1\bigr),\qquad
y_{\mathrm{arm}}=\bigl(0,\ -\tfrac{99}{100},\ -1,\ -1\bigr),\qquad
y_{\mathrm{quota}}=\bigl(-\tfrac{109}{100},\ -2,\ -\tfrac{8}{5},\ -\tfrac{109}{1000},\ -\tfrac{151}{250},\ -\tfrac{1}{10}\bigr),
\]
with $y_{\mathrm{arm}},y_{\mathrm{quota}}\le0$, $A^\top y\le c$
coordinatewise and objective $b^\top y=\tfrac{12422}{45}$. (Here
$\mu^{(4)}$ has rows $(1,0.9,0,0.4)$, $(0,1,0.9,0.4)$,
$(0,0.99,1,0.9)$, $(0,0.99,0.95,1)$.) The three
optimal primal points (rows $i=1,\dots,4$, columns $a_1,\dots,a_4$, then $\rho$) follow;
write $r(z):=(\sum_a\Delta_{i,a}z_{i,a})_{i=1}^4$ for their regret vectors.
\begin{align*}
z^{(1)}&=\begin{pmatrix}\tfrac{32}{9} & 200 & 2 & \tfrac{50}{9}\\ 0 & \tfrac{50}{9} & 200 & \tfrac{50}{9}\\ 0 & 2 & \tfrac{82}{9} & 200\\ \tfrac{1868}{9} & \tfrac{32}{9} & 0 & 0\end{pmatrix},\ \rho=\tfrac{1900}{9},\ r(z^{(1)})=\bigl(\tfrac{76}{3},\ \tfrac{70}{3},\ \tfrac{1001}{50},\ \tfrac{46708}{225}\bigr),\\
z^{(2)}&=\begin{pmatrix}\tfrac{1850}{9} & 200 & 2 & \tfrac{50}{9}\\ \tfrac{1868}{9} & 0 & 200 & \tfrac{50}{9}\\ 0 & 2 & \tfrac{1900}{9} & 200\\ 0 & \tfrac{1900}{9} & 0 & 202\end{pmatrix},\ \rho=\tfrac{3718}{9},\ r(z^{(2)})=\bigl(\tfrac{76}{3},\ \tfrac{2078}{9},\ \tfrac{1001}{50},\ \tfrac{19}{9}\bigr),\\
z^{(3)}&=\begin{pmatrix}\tfrac{1832}{9} & 200 & 2 & \tfrac{50}{9}\\ \tfrac{1850}{9} & 0 & 200 & \tfrac{50}{9}\\ 2 & 0 & \tfrac{1882}{9} & 200\\ 0 & \tfrac{1900}{9} & 0 & 200\end{pmatrix},\ \rho=\tfrac{3700}{9},\ r(z^{(3)})=\bigl(\tfrac{76}{3},\ \tfrac{2060}{9},\ 22,\ \tfrac{19}{9}\bigr).
\end{align*}
Each $z^{(k)}$ is primal feasible with $c^\top z^{(k)}=b^\top y$, so all
three are optimal at $w^\star>0$ by weak duality and hence Pareto-minimal
(Proposition~\ref{prop:proper}(ii)); their regret vectors are affinely
independent, so the optimal face has dimension $2=N-2$. These four
checks (primal feasibility, dual feasibility, equal objective, affine
independence) are exactly what \texttt{verify} performs, in rational
arithmetic, for every stored certificate; the $N\in\{3,5,6\}$
certificates and the affine-hull certificates of (iii) are in the same
file. Part (iii) is thus a finite, machine-checkable proof, not a
numerical experiment.

\section{Proofs for Section~\ref{sec:algorithm}}\label{app:c4}

\paragraph{Canonical policy and guarantee.}\label{app:c4-canonical}
The canonical policy $\pi_w$ solves the compact LP \eqref{eq:compactlp} with quotas
$\widehat q$ and the \emph{surrogate} costs
$w_i(\widehat\mu_{i,\widehat m_k(i)}-\widehat\mu_{i,a})_+$, which keep
the objective bounded on every instance, separated or not, and selects
the canonical solution of Remark~\ref{rem:ray} by three ordered
programs: (i) minimize the cost, with value $\widehat v_k$; (ii)
minimize $\rho$ subject to
$\text{cost}\le\widehat v_k+\varepsilon_k$, where the vanishing slack
$\varepsilon_k:=k^{-1/4}$ dominates the $O(k^{-1/3})$ estimation
error of the data (Appendix~\ref{app:c4-plugin}); (iii) among the
solutions of (ii), take the lexicographically smallest. This yields
off-stable marginals $\widehat z_k$ and full marginals
$\widehat z^{\mathrm{full}}_k$ (row sums $\widehat\rho_k$), and the
constructive decomposition of Proposition~\ref{prop:sparse} turns
$\widehat z^{\mathrm{full}}_k$ into a schedule
$\{(m_j,c_j)\}_j$ of at most $NK-N+1$ matchings, from which any
$\widehat m_k$-component is discarded.

\begin{assumption}[Canonical marginal uniqueness]\label{ass:unique}
At $(\theta,w)$, $\theta\in\Thsep$, the canonical program (minimal
cost, then minimal $\rho$) applied to \eqref{eq:compactlp} with the
true quotas has an optimal set whose off-stable projection is a
singleton $\{z^\circ_w(\theta)\}$.
\end{assumption}

Uniqueness assumptions of this kind are common in the Graves--Lai
literature \citep{combes2017minimal}, but Assumption~\ref{ass:unique}
is not automatic. Lemma~\ref{lem:generic} confines its failure to
finitely many hyperplanes of weights only under a noncoincidence
condition that must be checked for each instance; for $\mu^{(3)}$ the
only failing weight ratio is $w_2/w_3=0.99$.
Theorem~\ref{thm:frontier-achieve} does not need the assumption.

\begin{theorem}[Achievability under canonical marginal uniqueness]
\label{thm:achieve}
For every $w\in\R^N_{++}$ and $\alpha>\xi>3$, the policy $\pi_w$ is
uniformly good on $\Thstrict$. Moreover, for every
$\theta\in\Thsep$ satisfying Assumption~\ref{ass:unique},
\[
\lim_{T\to\infty}\frac{\E_\theta[N_{i,a}(T)]}{\log T}
=z^\circ_{w,i,a}(\theta)
\]
for every off-stable pair $(i,a)$, and consequently, with
$r^\circ_i(\theta):=\sum_{a}\Delta_{i,a}\,z^\circ_{w,i,a}(\theta)$,
$R_i(T;\theta)/\log T\to r^\circ_i(\theta)$ for every $i$ and
$R_w(T;\theta)/\log T\to C_w(\theta)$: the bound of
Theorem~\ref{thm:weighted} is attained exactly, and the regret vector
converges to the canonical boundary point.
\end{theorem}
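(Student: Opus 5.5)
The proof splits into two parts: uniform goodness on all of $\Thstrict$, and the convergence statement at a separated instance satisfying Assumption~\ref{ass:unique}. Uniform goodness should come from Lemma~\ref{lem:ug}. That lemma uses only two properties of the solve step, namely the empirical quota constraints and the per-epoch mass bound; for the canonical rule the bound is supplied by Lemma~\ref{lem:mass}. Together with repair and certified exploitation, these bound each epoch's exploration by $O(NK\kappa_k)$ plus repair rounds. Since $\kappa_k=k$ and $b(T_k)\sim k^2$, the total number of deviations is polylogarithmic in $T$. Lemma~\ref{lem:nocancel} then converts this to $|R_i|=o(T^\alpha)$, without needing separation.

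For the convergence at $\theta\in\Thsep$, I will follow the roadmap of \S\ref{sec:c4-proof}. \emph{Step 1 (identification):} By Lemma~\ref{lem:ident}, after the repair phase of epoch $k$, with probability $1-p_k$ where $\sum_k p_k\,b(T_k)<\infty$, we have $\widehat m_k=\mstar$ and $\widehat\Eset_k=\Eset$. Moreover, every count is at least $\sqrt{b(T_k)}\sim k$, so the empirical quotas and costs are within $O(k^{-1/3})$ of the truth; the noise floor is then inactive and the cap $\kappa_k=k$ exceeds every true $q_{i,a}$. \emph{Step 2 (stability of the canonical solve):} On this good event I will show that the canonical marginals $\widehat z_k$ converge to $z^\circ_w(\theta)$. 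Program (i) is an LP whose data perturb by $O(k^{-1/3})$, so its value $\widehat v_k$ is within $O(k^{-1/3})$ of $C_w(\theta)$; a Hoffman bound gives this. Because the slack $\varepsilon_k=k^{-1/4}$ dominates this error, the true cost-optimal set is feasible for program (ii) up to an $O(k^{-1/3})$ Hoffman correction, so $\widehat\rho_k\le\rho^\circ+o(1)$. Conversely, every solution of (ii) lies within $o(1)$ of the true cost-optimal face. A compactness argument then applies: any cluster point of $\widehat z_k$ lies in the true canonical optimal set, whose off-stable projection is the singleton $\{z^\circ_w\}$ by Assumption~\ref{ass:unique}. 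Step (iii) therefore does not affect the off-stable limit.

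\emph{Step 3 (tracking and accounting):} Lemma~\ref{lem:track} turns convergent targets into expected counts. Increment execution makes the off-stable counts at the end of epoch $k$ equal to $b(T_k)\widehat z_k+o(k^2)$. On bad epochs the capped mass contributes at most $O(NK\kappa_k)$ per epoch, weighted by the summable bad probabilities. A Ces\`aro step handles $T$ between epoch ends, using $b(T_{k})/b(T_{k-1})\to1$ since $k^2/(k-1)^2\to1$. It remains to show that the exploitation phase adds only $o(\log T)$ off-stable rounds in expectation. Lemma~\ref{lem:certify} gives this: the quota supply of order $b(T_k)q_{i,a}$ and the stable replay make every comparison certifiable at threshold $c(t)<b(t)$, so residual estimation rounds are $o(\log T)$, and wrong exploitation is bounded via the peeling Lemma~\ref{lem:peel}, which is summable since $\xi>3$. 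Lemma~\ref{lem:account} and \eqref{eq:regretcount} then give $R_i/\log T\to\sum_a\Delta_{i,a}z^\circ_{w,i,a}$. Summing with weights gives $R_w/\log T\to C_w(\theta)$, since $z^\circ$ is cost-optimal.

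I expect Step 2 to be the main obstacle: proving that the two-stage canonical selection is continuous at the true data despite the zero-cost stable ray (Remark~\ref{rem:ray}) and a possibly non-unique full optimizer. I would first try to bound $\rho$ on the feasible sets so that compactness applies, using $\widehat\rho_k\le NK\kappa_k$ and, near the limit, the bound $\rho\le\rho^\circ+1$. I would then check that the ratio $\varepsilon_k/k^{-1/3}\to\infty$ absorbs the Hoffman constants, which are uniform because the constraint matrix is fixed and only the right-hand sides and costs move.
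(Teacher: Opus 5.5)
Your proposal is correct and follows the paper's route essentially step for step. Uniform goodness comes from Lemma~\ref{lem:ug}, with the pathwise mass bound of Lemma~\ref{lem:mass}. Then come identification, plug-in stability of the three-stage canonical solve (a feasible proxy showing $\widehat\rho_k\le\rho^\ast+o(1)$, followed by a limit-point argument under Assumption~\ref{ass:unique}), tracking with a dominated Ces\`aro step and summable bad epochs, and finally the certification and accounting lemmas. The differences are cosmetic. The paper restores empirical quota feasibility by topping up deficits on dedicated completion matchings rather than by a Hoffman correction. It also takes its uniform mass bound on good epochs, used as the dominator in tracking, from property (A3), whereas you take it from your proxy bound.
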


\begin{remark}[Sensitivity near a supporting weight]
\label{rem:c4-scope}
The failing ratio $0.99$ is the slope of the segment of
Proposition~\ref{prop:frontier33}: a face of slope $-s$ is optimal
at weight ratio $s$. Equal weights are nearly degenerate when
the face redistributes cost nearly one-for-one,
so at finite horizons the empirical LP can select different endpoints
across seeds (\S\ref{sec:experiments}). Section~\ref{sec:c4-frontier}
allows selection on a nonunique face.
\end{remark}

The proof combines identification, plug-in LP stability, schedule
tracking, certification and regret accounting; details are in
Appendices~\ref{app:c4-ident}--\ref{app:c4-ug}.

For small $k$ the prescribed rounds could exceed the epoch length:
their total is a deterministic $O(NKk^2)$ for every variant (the
pathwise mass bound $NK\kappa_k$ of Lemmas~\ref{lem:mass}
and~\ref{lem:capfeas}) against an epoch length of order $e^{k^2}$,
so the policy starts the epoch schedule at the smallest
deterministic $k_0=k_0(N,K,\alpha,\xi)$ beyond which every epoch
accommodates its phases, playing the round-robin prefix of
Algorithm~\ref{alg:policy} until then; the prefix affects no
asymptotic statement and $k_0$ is suppressed from the notation.
One may take $k_0$ as the smallest $k\ge2$ with $T_{k-1}\ge K$
such that, for every $j\ge k$,
\[
T_j-T_{j-1}\ \ge\ NK\lceil\sqrt{b(T_j)}\rceil
+\lceil(NK+1)j(b(T_j)-b(T_{j-1}))\rceil+NK+1.
\]
The right-hand side bounds repair, schedule execution, replay and
rounding on every path and is $O(NKj^2)$, so such a $k_0$ exists.

For convergence and coefficient limits we fix $\theta\in\Thsep$,
unless the statement specifies another instance. Canonical marginal
uniqueness (Assumption~\ref{ass:unique}) is used only for the
canonical plug-in limit and its tracking/accounting consequences;
it is not assumed for the split, regularized or uncosted variants.
The uniform-goodness Lemma~\ref{lem:ug} in
Appendix~\ref{app:c4-strict} works at an arbitrary strict instance
and assumes neither separation nor canonical uniqueness.
Let $\delta_\theta>0$ be half the smallest
absolute difference between two entries in the same row of $\theta$
(positive by row strictness), let
$\Delta_{\min}:=\min_{(i,a)\in\Eset}\Delta_{i,a}>0$, and let
\[
d_\theta:=\min_{m\neq\mstar}\sum_iw_i
\bigl(\mu_{i,\mstar(i)}-\mu_{i,m(i)}\bigr)_+\;>\;0
\]
(positive because on $\Thsep$ every non-stable matching has a strictly
positive gap somewhere, Lemma~\ref{lem:positivity}); the regular
events defined below have radius $\bar\delta_\theta:=\delta_\theta/2$.
Throughout, $C$ and $c$ denote
constants depending on $(\theta,w,\alpha,\xi,N,K)$ whose value may
change between displays. As in \S\ref{sec:model} we assume $K\ge2$; when $K=1$ the market
admits a single complete matching, $\Eset=\emptyset$
(\S\ref{sec:general-lb}), $C_w(\theta)=0$, and every policy attains
it trivially. We also use the standard stack-of-rewards convention:
the successive observations of each pair $(i,a)$ form an i.i.d.\
sequence fixed in advance, and $\widehat\mu^{(n)}_{i,a}$ denotes the
mean of its first $n$ samples, so that events indexed by
\emph{counts} are events about these sequences alone, independent of
the policy's sampling times. We abbreviate $b_k:=b(T_k)$ throughout
this appendix.
Two concentration tools are used. First, the union-over-counts bound
for unit-variance Gaussian sample means: for every $\varepsilon>0$
and $n_0\ge1$,
\begin{equation}\label{eq:maximal}
\Prob\Bigl(\exists n\ge n_0:\
\bigl|\widehat\mu_{i,a}^{(n)}-\mu_{i,a}\bigr|\ge\varepsilon\Bigr)
\;\le\;\frac{2\,e^{-n_0\varepsilon^2/2}}{1-e^{-\varepsilon^2/2}} .
\end{equation}
Second, a peeling bound at the self-normalized scale
$\sqrt{2c/n}$.

\begin{lemma}[Peeling]\label{lem:peel}
There is an absolute constant $C_1$ such that for every pair, every
$t\ge3$ and every $c\ge1$,
\[
\Prob\Bigl(\exists\,n\le t:\
\bigl|\widehat\mu^{(n)}-\mu\bigr|\ge\sqrt{2c/n}\Bigr)
\;\le\;C_1\,c\,(\log t)\,e^{-c}.
\]
\end{lemma}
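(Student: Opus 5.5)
The plan is a standard peeling argument over geometric blocks of sample counts, combined with a maximal inequality for Gaussian random walks. By the stack-of-rewards convention, $S_n:=n(\widehat\mu^{(n)}-\mu)$ is a sum of $n$ i.i.d.\ standard Gaussians, so the event reads $\exists n\le t:\ |S_n|\ge\sqrt{2cn}$. By symmetry I treat the upper tail $S_n\ge\sqrt{2cn}$ and double the bound at the end. I fix a ratio $\eta>1$, to be chosen as $\eta:=1+1/c$, and cover $\{1,\dots,t\}$ by the blocks $I_j:=[\eta^j,\eta^{j+1})$, $j=0,\dots,J$, with $J\le\lceil\log t/\log\eta\rceil$.

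On block $I_j$, every $n$ satisfies $\sqrt{2cn}\ge\sqrt{2c\eta^j}$. Hence the block event is contained in $\{\max_{n<\eta^{j+1}}S_n\ge\sqrt{2c\eta^j}\}$. Doob's maximal inequality applied to the exponential martingale $e^{sS_n-ns^2/2}$ bounds its probability by $\exp(-x^2/(2M))$, where $x=\sqrt{2c\eta^j}$ and $M=\eta^{j+1}$. The exponent is therefore $-c\,\eta^j/\eta^{j+1}=-c/\eta$. With $\eta=1+1/c$ we get $c/\eta=c^2/(c+1)\ge c-1$, so each block costs at most $e\cdot e^{-c}$.

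The number of blocks is at most $1+\log t/\log(1+1/c)$. Since $c\ge1$, we have $\log(1+1/c)\ge 1/(2c)$, so the count is at most $1+2c\log t\le 3c\log t$ for $t\ge3$ and $c\ge1$ (using $\log 3>1$). A union bound over blocks and the two tails then gives
\[
\Prob\Bigl(\exists\,n\le t:\ |\widehat\mu^{(n)}-\mu|\ge\sqrt{2c/n}\Bigr)\le 2\cdot 3c\log t\cdot e\,e^{-c},
\]
which is the claim with $C_1=6e$.

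The only delicate step is the choice of $\eta$. A fixed $\eta$, say $2$, gives the exponent $c/2$, which is too weak. Taking $\eta$ close to $1$ is what restores the full $e^{-c}$ up to a constant, and the price of that choice is exactly the extra factor $c$ in the number of blocks. No endpoint issue arises at $n=1$ or at the last block, since the blocks are only used as an over-cover of $\{1,\dots,t\}$.
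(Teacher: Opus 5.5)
Your proof is correct and takes essentially the same approach as the paper: geometric peeling with ratio $1+1/c$, the exponential-martingale maximal inequality on each block, and a union bound over $O(c\log t)$ blocks and both tails. The only difference is bookkeeping. You use unrounded block endpoints $\eta^j$ starting at $1$, which gives the per-block exponent $c/\eta\ge c-1$ directly and avoids the paper's separate union bound over $n\le\lceil c\rceil$ and its ceiling estimate $n_j/n_{j+1}\ge 1-2/c$.
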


\begin{proof}
Write $S_n$ for the centered partial sum, so the event at count $n$
is $|S_n|\ge\sqrt{2cn}$. \emph{Small counts.} For each fixed
$n\le\lceil c\rceil$, $\Prob(|S_n|\ge\sqrt{2cn})\le2e^{-c}$; a union
over these counts costs at most $2(c+1)e^{-c}\le4c\,e^{-c}$.
\emph{Large counts.} Set $n_j:=\lceil c(1+1/c)^j\rceil$ for $j\ge0$
and partition $[\lceil c\rceil,t]$ into the blocks $[n_j,n_{j+1})$;
their number is at most $\log t/\log(1+1/c)+1\le C\,c\log t$ for
$t\ge3$, $c\ge1$. Because now $n_j\ge c(1+1/c)^j\ge c$ and
$n_{j+1}\le c(1+1/c)^{j+1}+1$,
\[
\frac{n_j}{n_{j+1}}
\;\ge\;\frac{c(1+1/c)^j}{c(1+1/c)^{j+1}+1}
\;=\;\frac{1}{(1+1/c)+1/\bigl(c(1+1/c)^j\bigr)}
\;\ge\;\frac{1}{1+2/c}
\;\ge\;1-\frac{2}{c},
\]
the ceiling now being harmless since the block starts are all
$\ge c$. For $n\in[n_j,n_{j+1})$ the event implies
$|S_n|\ge\sqrt{2cn}\ge\sqrt{2c\,n_j}$, so by the two-sided
exponential-martingale maximal inequality for Gaussian increments,
\[
\Prob\Bigl(\max_{n<n_{j+1}}|S_n|\ge\sqrt{2c\,n_j}\Bigr)
\le2\exp\Bigl(-\frac{2c\,n_j}{2n_{j+1}}\Bigr)
=2\,e^{-c\,n_j/n_{j+1}}
\le2\,e^{-c(1-2/c)}
=2e^2\cdot e^{-c}.
\]
Summing the block bounds and adding the small-count union bound
gives $C_1\,c\,(\log t)\,e^{-c}$.
\end{proof}

\begin{algofloat}[H]
\centering
\fbox{\begin{minipage}{0.94\textwidth}
In policy formulas at round $t$, counts and estimates use only
observations before that round; cumulative $N_{i,a}(T)$ includes
round $T$. Off-stable vectors are extended by zero on stable pairs
when summed over all arms.

\emph{Completion rule.} For a pair $(i,a)$, $\mathrm{comp}(i,a)$ is
the complete matching that assigns $a$ to level $i$ and then, for
$j=1,\dots,N$ in increasing order skipping $i$, assigns level $j$
the smallest-index arm still free.

\emph{Prefix} ($t\le T_{k_0-1}$): play the cyclic shift
$m_r(i):=a_{((i+r-1)\bmod K)+1}$ with $r=t\bmod K$; the $K$ shifts
cover every pair. Take $k_0\ge2$ with $T_{k_0-1}\ge K$, so every
pair is observed before the first solve. Empirical arm ties and ties
between the two least-sampled pairs are broken by increasing arm
index; order the recovered schedule lexicographically by matching.

\emph{Epoch $k\ge k_0$:}
\begin{enumerate}\itemsep2pt
  \item \emph{Repair.} While some pair has
  $N_{i,a}<\sqrt{b(T_k)}$: among such pairs take the least-sampled,
  ties broken lexicographically; play $\mathrm{comp}(i,a)$.
  \item \emph{Solve.} Three ordered programs of
  Appendix~\ref{app:c4-canonical} (min cost; min $\rho$ with slack
  $\varepsilon_k$; lexicographic), then the constructive
  decomposition of Proposition~\ref{prop:sparse}; discard
  $\widehat m_k$-components.
  \item \emph{Explore.} For each $(m_j,c_j)$ of the decomposition,
  in order, play $m_j$ exactly
  $\lceil(b(T_k)-b(T_{k-1}))c_j\rceil$ times; then play
  $\widehat m_k$ exactly $\lceil(b(T_k)-b(T_{k-1}))k\rceil$ times
  (stable-matching replay).
  \item \emph{Exploit/estimate.} At each remaining round: recompute
  $\widetilde m_t=\mstar(\widehat\mu(t))$; scan levels
  $i=1,\dots,N$ in increasing order and, within level $i$, the free
  arms $a\neq\widetilde m_t(i)$ in increasing index order; the
  first comparison failing its certificate is the \emph{uncertain
  comparison}. If none fails, play $\widetilde m_t$; otherwise play
  $\mathrm{comp}(j,b)$ for the least-sampled pair $(j,b)$ of that
  comparison.
\end{enumerate}

\emph{Variant $\pi_s$} (\S\ref{sec:c4-frontier}): when the
empirical eligible structure matches Example~\ref{ex:33} (labels
$\widehat q_{12},\widehat q_{13},\widehat q_{23}$), step 2 is
replaced by the allocation
$\widehat x_{\mathrm B}=s\min(\widehat q_{12},\widehat q_{23})$,
$\widehat x_{\mathrm A}=\widehat q_{12}-\widehat x_{\mathrm B}$,
$\widehat x_{\mathrm C}=\widehat q_{23}-\widehat x_{\mathrm B}$,
$\widehat x_{\mathrm D}=(1-s)\widehat q_{13}$,
$\widehat x_{\mathrm E}=s\widehat q_{13}$ over the five non-stable
matchings labeled by $\widehat m_k$; otherwise each eligible pair
receives its own dedicated matching $\mathrm{comp}(i,a)$ with mass
$\widehat q_{i,a}\le\kappa_k$.

\emph{Variant $\pi_{w,\Psi}$} (\S\ref{sec:c4-frontier}): step 2
evaluates the target
$\tau_k=\Psi(\widehat m_k,\widehat\Eset_k,\widehat q_k,\widehat\Delta_k)$,
solves the capped regularized program \eqref{eq:regprog}, then
takes the minimal-$\rho$ completion; the three-stage canonical
selection is skipped and everything else is unchanged. A constant
rule is the fixed-target policy $\pi_{w,\tau}$.

\emph{Variant $\pi_\tau$} (\S\ref{sec:c4-attainable}): as
$\pi_{w,\tau}$ with the cost term of \eqref{eq:regprog} removed
(program \eqref{eq:trackprog}).
\end{minipage}}
\caption{The policy $\pi_w$ with all tie-breaks explicit. The analysis
uses only the stated sampling guarantees, so any other fixed
tie-break rule works verbatim.}
\label{alg:policy}
\end{algofloat}

We also fix the \emph{regular event} of epoch $k$. Because the solve
step runs \emph{after} the repair phase, whose additional samples
change the empirical means, the event must control the estimates at
\emph{all} counts the epoch can exhibit, not at the deterministic
time $T_{k-1}$. With $c_1\in(0,1]$ a constant such that
$\sqrt{b(T_{k-1})}\ge c_1k$ for all $k\ge1$, set
\[
A_k:=\bigl\{\,\bigl|\widehat\mu^{(n)}_{i,a}-\mu_{i,a}\bigr|
<\bar\delta_\theta\ \ \forall(i,a),\ \forall n\ge c_1k\,\bigr\},
\]
an event about the reward stacks alone. Since after the repair phase
of epoch $k$ every pair count is at least $\sqrt{b(T_k)}\ge c_1k$
and counts never decrease, on $A_k$ \emph{every} estimate consulted
at the solve step, and at any later round of epoch $k$, is within
$\bar\delta_\theta$ of its mean. Two distinct entries of a row of
$\theta$ differ by at least $2\delta_\theta=4\bar\delta_\theta$, so on
$A_k$ every within-row difference of empirical means has the sign of
the true difference and at least half its absolute value. This gives
three properties of these estimates on $A_k$.
(A1)~Every within-row ordering of the empirical means agrees with
that of $\theta$; in particular the empirical structure is
$(\mstar,\Eset)$.
(A2)~Every $(i,a)\in\Eset$ has empirical gap
$\widehat\mu_{i,\mstar(i)}-\widehat\mu_{i,a}
\ge\Delta_{i,a}/2\ge\Delta_{\min}/2$, so every
noise-floored empirical quota is at most $8/\Delta_{\min}^2$.
(A3)~Every non-stable matching $m$ has empirical surrogate cost
$\sum_iw_i(\widehat\mu_{i,\mstar(i)}-\widehat\mu_{i,m(i)})_+
\ge\tfrac12\sum_iw_i(\mu_{i,\mstar(i)}-\mu_{i,m(i)})_+\ge d_\theta/2$.
The radius $\bar\delta_\theta$ does not involve $w$, so $A_k$ is the
same event for every variant, including the weight-free $\pi_s$ and
$\pi_\tau$, whose analyses use only (A1) and (A2).

Finally, a deterministic bound that the initialization, the
bad-epoch accounting and the uniform-goodness argument all rely on:
capping the quotas does bound the \emph{mass} of the selected
schedule, which is not automatic from capping the constraint
right-hand sides alone.

\begin{lemma}[Capped quotas bound the schedule mass]\label{lem:mass}
On every sample path and in every epoch $k$, any minimal-$\rho$
solution of the stage-two program (hence the canonical solution)
satisfies
\[
\widehat\rho_k\;\le\;\sum_{(i,a)\in\widehat\Eset_k}\widehat q_{i,a}
\;\le\;NK\kappa_k .
\]
Consequently the exploration phase of epoch $k$ plays at most
$(b_k-b_{k-1})NK\kappa_k+NK$ rounds besides the replay, pathwise,
with no condition on the data.
\end{lemma}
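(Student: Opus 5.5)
The second inequality and the round count are bookkeeping; the real work is the first inequality $\widehat\rho_k\le\sum_{(i,a)\in\widehat\Eset_k}\widehat q_{i,a}$, which I would prove by a charging argument against minimality of $\rho$.

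\textbf{Routine parts.} Since $|\widehat\Eset_k|\le NK$ and $\widehat q_{i,a}=\min(2/\widehat g_{i,a}^{2},\kappa_k)\le\kappa_k$, we get $\sum_{\widehat\Eset_k}\widehat q_{i,a}\le NK\kappa_k$. For the round count, the decomposition of Proposition~\ref{prop:sparse} uses at most $NK-N+1\le NK$ matchings with $\sum_j c_j=\widehat\rho_k$. Rounding up each of the at most $NK$ counts $\lceil(b_k-b_{k-1})c_j\rceil$ therefore gives at most $(b_k-b_{k-1})NK\kappa_k+NK$ exploration rounds besides the replay. None of this uses any property of the data.

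\textbf{Step 1: no stable mass.} Fix a minimal-$\rho$ solution $(z,\rho)$ of stage two and decompose $z=\sum_k c_k\zeta^{m_k}$ as in Proposition~\ref{prop:sparse}. If some $m_k=\widehat m_k$, deleting that term lowers $\rho$ and changes no quota marginal, since eligible pairs are off-stable. It also leaves the surrogate cost unchanged, because $\widehat c_{k,i,\widehat m_k(i)}=0$. This contradicts minimality, so no matching in any decomposition is $\widehat m_k$.

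\textbf{Step 2: charging to eligible pairs.} Charge each $m_k$ to its earliest deviation level $i$, where $m_k(j)=\widehat m_k(j)$ for $j<i$ and $m_k(i)=:a\neq\widehat m_k(i)$. Injectivity makes $a$ free at level $i$, so $(i,a)\in\widehat\Eset_k$. Let $\chi_{i,a}$ be the total mass charged to $(i,a)$. Then $\rho=\sum_{\widehat\Eset_k}\chi_{i,a}$ and $\chi_{i,a}\le z_{i,a}$. It therefore suffices to show $\chi_{i,a}\le\widehat q_{i,a}$ for every eligible pair. Suppose instead $\chi_{i,a}>\widehat q_{i,a}$. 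Then $z_{i,a}>\widehat q_{i,a}$, so a small mass $\epsilon$ of some charged $m$ can be moved without violating the $(i,a)$ quota. Replace it by $m'$: put $m'(i)=\widehat m_k(i)$; give $a$ to whichever lower player $j$ held $\widehat m_k(i)$ under $m$, if any; and keep everything else. This preserves $\rho$ row sums and the arm constraints, and it never lowers any other quota marginal except possibly $(j,\widehat m_k(i))$. That pair is not eligible, because $\widehat m_k(i)$ is taken at level $i$. Now $m'$ agrees with $\widehat m_k$ on a strictly longer prefix, so repeating this finitely often (induction on the deviation level) either produces a $\widehat m_k$-term, which Step 1 lets us delete to reduce $\rho$, or removes the excess.

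\textbf{Main obstacle.} The delicate point is that the swap must not raise the surrogate cost above $\widehat v_k+\varepsilon_k$, since player $j$'s cost changes by $w_j(\widehat c_{j,a}-\widehat c_{j,\widehat m_k(i)})$, which can be positive. What I would try first is to exploit that player $i$'s cost strictly drops by $w_i(\widehat\mu_{i,\widehat m_k(i)}-\widehat\mu_{i,a})_+$. If that does not dominate, I would fall back on deleting the excess mass of $m$ outright and reinstating only the lower-level quotas it served through $m$'s own lower-level pairs, which keeps cost no larger. Either way the net effect is a feasible point with smaller $\rho$ and no larger cost, contradicting minimality.
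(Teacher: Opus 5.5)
Your bookkeeping (the second inequality and the round count) and Step~1 are correct. Step~2, however, has a genuine gap. You reduce the lemma to the per-pair inequality $\chi_{i,a}\le\widehat q_{i,a}$ for the earliest-deviation charge, and this inequality is false for minimal-$\rho$ stage-two solutions. No choice of swap or fallback can therefore establish it.

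Here is a counterexample. Take $N=K=3$, $w=(1,1,1)$ and empirical rows $(1,0.9,0)$, $(0,1,0.95)$, $(0.99,0,1)$, in an epoch with $\kappa_k\ge800$ and noise floors below the gaps. Then $\widehat m_k=(a_1,a_2,a_3)$, with $\widehat q_{1,a_2}=200$, $\widehat q_{1,a_3}=2$ and $\widehat q_{2,a_3}=800$.

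The column constraint of arm $a_3$ forces $\rho\ge802$. The minimum cost $\widehat v_k=130.02$ is attained at $\rho=802$ by $800$ units of $m_{\mathrm B}=(a_2,a_3,a_1)$ and $2$ units of $m_{\mathrm E}=(a_3,a_2,a_1)$. Hence every minimal-$\rho$ solution has $\rho=802$. At $\rho=802$ all three columns are saturated. Writing $u=z_{1,a_1}$ and $v=z_{2,a_1}$, the surrogate cost equals $130.02+0.89u+1.99v$. The stage-two constraint therefore gives $u\le\varepsilon_k/0.89$, so $z_{1,a_2}=800-u>200$.

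Every matching with $m(1)=a_2$ deviates first at level $1$, so $\chi_{1,a_2}=z_{1,a_2}$ exceeds $\widehat q_{1,a_2}$ by roughly $600$ in every decomposition. Your swap sends $m_{\mathrm B}$-mass to $m_{\mathrm C}=(a_1,a_3,a_2)$, which costs $0.89$ more per unit; that is exactly the $0.89u$ term the constraint forbids. Your fallback fares no better: deleting $m_{\mathrm B}$-mass breaks the tight quota $(2,a_3)$, and reinstating it needs $m_{\mathrm C}$-mass again. The ``main obstacle'' is thus not a technicality; the intermediate claim itself is wrong.

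The missing idea is to charge each matching to a \emph{tight} quota it covers, not to its earliest-deviation pair.
\begin{enumerate}
\item Take a positive-mass matching $m$ in a cover decomposition, and suppose it covered no tight quota in $\widehat\Eset_k$. Deleting a small mass of $m$ would keep every quota. It could only lower the surrogate cost, whose coefficients are nonnegative. It would strictly lower $\rho$, contradicting minimality.
\item So every positive-mass matching covers some tight $(i,a)$. Charge its mass there. The total charged to $(i,a)$ is at most $z_{i,a}=\widehat q_{i,a}$.
\item Summing gives $\widehat\rho_k\le\sum_{\text{tight}}\widehat q_{i,a}$.
\end{enumerate}
Deletion never increases cost, which is why the cost constraint causes no trouble. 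In the example, the $m_{\mathrm B}$-mass is charged to the tight quota $(2,a_3)$ and the $m_{\mathrm E}$-mass to $(1,a_3)$, giving $\rho=802=800+2$.
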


\begin{proof}
Work in the empirical cover formulation. The flow decomposition
in the proof of Theorem~\ref{thm:compact} applies to any nonnegative
marginals with the stated row sums and column bounds; it preserves
arbitrary empirical quota constraints and linear surrogate costs.
This decomposition needs no separation assumption on the empirical
means. Let $x$ be a cover decomposition of a
minimal-$\rho$ solution, so $\widehat\rho_k=\sum_{m}x_m$. Mass on
$\widehat m_k$ covers no quota and only raises $\rho$, so
$x_{\widehat m_k}=0$. Suppose $x_m>0$ and every pair
$(i,a)\in\widehat\Eset_k$ with $m(i)=a$ had strict quota slack.
Reducing $x_m$ by a small $\varepsilon>0$ keeps every quota
satisfied, can only decrease the cost (the surrogate cost
coefficients are nonnegative), hence preserves the stage-two cost
constraint, and reduces $\rho$: a contradiction. So every
positive-mass matching covers at least one \emph{tight} quota.
Charge each $x_m$ to one tight quota that $m$ covers. The total mass
charged to a tight quota $(i,a)$ is at most
$z_{i,a}=\widehat q_{i,a}$, whence
$\widehat\rho_k\le\sum_{(i,a)\,\text{tight}}\widehat q_{i,a}
\le|\widehat\Eset_k|\,\kappa_k\le NK\kappa_k$. The round count
follows from the increment execution with its at most $NK$ ceilings.
\end{proof}

\subsection{Estimation and identification}\label{app:c4-ident}

\begin{lemma}[Estimation and identification]\label{lem:ident}
Almost surely $\widehat\mu(t)\to\mu$, and there is an a.s.\ finite
$K_0$ with $\widehat m_k=\mstar(\theta)$ and
$\widehat\Eset_k=\Eset$ for all $k\ge K_0$. Moreover
$\Prob\bigl(\widehat m_k\neq\mstar\ \text{or}\
\widehat\Eset_k\neq\Eset\bigr)\le C\,e^{-c_\theta k}$.
\end{lemma}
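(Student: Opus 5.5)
The plan rests on three ingredients: the regular events $A_k$, the prefix and repair phases, which make every pair count grow, and a strong law for the reward stacks.

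\emph{Almost-sure consistency.} After the round-robin prefix every pair is observed. In epoch $k$ the repair phase lifts every count to at least $\sqrt{b(T_k)}\ge c_1k$, and counts never decrease, so $N_{i,a}(t)\to\infty$ surely for every pair. Under the stack-of-rewards convention, $\widehat\mu_{i,a}(t)=\widehat\mu^{(N_{i,a}(t))}_{i,a}$. The strong law of large numbers applied to each of the finitely many i.i.d.\ stacks then gives $\widehat\mu(t)\to\mu$ almost surely.

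\emph{Identification.} Fix $k$ and work on $A_k$. By (A1), every estimate consulted at the solve step has within-row orderings agreeing with $\theta$. Serial dictatorship depends only on these orderings and on the free sets, which are built recursively from the same orderings, so $\widehat m_k=\mstar(\theta)$. The empirical eligible set is determined by $\widehat m_k$, so $\widehat\Eset_k=\Eset$. This gives the inclusion
\[
\{\widehat m_k\neq\mstar \text{ or } \widehat\Eset_k\neq\Eset\}\subseteq A_k^c .
\]

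\emph{Tail bound.} Apply the maximal bound \eqref{eq:maximal} with $\varepsilon=\bar\delta_\theta$ and $n_0=\lceil c_1k\rceil$, then take a union over the $NK$ pairs:
\[
\Prob(A_k^c)\le NK\,\frac{2e^{-c_1k\bar\delta_\theta^2/2}}{1-e^{-\bar\delta_\theta^2/2}}=C e^{-c_\theta k},
\qquad c_\theta:=c_1\bar\delta_\theta^2/2 .
\]
These probabilities are summable in $k$, so by Borel--Cantelli only finitely many $A_k^c$ occur almost surely. Setting $K_0:=1+\max\{k: A_k^c \text{ occurs}\}$ gives an almost surely finite index past which identification holds.

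\emph{Main obstacle.} The delicate point is that the solve step uses estimates at random counts, after repair has added samples. I expect this to be handled entirely by defining $A_k$ uniformly over all counts $n\ge c_1k$, so that stopping-time issues never arise. I would also check that the empirical tie-breaks are irrelevant on $A_k$, since there all empirical differences are nonzero.
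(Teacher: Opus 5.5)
Your proposal is correct and follows essentially the same route as the paper. Repair makes every count diverge, which gives consistency through the strong law on the reward stacks. Identification failure is contained in $A_k^c$ via (A1). The tail bound comes from \eqref{eq:maximal} with a union over the $NK$ pairs, and Borel--Cantelli then yields the a.s.\ finite $K_0$. Your choices $n_0=\lceil c_1k\rceil$ and $c_\theta=c_1\bar\delta_\theta^2/2$ are harmless refinements of the same argument.
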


The repair phase of epoch $k$ guarantees
$N_{i,a}\ge\sqrt{b(T_k)}\ge c_1k$ for every pair from the solve step
onward, and repair rounds through epoch $k$ total at most
$NK\lceil\sqrt{b(T_k)}\rceil$. For $T\in(T_{k-1},T_k]$ this is
$O(NK\sqrt{\log T})$, because the per-pair repair
level only ever needs topping up from $\sqrt{b(T_{k-1})}$ to
$\sqrt{b(T_k)}$. Since every pair count tends to infinity, the
strong law gives $\widehat\mu(t)\to\mu$ a.s.

Both $\widehat m_k$ and $\widehat\Eset_k$ are functions of the
within-row orderings of the empirical means \emph{at the solve step};
there every pair count is at least $c_1k$, so on $A_k$ all orderings
agree with those of $\theta$ by (A1), and $A_k$ implies
$\widehat m_k=\mstar$ and $\widehat\Eset_k=\Eset$. By
\eqref{eq:maximal} with $n_0=c_1k$ and a union over $NK$ pairs,
\begin{equation}\label{eq:badprob}
\Prob(A_k^c)\;\le\;NK\,
\frac{2\,e^{-c_1k\bar\delta_\theta^2/2}}{1-e^{-\bar\delta_\theta^2/2}}
\;=\;Ce^{-c_\theta k},
\end{equation}
and Borel--Cantelli yields the a.s.\ finite $K_0$ with $A_k$ holding
for all $k\ge K_0$. \hfill$\qed$

\subsection{Genericity of canonical marginal uniqueness}\label{app:c4-generic}

\begin{lemma}[Where canonical marginal uniqueness fails]\label{lem:generic}
Fix $\theta\in\Thsep$ and let $\mathcal V$ be the finite vertex
set of the feasible polyhedron $\mathcal P(\theta)$ of
\eqref{eq:compactlp}. For $v\in\mathcal V$ write
$G v:=(\sum_a\Delta_{i,a}v_{i,a})_{i\in[N]}$ for its regret vector.
The set $W_{\mathrm{fail}}$ of $w\in\R^N_{++}$ at which
Assumption~\ref{ass:unique} fails is contained in
\[
\bigcup\bigl\{\,w:\langle w,Gv-Gv'\rangle=0\,\bigr\},
\]
the union over pairs $v,v'\in\mathcal V$ with distinct off-stable
projections. In particular $W_{\mathrm{fail}}$ is contained in a
finite union of linear subspaces, and it has Lebesgue measure zero
as soon as no two vertices with distinct off-stable projections
share a regret vector.
\end{lemma}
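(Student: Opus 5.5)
We argue by contraposition: we show that if $w\in\R^N_{++}$ lies in none of the listed hyperplanes, then Assumption~\ref{ass:unique} holds at $(\theta,w)$.

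\emph{Step 1 (the cost-optimal face is attained at vertices).} The cost of \eqref{eq:compactlp} is $\sum_{i,a}w_i\Delta_{i,a}z_{i,a}=\langle w,Gz\rangle$, and it depends only on off-stable coordinates. The polyhedron $\mathcal P(\theta)$ has the recession direction $(\zeta^{\mstar},1)$ of Remark~\ref{rem:ray}. We first check that it is pointed. Its recession cone is $\{(z,\rho):z\ge0,\ \sum_a z_{i,a}=\rho,\ \sum_i z_{i,a}\le\rho\}$, and this cone contains no line because $z\ge0$. So $\mathcal P(\theta)$ has a vertex set $\mathcal V$, and it equals $\mathrm{conv}(\mathcal V)$ plus its recession cone. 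The cost is bounded below (it is nonnegative on $\Thsep$) and is nonnegative on the recession cone. Hence the cost-optimal set is $\mathrm{conv}(\mathcal V^\ast)+\mathcal R_0$. Here $\mathcal V^\ast$ is the set of cost-minimizing vertices, and $\mathcal R_0$ consists of the zero-cost recession directions. Positivity of the costs on off-stable pairs (Lemma~\ref{lem:positivity}) forces every $r\in\mathcal R_0$ to vanish off the stable pairs. So $\mathcal R_0$ contributes nothing to off-stable projections.

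\emph{Step 2 (stage two minimizes $\rho$).} Next we minimize $\rho$ over this optimal face, which is itself a pointed polyhedron. The minimum is attained, so the minimal-$\rho$ optimal set is a face of $\mathcal P(\theta)$. Being a nonempty face of a pointed polyhedron, it contains a vertex. Its points are convex combinations of vertices in $\mathcal V^\ast$ plus zero-cost directions. The directions can only raise $\rho$: each row sum of $r$ equals $r_\rho\ge0$. So at the minimal $\rho$, every point is either a convex combination of minimal-$\rho$ optimal vertices, or such a combination plus a direction with $r_\rho=0$, and that direction is then $0$. Therefore the off-stable projection of the canonical optimal set is the convex hull of the off-stable projections of certain vertices $v\in\mathcal V^\ast$. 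If that projection is not a singleton, then two of these vertices, $v$ and $v'$, have distinct off-stable projections.

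\emph{Step 3 (two optimal vertices force a hyperplane).} Both $v$ and $v'$ are cost-optimal, so $\langle w,Gv\rangle=\langle w,Gv'\rangle=C_w(\theta)$, which gives $\langle w,Gv-Gv'\rangle=0$. Hence $w$ lies in one of the listed sets, and this establishes the inclusion.

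Each listed set is a linear subspace, and $\mathcal V$ is finite, so there are finitely many of them. A subspace has Lebesgue measure zero exactly when it is proper, that is, when $Gv\neq Gv'$. That is guaranteed by the noncoincidence hypothesis, which yields the final claim. The main obstacle is Step 2: we must make sure the stable ray and the $\rho$-minimization do not produce non-vertex optimal points with new off-stable projections. The face argument above, together with the observation that zero-cost directions are supported on stable pairs, is what handles it.
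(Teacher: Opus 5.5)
Your proposal is correct and follows essentially the same route as the paper. The zero-cost recession directions of the cost-optimal face are supported on stable pairs, so they are multiples of the stable ray, and minimizing $\rho$ removes them. The canonical set is then spanned by finitely many cost-optimal vertices, and two of them $v,v'$ with distinct off-stable blocks force $\langle w,Gv-Gv'\rangle=0$. Your explicit decomposition $\mathrm{conv}(\mathcal V^\ast)+\mathcal R_0$ only spells out in more detail the paper's observation that the $\rho$-minimal face is compact and hence the convex hull of its vertices. One small point: you implicitly use that $\mathcal P(\theta)$ is nonempty, which you could cite from Remark~\ref{rem:feasible}.
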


\begin{proof}
Fix $w>0$ and let $c=c(\theta,w)$. The cost-optimal face
$\widetilde F_w:=\arg\min_{\mathcal P(\theta)}c^\top z$ is
nonempty (Lemma~\ref{lem:fwcompact}) and its recession cone
consists of the directions $(d,\delta)$ of $\mathcal P(\theta)$
with $c^\top d=0$; since $c_{i,a}>0$ off the stable pairs on
$\Thsep$, such a $d$ is supported on stable pairs, and the flow
constraints force $d=\delta\,\zeta^{\mstar}$. Minimizing $\rho$
over $\widetilde F_w$ removes this ray, so the canonical optimal set
$Z_w$ is a compact face of $\mathcal P(\theta)$, the convex hull of
the vertices of $\mathcal P(\theta)$ it contains. Its off-stable
projection is a singleton if and only if all these vertices share
one off-stable block. Hence $w\in W_{\mathrm{fail}}$ implies that
two vertices $v\neq v'$ with distinct off-stable projections are
both cost-optimal at $w$, whence
$0=c^\top(v-v')=\sum_iw_i\sum_a\Delta_{i,a}(v-v')_{i,a}
=\langle w,Gv-Gv'\rangle$. Each such set is a hyperplane if
$Gv\neq Gv'$ and all of $\R^N$ if $Gv=Gv'$, which gives the
containment and the measure statement.
\end{proof}

For the instance $\mu^{(3)}$, $W_{\mathrm{fail}}$ is exactly the
hyperplane $w_2/w_3=0.99$. We argue in the marginal coordinates of
\eqref{eq:compactlp}, in which Assumption~\ref{ass:unique} is stated.
At any $w>0$ a cost-optimal point of \eqref{eq:compactlp} decomposes
into a cost-optimal allocation (Theorem~\ref{thm:compact}), whose
regret vector minimizes $\langle w,\cdot\rangle$ over $\UGL$, hence is
Pareto-minimal and equals some $r(t)$ of
Proposition~\ref{prop:frontier33}; two distinct points $r(t)\neq
r(t')$ tie in weighted cost only if $w_2=0.99\,w_3$. The allocations
with regret vector $r(t)$ are exactly the family points $x(h,y)$ with
$h-y=t$: by Step~2 of Appendix~\ref{app:c2-frontier} such an $x$
dominates a family point $x''\le x$ with
$G(\mu^{(3)})x''\le r(t)$, hence $G(\mu^{(3)})x''=r(t)$ by Step~3, and
$x=x''$ because every column of $G(\mu^{(3)})$ has a positive entry. The off-stable marginals of these family points depend
on $t$ only: $z_{1,a_2}=z_{2,a_3}=200$, $z_{1,a_3}=2$,
$z_{2,a_1}=z_{3,a_2}=200-t$ and $z_{3,a_1}=t+2$. Given this
off-stable block, the row constraints of \eqref{eq:compactlp} force
$\rho\ge202$ and $\rho\ge400-t$, and $\rho=\max(202,400-t)$ is
feasible, with every column constraint then holding with equality.
Off the hyperplane the optimal $t$ is unique, so the cost-optimal set
has a single off-stable block and Assumption~\ref{ass:unique} holds.
On the hyperplane every $t\in[-2,200]$ is cost-optimal; minimizing
$\rho=\max(202,400-t)$ keeps every $t\in[198,200]$, whose off-stable
blocks differ, and the assumption fails.

\subsection{Plug-in convergence}\label{app:c4-plugin}

\begin{lemma}[Plug-in convergence]\label{lem:plugin}
Under Assumption~\ref{ass:unique},
$\widehat z_k^{\mathrm{off}}\to z^\circ_w(\theta)$ almost surely.
\end{lemma}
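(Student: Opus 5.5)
The plan is a standard parametric-LP stability argument applied to the three ordered programs, run on the almost-sure event where identification eventually holds.

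\emph{Step 1: reduce to fixed structure and converging data.} By Lemma~\ref{lem:ident}, almost surely $\widehat\mu(t)\to\mu$, and for all $k\ge K_0$ we have $\widehat m_k=\mstar$ and $\widehat\Eset_k=\Eset$. From then on the feasible set $\widehat{\mathcal P}_k$ has the same constraint matrix as $\mathcal P(\theta)$. Only the right-hand sides $\widehat q$ and the cost vector $\widehat c_k$ vary. The floors $\sqrt{2\beta/N_{i,a}}$ vanish because pair counts grow at least like $c_1k$ and $\beta(T_{k-1})=O(\log k)$. Hence $\widehat g_{i,a}\to\Delta_{i,a}>0$, so eventually $\widehat q_{i,a}=2/\widehat g^2_{i,a}<\kappa_k$ and $\widehat q\to q$. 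Likewise $\widehat c_k\to c(\theta,w)$.

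\emph{Step 2: convergence of the values.} Let $v(q,c)$ be the minimum cost over the feasible polyhedron. By Hoffman's bound, any feasible point for $q$ is within $C\|\widehat q-q\|$ of a feasible point for $\widehat q$, and vice versa, in the $z$-coordinates (the $\rho$-coordinate is adjusted along the stable ray). On the relevant bounded region (Lemma~\ref{lem:mass} bounds $\rho$ by $\sum\widehat q$, and the cost bounds the off-stable mass because costs are positive off-stable under (A3)), this gives $|\widehat v_k-C_w(\theta)|\le C(\|\widehat q-q\|+\|\widehat c_k-c\|)$. I would control the rate by the $O(k^{-1/3})$ estimation accuracy of $\widehat\mu$. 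I expect this rate claim, and relating it to the slack $\varepsilon_k=k^{-1/4}$, to be the main obstacle. The paper asserts the $O(k^{-1/3})$ rate, and I would derive it from concentration at counts of order $k^{2/3}$ via repair, stable replay and exploration mass. Failing that, I would at least use a.s.\ convergence along $A_k$ events with a law-of-iterated-logarithm scale. Either way the estimation error is $o(\varepsilon_k)$.

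\emph{Step 3: stage two and stage three.} Because $\varepsilon_k$ dominates the perturbation, the stage-two feasible set $\{\text{cost}\le\widehat v_k+\varepsilon_k\}$ contains, up to Hoffman corrections, the true cost-optimal face. It is also contained in an $O(\varepsilon_k)$-neighbourhood of the true cost-optimal set $\widetilde F_w$, by a Hoffman bound applied to the system ``feasible and cost $\le C_w+\delta$''. Hence the minimal $\widehat\rho_k$ converges to the true minimal $\rho^\circ$. Every stage-two minimizer then lies within $o(1)$ of the canonical set $Z_w$ (minimal cost, then minimal $\rho$), again by Hoffman applied to the combined linear system. By Assumption~\ref{ass:unique}, the off-stable projection of $Z_w$ is the singleton $\{z^\circ_w(\theta)\}$. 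The lexicographic choice in stage three is one such minimizer, so $\widehat z_k^{\mathrm{off}}\to z^\circ_w(\theta)$ regardless of which one it picks. Compactness plus this upper-semicontinuity would suffice even without rates, except that showing the slack set shrinks to $\widetilde F_w$ needs $\varepsilon_k\to0$, and showing it contains a near-optimal true point needs the estimation error to be $o(\varepsilon_k)$. These are the two places where the rate matters.
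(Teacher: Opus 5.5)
Your proposal is correct and follows essentially the paper's route. The shared steps are: value stability at rate $O(k^{-1/3})=o(\varepsilon_k)$; a topped-up true canonical optimum that is feasible for stage two, giving $\limsup_k\widehat\rho_k\le\rho^*$; and convergence of the selected solutions to the canonical face. The paper differs only in tools: it uses explicit top-ups along dedicated completion matchings and a limit-point argument (with the uniform bound $\widehat\rho_k\le C_\theta$) where you use Hoffman bounds.

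On the rate you flagged, the paper's source is the repair floor $N_{i,a}\ge\sqrt{b(T_k)}\ge c_1k$ combined with \eqref{eq:maximal} at $\varepsilon=k^{-1/3}$. This makes the failure probabilities $k^{2/3}e^{-c_1k^{1/3}/2}$ of the events $B_k$ summable. Counts merely of order $k^{2/3}$ would not support that Borel--Cantelli step, although your law-of-iterated-logarithm fallback would.
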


\emph{A uniform bound on the canonical solution.} On $A_k$, by (A3)
the empirical surrogate cost of every non-stable matching is at least
$d_\theta/2$, and by (A2) the uniform reference cover of
Remark~\ref{rem:feasible} at level $8/\Delta_{\min}^2$, evaluated at
the empirical data, is feasible for the empirical quotas with cost at
most a constant $C_{\mathrm{ref}}$; in particular
$\widehat v_k\le C_{\mathrm{ref}}$. Passing through the cover
formulation \eqref{eq:coverlp} (Theorem~\ref{thm:compact}), let $x$ be
a cover decomposition of the canonical solution. It has no
$\widehat m_k$-component, because such mass covers no quota, costs
nothing and only raises $\rho$ (first step of the proof of
Lemma~\ref{lem:mass}); hence
$\widehat\rho_k=\sum_{m\neq\widehat m_k}x_m$. By stage (ii) its cost
is at most $\widehat v_k+\varepsilon_k\le C_{\mathrm{ref}}+1$, as
$\varepsilon_k=k^{-1/4}\le1$, so
$(d_\theta/2)\sum_{m\neq\widehat m_k}x_m\le C_{\mathrm{ref}}+1$ and
\begin{equation}\label{eq:rhobound}
\widehat\rho_k\;=\;\sum_{m\neq\widehat m_k}x_m\;\le\;
\frac{2(C_{\mathrm{ref}}+1)}{d_\theta}\;=:\;C_\theta
\qquad\text{on }A_k .
\end{equation}

\emph{Data rate.} Analogously to $A_k$, let
$B_k:=\{|\widehat\mu^{(n)}_{i,a}-\mu_{i,a}|<k^{-1/3}\ \forall(i,a),\
\forall n\ge c_1k\}$, again an event about the reward stacks, hence
insensitive to when the repair samples arrive. By \eqref{eq:maximal}
with $n_0=c_1k$ and $\varepsilon=k^{-1/3}$ (whose denominator
contributes a factor $O(k^{2/3})$),
$\Prob(B_k^c)\le CNK\,k^{2/3}e^{-c_1k^{1/3}/2}$, which is summable,
so almost surely $A_k\cap B_k$ holds for all large $k$. All estimates
consulted at the solve step carry counts $\ge c_1k$, so on
$A_k\cap B_k$ the surrogate costs err by $O(k^{-1/3})$; the noise-floor radii are
$O(\sqrt{\log k/k})=o(k^{-1/3})$ (counts at least $c_1k$, $\beta$ of
order $\log k$); the map $g\mapsto2/g^2$ is Lipschitz on
$[\Delta_{\min}/2,\infty)$, so the quotas err by $O(k^{-1/3})$ as
well; and the cap is eventually inactive. The LP values obey
$|\widehat v_k-v^*|=O(k^{-1/3})$: start with a true cost optimizer
and top up each empirical quota deficit by a dedicated completion
matching, as below. The total added mass is $O(k^{-1/3})$, giving
$\widehat v_k\le v^*+O(k^{-1/3})$. Conversely, after discarding stable mass, an empirical cost
optimizer has bounded mass by the argument for \eqref{eq:rhobound};
topping up its
true quota deficits and transferring the costs gives
$v^*\le\widehat v_k+O(k^{-1/3})$. Thus
$|\widehat v_k-v^*|=o(\varepsilon_k)$ for the stage-two slack
$\varepsilon_k=k^{-1/4}$.

\emph{A feasible proxy for the true optimizer.} The true canonical
optimum $(z^*,\rho^*)$ need not satisfy the \emph{empirical} quotas
when $\widehat q_k>q$. Top it up: for each $(i,a)\in\Eset$ fix one
matching $m^{(i,a)}$ with $m^{(i,a)}(i)=a$ and add the mass
$(\widehat q_{k,i,a}-z^*_{i,a})_+\le(\widehat q_{k,i,a}-q_{i,a})_+
=O(k^{-1/3})$ to it (using $z^*\ge q$). The resulting proxy meets
every empirical quota, satisfies the flow constraints after
increasing $\rho$ by the total added mass, and its empirical cost and
its $\rho$ exceed $v^*$ and $\rho^*$ by $O(k^{-1/3})$. Since
$\widehat v_k\ge v^*-o(\varepsilon_k)$, the proxy is feasible for
stage (ii) once $O(k^{-1/3})\le\varepsilon_k/2$: its empirical cost
is at most $v^*+o(\varepsilon_k)\le\widehat v_k+\varepsilon_k$.
Hence $\limsup_k\widehat\rho_k\le\rho^*$, the minimal $\rho$ over the
true optimal face.

\emph{Limit points.} Any limit point $(\bar z,\bar\rho)$ of the
selected solutions is feasible for the true program (the constraint
data converge), has cost $\le\lim(\widehat v_k+\varepsilon_k)=v^*$,
hence is cost-optimal, and has $\bar\rho\le\rho^*$, hence lies in the
minimal-$\rho$ face. By Assumption~\ref{ass:unique} the off-stable
projection of that face is the singleton $\{z^\circ_w(\theta)\}$, so
$\widehat z_k^{\mathrm{off}}\to z^\circ_w(\theta)$ a.s., regardless of the
lexicographic tie-break.
\hfill$\qed$

\subsection{Tracking}\label{app:c4-track}

\begin{lemma}[Tracking]\label{lem:track}
The exploration phases contribute
$\E[N^{\mathrm{exp}}_{i,a}(T)]=z^\circ_{w,i,a}\log T+o(\log T)$ to
every off-stable pair, and bad epochs (those with a wrong
$\widehat m_k$ or aberrant quotas) contribute only $O(1)$ in
expectation.
\end{lemma}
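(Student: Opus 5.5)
The proof splits the exploration-phase counts of an off-stable pair $(i,a)$ by epoch as $N^{\mathrm{exp}}_{i,a}(T)=\sum_{k}E_{k,i,a}$, where $E_{k,i,a}$ counts the exploration rounds of epoch $k$ in which $M_t(i)=a$ with $a\neq\mstar(i)$. Each epoch is classified as \emph{good} if $A_k\cap B_k$ holds and as \emph{bad} otherwise.

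\emph{Good epochs.} On $A_k$, (A1) gives $\widehat m_k=\mstar$, so the discarded stable component and the replay never touch $(i,a)$. Increment execution plays each $m_j$ exactly $\lceil(b_k-b_{k-1})c_j\rceil$ times, and summing over matchings with $m_j(i)=a$ gives
\[
E_{k,i,a}=(b_k-b_{k-1})\,\widehat z^{\mathrm{off}}_{k,i,a}+O(NK).
\]
On $A_k\cap B_k$ and for $k$ beyond a deterministic threshold, the plug-in argument of Lemma~\ref{lem:plugin} yields $|\widehat z^{\mathrm{off}}_{k}-z^\circ_w|\le\epsilon_k$ with a deterministic $\epsilon_k\to0$. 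The almost-sure statement of Lemma~\ref{lem:plugin} is not enough here: I would re-run its limit-point argument quantitatively, or argue by compactness that the deviation is bounded by a function of the data error alone, which is $O(k^{-1/3})$ on $B_k$. The first option, rerunning the limit-point step quantitatively, is the one I would try first.

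\emph{Bad epochs.} On every path, Lemma~\ref{lem:mass} bounds the exploration rounds of epoch $k$ by
\[
(b_k-b_{k-1})NK\kappa_k+NK=O(k\cdot k)=O(k^2).
\]
Combined with $\Prob(A_k^c\cup B_k^c)\le Ce^{-ck^{1/3}}$ from \eqref{eq:badprob} and the bound on $B_k^c$, this gives
\[
\sum_k\E\bigl[E_{k,i,a}\ind{A_k^c\cup B_k^c}\bigr]\le\sum_k Ck^2e^{-ck^{1/3}}<\infty,
\]
which is the $O(1)$ claim. The same argument handles the finitely many epochs below the deterministic threshold.

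\emph{Ces\`aro step.} For $T\in(T_{K-1},T_K]$, summing the good-epoch identity over $k\le K$ gives
\[
\sum_{k\le K}(b_k-b_{k-1})z^\circ_{i,a}+\sum_{k}(b_k-b_{k-1})\epsilon_k+O(NKK).
\]
Since $\epsilon_k\to0$, Ces\`aro averaging shows the second sum is $o(b_K)$, and $O(K)=o(K^2)=o(b_K)$. The partial epoch $K$ containing $T$ needs separate handling, because $b_K/\log T$ need not tend to $1$ when $T$ is just past $T_{K-1}$. I would use the fact that within epoch $K$ the exploration phase occupies only the first $O(K^2)$ rounds after $T_{K-1}$, which is negligible against $T_{K-1}\approx e^{(K-1)^2}$. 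So if $T$ lies inside the exploration phase, then $\log T=\log T_{K-1}+o(1)$, and the epoch-$K$ exploration count lies between $0$ and $(b_K-b_{K-1})(z^\circ+o(1))$. That range is not $o(\log T)$: $b_K-b_{K-1}\sim 2K$ while $\log T\sim K^2$, so this contribution is $O(K)=o(K^2)$ and is fine after all. Hence
\[
\E[N^{\mathrm{exp}}_{i,a}(T)]=z^\circ_{i,a}\,b(T_{K-1})+o(\log T)=z^\circ_{i,a}\log T+o(\log T),
\]
using $b(T_{K-1})/\log T\to1$, which holds since $\log T_K/\log T_{K-1}\to1$.

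The main obstacle is turning Lemma~\ref{lem:plugin} into a convergence rate that holds on the good event, since it is stated only almost surely. If a deterministic rate is awkward, I would instead use dominated convergence on $\widehat z^{\mathrm{off}}_k\ind{A_k}$, which is bounded by $C_\theta$ via \eqref{eq:rhobound}. This gives $\E[\widehat z^{\mathrm{off}}_k\ind{A_k}]\to z^\circ$, and that convergence in expectation is all the Ces\`aro step needs.
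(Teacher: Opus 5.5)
Your proposal is correct once you commit to the fallback in your last paragraph. Dominated convergence for $\widehat z_{k,p}\mathbf 1_{A_k}$ with the constant dominator $C_\theta$ from \eqref{eq:rhobound}, the pathwise capped-mass bound of Lemma~\ref{lem:mass} against summable bad-event probabilities, a weighted Ces\`aro average, and a sandwich between $T_{K-1}$ and $T_K$ is exactly the paper's argument, so the deterministic rate $\epsilon_k$ is never needed; the partial-epoch worry is also unfounded, because $b(T_{K-1})/b(T_K)\to1$ makes the sandwich immediate, as you note yourself at the end.
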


Fix an off-stable pair $p$. The exploration phase of epoch $k$
contributes $(b_k-b_{k-1})\widehat z_{k,p}+O(NK)$ to $N_p$, the
rounding term coming from at most $NK-N+1$
ceilings (Proposition~\ref{prop:sparse}); on $A_k$ the discarded
$\widehat m_k$-component does not affect off-stable pairs.

\emph{Regular epochs.} On $A_k$, \eqref{eq:rhobound} gives the
\emph{uniform} bound $\widehat z_{k,p}\le\widehat\rho_k\le C_\theta$,
and $\widehat z_{k,p}\mathbf1_{A_k}\to z^\circ_p$ a.s.\
(Lemma~\ref{lem:plugin} and $\mathbf1_{A_k}\to1$). Dominated
convergence with the constant dominator $C_\theta$ gives
$\E[\widehat z_{k,p}\mathbf1_{A_k}]\to z^\circ_p$, and the weighted
Ces\`aro average yields
$\sum_{k\le K}(b_k-b_{k-1})\,\E[\widehat z_{k,p}\mathbf1_{A_k}]
=b_Kz^\circ_p+o(b_K)$.

\emph{Bad epochs.} Pathwise, epoch $k$ explores at most
$(b_k-b_{k-1})NK\kappa_k+NK\le CNKk^2$ rounds
(the capped-mass Lemma~\ref{lem:mass}), so by \eqref{eq:badprob},
$\sum_k\E[(b_k-b_{k-1})\widehat z_{k,p}\mathbf1_{A_k^c}]
\le\sum_kCNKk^2e^{-c_\theta k}=O(1)$.

\emph{Anytime.} For $T\in(T_{K-1},T_K]$ the counts are sandwiched
between their values at $T_{K-1}$ and $T_K$, and
$b(T_{K-1})/b(T_K)\to1$, so
$\E[N^{\mathrm{exp}}_p(T)]=z^\circ_p\log T+o(\log T)$.
\hfill$\qed$

\subsection{Certified exploitation}\label{app:c4-certify}

\begin{lemma}[Certified exploitation]\label{lem:certify}
The expected number of exploitation rounds in which the played
matching differs from $\mstar(\theta)$ is $O(1)$, and the expected
number of estimation rounds is $o(\log T)$.
\end{lemma}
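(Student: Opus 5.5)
We split the exploitation/estimation phase of each epoch into three kinds of rounds and bound each separately: \emph{wrong exploitation} (the certificate passes for some $\widetilde m_t\neq\mstar$), \emph{estimation} (some comparison fails and the policy plays $\mathrm{comp}(j,b)$), and correct exploitation, which costs nothing.

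\textbf{Wrong exploitation.} We first show that $\widetilde m_t\neq\mstar$ together with a passed certificate forces a concentration failure at the scale $\sqrt{2c(t)/n}$. Let $i$ be the first level at which $\widetilde m_t$ deviates, and let $a:=\mstar(i)$. Since $\widetilde m_t$ agrees with $\mstar$ above level $i$, the arm $a$ is free at level $i$. The certificate therefore asserts
\[
\widehat\mu_{i,\widetilde m_t(i)}-\widehat\mu_{i,a}\ge\sqrt{2c/N_{i,\widetilde m_t(i)}}+\sqrt{2c/N_{i,a}},
\]
while the true difference is negative. Hence one of the two sample means misses its mean by at least its own radius. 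By Lemma~\ref{lem:peel} with $t$ the current round, the probability of this at a fixed round is at most $C_1c(t)(\log t)e^{-c(t)}$. With $c(t)=\ell(t)+\xi\log\ell(t)$ this is of order $t^{-1}(\log t)^{2-\xi}$. Summing over rounds gives a finite total because $\xi>3$, so wrong exploitation contributes $O(1)$ in expectation.

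\textbf{Estimation rounds.} This is the demand--supply accounting and is the main obstacle. A comparison at level $i$ between $\mstar(i)$ and a free arm $a$ fails only while the two radii exceed roughly $\Delta_{i,a}$. This requires one of the counts to be below about $8c(t)/\Delta_{i,a}^2\approx 4q_{i,a}\log T$. Two sources of samples make this demand compatible with $o(\log T)$ extra rounds.
\begin{itemize}
\item For off-stable pairs $(i,a)\in\Eset$, tracking (Lemma~\ref{lem:track}) supplies $z_{i,a}\log T\ge q_{i,a}\log T$ samples, since the quotas are met. The certificate needs $N_{i,a}\gtrsim 2c/(\Delta-\text{radius}_{\mathrm{stable}})^2$.
\item For stable pairs, the stable replay of length $(b_k-b_{k-1})k$ per epoch makes $N_{i,\mstar(i)}$ of order $k\log T\gg\log T$. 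The stable-side radius is therefore $O(\sqrt{1/k})\to0$.
\end{itemize}
The required off-stable count is then $(2+o(1))c(t)/\Delta_{i,a}^2$, and $c(t)/b(T_k)\to1$ with $b>c$ at second order ($\alpha>\xi$). So on the regular event $A_k\cap B_k$ and for large $k$, quota supply after exploration exceeds demand by a margin of order $(\alpha-\xi)\log\ell$. Estimation rounds can then only be triggered by a concentration failure of the off-stable mean at that count, which is controlled again by Lemma~\ref{lem:peel}. What remains is to check the non-quota comparisons: pairs not in $\Eset$ at levels where $\widetilde m_t=\mstar$ never arise, because only free arms are scanned.

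\textbf{Bounding the remaining estimation rounds.} Three contributions remain. Bad epochs ($A_k^c$ or $B_k^c$) have summable probability, but each can produce up to $T_k$ estimation rounds. We handle them as follows: in epoch $k$ the number of estimation rounds is at most the time needed to bring the least-sampled pair of the failing comparison up to demand. That is $O(NK\,c(T_k)\cdot\text{gap}^{-2})$ on $A_k$, and on $A_k^c$ it is bounded by $O(T_k)$ times a probability $e^{-c_\theta k}$. Since $T_k\approx e^{k^2}$, this is too weak, so instead we invoke the stack-of-rewards peeling event at the current round, which bounds estimation rounds pathwise by demand plus failure-weighted terms. The epochs before convergence of the plug-in targets cost $o(\log T)$ by the Cesàro argument of Lemma~\ref{lem:track}. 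Finally, rounding gives $O(NK)$ per epoch, hence $O(\sqrt{\log T})$ in total. Together these yield $o(\log T)$ expected estimation rounds.
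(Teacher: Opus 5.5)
Your bound on wrong certified exploitation is the paper's argument: earliest deviation level, Lemma~\ref{lem:peel} at scale $c(t)$, and summability from $\xi>3$. The estimation-round part has two genuine gaps.

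\textbf{The positive margin you rely on does not exist.} You conclude that on $A_k\cap B_k$ the quota supply exceeds the certification demand by a margin of order $(\alpha-\xi)\log\ell$. You then argue that an estimation round needs a concentration failure controlled by Lemma~\ref{lem:peel}. But the $o(1)$ in your own demand estimate $(2+o(1))c(t)/\Delta_{i,a}^2$ swamps that margin, for two reasons.

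First, the stable-side radius alone is of order $k^{-1/2}\asymp(\log T)^{-1/4}$: the replay gives the stable pairs about $k^3$ samples against $c\asymp k^2$. This inflates the demand $2c(t)/(\widehat\Delta_{i,a}(t)-\mathrm{rad}^{\mathrm{st}}_t)^2$ by a relative amount of that order. Second, the empirical gaps that fix the supply (at past solve steps) and the demand (at round $t$) fluctuate relatively by $k^{-1/3}$ on $B_k$. Both effects are far larger than $(b-c)/c=O(\log\log T/\log T)$.

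So the certificate fails systematically at regular rounds, through deviations far smaller than the radius $\sqrt{2c/n}$ that Lemma~\ref{lem:peel} controls. The deficit is real and only sublinear in $\log T$.

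The missing idea is a ratchet with slack. Once the replay makes the stable side dominant, an equilibrium estimation round on $(\mstar(i),a)$ samples $(i,a)$ itself. Then $\mathrm{Sup}_{i,a}(t)+(\text{earlier such rounds})\le N_{i,a}(t)<\mathrm{Dem}_{i,a}(t)$, so these rounds number at most $\max_t[\mathrm{Dem}_{i,a}(t)-\mathrm{Sup}_{i,a}(t)]_{+}+1$. Work on an event $\mathcal G_T$ with $\Prob(\mathcal G_T^c)\le(\log T)^{-2}$, on which all estimates at counts beyond $c_1k_\gamma(T)$, with $k_\gamma(T)\asymp\log\log T$, are within a $\gamma$-fraction of the gaps. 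Counting supply only from epochs after $k_\gamma$ bounds the deficit by $C\gamma\log T+O((\log\log T)^2)$; then let $\gamma\downarrow0$.

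\textbf{There is no pathwise control for the bad cases.} Bad epochs, early epochs and $\mathcal G_T^c$ must be paid for by multiplying a small probability by a pathwise bound on estimation rounds. You correctly note that $O(T_k)$ is useless for this. Your replacement (the ``stack-of-rewards peeling event'' bounding rounds by ``demand plus failure-weighted terms'') is not an argument. The Ces\`aro step of Lemma~\ref{lem:track} concerns exploration counts, not estimation rounds, and the rounding remark is irrelevant here.

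The paper's device is to classify each estimation round by the two entries of its sampled comparison.
\begin{itemize}
\item If one entry errs by at least $\delta''$ at its current count, charge the round to that entry. Because the least-sampled pair is the one sampled, an entry deviated at count $n$ absorbs at most $2K(n+1)+1$ rounds. Weighting by $2e^{-n\delta''^2/2}$ gives $O(1)$ in expectation.
\item Otherwise both entries are clean and the empirical gap is at least $\delta_\theta$. The round then increments a count below $8c(t)/\delta_\theta^2$, so there are $O(k^2)$ such rounds per epoch and $O(\log T)$ overall, on every path.
\end{itemize}
This pathwise cap is what makes shadow rounds $O(1)$ via $\sum_kk^2e^{-ck}$. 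It also makes early epochs cost $O((\log\log T)^2)$, and the contribution off $\mathcal G_T$ of order $(\log T)(\log T)^{-2}=o(1)$.
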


\paragraph{(a) Wrong certified exploitation.}
Suppose at round $t$ the policy exploits the certified candidate
$\widetilde m_t\neq\mstar$, and let $i$ be the earliest deviation
level, $a^\star=\mstar(i)$, $\widetilde a=\widetilde m_t(i)$, both
free at level $i$. If both estimates were within their certification
radii, then
$\widehat\mu_{i,\widetilde a}-\widehat\mu_{i,a^\star}
<(\mu_{i,\widetilde a}-\mu_{i,a^\star})
+\mathrm{rad}_t(i,\widetilde a)+\mathrm{rad}_t(i,a^\star)
<\mathrm{rad}_t(i,\widetilde a)+\mathrm{rad}_t(i,a^\star)$,
contradicting the certificate; hence some pair deviates by its
$c(t)$-radius at its current random count, which is at most $t$. By
Lemma~\ref{lem:peel} with $c=c(t)$ and a union over $NK$ pairs,
\[
\Prob\bigl(\text{some pair deviates by its }c(t)\text{-radius at
round }t\bigr)
\;\le\;NK\,C_1\,c(t)\log t\,e^{-c(t)}
\;\le\;\frac{C}{t(\log t)^{\xi-2}},
\]
which is summable over $t$ because $\xi>3$; the expected number of
wrong certified exploitation rounds is $O(1)$.

\paragraph{(b) Classification, and rounds involving a deviated
entry.}
Fix the threshold
$\delta'':=\min\bigl(\delta_\theta/2,\
\min_{(i,a)\in\Eset}\Delta_{i,a}/4\bigr)$ and call an entry
\emph{deviated} at a given moment if its estimate errs by at least
$\delta''$ at its current count. Classify every estimation round by
the two entries of its \emph{sampled comparison}: \emph{type B} if
one of them is deviated, \emph{type C} otherwise. For type B, the
round samples the least-sampled of the two involved pairs. A given
pair $p$ participates in at most $2K$ distinct comparisons (as the
competing arm against any candidate entry of its level, or as the
candidate entry against any free arm), and while $p$ is deviated at
count $n$, each such comparison can produce at most $n+1$ type-B
rounds that sample the \emph{other} pair (whose count must first
catch up to $n$), plus the single round that samples $p$ itself and
moves it off count $n$: at most $2K(n+1)+1$ rounds per pair and
count in total.
Charging each type-B round to its deviated pair at its current count,
\[
\E[\text{type-B rounds}]
\;\le\;NK\sum_{n\ge1}\bigl(2K(n+1)+1\bigr)\cdot2\,e^{-n\delta''^2/2}
\;=\;O(1).
\]

\paragraph{(c) Type-C rounds.}
In a type-C round both compared entries are clean. The candidate's
choice at the sampled level is the empirical argmax over the free
arms, so if the competitor is clean too, the two true means are
ordered as the empirical ones, and the compared \emph{true} gap is
positive, hence at least $2\delta_\theta$; the compared empirical gap
is therefore at least $2\delta_\theta-2\delta''\ge\delta_\theta$. In
particular a clean comparison is certified as soon as both its counts
reach $8c(t)/\delta_\theta^2$, and since the round samples the
smaller count, \emph{every} type-C round samples a pair whose count
is below $8c(t)/\delta_\theta^2$; the type-C rounds within epoch $k$
therefore number at most $NK(8c(T_k)/\delta_\theta^2+1)\le CNKk^2$
pathwise. Two remarks on this cap. First, it is a \emph{local} bound
used only to multiply small probabilities below; summing it over
epochs would give $O((\log T)^{3/2})$, which is \emph{not} how the
global count is obtained. Second, the same argument applied at the
horizon already gives a global pathwise bound: every type-C round
increments the count of a pair currently below
$8c(T)/\delta_\theta^2$, counts never decrease, and $c$ is
increasing, so each pair can be sampled in type-C rounds at most
$8c(T)/\delta_\theta^2+1$ times before it crosses the demand
threshold once and for all; type-C rounds thus total at most
$NK(8c(T)/\delta_\theta^2+1)=O(\log T)$ pathwise. The refinement
below improves this $O(\log T)$ to expected $O(1)+o(\log T)$ by
splitting on where the deviation sits.

\emph{Shadow rounds} (some entry elsewhere is deviated, so the
candidate need not equal $\mstar$). A shadow round in epoch $k$
requires an entry deviated at its current count, which is at least
$\sqrt{b(T_{k-1})}\ge c_1k$ by repair; by \eqref{eq:maximal} this has
probability at most $Ce^{-c''k}$. With the pathwise per-epoch cap
above,
\[
\E[\text{shadow rounds}]
\;\le\;\sum_kCNKk^2\cdot Ce^{-c''k}\;=\;O(1).
\]

\emph{Equilibrium rounds} (no entry is deviated anywhere, so
$\widetilde m_t=\mstar$ and the sampled comparison is
$(\mstar(i),a)$ with $(i,a)\in\Eset$; which of its two pairs is
sampled matters only in the late regime below, where it is settled,
and the early regime is handled pathwise). Fix $\gamma\in(0,1/2)$, set the
threshold
$\varepsilon_\gamma:=\min\bigl(\delta''/2,\
\gamma\Delta_{\min}/8\bigr)$, and let
$k_\gamma(T):=\lceil C_2\log\log T\rceil$ with
$C_2=C_2(\theta,\gamma)$ large enough that the event
\[
\mathcal G_T:=\bigl\{\text{every entry is within }\varepsilon_\gamma
\text{ of its mean at every count}\ge c_1k_\gamma(T)\bigr\}
\]
has $\Prob(\mathcal G_T^c)\le(\log T)^{-2}$, which \eqref{eq:maximal}
provides. The threshold plays two roles: since
$\varepsilon_\gamma<\delta''$, on $\mathcal G_T$ \emph{no entry is deviated}
at any count $\ge c_1k_\gamma$; and since
$2\varepsilon_\gamma\le\gamma\Delta_{\min}/4$, the empirical gaps at
such counts satisfy
$\widehat\Delta_{i,a}\ge\Delta_{i,a}(1-\gamma/2)$. Call epoch $k$
\emph{early} if $k\le2k_\gamma(T)$ and \emph{late} otherwise; the
buffer $k_\gamma<k\le2k_\gamma$ is deliberate, because the first
epochs after $k_\gamma$ have not yet accumulated enough trustworthy
replay for the stable-side radius to be small.

\emph{Early epochs, pathwise.} Every equilibrium round is a type-C
round and samples a pair whose count is below
$8c(t)/\delta_\theta^2\le8c(T_{2k_\gamma})/\delta_\theta^2$; counts
never decrease, so the equilibrium rounds of early epochs total at
most $NK(8c(T_{2k_\gamma})/\delta_\theta^2+1)
=O\bigl((\log\log T)^2\bigr)$ on every path.

\emph{Late epochs, on $\mathcal G_T$.} For every epoch $j>k_\gamma$ the
counts consulted at its solve step exceed $c_1j\ge c_1k_\gamma$, so
on $\mathcal G_T$ no entry is deviated there and
$\widehat m_j=\mstar$: the replay of \emph{every} epoch
$j>k_\gamma$ plays the true stable matching. At any estimation round $t$ of a
late epoch $k>2k_\gamma$, the replays of epochs
$k_\gamma+1,\dots,k$ have all been executed (the replay is part of
phase 3), so every stable pair carries at least
\[
\sum_{j=k_\gamma+1}^{k}(b_j-b_{j-1})\,j
\;\ge\;c'\bigl(k^3-(k/2)^3\bigr)\;=\;\Theta\bigl(k\,b(T_k)\bigr)
\]
trustworthy samples, using $k_\gamma<k/2$. The stable-side
certification radius is therefore bounded by $Ck^{-1/2}$, which is below
$\gamma\Delta_{i,a}/4$ for all late epochs once $T$ is large
(late epochs have $k>2k_\gamma(T)\to\infty$). The demand
\[
\mathrm{Dem}_{i,a}(t):=\frac{2c(t)}
{\bigl(\widehat\Delta_{i,a}(t)-\mathrm{rad}^{\mathrm{st}}_t\bigr)^2}
\]
therefore has a strictly positive denominator at every late round on
$\mathcal G_T$; outside this regime we never use $\mathrm{Dem}$, since for
$\widehat\Delta\le\mathrm{rad}^{\mathrm{st}}$ the failed
certificate carries no count information. At a late equilibrium
round the failed certificate gives
$N_{i,a}(t)<\mathrm{Dem}_{i,a}(t)$, and
$\mathrm{Dem}_{i,a}(t)=O(b(T_k))$ by the bound below, whereas the
stable side carries $\Theta(k\,b(T_k))$ samples; so for $T$ large the
least-sampled pair of the comparison, which the round samples, is
$(i,a)$ itself. Let $k(t)$ be the epoch
containing $t$ and
$\mathrm{Sup}_{i,a}(t):=\sum_{k_\gamma<k\le k(t)}(b_k-b_{k-1})
\widehat z_{k,(i,a)}$ the late-epoch supply; every estimation round
lies in phase 4 of its epoch, after that epoch's exploration phase
has completed, so at any equilibrium round $t$ the count
$N_{i,a}(t)$ already includes $\mathrm{Sup}_{i,a}(t)$, current epoch included.
Since at each late equilibrium round the current count, which is at
least the supply plus all previous late equilibrium samples, still
lies below the demand, and counts never decrease, the late
equilibrium rounds on $(i,a)$ obey the ratchet bound (its sharper
form, evaluated at the last such round, is (L4) and the ratchet
step in the proof of Lemma~\ref{lem:ug})
\[
N^{\mathrm{eq,late}}_{i,a}(T)\;\le\;
\max_{t\le T\ \text{late}}
\bigl[\mathrm{Dem}_{i,a}(t)-\mathrm{Sup}_{i,a}(t)\bigr]_+ +1
\qquad\text{on }\mathcal G_T.
\]
On $\mathcal G_T$ the late quotas obey
$\widehat q_{k,(i,a)}\ge q_{i,a}(1-\gamma)$ (the noise floor is
deterministic and vanishing at these counts, and the cap
$\kappa_k=k>q_{i,a}$ for $k>k_\gamma$ and $T$ large), so with
$\widehat z_{k,(i,a)}\ge\widehat q_{k,(i,a)}$ and
$b(T_{k(t)})\ge b(t)$,
\[
\mathrm{Dem}_{i,a}(t)\le\frac{2c(t)}{\Delta_{i,a}^2}(1+C\gamma),
\qquad
\mathrm{Sup}_{i,a}(t)\ge\bigl(b(t)-b(T_{k_\gamma})\bigr)q_{i,a}(1-\gamma),
\]
and hence, using $c(t)\le b(t)$ and $b(T_{k_\gamma})=
O((\log\log T)^2)$,
\[
N^{\mathrm{eq,late}}_{i,a}(T)
\;\le\;C'\gamma\log T+O\bigl((\log\log T)^2\bigr)
\qquad\text{on }\mathcal G_T.
\]
\emph{Off $\mathcal G_T$.} All equilibrium rounds are type-C rounds and
their pathwise total is at most
$NK(8c(T)/\delta_\theta^2+1)=O(\log T)$, so their expected
contribution is at most $O(\log T)\cdot(\log T)^{-2}=o(1)$.
Therefore
$\limsup_T\E[N^{\mathrm{eq}}(T)]/\log T\le C'\gamma$ for every
$\gamma>0$, i.e., $\E[N^{\mathrm{eq}}(T)]=o(\log T)$. Summing the
three contributions, $\E[N^{\mathrm{est}}(T)]=O(1)+o(\log T)=o(\log T)$.
\hfill$\qed$

\subsection{Accounting and proof of Theorem~\ref{thm:achieve}}
\label{app:c4-ug}

\begin{lemma}[Regret accounting]\label{lem:account}
Under Assumption~\ref{ass:complete},
$R_i(T;\theta)=\sum_{a}\Delta_{i,a}\,\E_\theta[N_{i,a}(T)]$, and the
total off-stable counts decompose into exploration
(Lemma~\ref{lem:track}), repair ($O(NK\sqrt{\log T})$), estimation
($o(\log T)$) and wrong exploitation ($O(1)$), yielding
$\E[N_{i,a}(T)]=z^\circ_{w,i,a}\log T+o(\log T)$.
\end{lemma}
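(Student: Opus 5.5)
The plan is to start from the accounting identity \eqref{eq:regretcount} and rewrite it pairwise. Under Assumption~\ref{ass:complete} every player is matched in every round, so $\mu_{i,\mstar(i)}-\E\sum_t\mu_{i,M_t(i)}$ splits by the arm held: $\sum_{t}\ind{M_t(i)=a}$ summed over $a$ gives $T$. Hence $R_i(T;\theta)=\sum_a\Delta_{i,a}\E_\theta[N_{i,a}(T)]$, with $\Delta_{i,\mstar(i)}=0$. This part is exact and needs no asymptotics.

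The second step is to decompose each off-stable count $N_{i,a}(T)$, $a\neq\mstar(i)$, by the phase in which the round occurs. The phases are:
\begin{itemize}
\item the round-robin prefix, which contributes $O(1)$ since $k_0$ is deterministic;
\item repair, contributing at most $NK\lceil\sqrt{b(T_k)}\rceil=O(NK\sqrt{\log T})$ pathwise for $T\le T_k$, as noted in the proof of Lemma~\ref{lem:ident};
\item exploration, including the stable replay;
\item exploitation or estimation rounds.
\end{itemize}

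For the replay, on $A_k$ we have $\widehat m_k=\mstar$, so replay plays only stable pairs. Off $A_k$, the replay of epoch $k$ has length $O(k^3)$ and $\Prob(A_k^c)\le Ce^{-c_\theta k}$ by \eqref{eq:badprob}. Summing over epochs, the replay contributes $O(1)$ in expectation to off-stable pairs. The remaining exploration rounds contribute $z^\circ_{w,i,a}\log T+o(\log T)$ by Lemma~\ref{lem:track}, whose bad-epoch part is already $O(1)$.

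The phase-4 rounds split further. Exploitation of a certified $\widetilde m_t\neq\mstar$ contributes $O(1)$ in expectation by Lemma~\ref{lem:certify}(a). Exploitation of $\mstar$ contributes nothing off-stable. Estimation rounds contribute $o(\log T)$ by Lemma~\ref{lem:certify}. Each estimation round plays a single complete matching, which touches at most one off-stable pair per player, so it adds at most one to each count.

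Summing these pieces gives $\E[N_{i,a}(T)]=z^\circ_{w,i,a}\log T+o(\log T)$ for every off-stable pair. Multiplying by $\Delta_{i,a}$ and using the identity from the first step then yields $R_i/\log T\to r^\circ_i$.

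The only point I expect to need care is the anytime statement, for $T$ strictly inside an epoch. I plan to reuse the sandwich argument of Lemma~\ref{lem:track}: counts are monotone in $T$, and $b(T_{K-1})/b(T_K)\to1$. The error terms are each bounded by their values at $T_K$, which are $o(\log T_K)=o(\log T)$ because $\log T_K/\log T_{K-1}\to1$.
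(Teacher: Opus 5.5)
Your proposal is correct and follows the paper's proof. Like the paper, you use the pairwise form of \eqref{eq:regretcount} under complete matching and then split each off-stable count by phase: exploration (Lemma~\ref{lem:track}), repair, estimation and wrong exploitation (Lemma~\ref{lem:certify}), with the replay on irregular epochs absorbed via \eqref{eq:badprob}. Two minor remarks: the replay of epoch $k$ has length $\lceil(b_k-b_{k-1})k\rceil=O(k^2)$ rather than $O(k^3)$, though either bound is summable against $e^{-c_\theta k}$; and your final sandwich step is redundant, since each cited bound already holds at every horizon $T$.
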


\paragraph{Accounting.}
Under Assumption~\ref{ass:complete}, \eqref{eq:regretcount} in pair
form reads $R_i(T)=\sum_a\Delta_{i,a}\E[N_{i,a}(T)]$. Every round
belongs to exactly one phase, so for an off-stable pair $p$,
\[
\E[N_p(T)]
=\underbrace{z^\circ_p\log T+o(\log T)}_{\text{explore
(Lem.~\ref{lem:track})}}
+\underbrace{O(\sqrt{\log T})}_{\text{repair}}
+\underbrace{o(\log T)}_{\text{estimation
(Lem.~\ref{lem:certify})}}
+\underbrace{O(1)}_{\text{wrong exploit}}
=z^\circ_p\log T+o(\log T),
\]
while correctly certified exploitation and the stable-matching
replay on regular epochs play $\mstar$ and touch only zero-gap
pairs; the replay on irregular epochs contributes $O(k^2)$ rounds
with probability $Ce^{-c_\theta k}$, hence $O(1)$ in expectation, and
is absorbed into the last term. Dividing by $\log T$ gives the count
limits of
Theorem~\ref{thm:achieve}; the regret limits follow from the
accounting identity, and the weighted limit from
$\sum_iw_ir^\circ_i=\sum_{i,a}w_i\Delta_{i,a}z^\circ_{i,a}
=C_w(\theta)$, the cost optimality of $z^\circ_w$. Matching this
against Theorem~\ref{thm:weighted} shows the bound is attained
exactly.

The uniform-goodness claim of Theorem~\ref{thm:achieve} is the
statement of Lemma~\ref{lem:ug} for $\pi_w$; it is proved in
Appendix~\ref{app:c4-strict} at an arbitrary strict instance, for
the four policy variants at once.
\hfill$\qed$

\subsection{Proof of Proposition~\ref{prop:frontier-attain}}
\label{app:c4-frontier33}

\begin{proposition}[Attainability of the full $3\times3$ segment]
\label{prop:frontier-attain}
Fix $s\in[0,1]$ and let $\pi_s$ use the empirical $s$-split whenever
the eligible structure matches Example~\ref{ex:33}, with capped
dedicated completions otherwise (Algorithm~\ref{alg:policy}). Then
$\pi_s$ is uniformly good on
$\Thstrict$, and on $\mu^{(3)}$,
$R(T)/\log T\to r(202s-2)=(22,\,222-202s,\,2.02+199.98s)$: as $s$
ranges over $[0,1]$ the policy attains every expected-regret
coefficient on the segment
of Proposition~\ref{prop:frontier33}.
\end{proposition}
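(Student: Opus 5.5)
We follow the proof architecture of Theorem~\ref{thm:achieve}, with the canonical plug-in limit (Lemma~\ref{lem:plugin}) replaced by a direct continuity statement for the $s$-split. Uniform goodness on $\Thstrict$ is Lemma~\ref{lem:ug}, which covers $\pi_s$. That lemma uses only that every solve step meets the empirical quotas with schedule mass at most $NK\kappa_k$. For the split this holds because the five masses sum to $\widehat q_{12}+\widehat q_{23}-s\min(\widehat q_{12},\widehat q_{23})+\widehat q_{13}\le 3\kappa_k$. The dedicated fallback has mass $\sum\widehat q_{i,a}\le|\widehat\Eset_k|\kappa_k$. Since each non-stable matching has mass $x\ge0$ (note $\widehat x_{\mathrm A}=\widehat q_{12}-s\min\ge0$), the quotas are met: $\widehat x_{\mathrm A}+\widehat x_{\mathrm B}=\widehat q_{12}$, $\widehat x_{\mathrm B}+\widehat x_{\mathrm C}=\widehat q_{23}$, and $\widehat x_{\mathrm D}+\widehat x_{\mathrm E}=\widehat q_{13}$.

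At $\mu^{(3)}$, on the regular event $A_k$, property (A1) gives $\widehat m_k=\mstar$ and $\widehat\Eset_k=\Eset$, so the split branch is active. On $A_k\cap B_k$ the empirical quotas converge to $q=(200,2,200)$ at rate $O(k^{-1/3})$, as in the data-rate step of Lemma~\ref{lem:plugin}; the cap $\kappa_k=k$ is eventually inactive and the noise floor vanishes. The split is a continuous, in fact Lipschitz, function of $\widehat q$. Hence the allocation converges almost surely to
\[
x^{(s)}=(200-200s,\;200s,\;200-200s,\;2-2s,\;2s),
\]
which is the family point $x(h,y)$ of Proposition~\ref{prop:frontier33} with $h=200s$ and $y=2-2s$, so $t=h-y=202s-2$. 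No uniqueness assumption is needed, because the limit is determined by continuity rather than by optimality. On $A_k$, the mass of every epoch is bounded by $3\cdot 8/\Delta_{\min}^2$ by (A2), which gives the dominator required by Lemma~\ref{lem:track}. That lemma's Ces\`aro and bad-epoch arguments then give $\E[N^{\mathrm{exp}}_{i,a}(T)]=z_{i,a}(x^{(s)})\log T+o(\log T)$ on every off-stable pair.

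Lemma~\ref{lem:certify} then bounds estimation rounds by $o(\log T)$ and wrong exploitation by $O(1)$. Repair contributes $O(\sqrt{\log T})$. The main obstacle is checking that the certification demand--supply argument carries over. It needs only that the late-epoch supply on each eligible pair is at least $(1-\gamma)q_{i,a}$ per unit of $b$, together with the stable replay. For the split this holds because each quota is met with equality by $\widehat x$, so $\widehat z_{k,(i,a)}=\widehat q_{k,(i,a)}$ on eligible pairs. By Lemma~\ref{lem:account}, $R(T)/\log T\to G(\mu^{(3)})x^{(s)}=r(202s-2)$. Step~1 of Appendix~\ref{app:c2-frontier} evaluates this vector as $(22,\,220-t,\,4+0.99t)$ with $t=202s-2$, that is, $(22,\,222-202s,\,2.02+199.98s)$. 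As $s$ ranges over $[0,1]$, $t$ ranges over all of $[-2,200]$, so every point of the segment is attained.
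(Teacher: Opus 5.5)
Your proposal is correct and follows essentially the same route as the paper. Uniform goodness comes from Lemma~\ref{lem:ug}, using only quota feasibility and the pathwise mass cap; the LP plug-in step is replaced by continuity of the $s$-split as $(\widehat q_{12},\widehat q_{13},\widehat q_{23})\to(200,2,200)$ on $A_k\cap B_k$; tracking is the dominated Ces\`aro argument with the quota-sum bound from (A2); certification and accounting transfer because the split meets every empirical quota with equality; and the final evaluation via Step~1 of Appendix~\ref{app:c2-frontier} at $t=202s-2$ matches the paper's.
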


The policy $\pi_s$ differs from $\pi_w$ only in step 2, so
Lemma~\ref{lem:ident} (repair, identification, \eqref{eq:badprob})
and the certification analysis of Lemma~\ref{lem:certify} apply
verbatim; for the supply side of the demand--supply argument, note that
 the $s$-split meets every empirical quota by construction
($\widehat x_{\mathrm A}+\widehat x_{\mathrm B}=\widehat q_{12}$,
$\widehat x_{\mathrm B}+\widehat x_{\mathrm C}=\widehat q_{23}$,
$\widehat x_{\mathrm D}+\widehat x_{\mathrm E}=\widehat q_{13}$, all
components nonnegative since
$\widehat x_{\mathrm B}\le\min(\widehat q_{12},\widehat q_{23})$),
which is the only property of the schedule that
Lemma~\ref{lem:certify}(c) uses.

The plug-in step becomes elementary: the target allocation is an
explicit \emph{continuous} function of
$(\widehat q_{12},\widehat q_{13},\widehat q_{23})$, so no LP
perturbation or uniqueness argument is required. On $A_k$ (eventually
a.s.), the structure matches Example~\ref{ex:33}, the labels agree
with the true ones, and $(\widehat q_{12},\widehat q_{13},\widehat q_{23})
\to(200,2,200)$ (noise floor
vanishing, cap inactive), whence
\[
\widehat x_k\;\longrightarrow\;
 x(s):=\bigl(200(1-s),\,200s,\,200(1-s),\,2(1-s),\,2s\bigr)
\quad\text{a.s.},
\]
which is the family point of Proposition~\ref{prop:frontier33} with
$h=200s$, $y=2(1-s)$, i.e., $t=h-y=202s-2$. The tracking argument of
Lemma~\ref{lem:track} carries over with $\widehat z_k$ replaced by
the pair marginals of $\widehat x_k$: on $A_k$ these are uniformly
bounded by $\widehat q_{12}+\widehat q_{13}+\widehat q_{23}\le
C_\theta'$, giving the dominated Ces\`aro limit, and bad epochs are
$O(1)$ in expectation through the caps and \eqref{eq:badprob}
exactly as before (the fallback dedicated cover is also capped).
The accounting of Lemma~\ref{lem:account} then yields
\[
\begin{aligned}
\frac{\E[N_{i,a}(T)]}{\log T}&\longrightarrow z_{i,a}(s)
:=\sum_{m}x_m(s)\,\ind{m(i)=a},
\quad a\neq\mstar(i),\\
\frac{R(T)}{\log T}&\longrightarrow G\bigl(\mu^{(3)}\bigr)x(s)
=r(202s-2),
\end{aligned}
\]
the last identity by Step 1 of Appendix~\ref{app:c2-frontier}.
Uniform goodness on $\Thstrict$ is Lemma~\ref{lem:ug}
(Appendix~\ref{app:c4-strict}), which covers $\pi_s$.
\hfill$\qed$

\begin{corollary}[The $3\times3$ frontier is closed]
\label{cor:frontier-closed}
For $\mu^{(3)}$ the necessary lower boundary is the Pareto-minimal
set of asymptotic regret vectors attainable by uniformly good
complete-matching policies: no attainable vector lies below it
(Proposition~\ref{prop:region}), every segment point is attained by
some $\pi_s$, and every attainable vector lies weakly above a segment
point.
Theorem~\ref{thm:closure} extends this to every separated
instance.
\end{corollary}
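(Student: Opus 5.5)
The corollary follows by assembling results already stated: Proposition~\ref{prop:region}, Proposition~\ref{prop:frontier33} (Step~2 of Appendix~\ref{app:c2-frontier}) and Proposition~\ref{prop:frontier-attain}. Throughout, the attainable vectors are the limits $R(T;\mu^{(3)})/\log T\to r$ under complete-matching policies uniformly good on $\Thstrict$. The aim is to show that the Pareto-minimal set of these limits is exactly the segment $\{r(t):t\in[-2,200]\}$, and that every attainable vector dominates some segment point.

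\emph{Lower side.} Let $r$ be attainable. Along the full sequence, hence along any subsequence, Proposition~\ref{prop:region} gives $r=G(\mu^{(3)})x$ for some $x\in\Xset(\mu^{(3)})$. Step~2 of Appendix~\ref{app:c2-frontier} then produces a family point $x''\le x$ with $G x''=r(t)$ for some $t\in[-2,200]$. Since $G$ has nonnegative entries (Lemma~\ref{lem:positivity}), $r\ge r(t)$. This proves that every attainable vector lies weakly above a segment point, and therefore that no attainable vector lies strictly below the boundary.

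\emph{Upper side.} Fix $t\in[-2,200]$ and put $s:=(t+2)/202\in[0,1]$. By Proposition~\ref{prop:frontier-attain}, the policy $\pi_s$ is uniformly good on $\Thstrict$ and attains $r(202s-2)=r(t)$. Hence every segment point is attainable.

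\emph{Conclusion.} Let $r$ be a Pareto-minimal attainable vector. By the lower side $r\ge r(t)$ for some $t$, and $r(t)$ is attainable by the upper side, so minimality forces $r=r(t)$. Conversely, suppose an attainable $r'$ satisfies $r'\le r(t)$. The lower side gives $r(t')\le r'\le r(t)$ for some $t'$. Along the segment $r_2(t)=220-t$ is strictly decreasing and $r_3(t)=4+0.99t$ is strictly increasing, so comparing these two coordinates yields $t'\ge t$ and $t'\le t$. Thus $t'=t$ and $r'=r(t)$, so each segment point is Pareto-minimal among attainable vectors. By Proposition~\ref{prop:frontier33} this segment is also the Pareto-minimal boundary of $\UGL(\mu^{(3)})$, i.e.\ the necessary lower boundary, which proves the claim.

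There is no real obstacle here: all the substance is in the cited results. The one point needing care is that Proposition~\ref{prop:region} applies to limits, which is exactly how attainability is defined, so the coordinatewise-liminf issue of Remark~\ref{rem:liminf} does not arise.
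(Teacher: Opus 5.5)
Your proposal is correct and follows the paper's own justification: Proposition~\ref{prop:region} combined with Step~2 of Appendix~\ref{app:c2-frontier} shows that every attainable vector dominates a segment point, and Proposition~\ref{prop:frontier-attain} with $s=(t+2)/202$ shows that every segment point is attained. Your monotonicity comparison of $r_2$ and $r_3$ is the same Step~3 device the paper uses to pin down the Pareto-minimal set, and you correctly note that attainability is defined through full limits, so the liminf issue of Remark~\ref{rem:liminf} does not arise.
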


\subsection{The cost-optimal face}\label{app:c4-fw}

\begin{lemma}[The projected cost-optimal face is a compact polytope]
\label{lem:fwcompact}
For $\theta\in\Thsep$ and $w>0$, $F_w(\theta)$ is a nonempty
compact polytope.
\end{lemma}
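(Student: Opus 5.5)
We show that $\mathcal P(\theta)$ is a nonempty polyhedron on which the cost $c(\theta,w)^\top z$ is bounded below and attains its minimum. The optimal face is therefore a nonempty polyhedron. Its only unbounded direction is the zero-cost stable ray, which projection onto off-stable coordinates removes, leaving a bounded polyhedron, i.e.\ a polytope.

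\emph{Nonemptiness and existence of a minimizer.} Remark~\ref{rem:feasible} gives a feasible allocation $x$, the uniform cover at level $\max_{\Eset}2/\Delta_{i,a}^2$. Following the ($\le$) direction of the proof of Theorem~\ref{thm:compact}, its marginals with $\rho=\sum_m x_m$ lie in $\mathcal P(\theta)$, so $\mathcal P(\theta)\neq\emptyset$. On $\Thsep$ every coefficient $c_{i,a}=w_i\Delta_{i,a}$ is nonnegative, so the LP is bounded below by $0$. A linear program over a nonempty polyhedron with finite infimum attains its value; its value is $C_w(\theta)$ by Theorem~\ref{thm:compact}. The argmin set $\widetilde F_w$ is a face of $\mathcal P(\theta)$, hence a nonempty polyhedron.

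\emph{Boundedness after projection.} Every $(z,\rho)\in\widetilde F_w$ satisfies $z\ge0$ and $\sum_{i,a}w_i\Delta_{i,a}z_{i,a}=C_w(\theta)$. Row strictness and separation give $\Delta_{i,a}>0$ for every off-stable pair $a\neq\mstar(i)$. Hence each off-stable coordinate obeys
\[
0\le z_{i,a}\le C_w(\theta)/(w_i\Delta_{i,a}),
\]
so $F_w(\theta)=\operatorname{proj}_{\mathrm{off}}\widetilde F_w$ is bounded. The linear image of a polyhedron is a polyhedron by Minkowski--Weyl, and a linear image is nonempty whenever the set is. A bounded polyhedron is a polytope, hence compact.

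\emph{Remark on the recession cone.} Compactness does not need this, but the recession cone of $\widetilde F_w$ consists of directions $(d,\delta)$ with $c^\top d=0$ and $d\ge0$ off-stable. Such $d$ therefore vanishes off-stable, and the flow constraints force $d=\delta\,\zeta^{\mstar}$, matching Remark~\ref{rem:ray}.

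The main obstacle is mild. We must make sure that projecting a possibly unbounded face still yields a closed set, and Minkowski--Weyl handles this for polyhedra. We must also confirm that the strict positivity of $\Delta_{i,a}$ on all off-stable pairs, not only on $\Eset$, really uses $\theta\in\Thsep$; this is exactly where separation enters.
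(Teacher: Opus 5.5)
Your proof is correct and follows essentially the same route as the paper: feasibility via Remark~\ref{rem:feasible}, attainment of the finite nonnegative LP value, the projection of a polyhedron being a polyhedron, and boundedness from the strict positivity on $\Thsep$ of every off-stable cost coefficient $w_i\Delta_{i,a}$. The only difference is cosmetic: the paper packages boundedness as the single estimate $c_{\min}\|y\|_1\le C_w(\theta)$, while you give coordinatewise bounds.
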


\begin{proof}
$F_w(\theta)$ is a polyhedron as the projection of one. The LP is
feasible by Remark~\ref{rem:feasible} and has finite nonnegative
value, hence attains its minimum, so $F_w(\theta)$ is nonempty.
For boundedness, note that
on $\Thsep$ the stable arm $\mstar(i)$ is player $i$'s globally
best arm, so by row strictness every off-stable coefficient is
strictly positive, $c_{i,a}=w_i\Delta_{i,a}>0$ for all
$a\neq\mstar(i)$. With
$c_{\min}:=\min_{(i,a):\,a\neq\mstar(i)}w_i\Delta_{i,a}>0$, every
$y\in F_w(\theta)$ satisfies
$c_{\min}\|y\|_1\le c(\theta,w)^\top y=C_w(\theta)$, so
$F_w(\theta)$ is contained in a ball of radius
$C_w(\theta)/c_{\min}$.
\end{proof}

In the $3\times3$ instance at the knife-edge weight, $F_w$ is the
off-stable image of the two-parameter allocation family of
Proposition~\ref{prop:frontier33}, not just its minimal-$\rho$ region.

\subsection{Proof of Theorem~\ref{thm:frontier-achieve}}
\label{app:c4-c4b}

Throughout this subsection fix $w\in\R^N_{++}$ and a target rule
$\Psi$ as in \S\ref{sec:c4-frontier}, and write
$\tau_k:=\Psi(\widehat m_k,\widehat\Eset_k,\widehat q_k,
\widehat\Delta_k)$ for the target used in epoch $k$, where
$\widehat\Delta_{k,i,a}:=\widehat\mu_{i,\widehat m_k(i)}
-\widehat\mu_{i,a}$ on the pairs with $a\neq\widehat m_k(i)$. For
the convergence statements fix also $\theta\in\Thsep$, write
$\mstar=\mstar(\theta)$, let ``off'' denote the coordinates $(i,a)$
with $a\neq\mstar(i)$, let $\tau:=\Psi(\mstar,\Eset,q,\Delta)$ be
the rule's value at the true structure, and abbreviate
$\Phi_\tau(y):=\|y-\tau^{\mathrm{off}}\|^2$ on the off-stable
space; a constant rule is the fixed-target policy $\pi_{w,\tau}$.
The uniform-goodness part at the end works on an arbitrary strict
instance and uses only the cap. Constants depend on
$(\theta,w,\Psi,\alpha,\xi,N,K)$.

\paragraph{The projected program.}
The surrogate cost vanishes on stable coordinates, so
$c^\top z=c_{\mathrm{off}}^\top z^{\mathrm{off}}$ for every feasible
$(z,\rho)$. Hence for every $\eta>0$ and every cap level $B$,
$(z,\rho)$ minimizes $c^\top z+\eta\Phi_\tau(z^{\mathrm{off}})$ over
$\mathcal P(\theta)\cap\{\rho\le B\}$ if and only if
$z^{\mathrm{off}}$ minimizes
\[
\psi_\eta(y)\;:=\;c_{\mathrm{off}}^\top y+\eta\,\Phi_\tau(y)
\]
over the projected polyhedron
$Y_B(\theta):=\operatorname{proj}_{\mathrm{off}}
\bigl(\mathcal P(\theta)\cap\{\rho\le B\}\bigr)$, and the off-stable
projection of the argmin equals the argmin of $\psi_\eta$; lifting
and projection preserve both feasibility and the objective. The same
identification at $\eta=0$ shows that the minimizers of
$c_{\mathrm{off}}^\top y$ over
$Y_\infty(\theta):=\operatorname{proj}_{\mathrm{off}}
\mathcal P(\theta)$ form exactly $F_w(\theta)$
(Lemma~\ref{lem:fwcompact}), with minimum value $C_w(\theta)$
(Theorem~\ref{thm:compact}).

\begin{lemma}[Exact selection, cap-compatible]\label{lem:projsel}
Let $y^\tau$ be the Euclidean projection of $\tau^{\mathrm{off}}$
onto $F_w(\theta)$ and
$\rho^\tau_{\min}:=\min\{\rho:(z,\rho)\in\mathcal P(\theta),\
z^{\mathrm{off}}=y^\tau\}$; the minimum is attained because the
completion fiber is a nonempty closed polyhedron on which
$\rho\ge0$. There exists
$\bar\eta=\bar\eta(\theta,w,\tau)>0$ such that for every
$0<\eta<\bar\eta$ and every $B\ge\rho^\tau_{\min}$,
\[
\arg\min_{y\in Y_B(\theta)}\psi_\eta(y)=\{y^\tau\}.
\]
\end{lemma}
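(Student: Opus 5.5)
The plan is to prove this as an instance of exact regularization of a linear program, in the form of Mangasarian--Meyer and Friedlander--Tseng. The argument runs entirely in the projected off-stable space. By the identification above, $F_w(\theta)$ is the argmin of the linear function $c_{\mathrm{off}}^\top y$ over the polyhedron $Y_\infty(\theta)$, and its value there is $C_w(\theta)$.

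\emph{Step 1: reduce to $Y_\infty$.} First note that $y^\tau\in Y_B(\theta)$ whenever $B\ge\rho^\tau_{\min}$, because a completion with $\rho=\rho^\tau_{\min}\le B$ exists. Since $Y_B(\theta)\subseteq Y_\infty(\theta)$, it suffices to show that $y^\tau$ is the unique minimizer of $\psi_\eta$ over $Y_\infty(\theta)$ for all small $\eta$. Uniqueness on the larger set then transfers to every $Y_B$ that contains $y^\tau$, and $\bar\eta$ is independent of $B$.

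\emph{Step 2: first-order optimality on the polyhedron.} The function $\psi_\eta$ is strictly convex and $Y_\infty$ is convex, so it is enough to verify the variational inequality
\[
\bigl\langle c_{\mathrm{off}}+2\eta(y^\tau-\tau^{\mathrm{off}}),\,y-y^\tau\bigr\rangle\ge0\qquad\forall y\in Y_\infty(\theta).
\]
Because $Y_\infty$ is a polyhedron, it suffices to check this on generators of the tangent cone $T$ of $Y_\infty$ at $y^\tau$. This cone is finitely generated by directions $d_1,\dots,d_p$, which depend only on the active constraints at $y^\tau$. Split the generators into two groups.

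Suppose first that $c_{\mathrm{off}}^\top d_j=0$. Then $d_j$ is also a feasible direction of the face $F_w$ at $y^\tau$. Here I use that $F_w$ is a face of $Y_\infty$, so $F_w=Y_\infty\cap\{c^\top y=C_w\}$ and its tangent cone is $T\cap c^\perp$. Since $y^\tau$ is the projection of $\tau^{\mathrm{off}}$ onto $F_w$, we get $\langle y^\tau-\tau^{\mathrm{off}},d_j\rangle\ge0$, so the inequality holds for every $\eta\ge0$.

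Suppose instead that $c_{\mathrm{off}}^\top d_j>0$; this is the only alternative, since $y^\tau$ is cost-optimal. Then choose
\[
\bar\eta:=\min_j\frac{c_{\mathrm{off}}^\top d_j}{2\bigl(|\langle y^\tau-\tau^{\mathrm{off}},d_j\rangle|+1\bigr)},
\]
which makes the inequality strict for $\eta<\bar\eta$. Conic combinations of generators then give the inequality on all of $T\supseteq Y_\infty-y^\tau$.

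\emph{Step 3: uniqueness.} Strict convexity of $\psi_\eta$ in $y$ gives uniqueness directly. Alternatively, if some other $y$ also achieved the minimum, the whole segment to $y^\tau$ would be optimal, which contradicts the strictly convex quadratic term.

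The main obstacle is the tangent-cone bookkeeping in Step 2. One must confirm that the tangent cone of the face $F_w$ at $y^\tau$ equals $T\cap c_{\mathrm{off}}^\perp$. This is standard for faces of polyhedra, but it requires $F_w$ to be exactly the exposed face cut out by the cost, which holds by the definition of $F_w$. One also needs a finite generator set so that $\bar\eta>0$; polyhedrality, from Minkowski--Weyl applied to a projection of $\mathcal P(\theta)$, supplies this.
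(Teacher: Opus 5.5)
Your proof is correct, but it takes a different route from the paper. The paper handles the uncapped case with a global error-bound argument. Hoffman's bound gives $\operatorname{dist}(y,F_w(\theta))\le H\,(c_{\mathrm{off}}^\top y-C_w(\theta))$ on $Y_\infty(\theta)$. Comparing $\psi_\eta(y_\eta)$ with $\psi_\eta(y^\tau)$ for an arbitrary regularized minimizer $y_\eta$ then gives $(1/H-\eta D)\operatorname{dist}(y_\eta,F_w)\le0$, where $D=2\max_{F_w}\|\cdot-\tau^{\mathrm{off}}\|$. So $y_\eta\in F_w$ once $\eta<1/(HD)$, and strict convexity of $\Phi_\tau$ on $F_w$ forces $y_\eta=y^\tau$.

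You instead guess the minimizer and certify it locally. You check the variational inequality on finitely many generators of the tangent cone at $y^\tau$. Cost-neutral generators are handled by the projection inequality on $F_w$, and cost-increasing generators by your explicit choice of $\bar\eta$.

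Your version is more elementary and purely local. It needs no Hoffman constant and no boundedness of $F_w$; the paper's $D$ is finite only because $F_w$ is compact. You also only use the inclusion $T\cap c_{\mathrm{off}}^\perp\subseteq\{\text{feasible directions of }F_w\text{ at }y^\tau\}$. This holds because tangent directions of a polyhedron are feasible directions, so the full equality of tangent cones you flag as the main obstacle is not actually needed.

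The paper's version avoids the tangent-cone bookkeeping and shows directly that every regularized minimizer lands on the cost-optimal face. It also sets up the same error-bound-plus-strong-convexity machinery that the perturbation analysis of Lemma~\ref{lem:regplugin} reuses. Your reduction of the capped case to the uncapped one is the same as the paper's.
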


\begin{proof}
\emph{Uncapped.} The selection problem, minimizing $\Phi_\tau$ over
$F_w(\theta)=\{y\in Y_\infty(\theta):
c_{\mathrm{off}}^\top y\le C_w(\theta)\}$, is a strictly convex
minimization over a nonempty compact polyhedron, with unique
solution $y^\tau$. ($F_w(\theta)$ is bounded: on $\Thsep$ every
off-stable cost $c_{i,a}=w_i\Delta_{i,a}$ is positive and every
recession direction of $Y_\infty(\theta)$ is nonnegative, so a
nonzero recession direction has positive cost.) We show directly
that for small $\eta$ the regularized minimizers lie in
$F_w(\theta)$; this is the exact-regularization theorem of
\citet[Theorem~2.1 and Corollary~2.3]{friedlander2007exact}, going
back to \citet{mangasarian1979nonlinear} for linear programs, with
an explicit threshold. Since $F_w(\theta)$ is cut out of the
polyhedron $Y_\infty(\theta)$ by the single inequality
$c_{\mathrm{off}}^\top y\le C_w(\theta)$, Hoffman's bound
\citep{hoffman1952approximate} gives a constant
$H=H(\theta,w)\in(0,\infty)$ with
\[
\operatorname{dist}\bigl(y,F_w(\theta)\bigr)
\;\le\;H\bigl(c_{\mathrm{off}}^\top y-C_w(\theta)\bigr)
\qquad\text{for all }y\in Y_\infty(\theta),
\]
where the right-hand side is nonnegative because
$C_w(\theta)=\min_{Y_\infty(\theta)}c_{\mathrm{off}}^\top y$
(Euclidean norms throughout).
Let $y_\eta$ minimize $\psi_\eta$ over $Y_\infty(\theta)$ (the
argmin is nonempty because $\Phi_\tau$ is coercive and
$c_{\mathrm{off}}\ge0$), let $p$ be the projection of $y_\eta$ onto
$F_w(\theta)$ and $d:=\|y_\eta-p\|$. Convexity of $\Phi_\tau$ and
$\Phi_\tau(p)\ge\Phi_\tau(y^\tau)$ give
$\Phi_\tau(y_\eta)\ge\Phi_\tau(p)-2\|p-\tau^{\mathrm{off}}\|\,d
\ge\Phi_\tau(y^\tau)-D\,d$ with
$D:=2\max_{F_w(\theta)}\|\cdot-\tau^{\mathrm{off}}\|$. Hence
\[
0\;\ge\;\psi_\eta(y_\eta)-\psi_\eta(y^\tau)
\;=\;\bigl(c_{\mathrm{off}}^\top y_\eta-C_w(\theta)\bigr)
+\eta\bigl(\Phi_\tau(y_\eta)-\Phi_\tau(y^\tau)\bigr)
\;\ge\;(1/H-\eta D)\,d,
\]
so $d=0$ whenever $\eta<\bar\eta:=1/(HD)$ (if $D=0$ then
$F_w(\theta)=\{\tau^{\mathrm{off}}\}$ and any $\bar\eta$ works).
Thus every minimizer of $\psi_\eta$ lies in $F_w(\theta)$, where
$c_{\mathrm{off}}^\top y\equiv C_w(\theta)$ and the regularized
objective ranks by $\Phi_\tau$ alone; strict convexity gives
$y_\eta=y^\tau$.

\emph{Capped.} $Y_B(\theta)\subseteq Y_\infty(\theta)$, and
$y^\tau\in Y_B(\theta)$ because $B\ge\rho^\tau_{\min}$ provides a
feasible completion. The unique global minimizer of $\psi_\eta$ over
the larger set thus lies in the smaller one, so the capped and
uncapped minima coincide, and every capped minimizer is an uncapped
minimizer, hence equals $y^\tau$.
\end{proof}

\begin{lemma}[Feasibility and pathwise mass under the cap]
\label{lem:capfeas}
On every sample path and in every epoch $k$, the capped empirical
program \eqref{eq:regprog} is feasible, every solution satisfies
$\widehat\rho_k\le NK\kappa_k$, and the exploration phase of epoch
$k$ plays at most $(b_k-b_{k-1})NK\kappa_k+NK$ rounds besides the
replay.
\end{lemma}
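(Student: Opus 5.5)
The plan is to exhibit one explicit feasible point of the capped program \eqref{eq:regprog} on every sample path, and then to read off the mass and round bounds directly from the cap. No analogue of the charging argument of Lemma~\ref{lem:mass} is needed, since the cap is a hard constraint.

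\emph{Feasibility.} I would use the dedicated cover, exactly as in the fallback of $\pi_s$. For each $(i,a)\in\widehat\Eset_k$, take the completion matching $\mathrm{comp}(i,a)$ of Algorithm~\ref{alg:policy}. Assign it mass $\widehat q_{i,a}$ and set $x:=\sum_{(i,a)\in\widehat\Eset_k}\widehat q_{i,a}\,\ind{\cdot=\mathrm{comp}(i,a)}$ (summing masses when two pairs share a matching). Define $z:=\sum_m x_m\zeta^m$ and $\rho:=\sum_m x_m$.

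\begin{itemize}
\item Each $\zeta^m$ has unit row sums and column sums at most one, so the flow constraints of \eqref{eq:compactlp} hold with this $\rho$.
\item The marginal $z_{i,a}\ge\widehat q_{i,a}$ holds because $\mathrm{comp}(i,a)(i)=a$, so the empirical quotas are met.
\item Since $\widehat q_{i,a}\le\kappa_k$ and $|\widehat\Eset_k|\le NK$, we get $\rho\le NK\kappa_k$.
\end{itemize}

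None of this uses separation, correctness of $\widehat m_k$, or any property of the data; it uses only the capped quotas. Hence $\widehat{\mathcal P}_k\cap\{\rho\le NK\kappa_k\}$ is nonempty on every path.

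\emph{Existence of a solution.} The objective $\widehat c_k^\top z+\eta_k\|z^{\mathrm{off}}-\tau_k^{\mathrm{off}}\|^2$ is continuous. The feasible set is closed, and it is bounded: $0\le z_{i,a}\le\rho\le NK\kappa_k$. So the set is compact and a minimizer exists. Because the objective is strictly convex in $z^{\mathrm{off}}$, the minimizers share one off-stable block. The set of $\rho$-values attached to that block within the capped set is a compact interval, so the smallest $\rho$ is attained. The same reasoning covers the uncosted program of $\pi_\tau$, where the cost term is simply absent.

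\emph{Mass and rounds.} Every solution satisfies $\widehat\rho_k\le NK\kappa_k$ simply because this is a constraint of the program. The minimal-$\rho$ completion is decomposed by Proposition~\ref{prop:sparse} into at most $NK-N+1$ matchings with total mass $\widehat\rho_k$; discarding $\widehat m_k$-components only lowers this total. Increment execution then plays $\lceil(b_k-b_{k-1})c_j\rceil\le(b_k-b_{k-1})c_j+1$ rounds per component. Summing over the components gives at most $(b_k-b_{k-1})NK\kappa_k+NK$ rounds, excluding the replay.

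The only point needing care, and the one I expect to be the main obstacle, is the decomposition step when the empirical data are not separated. There, $\widehat q$ or the surrogate costs may be degenerate, so I will check that Proposition~\ref{prop:sparse} needs only $(z,\rho)$ to satisfy the flow constraints with $\rho>0$. It does: the argument is pure polytope geometry, via Lemma~\ref{lem:integrality}. The case $\rho=0$ is trivial, since then the schedule is empty.
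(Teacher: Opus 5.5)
Your proof is correct and follows the paper's argument. Feasibility comes from the dedicated cover placing mass $\widehat q_{i,a}$ on $\mathrm{comp}(i,a)$, which gives $\rho\le|\widehat\Eset_k|\kappa_k\le NK\kappa_k$; the mass bound is the cap constraint itself; and the round count comes from at most $NK$ ceilings in increment execution. Your compactness argument for the existence of a minimal-$\rho$ solution, and your check that the decomposition of Proposition~\ref{prop:sparse} uses only the flow constraints, are sound details that the paper leaves implicit.
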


\begin{proof}
The dedicated cover that plays $\mathrm{comp}(i,a)$ with mass
$\widehat q_{i,a}$ for each $(i,a)\in\widehat\Eset_k$ meets every
quota, and its marginals, completed with
$\rho=\sum\widehat q\le|\widehat\Eset_k|\,\kappa_k\le NK\kappa_k$,
satisfy the flow constraints: each row sums to the total mass and
each column is used at most once per matching. The mass bound is the
cap constraint itself, and the round count follows from the
increment execution with its at most $NK$ ceilings, as in
Lemma~\ref{lem:mass}.
\end{proof}

\begin{remark}[Why the cap is explicit]\label{rem:whycap}
For $\pi_w$ the mass bound is a theorem (the capped-mass
Lemma~\ref{lem:mass}): reducing the mass of a matching that covers
no tight quota preserves feasibility, does not increase the linear
cost, and lowers $\rho$. Under regularization this surgery can
\emph{increase} the objective, since removing mass moves
$z^{\mathrm{off}}$ away from $\tau$ wherever
$z^{\mathrm{off}}<\tau$; a one-dimensional instance is
$\min_{x\ge1}\eta(x-10)^2$, whose minimizer is $10$, not $1$. The
pathwise bound that the bad-epoch accounting and the
uniform-goodness argument rely on is therefore imposed as a
constraint rather than derived.
\end{remark}

\begin{lemma}[Regularized plug-in convergence]\label{lem:regplugin}
Let $\theta\in\Thsep$ and $\delta_k:=k^{-1/3}$. There are a
deterministic index $k_1=k_1(\theta,w,\Psi)$ and constants such
that on $A_k\cap B_k$ with $k\ge k_1$,
\[
\bigl\|\widehat z_k^{\mathrm{off}}-y^\tau\bigr\|
\;\le\;C\Bigl(\delta_k+\sqrt{\delta_k/\eta_k}\Bigr).
\]
In particular $\widehat z_k^{\mathrm{off}}\to y^\tau$ almost surely.
\end{lemma}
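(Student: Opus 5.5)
I will compare the empirical regularized program with the true one through a two-step perturbation argument. The first step compares the empirical minimizer with the minimizer of the true regularized objective at the same $\eta_k$. The second step uses Lemma~\ref{lem:projsel} to identify the latter exactly with $y^\tau$ once $\eta_k<\bar\eta$.

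\emph{Step 1: data and target stability.} Fix $k$ large with $A_k\cap B_k$ holding. By (A1), $\widehat m_k=\mstar$ and $\widehat\Eset_k=\Eset$, so the structure argument of $\Psi$ is the true one. As in the proof of Lemma~\ref{lem:plugin}, on $B_k$ the surrogate costs err by $O(\delta_k)$, the noise floor is $o(\delta_k)$, the quotas err by $O(\delta_k)$ (Lipschitz on $[\Delta_{\min}/2,\infty)$), and the cap $\kappa_k=k$ is inactive. Because $\Psi$ is locally Lipschitz in $(\widehat q,\widehat\Delta)$ for the fixed structure, $\|\tau_k-\tau\|\le C\delta_k$ for $k\ge k_1$. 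Let $\widehat Y_k$ be the projected empirical feasible set with $\rho\le NK\kappa_k$. I will establish Hausdorff closeness on the relevant bounded region. Any $y\in Y_\infty(\theta)$ with bounded mass can be topped up by dedicated completion matchings to an element of $\widehat Y_k$ at distance $O(\delta_k)$; this is the proxy construction of Lemma~\ref{lem:plugin}. Conversely, any bounded element of $\widehat Y_k$ can be topped up to meet the true quotas. Here I first need an a priori bound on $\widehat z_k^{\mathrm{off}}$. I get it by comparing with the proxy of $y^\tau$: the empirical objective at the proxy is $O(1)$, the costs are bounded below by $c_{\min}/2>0$ on off-stable coordinates, and $\Phi\ge0$. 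Together these give $\|\widehat z_k^{\mathrm{off}}\|_1\le C$.

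\emph{Step 2: growth inequality.} Let $\widehat\psi_k(y)=\widehat c_{\mathrm{off}}^\top y+\eta_k\|y-\tau_k^{\mathrm{off}}\|^2$. On the bounded region, $|\widehat\psi_k-\psi_{\eta_k}|\le C\delta_k$ (with $\psi$ built from $\tau$ and the true costs). Write $\widehat y=\widehat z_k^{\mathrm{off}}$, let $\widehat y'$ be its true-feasible top-up, and let $\tilde y$ be the empirical proxy of $y^\tau$. Optimality of $\widehat y$ then gives $\psi_{\eta_k}(\widehat y')\le\psi_{\eta_k}(y^\tau)+C\delta_k$. By Lemma~\ref{lem:projsel}, for $\eta_k<\bar\eta$, $y^\tau$ minimizes $\psi_{\eta_k}$ over $Y_\infty(\theta)$. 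Since $\psi_{\eta_k}$ is $2\eta_k$-strongly convex, quadratic growth gives $\eta_k\|\widehat y'-y^\tau\|^2\le\psi_{\eta_k}(\widehat y')-\psi_{\eta_k}(y^\tau)\le C\delta_k$. Hence $\|\widehat y'-y^\tau\|\le C\sqrt{\delta_k/\eta_k}$, and adding $\|\widehat y-\widehat y'\|\le C\delta_k$ yields the claim. Here $k_1$ is chosen so that $\eta_k<\bar\eta$, the cap exceeds $\rho^\tau_{\min}+O(1)$, and the Lipschitz neighbourhood of $\Psi$ is reached. Because $\eta_k=1/\log(k+1)$ and $\delta_k=k^{-1/3}$, we have $\sqrt{\delta_k/\eta_k}\to0$. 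Since $A_k\cap B_k$ holds eventually almost surely (Borel--Cantelli, as in Lemma~\ref{lem:plugin}), a.s.\ convergence follows.

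\emph{Main obstacle.} The delicate point is the two-sided feasibility transfer with the cap and the minimal-$\rho$ completion. The top-ups must respect $\rho\le NK\kappa_k$, which holds since the cap grows while the masses stay bounded. They must also stay within the bounded region where the Lipschitz and $O(\delta_k)$ comparisons hold, which is why the a priori mass bound has to come first. One further check is that the quadratic-growth step uses global optimality of $y^\tau$ for $\psi_{\eta_k}$ and not merely optimality over $F_w$; this is exactly what the uncapped part of Lemma~\ref{lem:projsel} provides.
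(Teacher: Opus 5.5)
Your proposal is correct and follows essentially the same route as the paper's proof: control the target drift, get an a priori bound by comparing against the topped-up proxy of $y^\tau$, transfer feasibility and the objective to the true program, and conclude by exact selection (Lemma~\ref{lem:projsel}) plus $2\eta_k$-strong convexity. The differences are minor and both valid. You bound $\|\widehat z_k^{\mathrm{off}}\|_1$ directly through the per-pair cost floor $c_{\min}/2$, whereas the paper bounds $\widehat\rho_k$ through the matching-cost floor $d_\theta/2$ and a cover decomposition. You also repair feasibility by an explicit top-up into the uncapped set $Y_\infty(\theta)$ using the uncapped half of Lemma~\ref{lem:projsel}, whereas the paper uses a Hoffman repair into the capped set.
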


\begin{remark}[Choice of $\eta_k$]\label{rem:etarate}
The proof uses only $\eta_k<\bar\eta$ eventually and
$\delta_k/\eta_k\to0$, so any $\eta_k\downarrow0$ satisfying the latter
condition is admissible. Once $\eta_k$ is below the instance-dependent
threshold $\bar\eta$, a larger $\eta_k$ reduces the displayed error bound.
The logarithmic schedule of \eqref{eq:regprog} gives
$O(k^{-1/6}\sqrt{\log k})$, whereas $\eta_k=k^{-1/4}$ gives
$O(k^{-1/24})$; the latter factor is still about $0.83$ at $k=100$.
These bounds do not compare the policies' finite-horizon regret.
For stage two, $\varepsilon_k=k^{-1/4}$ satisfies the required
$\varepsilon_k\to0$ and $\delta_k/\varepsilon_k\to0$.
\end{remark}

\begin{proof}
Choose $k_1$ so that for $k\ge k_1$: on $A_k$ the empirical
structures agree, $\widehat m_k=\mstar$ and
$\widehat\Eset_k=\Eset$ (proof of Lemma~\ref{lem:ident});
$\eta_k<\bar\eta$; $NK\kappa_k\ge\rho^\tau_{\min}+1$; and on
$A_k\cap B_k$ the quota caps and noise floors are inactive with
$\|\widehat q_k-q\|_\infty\le C\delta_k$ (proof of
Lemma~\ref{lem:plugin}).

\emph{Step 0: target drift.} On $A_k\cap B_k$ with $k\ge k_1$ the
empirical structure is the true one, every count consulted at the
solve step is at least $c_1k$, and the off-stable gaps and the
quotas are within $C\delta_k$ of $(\Delta,q)$ (proof of
Lemma~\ref{lem:plugin}); enlarging $k_1$ so that these data lie in
the neighborhood on which the rule is Lipschitz gives
$\|\tau_k-\tau\|\le C\delta_k$. Hence for $v$ in any fixed ball,
$|\Phi_{\tau_k}(v)-\Phi_\tau(v)|
=|\langle\tau-\tau_k,\,2v-\tau_k-\tau\rangle|\le C\delta_k$, and
since $\eta_k\le1$ the empirical regularized objective
$\widehat J_k(z):=\widehat c_k^\top z
+\eta_k\Phi_{\tau_k}(z^{\mathrm{off}})$ differs on that ball from
$\widehat c_k^\top z+\eta_k\Phi_\tau(z^{\mathrm{off}})$ by at most
$C\delta_k$. For a constant rule, $\tau_k\equiv\tau$ and this step
is void.

\emph{Step 1: constant mass on regular epochs.} Let the proxy
$\widetilde z_k$ be the minimal-$\rho$ completion of $y^\tau$ topped
up to the empirical quotas: for each $(i,a)\in\Eset$ add mass
$(\widehat q_{k,i,a}-y^\tau_{i,a})_+\le(\widehat q_{k,i,a}
-q_{i,a})_+\le C\delta_k$ on a fixed completion matching with
$m(i)=a$, using $y^\tau\ge q$. The proxy is feasible for
\eqref{eq:regprog}, its $\rho$ being at most
$\rho^\tau_{\min}+CNK\delta_k\le NK\kappa_k$, and
$\widehat J_k(\widetilde z_k)\le C_0$ for a constant $C_0$: the
costs are bounded on $A_k$ and, by Step~0,
$\Phi_{\tau_k}(\widetilde z^{\mathrm{off}}_k)\le
\Phi_\tau(\widetilde z^{\mathrm{off}}_k)+C\delta_k\le
\Phi_\tau(y^\tau)+C\delta_k$. On $A_k$ the empirical surrogate cost
of every non-stable matching is at least $d_\theta/2$ by (A3). Take a
cover decomposition of the
minimal-$\rho$ solution $\widehat z_k$; it has no
$\widehat m_k$-component, since deleting positive mass on
$\widehat m_k$ leaves the off-stable block, hence the regularized
objective, unchanged while decreasing $\rho$. Therefore
$\widehat\rho_k=\sum_{m\neq\widehat m_k}\widehat x_{k,m}$, and
optimality of $\widehat z_k$ against the proxy together with
$\Phi_\tau\ge0$ gives
\[
\frac{d_\theta}{2}\,\widehat\rho_k
\;\le\;\widehat c_k^\top\widehat z_k
\;\le\;\widehat J_k(\widehat z_k)
\;\le\;\widehat J_k(\widetilde z_k)\;\le\;C_0,
\qquad\text{so }\ \widehat\rho_k\le\frac{2C_0}{d_\theta}.
\]
All points
appearing below therefore lie in a fixed box, on which the empirical
and true objectives differ by at most $C\delta_k$ and both are
Lipschitz.

\emph{Step 2: Hoffman repair.} On $A_k\cap B_k$ with $k\ge k_1$ the capped
empirical and true feasible systems consist of the \emph{same}
constraint matrix (flow rows, quota rows indexed by $\Eset$, the cap
row, nonnegativity) with right-hand sides differing by at most
$C\delta_k$ (only the quota entries differ). By Hoffman's error
bound \citep{hoffman1952approximate}, with the constant
$H_{\mathcal P}$ of this fixed matrix (distinct from the constant $H$
of Lemma~\ref{lem:projsel}, which refers to the single cost
inequality), there is a point $\bar z_k$ of the true capped
feasible set with $\|\bar z_k-\widehat z_k\|\le CH_{\mathcal P}\delta_k$.

\emph{Step 3: objective transfer.} Write
$J_k(z):=c^\top z+\eta_k\Phi_\tau(z^{\mathrm{off}})$ for the true
regularized objective; $J_k$ and $\widehat J_k$ depend on $z$ only
through $z^{\mathrm{off}}$ and differ by at most $C\delta_k$ on the
box (costs on $A_k\cap B_k$, targets by Step~0). Chaining optimality of
$\widehat z_k$ against the proxy,
\[
J_k(\bar z_k)\;\le\;\widehat J_k(\widehat z_k)+C\delta_k
\;\le\;\widehat J_k(\widetilde z_k)+C\delta_k
\;\le\;J_k\bigl(\text{completion of }y^\tau\bigr)+C\delta_k,
\]
that is, $\psi_{\eta_k}(\bar z_k^{\mathrm{off}})
\le\psi_{\eta_k}(y^\tau)+C\delta_k$ with
$\bar z_k^{\mathrm{off}}\in Y_{NK\kappa_k}(\theta)$.

\emph{Step 4: exact selection and strong convexity.} By
Lemma~\ref{lem:projsel} ($\eta_k<\bar\eta$,
$NK\kappa_k\ge\rho^\tau_{\min}$), $y^\tau$ minimizes
$\psi_{\eta_k}$ over $Y_{NK\kappa_k}(\theta)$. The function
$\psi_{\eta_k}$ is $2\eta_k$-strongly convex, so first-order
optimality over the convex set gives
$\psi_{\eta_k}(y)-\psi_{\eta_k}(y^\tau)
\ge\eta_k\|y-y^\tau\|^2$ for every $y\in Y_{NK\kappa_k}(\theta)$,
whence $\|\bar z_k^{\mathrm{off}}-y^\tau\|
\le\sqrt{C\delta_k/\eta_k}$; adding the repair distance of Step 2
proves the display. Almost-sure convergence follows since
$\delta_k/\eta_k=k^{-1/3}\log(k+1)\to0$,
$\Prob\bigl((A_k\cap B_k)^c\bigr)$ is summable by
\eqref{eq:badprob} and the proof of Lemma~\ref{lem:plugin}, and
$k_1$ is deterministic.
\end{proof}

\paragraph{Proof of Theorem~\ref{thm:frontier-achieve}.}
Identification and repair are those of Lemma~\ref{lem:ident},
unchanged. \emph{Tracking:} the proof of Lemma~\ref{lem:track}
applies with $z^\circ$ replaced by $y^\tau$. On
$A_k\cap B_k\cap\{k\ge k_1\}$ the off-stable marginals are bounded
by the constant of Step 1 and converge to $y^\tau$
(Lemma~\ref{lem:regplugin}), giving the dominated Ces\`aro limit.
On the complement, epoch $k$ contributes at most
$(b_k-b_{k-1})\widehat z_{k,p}+O(NK)
\le(b_k-b_{k-1})NK\kappa_k+O(NK)=O(k^2)$ to the pair count
(the cover-feasibility Lemma~\ref{lem:capfeas} and
$b_k-b_{k-1}=O(k)$), and
\[
\sum_kk^2\Bigl(Ce^{-c_\theta k}
+CNK\,k^{2/3}e^{-c_1k^{1/3}/2}\Bigr)<\infty,
\]
an $O(1)$ total in expectation.
Hence $\E[N^{\mathrm{exp}}_{i,a}(T)]=y^\tau_{i,a}\log T+o(\log T)$
for every off-stable pair. \emph{Certification:} the analysis of
Lemma~\ref{lem:certify} applies verbatim; its supply argument uses
only $\widehat z_k\ge\widehat q_k$ on the eligible pairs, which
holds because the selected schedule is feasible for the empirical
quotas, exactly as for $\pi_s$ in
Appendix~\ref{app:c4-frontier33}. \emph{Accounting:} as in
Lemma~\ref{lem:account},
$\E[N_{i,a}(T)]=y^\tau_{i,a}\log T+o(\log T)$, so
$R_i(T)/\log T\to\sum_a\Delta_{i,a}y^\tau_{i,a}$ and, since
$y^\tau\in F_w(\theta)$ is cost-optimal,
$R_w(T)/\log T\to c_{\mathrm{off}}^\top y^\tau=C_w(\theta)$.
\emph{Uniform goodness:} Lemma~\ref{lem:ug}
(Appendix~\ref{app:c4-strict}) applies to $\pi_{w,\Psi}$ directly.
\hfill$\qed$

\begin{corollary}[Pointwise attainability of the boundary]
\label{cor:frontier-general}
Fix $\theta\in\Thsep$. For every $w>0$ and every
$y\in F_w(\theta)$, the constant rule $\tau^{\mathrm{off}}=y$,
extended by zeros on stable coordinates, gives
$R(T)/\log T\to(\sum_a\Delta_{i,a}y_{i,a})_i$. Consequently every
Pareto-minimal point $r$ of $\UGL(\theta)$ is attained by some policy
$\pi_{w,\tau}$ with $\tau=\tau(\theta,r)$, uniformly good on all of
$\Thstrict$.
\end{corollary}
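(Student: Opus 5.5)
The first claim follows from Theorem~\ref{thm:frontier-achieve} applied to the constant rule $\Psi\equiv\tau$, where $\tau^{\mathrm{off}}=y$ and $\tau$ is zero on stable coordinates. A constant rule is trivially locally Lipschitz and reads only data, so the theorem applies without further hypotheses. It gives uniform goodness on $\Thstrict$, and at $\theta$ it gives $\E_\theta[N_{i,a}(T)]/\log T\to y^\tau_{i,a}(\theta)$, where $y^\tau$ is the Euclidean projection of $\tau^{\mathrm{off}}=y$ onto $F_w(\theta)$. Since $y\in F_w(\theta)$, this projection is $y$ itself. The regret limit $R_i(T)/\log T\to\sum_a\Delta_{i,a}y_{i,a}$ is then the second display of the theorem. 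One point needs care: the off-stable coordinates of $\tau$ must be read with respect to the true $\mstar(\theta)$. This is fine because the rule is constant, and the theorem itself defines $\tau$ through its value at the true structure.

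For the second claim, fix a Pareto-minimal $r\in\UGL(\theta)$. By Proposition~\ref{prop:proper}(ii) there is $w\in\R^N_{++}$ with $\langle w,r\rangle=C_w(\theta)$. Minimality forces $r=G(\theta)x$ for some $x\in\Xset(\theta)$, since any positive $\R^N_+$-part could be dropped to obtain a smaller point of $\UGL(\theta)$. We then pass to marginals. By Theorem~\ref{thm:reduction}, $x$ satisfies the quotas. Setting $z_{i,a}=z_{i,a}(x)$ on off-stable pairs, and filling the stable pairs and $\rho=\sum_m x_m$ as in the ``$\le$'' direction of Theorem~\ref{thm:compact}, gives a feasible point of \eqref{eq:compactlp}. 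Its cost is $\sum_m x_m\langle w,g(m;\theta)\rangle=\langle w,r\rangle=C_w(\theta)$, so the point is cost-optimal. Its off-stable projection $y$ therefore lies in $F_w(\theta)$. Finally, $\sum_a\Delta_{i,a}y_{i,a}=\sum_m x_m g_i(m;\theta)=r_i$, using that stable pairs have zero gap.

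Applying the first claim to this $y$, with $\tau=\tau(\theta,r)$ the zero-extended $y$, yields a policy $\pi_{w,\tau}$ that is uniformly good on $\Thstrict$ and attains $R(T)/\log T\to r$. The only step needing care is the bookkeeping identity between matching allocations and pair marginals, namely that cost and regret depend only on the off-stable marginals. Everything substantive, including exact selection and tracking, is already contained in Theorem~\ref{thm:frontier-achieve}.
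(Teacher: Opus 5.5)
Your proposal is correct and follows the paper's proof essentially step for step. A constant rule is Lipschitz, so Theorem~\ref{thm:frontier-achieve} applies with $y^\tau=y$. For a Pareto-minimal $r$ you take a supporting $w>0$ from Proposition~\ref{prop:proper}(ii), drop the $\R^N_+$-part, and map the cost-optimal allocation $x$ to its pair marginals via the ``$\le$'' direction of Theorem~\ref{thm:compact}, which places them in $F_w(\theta)$.
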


\paragraph{Proof of Corollary~\ref{cor:frontier-general}.}
A constant rule is Lipschitz, so Theorem~\ref{thm:frontier-achieve}
covers the fixed-target policy $\pi_{w,\tau}$. For
$y\in F_w(\theta)$, the choice $\tau^{\mathrm{off}}=y$ projects to
itself, so the theorem yields
$R(T)/\log T\to(\sum_a\Delta_{i,a}y_{i,a})_i$. Now let $r$ be any
Pareto-minimal point of $\UGL(\theta)$. By
Proposition~\ref{prop:proper}(ii) there is $w>0$ with
$\langle w,r\rangle=C_w(\theta)$. Writing $r=G(\theta)x+v$ with
$x\in\Xset(\theta)$ and $v\ge0$, Pareto minimality forces $v=0$,
since $G(\theta)x\in\UGL(\theta)$ lies componentwise below $r$.
Support gives $\langle w,G(\theta)x\rangle=C_w(\theta)$, i.e., $x$
is cost-optimal; its pair marginals with any stable completion are
feasible for \eqref{eq:compactlp} at the same cost (proof of
Theorem~\ref{thm:compact}), so
$y:=(z_{i,a}(x))_{\mathrm{off}}\in F_w(\theta)$ and
$(G(\theta)x)_i=\sum_a\Delta_{i,a}y_{i,a}$. The policy
$\pi_{w,\tau}$ with $\tau^{\mathrm{off}}=y$ attains $r$.
\hfill$\qed$

\begin{remark}[The $s$-split as a target rule]\label{rem:usplit-rule}
On the structure of Example~\ref{ex:33} let $\Psi_s$ map the
empirical quotas $(\widehat q_{12},\widehat q_{13},\widehat q_{23})$
to the pair marginals of the allocation $\widehat x(s)$ of the
variant $\pi_s$ in Algorithm~\ref{alg:policy}, and every other
structure to $\tau=0$; $\Psi_s$ is Lipschitz, the split being built
from sums and a minimum of the quotas. At the knife-edge weight
$w^\ast=(1,0.99,1)$ every family point of
Proposition~\ref{prop:frontier33} is cost-optimal (all segment
points have the same $w^\ast$-cost), so $F_{w^\ast}(\mu^{(3)})$
contains the rule's value at the true quotas, the family point with
$h=200s$, $y=2(1-s)$. Theorem~\ref{thm:frontier-achieve} therefore
gives $R(T)/\log T\to r(202s-2)$ for $\pi_{w^\ast,\Psi_s}$: the
conclusion of Proposition~\ref{prop:frontier-attain}, whose policy
$\pi_s$ reaches it without solving any program.
\end{remark}

\subsection{Proof of Theorem~\ref{thm:closure} and
Corollary~\ref{cor:rigid}}\label{app:c4-closure}

\paragraph{The uncosted policy $\pi_\tau$.}
Fix $\tau\in\R_+^{N\times K}$. The policy $\pi_\tau$ is
$\pi_{w,\tau}$ with the cost term of \eqref{eq:regprog} removed:
step 2 of epoch $k$ solves
\begin{equation}\label{eq:trackprog}
\min\bigl\{\Phi_\tau(z^{\mathrm{off}}):\ (z,\rho)\in
\widehat{\mathcal P}_k,\ \rho\le NK\kappa_k\bigr\},
\qquad \Phi_\tau(y):=\|y-\tau^{\mathrm{off}}\|^2,
\end{equation}
off-stable coordinates taken with respect to $\widehat m_k$, and
takes the minimal-$\rho$ completion of the unique optimal
off-stable block (the feasible set is a polytope and $\Phi_\tau$ is
strictly convex). No weight, surrogate cost or regularization
parameter enters. The cover of Lemma~\ref{lem:capfeas} is feasible
for the constraints of \eqref{eq:trackprog} whatever the objective,
so that lemma applies verbatim: the program is feasible on every
sample path, every solution has $\widehat\rho_k\le NK\kappa_k$, and
the exploration phase of epoch $k$ plays at most
$(b_k-b_{k-1})NK\kappa_k+NK$ rounds besides the replay.

\emph{(i) Convergence.} Fix $\theta\in\Thsep$ and write
$Y_\infty(\theta)=\operatorname{proj}_{\mathrm{off}}\mathcal P(\theta)$
as in Appendix~\ref{app:c4-c4b}, a nonempty closed polyhedron
(Remark~\ref{rem:feasible}). Let $y^\tau_\infty$ be the Euclidean
projection of $\tau^{\mathrm{off}}$ onto $Y_\infty(\theta)$ and
$\rho^\tau_\infty:=\min\{\rho:(z,\rho)\in\mathcal P(\theta),\
z^{\mathrm{off}}=y^\tau_\infty\}$, attained as in
Lemma~\ref{lem:projsel}. Since $Y_\infty(\theta)$ is convex, the
projection inequality reads
\begin{equation}\label{eq:projineq}
\|y-y^\tau_\infty\|^2\;\le\;\Phi_\tau(y)-\Phi_\tau(y^\tau_\infty)
\qquad\text{for all }y\in Y_\infty(\theta).
\end{equation}

\begin{lemma}[Plug-in convergence of the uncosted program]
\label{lem:trackplugin}
Let $\theta\in\Thsep$ and $\delta_k:=k^{-1/3}$. There are a
deterministic index $k_2=k_2(\theta,\tau)$ and a constant $C$ such
that on $A_k\cap B_k$ with $k\ge k_2$,
\[
\bigl\|\widehat z_k^{\mathrm{off}}-y^\tau_\infty\bigr\|\le C\sqrt{\delta_k},
\qquad\text{and}\qquad
\bigl\|\widehat z_k^{\mathrm{off}}-\tau^{\mathrm{off}}\bigr\|\le C\delta_k
\ \text{ if }\tau^{\mathrm{off}}\in Y_\infty(\theta).
\]
In particular $\widehat z_k^{\mathrm{off}}\to y^\tau_\infty$ almost
surely.
\end{lemma}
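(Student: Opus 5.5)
We mirror the proof of Lemma~\ref{lem:regplugin}, with the cost term removed, so that the exact-selection step is replaced by the projection inequality \eqref{eq:projineq}.

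\textbf{Choice of $k_2$ and the proxy.} Choose $k_2$ so that for $k\ge k_2$ three things hold on $A_k$: the empirical structure is $(\mstar,\Eset)$, so the off-stable coordinates are the true ones; the cap satisfies $NK\kappa_k\ge\rho^\tau_\infty+1$; and on $A_k\cap B_k$ the quota caps and noise floors are inactive, with $\|\widehat q_k-q\|_\infty\le C\delta_k$, as in the proof of Lemma~\ref{lem:plugin}.

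Build the proxy $\widetilde z_k$ as the minimal-$\rho$ completion of $y^\tau_\infty$, topped up on dedicated completion matchings by $(\widehat q_{k,i,a}-q_{i,a})_+\le C\delta_k$. This uses $y^\tau_\infty\ge q$ on $\Eset$, which holds because $y^\tau_\infty\in Y_\infty(\theta)$. The proxy is then feasible for \eqref{eq:trackprog}, and $\|\widetilde z_k^{\mathrm{off}}-y^\tau_\infty\|\le C\delta_k$. By optimality of $\widehat z_k$,
\[
\Phi_\tau(\widehat z_k^{\mathrm{off}})\le\Phi_\tau(\widetilde z_k^{\mathrm{off}})\le\Phi_\tau(y^\tau_\infty)+C\delta_k .
\]
This bounds $\widehat z_k^{\mathrm{off}}$ in a fixed ball, because $\Phi_\tau$ is coercive. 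No cost bound is needed here, unlike Step~1 of Lemma~\ref{lem:regplugin}.

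\textbf{Hoffman repair.} The capped empirical and true feasible systems share the same constraint matrix, and their right-hand sides differ only in the quota entries, by at most $C\delta_k$. Hoffman's bound \citep{hoffman1952approximate} therefore gives a point $\bar z_k$ of the true capped set with $\|\bar z_k-\widehat z_k\|\le C\delta_k$. Its off-stable block lies in $Y_\infty(\theta)$. Since $\Phi_\tau$ is Lipschitz on the fixed ball, $\Phi_\tau(\bar z_k^{\mathrm{off}})\le\Phi_\tau(y^\tau_\infty)+C\delta_k$.

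\textbf{Projection step.} Applying \eqref{eq:projineq} gives $\|\bar z_k^{\mathrm{off}}-y^\tau_\infty\|\le\sqrt{C\delta_k}$. Adding the repair distance yields the first bound, $C\sqrt{\delta_k}$.

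The case $\tau^{\mathrm{off}}\in Y_\infty(\theta)$ is sharper, because then $y^\tau_\infty=\tau^{\mathrm{off}}$ and $\Phi_\tau(y)=\|y-y^\tau_\infty\|^2$ exactly. The chain now reads
\[
\|\widehat z_k^{\mathrm{off}}-\tau^{\mathrm{off}}\|^2\le\|\widetilde z_k^{\mathrm{off}}-\tau^{\mathrm{off}}\|^2\le C\delta_k^2,
\]
so $\|\widehat z_k^{\mathrm{off}}-\tau^{\mathrm{off}}\|\le C\delta_k$ directly, without Hoffman. The one point to check is that the proxy's top-up is genuinely $O(\delta_k)$ in norm, which holds since $\tau^{\mathrm{off}}\ge q$ on $\Eset$.

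\textbf{Almost-sure convergence.} Both $\Prob(A_k^c)$ and $\Prob(B_k^c)$ are summable, by \eqref{eq:badprob} and the proof of Lemma~\ref{lem:plugin}, and $k_2$ is deterministic. Borel--Cantelli then gives $\widehat z_k^{\mathrm{off}}\to y^\tau_\infty$ almost surely.

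\textbf{Main obstacle.} The delicate step is the Hoffman repair. Its constant must depend only on the fixed constraint matrix, including the cap row, and the repaired point must satisfy the true cap. I would handle the cap by observing that, for large $k$, the cap is slack at the repaired point: the proxy bound confines $\widehat z_k$ to a fixed ball, and the minimal-$\rho$ choice then keeps $\rho$ bounded. Failing that, I would drop the cap row and repair in the uncapped system, which suffices because only $Y_\infty(\theta)$ enters \eqref{eq:projineq}.
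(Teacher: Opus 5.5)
Your proposal is correct and follows the paper's proof essentially step for step: the same choice of $k_2$, the same topped-up proxy, optimality against it (giving the $C\delta_k$ bound directly for feasible targets and a fixed ball otherwise), Hoffman repair, the projection inequality \eqref{eq:projineq}, and Borel--Cantelli. Your concern about the cap row is unnecessary, though your fallback of dropping the cap row is also valid. Hoffman's constant depends only on the constraint matrix and not on the right-hand side, and the cap level $NK\kappa_k$ is common to both systems, so the repaired point lies in the true capped set automatically, exactly as the paper uses it.
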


\begin{proof}
Choose $k_2$ so that for $k\ge k_2$: on $A_k$,
$\widehat m_k=\mstar$ and $\widehat\Eset_k=\Eset$ (proof of
Lemma~\ref{lem:ident}); $NK\kappa_k\ge\rho^\tau_\infty+1$; and on
$A_k\cap B_k$ the quota caps and noise floors are inactive with
$\|\widehat q_k-q\|_\infty\le C\delta_k$ (proof of
Lemma~\ref{lem:plugin}).

\emph{Proxy.} As in Step~1 of the proof of
Lemma~\ref{lem:regplugin}, let $\widetilde z_k$ be a minimal-$\rho$
completion of $y^\tau_\infty$ topped up to the empirical quotas
along fixed completion matchings. Because $y^\tau_\infty\ge q$ on
$\Eset$ (it is the off-stable block of a point of
$\mathcal P(\theta)$), the mass added for $(i,a)\in\Eset$ is at most
$(\widehat q_{k,i,a}-q_{i,a})_+\le C\delta_k$; hence $\widetilde z_k$
is feasible for \eqref{eq:trackprog}, its $\rho$ being at most
$\rho^\tau_\infty+CNK\delta_k\le NK\kappa_k$, and
$\|\widetilde z_k^{\mathrm{off}}-y^\tau_\infty\|\le C\delta_k$.

\emph{Feasible target.} If $\tau^{\mathrm{off}}\in Y_\infty(\theta)$
then $y^\tau_\infty=\tau^{\mathrm{off}}$, and optimality of
$\widehat z_k$ against the proxy gives
$\|\widehat z_k^{\mathrm{off}}-\tau^{\mathrm{off}}\|
=\Phi_\tau(\widehat z_k^{\mathrm{off}})^{1/2}
\le\Phi_\tau(\widetilde z_k^{\mathrm{off}})^{1/2}\le C\delta_k$.

\emph{General target.} Optimality against the proxy gives
$\Phi_\tau(\widehat z_k^{\mathrm{off}})
\le\Phi_\tau(\widetilde z_k^{\mathrm{off}})
\le\Phi_\tau(y^\tau_\infty)+C\delta_k$, so $\widehat z_k^{\mathrm{off}}$
lies in a fixed ball on which $\Phi_\tau$ is Lipschitz. The Hoffman
repair of Step~2 of the proof of Lemma~\ref{lem:regplugin} applies
unchanged (same constraint matrix, right-hand sides within
$C\delta_k$) and yields $\bar z_k$ in the true capped feasible set
with $\|\bar z_k-\widehat z_k\|\le C\delta_k$, whence
$\Phi_\tau(\bar z_k^{\mathrm{off}})\le\Phi_\tau(y^\tau_\infty)
+C\delta_k$ with $\bar z_k^{\mathrm{off}}\in Y_{NK\kappa_k}(\theta)
\subseteq Y_\infty(\theta)$. By \eqref{eq:projineq},
$\|\bar z_k^{\mathrm{off}}-y^\tau_\infty\|^2\le C\delta_k$, and
adding the repair distance proves the display. Almost-sure
convergence follows because $\Prob((A_k\cap B_k)^c)$ is summable
(\eqref{eq:badprob} and the proof of Lemma~\ref{lem:plugin}) and
$k_2$ is deterministic.
\end{proof}

The remainder of part~(i) is the proof of
Theorem~\ref{thm:frontier-achieve} with $y^\tau$ replaced by
$y^\tau_\infty$: tracking by Lemma~\ref{lem:track}, the off-stable
marginals being bounded on the good events by the fixed ball above
and convergent to $y^\tau_\infty$, while bad epochs cost $O(k^2)$
rounds with summable probability by Lemma~\ref{lem:capfeas};
certification by Lemma~\ref{lem:certify}, whose supply argument
uses only the quota constraints of \eqref{eq:trackprog}; accounting
by Lemma~\ref{lem:account}; and uniform goodness on $\Thstrict$ by
Lemma~\ref{lem:ug}, which lists $\pi_\tau$. Hence
$\E_\theta[N_{i,a}(T)]/\log T\to y^\tau_{\infty,i,a}$ for every
off-stable pair and
$R_i(T;\theta)/\log T\to\sum_a\Delta_{i,a}y^\tau_{\infty,i,a}$.

\emph{(ii) The attainable set.} If $r\in\mathcal A(\theta)$ is
attained by a policy $\pi$, then $R(T;\theta)/\log T\to r$ along the
full sequence and Proposition~\ref{prop:region} gives
$r\in G(\theta)\Xset(\theta)$. Conversely, let
$x\in\Xset(\theta)$, let $z:=z(x)$ be its pair marginals
($z_{i,a}=\sum_{m\neq\mstar:\,m(i)=a}x_m$) and
$\rho:=\sum_{m\neq\mstar}x_m$. Then $(z,\rho)\in\mathcal P(\theta)$:
every matching gives each player exactly one arm and each arm at
most one player, so the rows of $z$ sum to $\rho$ and its columns
to at most $\rho$, and $z\ge q$ on $\Eset$ is
Theorem~\ref{thm:reduction}. Taking $\tau:=z$ in part~(i),
$\tau^{\mathrm{off}}\in Y_\infty(\theta)$ is tracked exactly and
\[
\frac{R_i(T;\theta)}{\log T}\;\longrightarrow\;
\sum_{a\neq\mstar(i)}\Delta_{i,a}\,z_{i,a}
=\sum_{m\neq\mstar}x_m\bigl(\mu_{i,\mstar(i)}-\mu_{i,m(i)}\bigr)
=\bigl(G(\theta)x\bigr)_i ,
\]
regrouping the sum over $m$ by the arm $m(i)$ and using
$\Delta_{i,\mstar(i)}=0$. Since $\pi_\tau$ plays complete matchings
and is uniformly good on $\Thstrict$, $G(\theta)x\in\mathcal A(\theta)$.
Thus $\mathcal A(\theta)=G(\theta)\Xset(\theta)$, the image of the
polyhedron $\Xset(\theta)$ under a linear map and hence a
polyhedron (Lemma~\ref{lem:polyhedral}), and
$\UGL(\theta)=\mathcal A(\theta)+\R^N_+$ by \eqref{eq:UGL}.

\emph{Pareto-minimal sets.} For any $S\subseteq\R^N$,
$\mathrm{Pmin}(S+\R^N_+)=\mathrm{Pmin}(S)$: if
$u=s+v\in\mathrm{Pmin}(S+\R^N_+)$ with $s\in S$ and $v\ge0$, then
$s\le u$ forces $s=u\in S$, and any point of $S$ strictly below $u$
would lie in $S+\R^N_+$; conversely, if $u\in\mathrm{Pmin}(S)$ and
$s+v\le u$ with $s\in S$, $v\ge0$, then $s\le u$ forces $s=u$ and
then $v=0$. With $S=\mathcal A(\theta)$ this is
$\mathrm{Pmin}(\mathcal A(\theta))=\mathrm{Pmin}(\UGL(\theta))$,
the Graves--Lai lower boundary; by
Proposition~\ref{prop:proper}(i) every attainable vector moreover
lies weakly above an attainable Pareto-minimal point.
\hfill$\qed$

\paragraph{Proof of Corollary~\ref{cor:rigid}.}
The inequalities defining $\Xset(\theta)$ in
Theorem~\ref{thm:reduction} have nonnegative coefficients, so the
recession cone of $\Xset(\theta)$ is the whole orthant
$\R_+^{\Mset\setminus\{\mstar\}}$, and the recession cone of its
linear image $\mathcal A(\theta)=G(\theta)\Xset(\theta)$ is
$G(\theta)\R_+^{\Mset\setminus\{\mstar\}}
=\mathrm{cone}\{g(m;\theta):m\neq\mstar\}$. A closed convex set
$S$ satisfies $S+\R^N_+=S$ if and only if
$\R^N_+\subseteq\mathrm{rec}(S)$; applied to the polyhedron
$\mathcal A(\theta)$ this is the stated equivalence. If $K>N$,
some arm $a_0$ is unmatched by $\mstar$, and the matching that
sends player $i$ to $a_0$ and every other player to its stable arm
has gap vector $\Delta_{i,a_0}e_i$ with $\Delta_{i,a_0}>0$
(separation and row strictness), so every $e_i$ lies in the cone.
For $N=1$, the standing assumption $K\ge2$ puts us in this case.
If $K=N$, then $N\ge2$, and every complete matching permutes the
stable arms. Any $m\neq\mstar$ changes at least two assignments,
so by separation and row strictness its gap vector has at least
two strictly positive coordinates and all remaining coordinates
nonnegative. Every nonzero nonnegative combination of these
vectors still has at least two strictly positive coordinates;
hence no $e_i$ lies in the cone. This proves the equivalence with
$K>N$.
For $\mu^{(3)}$, the five gap vectors listed in
Appendix~\ref{app:c2-frontier} show directly that no $e_i$ lies in
the cone: a nonnegative combination with vanishing second
coordinate can use only $m_{\mathrm E}$, whose third coordinate is
$1$, and one with vanishing first coordinate can use only
$m_{\mathrm C}$, whose second and third coordinates are both
positive. For the explicit witness,
$(22,20,202)=G(\mu^{(3)})x$ with $x_{m_{\mathrm B}}=200$,
$x_{m_{\mathrm E}}=2$ (the parallel end, $s=1$ in
Proposition~\ref{prop:frontier-attain}). If
$(22,20,r_3)=G(\mu^{(3)})x$ for some $x\in\Xset(\mu^{(3)})$, then
$r_1=0.1(x_{\mathrm A}+x_{\mathrm B})+(x_{\mathrm D}+x_{\mathrm E})=22$
together with $x_{\mathrm A}+x_{\mathrm B}\ge200$ and
$x_{\mathrm D}+x_{\mathrm E}\ge2$ forces both quotas tight, and then
$r_2=x_{\mathrm A}+0.1x_{\mathrm B}+0.1x_{\mathrm C}+x_{\mathrm D}
=200-0.9x_{\mathrm B}+0.1x_{\mathrm C}+x_{\mathrm D}
\ge220-x_{\mathrm B}\ge20$, using $x_{\mathrm C}\ge200-x_{\mathrm B}$
and $x_{\mathrm B}\le200$, with equality only if
$x_{\mathrm B}=200$ and $x_{\mathrm C}=x_{\mathrm D}=0$; then
$x_{\mathrm A}=0$, $x_{\mathrm E}=2$ and
$r_3=x_{\mathrm B}+0.01(x_{\mathrm C}+x_{\mathrm D})+x_{\mathrm E}=202$.
Hence $(22,20,202+\varepsilon)\notin\mathcal A(\mu^{(3)})$ for every
$\varepsilon>0$, although it lies in $\UGL(\mu^{(3)})$.
\hfill$\qed$

\subsection{Uniform goodness on
\texorpdfstring{$\Thstrict$}{Theta-strict}}\label{app:c4-strict}

This subsection proves the uniform-goodness claims of Theorems~\ref{thm:achieve}, \ref{thm:frontier-achieve} and~\ref{thm:closure}(i) and Proposition~\ref{prop:frontier-attain} in a single statement. The
standing conventions of the appendix are suspended: the instance is
an arbitrary $\lambda\in\Thstrict$, neither top-choice separation
nor Assumption~\ref{ass:unique} is assumed, and constants may
depend on $(\lambda,w,s,\Psi,\tau,\alpha,\xi,N,K)$.

Write $\mstar=\mstar(\lambda)$ for the serial-dictatorship matching
of $\lambda$, $F_i(\lambda)$ for the free set at level $i$ along
$\mstar$, and recall from \S\ref{sec:information} that the eligible
set $\Eset(\lambda)$ and the gaps
$\Delta_{i,a}(\lambda)=\lambda_{i,\mstar(i)}-\lambda_{i,a}$,
$(i,a)\in\Eset(\lambda)$, are defined for every strict instance,
with $\Delta_{i,a}(\lambda)>0$ because $\mstar(i)$ is by row
strictness the unique maximizer of row $i$ over $F_i(\lambda)$;
these \emph{free-set} gaps are positive even when $\mstar(i)$ is
not the global maximizer of its row. Let $\delta_\lambda>0$ be half
the smallest absolute difference between two entries in the same
row of $\lambda$, and
$\delta''_\lambda:=\min\bigl(\delta_\lambda/2,\
\min_{(i,a)\in\Eset(\lambda)}\Delta_{i,a}(\lambda)/4\bigr)$. Since
$K\ge N$ and $K\ge2$, level~$1$ has at least two free arms and hence
a free non-stable one, so $\Eset(\lambda)\neq\emptyset$ for every
$\lambda\in\Thstrict$ and both minima are over nonempty sets.

\begin{lemma}[Uniform goodness on $\Thstrict$]\label{lem:ug}
Fix $\alpha>\xi>3$ and $K\ge2$, $w\in\R^N_{++}$, $s\in[0,1]$, a
target rule $\Psi$ and $\tau\in\R_+^{N\times K}$, and let $\pi$ be
any of $\pi_w$, $\pi_s$, $\pi_{w,\Psi}$, $\pi_\tau$. For every $\lambda\in\Thstrict$:
\begin{enumerate}
  \item[(i)] prefix, repair, exploration and replay rounds total
  $O\bigl((\log T)^{3/2}\bigr)$ on every sample path;
  \item[(ii)] the expected number of wrong certified exploitation
  rounds is $O(1)$;
  \item[(iii)] the expected number of estimation rounds is
  $o(\log T)$.
\end{enumerate}
Consequently
$S(T;\lambda)=O\bigl((\log T)^{3/2}\bigr)=o(T^{\alpha'})$ for every
$\alpha'>0$, and Lemma~\ref{lem:nocancel} (direction
(ii)$\Rightarrow$(iii)) yields
$|R_i(T;\lambda)|=o(T^{\alpha'})$ for every player: each of the
four policies is componentwise uniformly good on $\Thstrict$.
\end{lemma}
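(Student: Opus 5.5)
Every round is either a prefix, repair, exploration, replay, certified-exploitation or estimation round, so $S(T;\lambda)$ is at most the sum of the counts in (i)--(iii) plus correctly certified rounds, which play $\mstar(\lambda)$ and contribute nothing. Once (i)--(iii) hold, $S(T;\lambda)=O((\log T)^{3/2})$ follows, and Lemma~\ref{lem:nocancel} converts this into $|R_i(T;\lambda)|=o(T^{\alpha'})$. I would prove the three items in order, working at an arbitrary $\lambda\in\Thstrict$. Every argument uses only the free-set gaps $\Delta_{i,a}(\lambda)>0$ on $\Eset(\lambda)$ and the row separation $\delta_\lambda$, never top-choice separation or canonical uniqueness.

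\emph{Step (i), pathwise.} The prefix has deterministic length $T_{k_0-1}$. Repair through epoch $k$ is at most $NK\lceil\sqrt{b(T_k)}\rceil$ rounds. The exploration rounds of epoch $k$ are at most $(b_k-b_{k-1})NK\kappa_k+NK$ on every path: Lemma~\ref{lem:mass} gives this for $\pi_w$, Lemma~\ref{lem:capfeas} for $\pi_{w,\Psi}$ and $\pi_\tau$, and the definition gives it for $\pi_s$, since the $s$-split or dedicated fallback has mass at most $\sum\widehat q\le NK\kappa_k$. Replay is $\lceil(b_k-b_{k-1})k\rceil$ rounds. With $b_k-b_{k-1}=O(k)$ and $\kappa_k=k$, epoch $k$ contributes $O(k^2)$ rounds. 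Summing over epochs up to $K(T)\sim\sqrt{\log T}$ gives $O((\log T)^{3/2})$. These bounds depend only on the caps, not on the data or the instance.

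\emph{Step (ii).} Repeat part (a) of Lemma~\ref{lem:certify} verbatim at $\lambda$. A wrong certified candidate has an earliest deviation level $i$, where both arms are free and $\mstar(\lambda)(i)$ is the strict maximizer over $F_i(\lambda)$. Hence some pair must leave its $c(t)$-radius at its current count. Lemma~\ref{lem:peel} then bounds the probability of this by $C/(t(\log t)^{\xi-2})$, which is summable.

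\emph{Step (iii), the main obstacle.} I would follow Lemma~\ref{lem:certify}(b),(c) with $\delta''_\lambda$ in place of $\delta''$. Type-B rounds, charged to deviated pairs, are $O(1)$ in expectation. Shadow rounds are $O(1)$ in expectation by the pathwise per-epoch cap $CNKk^2$ times the deviation probability $e^{-ck}$ at counts $\ge c_1k$. The difficulty is the equilibrium rounds. The demand--supply argument there used separation only through the true-instance quotas, and at $\lambda$ the solve step still meets the empirical quotas on $\widehat\Eset_k=\Eset(\lambda)$ eventually. Each variant enforces $\widehat z_k\ge\widehat q_k$ by feasibility, and the cap $\kappa_k=k$ eventually exceeds $q_{i,a}(\lambda)$, so supply grows like $q_{i,a}(\lambda)\,b(t)(1-\gamma)$. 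The replay again gives $\Theta(k\,b(T_k))$ stable-side samples once $\widehat m_j=\mstar(\lambda)$ for $j>k_\gamma$. I expect the delicate point to be checking that the surrogate costs (positive parts) and an unbounded-looking target $\Psi$ never break quota feasibility or the cap. They do not, because feasibility comes from the constraints and not from the objective. With that settled, the ratchet bound gives $N^{\mathrm{eq,late}}\le C'\gamma\log T+O((\log\log T)^2)$ on $\mathcal G_T$. The early-epoch and off-$\mathcal G_T$ rounds are bounded pathwise by $O(\log T)$ and contribute $o(\log T)$ in expectation. Letting $\gamma\downarrow0$ gives $o(\log T)$, which completes (iii) and hence the lemma.
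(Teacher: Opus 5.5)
Your proposal is correct and follows the paper's proof essentially step by step. The same two shared properties (feasibility for the empirical quotas and the pathwise mass cap $NK\kappa_k$) give the $O((\log T)^{3/2})$ count in (i), and part (a) of Lemma~\ref{lem:certify} transfers to give (ii). Item (iii) is the same type-B/shadow/equilibrium split with the late-epoch demand--supply ratchet; the paper writes it out in full at $\lambda$, including the clean-ordering claim that with no deviated entry $\widetilde m_t=\mstar(\lambda)$ and $\widehat\Eset_k=\Eset(\lambda)$ without separation.

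One justification needs tightening. Early-epoch equilibrium rounds can occur on every path, so a pathwise $O(\log T)$ bound does not make their expectation $o(\log T)$. You need the local type-C cap evaluated at $T_{2k_\gamma}$, which gives the pathwise bound $NK(8c(T_{2k_\gamma})/\delta_\lambda^2+1)=O((\log\log T)^2)$. Only the rounds off $\mathcal G_{\lambda,T}$ are disposed of by the product $O(\log T)\cdot(\log T)^{-2}$.
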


\begin{proof}
\emph{Two properties shared by the four step-2 variants.} First,
the schedule of epoch $k$ is feasible for the empirical quotas,
$\widehat z_{k}\ge\widehat q_k$ on $\widehat\Eset_k$: for $\pi_w$
the quotas are constraints of the solved program; for $\pi_s$ the
split meets them with equality by construction, as does the
dedicated fallback (Appendix~\ref{app:c4-frontier33}); for $\pi_{w,\Psi}$ and $\pi_\tau$ they are constraints of the capped
programs \eqref{eq:regprog} and \eqref{eq:trackprog}. Second, the
scheduled mass is bounded pathwise by $NK\kappa_k$: for $\pi_w$
this is the capped-mass Lemma~\ref{lem:mass}, for $\pi_{w,\Psi}$
and $\pi_\tau$ the cover-feasibility Lemma~\ref{lem:capfeas}
(whose cover is feasible whatever the objective), and for $\pi_s$ both the split and the
fallback assign total mass at most
$\sum_{(i,a)\in\widehat\Eset_k}\widehat q_{i,a}\le NK\kappa_k$. All
phases other than step 2, in particular the certificate and the
estimation-round sampling rule of phase 4, are identical across the
variants, so one argument covers all four.

\emph{(i) Pathwise rounds.} The prefix is a deterministic $O(1)$.
Repair only ever tops each pair count up from $\sqrt{b_{k-1}}$ to
$\sqrt{b_k}$, hence totals $O(NK\sqrt{\log T})$ rounds. In epoch
$k$, exploration plays at most $(b_k-b_{k-1})NK\kappa_k+NK$ rounds
by the mass bounds above, and the replay
$\lceil(b_k-b_{k-1})k\rceil$ more; with $b_k-b_{k-1}=O(k)$,
$\kappa_k=k$ and $O(\sqrt{\log T})$ epochs up to $T$, the total is
$O\bigl(NK(\log T)^{3/2}\bigr)$, with no condition on the data or
the instance.

\emph{(ii) Wrong certified exploitation.} Part (a) of the proof of
Lemma~\ref{lem:certify} transfers verbatim. If the certified
candidate $\widetilde m_t$ deviates from $\mstar$ earliest at level
$i$, then $\widetilde m_t(j)=\mstar(j)$ for $j<i$, so the level-$i$
free set is $F_i(\lambda)$ and contains both $\mstar(i)$ and
$\widetilde a:=\widetilde m_t(i)$; since $\mstar(i)$ is the unique
maximizer of row $i$ over $F_i(\lambda)$, the true gap opposes the
certificate, and a certified pass forces some entry to deviate by
its $c(t)$-radius at its current count. The peeling bound of
Lemma~\ref{lem:peel} is instance-independent, and
$\sum_tC/(t(\log t)^{\xi-2})=O(1)$ as before.

\emph{(iii) Estimation rounds.} This is the one place where the
instance enters through the policy's decisions rather than through
concentration alone, so we give the argument in full rather than by
transfer. Call an entry $(j,b)$ \emph{deviated at count $n$} if
$|\widehat\mu^{(n)}_{j,b}-\lambda_{j,b}|\ge\delta''_\lambda$ and
\emph{clean} otherwise; ``deviated at round $t$'' refers to its
current count $N_{j,b}(t)$. Two facts about the phase order of
Algorithm~\ref{alg:policy} are used throughout.
\begin{enumerate}
  \item[(O1)] Within epoch $k$ the phases run repair, solve,
  explore (schedule, then replay of $\widehat m_k$), and
  exploitation/estimation, in this order. Repair tops every pair
  count up to $\sqrt{b_k}\ge c_1k$, and counts never decrease, so
  \emph{every} estimate consulted at the solve step of epoch $k$,
  and at every estimation round of epoch $k$, has count at least
  $c_1k$.
  \item[(O2)] Every estimation round of epoch $k$ takes place after
  the exploration and replay phases of epochs $k_0,\dots,k$ have
  been executed in full. For $a\neq\widehat m_j(i)$, the exploration
  phase of epoch $j$ plays pair $(i,a)$ at least
  $(b_j-b_{j-1})\widehat z_{j,(i,a)}$ times, because increment
  execution rounds the count of every matching up; discarding the
  empirical stable component does not change these marginals. The replay
  plays $\widehat m_j$ exactly $\lceil(b_j-b_{j-1})\,j\rceil$ times.
\end{enumerate}

\emph{Claim (clean ordering).} Suppose that at some moment no entry
is deviated. Then $\mstar(\widehat\mu)=\mstar$, the empirical free
set at every level $i$ equals $F_i(\lambda)$, the empirical eligible
set equals $\Eset(\lambda)$, and for every $(i,a)\in\Eset(\lambda)$
\[
\bigl|(\widehat\mu_{i,\mstar(i)}-\widehat\mu_{i,a})
-\Delta_{i,a}(\lambda)\bigr|<2\delta''_\lambda
\le\Delta_{i,a}(\lambda)/2 .
\]
\emph{Proof of the claim.} Two entries in the same row of $\lambda$
differ by at least $2\delta_\lambda$, while clean estimates err by
less than $\delta''_\lambda\le\delta_\lambda/2$ each, so every
within-row empirical comparison has the sign of the true one.
Serial dictatorship is computed level by level: at level~$1$ the
free set is $[K]$ for both $\lambda$ and $\widehat\mu$, and the
empirical maximizer is the true one; if the first $i-1$ assignments
agree, the level-$i$ free sets agree and so do the maximizers over
them. The eligible set consists of the free non-stable pairs
$\{(i,a):a\in F_i\setminus\{\mstar(i)\}\}$ and is therefore
determined by the matching and the free sets. The displayed bound
is the triangle inequality together with
$\delta''_\lambda\le\Delta_{i,a}(\lambda)/4$. \hfill$\square$

Applied at the solve step of epoch $k$, the claim gives
$\widehat m_k=\mstar$ and $\widehat\Eset_k=\Eset(\lambda)$; applied
at an exploitation/estimation round $t$, it gives
$\widetilde m_t=\mstar$. Neither use involves top-choice separation
or Assumption~\ref{ass:unique}.

\emph{Classification.} Classify every estimation round by the two
entries of its sampled comparison: \emph{type B} if one of them is
deviated at that round, \emph{type C} otherwise.

\emph{Type B.} The bookkeeping of part (b) of the proof of
Lemma~\ref{lem:certify} is purely combinatorial: a pair
participates in at most $2K$ comparisons, and while it is deviated
at count $n$ each comparison produces at most $n+1$ type-B rounds
that sample the other pair, plus one that samples the pair itself.
Charging each type-B round to its deviated pair at its current
count and using $\Prob(\text{deviated at count }n)\le
2e^{-n\delta''^2_\lambda/2}$,
\[
\E[\text{type-B rounds}]\le
NK\sum_{n\ge1}\bigl(2K(n+1)+1\bigr)\,2e^{-n\delta''^2_\lambda/2}
=O(1).
\]

\emph{Type C: the local cap.} Let the sampled comparison be at level
$i$ between the candidate's entry $\widetilde a=\widetilde m_t(i)$
and a free arm $a\neq\widetilde a$ of the candidate's level-$i$ free
set. Whatever that free set is, $\widetilde a$ is the empirical
argmax over it, so $\widehat\mu_{i,\widetilde a}>\widehat\mu_{i,a}$;
both entries are clean, and the two true means differ by at least
$2\delta_\lambda$, so they are ordered as the empirical ones and
$\widehat\mu_{i,\widetilde a}-\widehat\mu_{i,a}\ge
2\delta_\lambda-2\delta''_\lambda\ge\delta_\lambda$. The
certificate for this comparison therefore passes as soon as both
counts reach $8c(t)/\delta_\lambda^2$, and since the round samples
the smaller count, every type-C round increments a count that is
below $8c(t)/\delta_\lambda^2$. As counts never decrease and $c$ is
increasing, the type-C rounds of epoch $k$ number at most
$NK(8c(T_k)/\delta_\lambda^2+1)\le CNKk^2$ and the type-C rounds
up to horizon $T$ at most $NK(8c(T)/\delta_\lambda^2+1)=O(\log T)$,
both on every sample path.

\emph{Shadow rounds} (type C, but some entry elsewhere is
deviated). By (O1) every count at such a round of epoch $k$ is at
least $c_1k$, so the round requires an entry deviated at some count
$\ge c_1k$; by \eqref{eq:maximal} with $\varepsilon=\delta''_\lambda$,
$n_0=c_1k$ and a union over $NK$ pairs this has probability at most
$Ce^{-c_1k\delta''^2_\lambda/2}$. With the per-epoch cap,
$\E[\text{shadow rounds}]\le\sum_kCNKk^2\cdot
Ce^{-c_1k\delta''^2_\lambda/2}=O(1)$.

\emph{Equilibrium rounds} (type C and no entry deviated). By the
claim, $\widetilde m_t=\mstar$ and the candidate's free sets are
$F_i(\lambda)$, so the sampled comparison is $(\mstar(i),a)$ with
$(i,a)\in\Eset(\lambda)$. Fix $\gamma\in(0,1/2)$, write
$\Delta_{\min}(\lambda):=\min_{\Eset(\lambda)}\Delta_{i,a}(\lambda)$,
$q_{i,a}(\lambda):=2/\Delta_{i,a}(\lambda)^2$, and set
$\varepsilon_\gamma:=\min\bigl(\delta''_\lambda/2,\
\gamma\Delta_{\min}(\lambda)/8\bigr)$,
$k_\gamma(T):=\lceil C_2\log\log T\rceil$ and
\[
\mathcal G_{\lambda,T}:=\bigl\{\,\bigl|\widehat\mu^{(n)}_{j,b}-\lambda_{j,b}\bigr|
<\varepsilon_\gamma\ \ \forall(j,b),\ \forall n\ge c_1k_\gamma(T)\bigr\},
\]
an event about the reward stacks alone. By \eqref{eq:maximal} and a
union over $NK$ pairs,
$\Prob(\mathcal G_{\lambda,T}^c)\le
CNK(\log T)^{-c_1C_2\varepsilon_\gamma^2/2}\le(\log T)^{-2}$ once
$C_2=C_2(\lambda,\gamma)$ is large and $T\ge T_0(\lambda,\gamma)$.
On $\mathcal G_{\lambda,T}$: (G1) no entry is deviated at any count
$\ge c_1k_\gamma(T)$, because $\varepsilon_\gamma<\delta''_\lambda$;
(G2) for $(i,a)\in\Eset(\lambda)$ and both counts
$\ge c_1k_\gamma(T)$,
$|\widehat\Delta_{i,a}-\Delta_{i,a}(\lambda)|<2\varepsilon_\gamma
\le\gamma\Delta_{i,a}(\lambda)/4$, where
$\widehat\Delta_{i,a}:=\widehat\mu_{i,\mstar(i)}-\widehat\mu_{i,a}$.
Call epoch $k$ \emph{early} if $k\le2k_\gamma(T)$ and \emph{late}
otherwise.

\emph{Early epochs, pathwise.} Equilibrium rounds are type-C rounds,
so those in early epochs each increment a count below
$8c(T_{2k_\gamma})/\delta_\lambda^2$; they total at most
$NK(8c(T_{2k_\gamma})/\delta_\lambda^2+1)=O((\log\log T)^2)$ on
every path, as $c(T_k)=O(k^2)$.

\emph{Late epochs, on $\mathcal G_{\lambda,T}$.} We establish four
consequences, for $T\ge T_0(\lambda,\gamma)$.
\begin{enumerate}
  \item[(L1)] \emph{Every epoch $j>k_\gamma$ solves and replays
  $\mstar$.} By (O1) the counts at the solve step of epoch $j$ are
  at least $c_1j>c_1k_\gamma$, so by (G1) no entry is deviated
  there, and the claim gives $\widehat m_j=\mstar$,
  $\widehat\Eset_j=\Eset(\lambda)$. Hence the replay of every epoch
  $j>k_\gamma$ plays the true stable matching.
  \item[(L2)] \emph{The stable side is well sampled.} Let $t$ be an
  estimation round of a late epoch $k>2k_\gamma$. By (O2) the replays of
  epochs $k_\gamma+1,\dots,k$ have been executed, so by (L1) every
  stable pair $(i,\mstar(i))$ carries at least
  $\sum_{j=k_\gamma+1}^{k}(b_j-b_{j-1})\,j\ge c'(k^3-(k/2)^3)
  =\Theta(k\,b_k)$ samples, using $b_j-b_{j-1}\ge cj$ and
  $k_\gamma<k/2$. Its certification radius is therefore
  $\mathrm{rad}^{\mathrm{st}}_t\le\sqrt{2c(T_k)/\Theta(k\,b_k)}
  =O(k^{-1/2})\le\gamma\Delta_{\min}(\lambda)/4$, since late epochs
  have $k>2k_\gamma(T)\to\infty$.
  \item[(L3)] \emph{Quotas after $k_\gamma$ are nearly the true ones.} At the
  solve step of any epoch $j>k_\gamma$, by (L1) the surrogate gap of
  $(i,a)\in\Eset(\lambda)$ is $\widehat\Delta_{i,a}$, which by (G2)
  is at most $\Delta_{i,a}(\lambda)(1+\gamma/4)$; the noise floor is
  at most $2\sqrt{2\beta(T_{j-1})/(c_1j)}=O(\sqrt{\log j/j})$ by
  (O1), hence also below $\Delta_{i,a}(\lambda)(1+\gamma/4)$ for
  $j>k_\gamma(T)$ and $T$ large. Thus
  $\widehat g_{i,a}\le\Delta_{i,a}(\lambda)(1+\gamma/4)$, and with
  $\kappa_j=j>q_{i,a}(\lambda)$ for $j$ large,
  $\widehat q_{j,(i,a)}=\min(2/\widehat g_{i,a}^{\,2},\kappa_j)
  \ge q_{i,a}(\lambda)(1-\gamma/2)$. By the first shared property,
  $\widehat z_{j,(i,a)}\ge\widehat q_{j,(i,a)}\ge
  q_{i,a}(\lambda)(1-\gamma/2)$ for every epoch $j>k_\gamma$ and every
  $(i,a)\in\Eset(\lambda)$.
  \item[(L4)] \emph{Demand and supply at a late equilibrium round.}
  Let $t$ be an equilibrium round of a late epoch $k(t)$ whose
  sampled comparison is $(\mstar(i),a)$, $(i,a)\in\Eset(\lambda)$.
  The certificate failed there, so
  $\widehat\Delta_{i,a}(t)<\mathrm{rad}^{\mathrm{st}}_t
  +\sqrt{2c(t)/N_{i,a}(t)}$; by (G2) and (L2) this forces
  \[
  N_{i,a}(t)\;<\;\mathrm{Dem}_{i,a}(t):=
  \frac{2c(t)}{\Delta_{i,a}(\lambda)^2(1-\gamma/2)^2}
  \;\le\;q_{i,a}(\lambda)\,c(t)\,(1+C\gamma).
  \]
  In particular $N_{i,a}(t)<\mathrm{Dem}_{i,a}(t)=O(b_{k(t)})$ while the
  stable side carries $\Omega(k(t)\,b_{k(t)})$ samples by (L2), so
  for $T$ large the least-sampled pair of the comparison is $(i,a)$
  and the round samples $(i,a)$ itself. On the supply side, by (O2)
  and (L3),
  \[
  \begin{aligned}
  N_{i,a}(t)&\;\ge\;\mathrm{Sup}_{i,a}(t)+n_{i,a}(t),\\
  \mathrm{Sup}_{i,a}(t)&:=\sum_{k_\gamma<j\le k(t)}
  (b_j-b_{j-1})\,\widehat z_{j,(i,a)}
  \;\ge\;\bigl(b(t)-b_{k_\gamma}\bigr)q_{i,a}(\lambda)(1-\gamma/2),
  \end{aligned}
  \]
  where $n_{i,a}(t)$ counts the earlier late equilibrium rounds that
  sampled $(i,a)$, and we used $b_{k(t)}\ge b(t)$.
\end{enumerate}
\emph{Ratchet.} Let $N^{\mathrm{eq,late}}_{i,a}(T)$ be the number
of late equilibrium rounds sampling $(i,a)$ up to $T$ and let $t$ be
the last of them. By (L4),
$n_{i,a}(t)=N^{\mathrm{eq,late}}_{i,a}(T)-1$ and
$\mathrm{Sup}_{i,a}(t)+n_{i,a}(t)\le N_{i,a}(t)<\mathrm{Dem}_{i,a}(t)$, so on
$\mathcal G_{\lambda,T}$, using $c(t)\le b(t)$ and
$b_{k_\gamma}=O\bigl((\log\log T)^2\bigr)$,
\begin{align*}
N^{\mathrm{eq,late}}_{i,a}(T)
&\;\le\;\mathrm{Dem}_{i,a}(t)-\mathrm{Sup}_{i,a}(t)+1\\
&\;\le\;q_{i,a}(\lambda)\,b(t)\bigl[(1+C\gamma)-(1-\gamma/2)\bigr]
+q_{i,a}(\lambda)\,b_{k_\gamma}+1\\
&\;\le\;C'\gamma\log T+O\bigl((\log\log T)^2\bigr).
\end{align*}
\emph{Off $\mathcal G_{\lambda,T}$.} Equilibrium rounds are type-C rounds, at
most $O(\log T)$ on every path, so their expected contribution is
$O(\log T)\cdot(\log T)^{-2}=o(1)$. Summing over the $|\Eset(\lambda)|$
pairs and the early epochs,
$\limsup_T\E[N^{\mathrm{eq}}(T)]/\log T\le C'|\Eset(\lambda)|\gamma$
for every $\gamma\in(0,1/2)$, i.e., $\E[N^{\mathrm{eq}}(T)]=o(\log T)$.
Adding the three contributions,
$\E[N^{\mathrm{est}}(T)]=O(1)+O(1)+o(\log T)=o(\log T)$. Nothing in
(iii) used top-choice separation or Assumption~\ref{ass:unique}:
the argument needs only row strictness (through $\delta_\lambda$),
positivity of the free-set gaps, and the two shared properties.
\emph{Conclusion.} Every round belongs to the prefix or to one of
the four phases. Correctly certified exploitation plays $\mstar$
and is not a deviation; the replay deviates only on epochs with
$\widehat m_k\neq\mstar$ and is covered by the pathwise count (i)
either way. Since $S(T;\lambda)$ is by definition an expectation,
the pathwise bound (i) and the expected bounds (ii)--(iii) add to
$S(T;\lambda)=O\bigl((\log T)^{3/2}\bigr)$, and
Lemma~\ref{lem:nocancel} converts this deviation-count bound into
componentwise uniform goodness. No pathwise bound on the
estimation phase is claimed or needed.
\end{proof}

\FloatBarrier
\section{Additional experiments}\label{app:experiments}

All conventions of \S\ref{sec:experiments} apply: sample-path
pseudo-regret $\widetilde R$, normalization by $b(T)$, and the
practical variant of the policy unless stated otherwise.

The variants use the exploration and certification mechanisms of
\S\ref{sec:c4-policy} with schedules adapted to feasible horizons:
geometric epochs $T_k=200\cdot2^k$
instead of $e^{k^2}$, threshold exponents $\alpha=1.1$ and (when the
certification test is on) $\xi=1.05$ instead of $\alpha>\xi>3$,
deficit-based execution of the current target $b(t)\widehat z_k$
instead of increment execution, quota cap $2000$, and batched
exploitation.

The following comparison separates the main practical suite from the
truncated diagnostic with theorem-style thresholds.
\begin{center}
\small
\setlength{\tabcolsep}{3pt}
\begin{tabular}{@{}>{\raggedright\arraybackslash}p{0.14\textwidth}
>{\raggedright\arraybackslash}p{0.26\textwidth}
>{\raggedright\arraybackslash}p{0.26\textwidth}
>{\raggedright\arraybackslash}p{0.26\textwidth}@{}}
\toprule
Feature & Theory: Algorithm~\ref{alg:policy} & Main practical suite & Theorem-style diagnostic \\
\midrule
Epochs & $T_k=\lceil e^{k^2}\rceil$ & $T_k=200\cdot2^k$, up to $10^7$ & $T_k=\lceil e^{k^2}\rceil$, $k=2,3,4$ \\
Thresholds & $\alpha>\xi>3$ & $\alpha=1.1$; $\xi=1.05$ when certified & $\alpha=3.3$, $\xi=3.1$ \\
Quota cap & $\kappa_k=k$ & $2000$ & $2000$ \\
Repair / solve & Repair to $\sqrt{b(T_k)}$, then solve & Solve at epoch start; repair to $\sqrt{b(t+1)}$ during execution & Repair to $\sqrt{b(T_k)}$, then solve \\
Execution / replay & Increment execution; stable replay & Deficit execution; no separate replay phase & Increment execution; stable replay \\
Certification & Refresh each round; first failing comparison & Batched; test at $c(T_k)$ when on & Batched; least-certified comparison \\
Canonical solve & Cost, mass with slack $k^{-1/4}$, lexicographic & Numerical cost/mass LPs & Numerical cost/mass LPs \\
\bottomrule
\end{tabular}
\end{center}
Both diagnostics start with six cyclic-matching rounds. Their numerical
cost slack is $10^{-9}\max(1,|\widehat v|)$, rather than $k^{-1/4}$.
The regularized face-selection experiment below uses its own growing
quota cap and regularization schedules.

\paragraph{Mechanism ablation.}
Table~\ref{tab:ablation} reports the seven configurations discussed
in \S\ref{sec:experiments}.

\begin{table}[H]
\centering\small
\setlength{\tabcolsep}{4pt}
\caption{Mechanism ablation at $T=10^7$, $50$ seeds, $w=(1,1,1)$,
$C_w=244$: persistent quota exploration (E), certification (C),
noise floor (F), candidate (dyn/frozen/lock-in); median [IQR] of
$w^\top\widetilde R(T)/b(T)$; median exploration, wrong-exploitation
and estimation rounds; fraction of seeds with an incorrect fixed
matching (defined below). Count medians
are rounded to the nearest integer, with halves rounded up.}
\label{tab:ablation}
\begin{tabular}{lcccc r@{\;}l r r r r}
\toprule
variant & E & C & F & cand.\ &
\multicolumn{2}{c}{$w^\top\!\widetilde R/b$ med [IQR]} &
expl.\ & wrong & est.\ & lock-wr.\\
\midrule
oracle       & \checkmark & --         & --         & --   &
$244.4$ & $[244.4,244.4]$ & $3{,}872$ & $0$ & $0$ & --\\
plug-in      & \checkmark & --         & \checkmark & dyn  &
$299.0$ & $[230.5,382.3]$ & $6{,}122$ & $970$ & $0$ & --\\
certified    & \checkmark & \checkmark & \checkmark & dyn  &
$752.2$ & $[650.3,872.4]$ & $4{,}073$ & $0$ & $27{,}589$ & --\\
frozen cand. & \checkmark & \checkmark & \checkmark & frz  &
$28{,}117$ & $[874,273{,}425]$ & $4{,}473$ & $0$ & $\approx10^7$ & $62\%$\\
no floor     & \checkmark & --         & --         & dyn  &
$311.3$ & $[242.7,382.8]$ & $8{,}772$ & $333$ & $0$ & --\\
exploit-only & stops      & --         & \checkmark & lock &
$72.1$ & $[1.8,57{,}386]$ & $4$ & $1{,}366$ & $0$ & $44\%$\\
cert-only    & --         & \checkmark & \checkmark & dyn  &
$659.7$ & $[556.2,871.2]$ & $0$ & $0$ & $34{,}691$ & --\\
\bottomrule
\end{tabular}
\end{table}

Here \texttt{lock-wr.} is the fraction of seeds with an incorrect
fixed matching. For \emph{frozen cand.}, the certification candidate is
frozen at epoch $1$; for \emph{exploit-only}, the policy commits when
the empirical stable matching repeats at two consecutive epochs
($k\ge2$), then stops repair, quota exploration and further solves.
The flag records whether that fixed matching differs from $\mstar$;
it is not a final-window error rate. A dash means that the variant
makes no fixed commitment. The \texttt{wrong} column counts actual
wrong exploitation rounds, excluding estimation rounds. Thus a wrong
frozen candidate can remain uncertified while accumulating estimation
cost and no wrong exploitation. For the three variants with
certification (certified, frozen cand.\ and cert-only) the
\texttt{wrong} count is zero in every one of the $50$ seeds, not only
in the median.

The exploit-only median lies below $C_w$ because the lower bound
constrains the expected regret of uniformly good policies only:
exploit-only is not uniformly good, and $44\%$ of its seeds lock in a
wrong matching, which shows up in its upper quartile ($57{,}386$)
rather than in its median. Cert-only is cheaper than certified at this
horizon because certified pays for both persistent exploration and
certification; the quota supply makes certification free only
asymptotically (Remark~\ref{rem:free}), and at $T=10^7$ it does not
yet cover the certification demand (see the margin computation
below).

\paragraph{Per-player convergence curves.}
Figure~\ref{fig:tracking} shows $\widetilde R_i(T)/b(T)$ for the oracle and
plug-in trackers of \S\ref{sec:experiments}.

\begin{figure}[t]
\centering
\includegraphics[width=\textwidth]{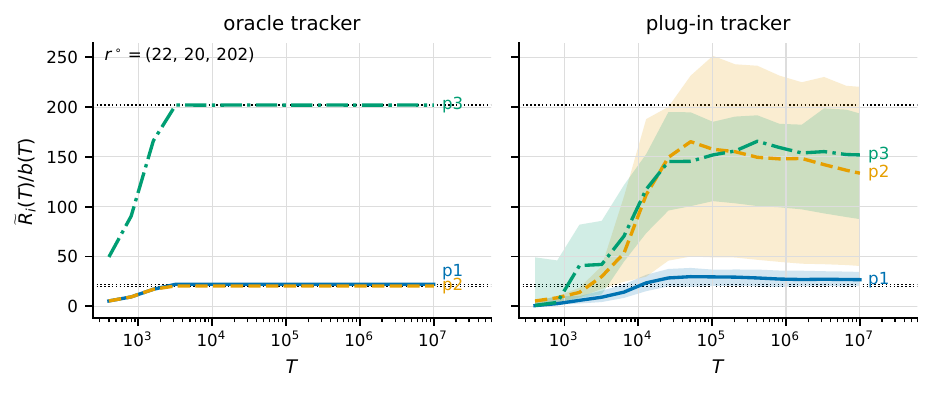}
\caption{$\widetilde R_i(T)/b(T)$ against $T$ (median and interquartile band
over $50$ seeds) for the oracle tracker (left) and the plug-in
tracker (right), with the canonical targets
$r^\circ=(22,20,202)$ dotted. The oracle closely tracks all three constants.
The plug-in medians for $p_2,p_3$ lie between the two
frontier endpoints with wide interquartile bands: at finite horizons
the empirical optimizers select different endpoints across seeds (Remark~\ref{rem:c4-scope}), because at
$w=(1,1,1)$ the two vertices differ in weighted cost by only
$0.8\%$.}
\label{fig:tracking}
\end{figure}

\paragraph{Convergence of the weighted ratio.}
Figure~\ref{fig:weighted} shows $w^\top\widetilde R(T)/(C_w b(T))$ for the
oracle and plug-in trackers. The oracle sits on $1$ from
$T\approx3\times10^3$ onward. The plug-in overshoots to $1.38$ near
$T\approx5\times10^4$, where the empirical quotas are noisiest, and
decays toward $1$ ($1.23$ at $T=10^7$) as the quota estimates
concentrate; heuristically the plug-in constants converge at rate
$1/\sqrt{z^\circ b(T)}$, a central-limit count of the samples the
binding pairs hold ($\approx4\times10^3$ at $T=10^7$), not a proven
rate. The certification margin computation behind
\S\ref{sec:experiments} is as follows. Certifying a binding pair
needs about $2c(t)/\widehat\Delta^2$ samples, with $\widehat\Delta$
its current gap estimate, while its quota supplies about
$2b(t)/\widehat g^{\,2}$, with $\widehat g$ the noise-floor
regularized gap of the last solve. At the practical thresholds the
relative margin $(b-c)/c$ is about $0.7\%$ at $T=10^7$ (the absolute
margin $b(t)-c(t)=(\alpha-\xi)\log\ell(t)$ is
$0.05\,\log\ell(10^7)\approx0.14$), whereas gap-estimate standard
errors of $\approx0.02$ on counts of $4\times10^3$ translate into
quota fluctuations of order $40\%$; the test therefore keeps
demanding samples at feasible horizons.

\begin{figure}[H]
\centering
\includegraphics[width=0.72\textwidth]{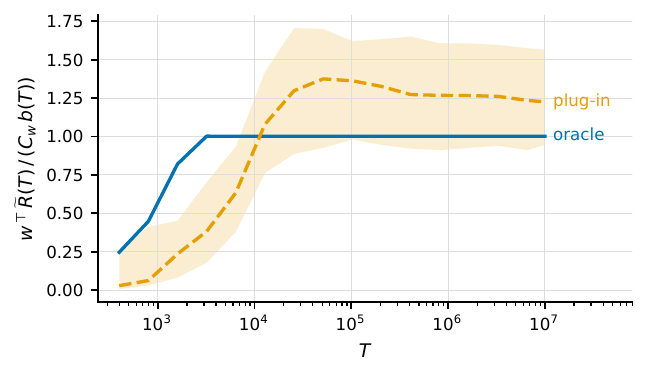}
\caption{$w^\top\widetilde R(T)/(C_w b(T))$ (median and interquartile band,
$50$ seeds) for the oracle and plug-in trackers on $\mu^{(3)}$.}
\label{fig:weighted}
\end{figure}

\paragraph{LP verification on the hand-analyzed instance.}
For the instance $\mu^{(3)}$ of \S\ref{sec:frontier}: over $300$
generic random weight vectors, every returned optimal regret point
was one of the two frontier endpoints $(r_2,r_3)=(20,202)$ and
$(222,2.02)$, always with $r_1=22$; $45$ sampled family points are
optimal under the knife-edge weight $w=(1,0.99,1)$; and the
matching-cover LP \eqref{eq:coverlp} and the compact LP
\eqref{eq:compactlp} agree to machine precision throughout
  (verification script in the supplementary material).

\paragraph{A higher-dimensional certificate.}
The $N$-player construction is not only a collection of one-dimensional
examples. For $N=4$, the exact rational certificate in
\texttt{frontier\_family\_certificates.json} contains three affinely independent
optimal regret vectors at the positive weight $w=(1,0.99,1,1)$, hence a
two-dimensional Pareto face. Figure~\ref{fig:frontier-n4} plots its
$(r_2,r_3,r_4)$ coordinates. This is a structural certificate, not a
Monte Carlo estimate; the supplementary script rechecks all primal and
dual equalities using rational arithmetic.

\begin{figure}[t]
\centering
\includegraphics[width=0.72\textwidth]{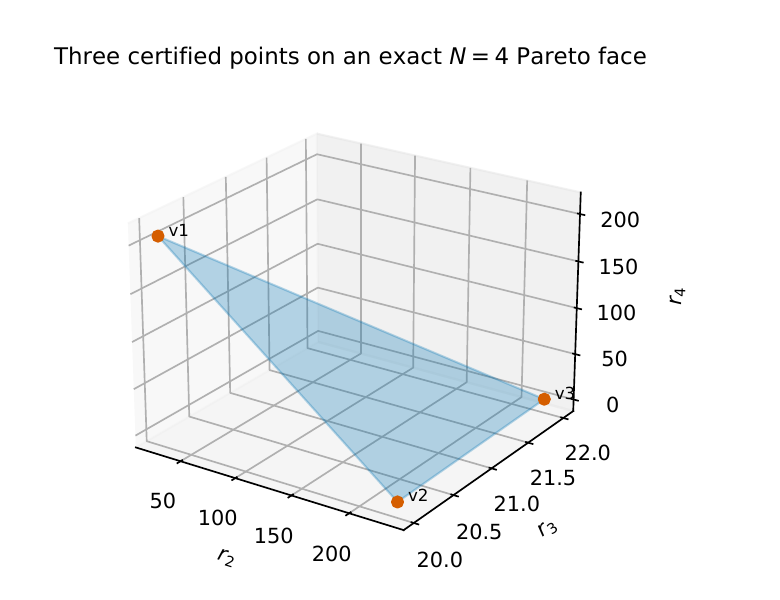}
\caption{Three exact $N=4$ regret vectors spanning a two-dimensional
Pareto face. The top-player coordinate $r_1=76/3$ is constant and is
omitted from the projection.}
\label{fig:frontier-n4}
\end{figure}

\paragraph{Regularized face selection.}
At the knife-edge weight $w=(1,0.99,1)$, where
Assumption~\ref{ass:unique} fails and the cost-optimal face is the
off-stable image of the two-parameter allocation family, we centered the capped program
\eqref{eq:regprog} at the marginal vectors of the three interior
family points $t\in\{50,99,150\}$ of
Proposition~\ref{prop:frontier33}, with quota cap
$\kappa_k=\min(2000,25\cdot2^{k/2})$ and mass cap $NK\kappa_k$
in this diagnostic. Oracle runs use $5$ seeds per target; plug-in
runs use $20$ seeds per target and regularization schedule.
With oracle data the selector recovers every target to solver precision and the executed schedules satisfy
$\max_i|\widetilde R_i(T)/b(T)-r_i(t)|\le0.065$ on a coordinate scale of about
$220$; the mass cap binds at most once, during the early
quota-clipping epochs, and the late-epoch mass ratio is
$\widehat\rho_k/(NK\kappa_k)\approx0.02$, so the cap does not
distort the selection. With plug-in data the median vectors remain
ordered by $t$. Runs have a median of three or four epochs with an
incorrect empirical stable structure, depending on the target and
schedule. The median cap-hit count is zero. Across the nine
configurations, median $w^\top\widetilde R/(C_wb(T))$ ranges from
$1.18$ to $1.46$, with $C_w=243.8$ at this supporting weight.
The diagnostic ran with $\eta_k=k^{-1/4}$, $k^{-1/6}$,
and $0.1k^{-1/4}$
(Remark~\ref{rem:etarate} motivates the logarithmic schedule
displayed in \eqref{eq:regprog}); the runs did not show a consistent
advantage for any schedule at this horizon. This comparison does not
establish statistical equivalence or verify the convergence rate.

\paragraph{Random markets: LP consistency and finite-horizon tracking.}
We generate $33$ random instances: ten each with $N=K\in\{3,4,5\}$
and three with $(N,K)=(3,4)$, rows drawn i.i.d.\ uniform on $[0,1]$ and
rejection-sampled to top-choice separation and a within-row gap
floor of $0.05$; weights $w=(1,\dots,1)$. On \emph{every} instance
the compact edge-marginal LP \eqref{eq:compactlp} and the
matching-cover LP \eqref{eq:coverlp} agree to machine precision
(maximum relative error $3.55\times10^{-16}$), a numerical
check of the marginal reformulation in Theorem~\ref{thm:compact}
beyond the hand-analyzed instance. Both programs already use
the pairwise quotas, so this is not an independent verification
of the semi-infinite equivalence in Theorem~\ref{thm:reduction}. The plug-in tracker
($10$ seeds per instance, $T=10^7$) reaches the pooled ratios
$w^\top\widetilde R/(C_w\,b(T))$ of Table~\ref{tab:random} (top): means
$1.6$ to $1.8$ with substantial dispersion. The overhead is the
finite-horizon surcharge of \S\ref{sec:experiments}, driven by
identification and the certification ratchet and relatively larger
here because generic instances have smaller
$C_w$ than $\mu^{(3)}$; on the probe instances of
Table~\ref{tab:random} (bottom) the ratio falls from $T=10^5$ to
$T=10^7$ in every size ($1.41\to1.39$, $1.93\to1.58$,
$1.61\to1.44$). These are finite-horizon observations; the asymptotic
theory does not imply monotonicity of these ratios.

\begin{table}[H]
\centering\small
\caption{Random markets and baselines, $w=(1,\dots,1)$, mean $\pm$
sd of $w^\top\widetilde R(T)/(C_w\,b(T))$. \textbf{Top:} pooled plug-in
tracking over the sweep ($10$ seeds per instance, $T=10^7$).
\textbf{Bottom:} one further probe instance per size and the
three-player instance $\mu^{(3)}$; dedicated-cover exploration and
layered KL-UCB baselines ($10$ seeds; KL-UCB at the tracker's
horizon $T=10^7$ from \texttt{ucb\_long.py}, and at $T=10^5$ with a
plug-in run at that horizon for reference; $^{*}$the $\mu^{(3)}$
plug-in entry is the mean $\pm$ sd over the $50$ seeds of the main
suite of \S\ref{sec:experiments}, whose median is $1.23$).}
\label{tab:random}
\begin{tabular}{lcc}
\toprule
Group & instances $\times$ seeds & $w^\top\widetilde R/(C_w b)$ \\
\midrule
$3\times3$ & $10\times10$ & $1.78\pm0.95$ \\
$4\times4$ & $10\times10$ & $1.57\pm0.68$ \\
$5\times5$ & $10\times10$ & $1.74\pm0.61$ \\
$3\times4$ & $\phantom{0}3\times10$ & $1.61\pm0.44$ \\
\bottomrule
\end{tabular}
\vspace{0.6em}

\begin{tabular}{lrccccc}
\toprule
Probe & $C_w$ & plug-in@$10^7$ & cover@$10^7$ & KL-UCB@$10^7$ &
KL-UCB@$10^5$ & plug-in@$10^5$ \\
\midrule
$3\times3$ & $73.9$ & $1.39\pm0.35$ & $3.25\pm0.93$ & $1.25\pm0.17$
& $1.10\pm0.19$ & $1.41\pm0.41$ \\
$4\times4$ & $102.6$ & $1.58\pm0.59$ & $3.30\pm0.99$ & $2.67\pm0.48$
& $1.99\pm0.45$ & $1.93\pm0.84$ \\
$5\times5$ & $309.0$ & $1.44\pm0.30$ & $2.47\pm0.36$ & $1.33\pm0.18$
& $1.04\pm0.14$ & $1.61\pm0.40$ \\
$\mu^{(3)}$ & $244.0$ & $1.28\pm0.46^{*}$ & -- & $1.39\pm0.35$
& $1.02\pm0.28$ & $1.32\pm0.63$ \\
\bottomrule
\end{tabular}
\end{table}

\paragraph{Baselines: scheduling is what the LP buys.}
Two comparators isolate the value of the schedule
(Table~\ref{tab:random}, bottom), both at the tracker's horizon
$T=10^7$ (the per-round KL-UCB loop is run one process per seed by
\texttt{ucb\_long.py}; the $T=10^5$ columns are kept for
reference). \emph{Dedicated-cover}
exploration keeps the same noise-floored quotas but serves each
quota with its own completion matching instead of the LP schedule:
on the three probe instances it pays $1.7\times$ to $2.3\times$ the
tracker's weighted pseudo-regret, the price of forgoing
parallelization and cost-aware routing. \emph{Layered KL-UCB} (each priority level plays the
remaining arm with the highest Gaussian index, the natural
centralized index policy in the spirit of
\citealp{liu2020competing}) is, by contrast, a strong
weighted-pseudo-regret baseline: at $T=10^7$ its ratio is $1.25$ to $1.39$
on three of the four instances and $2.67$ on the $4\times4$ probe,
against $1.28$ to $1.58$ for the plug-in tracker; the index
dynamics parallelize \emph{incidentally}, because while $p_1$
explores $a_2$ the displaced $p_2$ explores its own uncertain arm
$a_3$. From $T=10^5$ to $T=10^7$, however, its ratio rises on all four
instances ($1.10\to1.25$, $1.99\to2.67$, $1.04\to1.33$,
$1.02\to1.39$), whereas the tracker's ratio falls on the three probes;
this finite-horizon trend suggests, without proving it, that layered
KL-UCB does not approach the weighted optimum $C_w$. The separation is in the \emph{vector}: the KL-UCB mean
vector on $\mu^{(3)}$ is $(30.1,\,35.3,\,274.6)$ at $T=10^7$
($(22.4,\,28.5,\,199.0)$ at $10^5$), pinned near the parallel
endpoint of the frontier, where the lowest-priority player absorbs
essentially the whole externality. This standard layered index
baseline has no explicit control parameter for moving along the
segment, whereas $\pi_s$ sweeps it
(Figure~\ref{fig:frontier}) and $\pi_{w,\tau}$ selects arbitrary
face points. Externality \emph{scheduling}, not weighted-regret
optimality alone, is what distinguishes the LP-driven policies.

\FloatBarrier
\paragraph{Theorem-style certification thresholds with a practical cap.}
Finally we use the thresholds $\alpha=3.3$, $\xi=3.1$ and epoch
boundaries $T_k=\lceil e^{k^2}\rceil$, executing epochs $k=2,3,4$
after the six-round prefix ($T=8{,}886{,}111$,
$b(T)/\log T=1.57$), with increment execution and
stable-matching replay, $20$ seeds per mode. The diagnostic
retains a cap of $2000$, a short initialization, numerical LP
tie-breaking, and batched exploitation with estimates held fixed
within each batch; it samples the least-certified comparison instead
of the first failing one. Both this diagnostic and
Algorithm~\ref{alg:policy} solve after repair; the diagnostic evaluates
the noise-floor scale at the current time $t$, whereas the algorithm
uses $\beta(T_{k-1})$. The algorithm also refreshes and certifies the
candidate each round. Its cap $\kappa_k=k$ is below the true quota
$200$ at all three simulated epoch indices; it would substantially
restrict quota-driven exploration.
The archived results report a median wrong-exploitation count of zero
in each mode; they do not retain the per-run maxima. The per-round
certification guarantee does not apply directly to the batched
implementation. The
certification surcharge dominates at
this horizon: estimation rounds have medians $28{,}846$ (oracle)
and $34{,}073$ (plug-in), rounded as in Table~\ref{tab:ablation}, and
$w^\top\widetilde R/b(T)$ has mean $\pm$ sample standard deviation
$563\pm189$ and $495\pm188$, respectively, against $C_w=244$; the
difference between the two modes is within seed-to-seed variability
(Welch's $t\approx1.15$, $p\approx0.26$, with $20$ seeds each). This
quantifies Remark~\ref{rem:free} at these certification thresholds:
the relative margin $(b-c)/c$ is about $2.3\%$ at the final horizon
(the absolute margin $b(t)-c(t)=0.2\log\ell(t)$ is $\approx0.55$),
far below the finite-sample fluctuation of the empirical quotas, so
the test keeps demanding samples. Certification is asymptotically free
for the theoretical policy of Remark~\ref{rem:free}; this
batched diagnostic has a measurable surcharge at $T\approx10^7$. The practical thresholds of
\S\ref{sec:experiments} trade exactly this surcharge for smaller
margins.

\end{document}